\DeclareMathAlphabet{\mathpzc}{OT1}{pzc}{m}{it}
\newcommand{\sets}[1]{\mathscr{#1}}  
\newcommand{\figL}{[Left]}
\newcommand{\figM}{[Middle]}
\newcommand{\figR}{[Right]}
\newcommand{\A}{\mathcal{A}}
\newcommand{\F}{\sets{F}} 
\newcommand{\G}{\sets{G}}  
\newcommand{\I}{\mathcal{I}}  
\newcommand{\M}{\mathcal{M}}  
\newcommand{\N}{\mathcal{N}}  
\newcommand{\R}{\mathbf{R}}
\newcommand{\T}{\mathcal{T}}
\newcommand{\W}{\mathcal{W}}
\newcommand{\be}{\begin{equation}}
\newcommand{\ee}{\end{equation}}
\DeclareMathOperator{\area}{area}
\DeclareMathOperator*{\infp}{inf\vphantom{p}}
\newcommand{\ip}[2]{|#1\mathclap{\hspace{6.75pt}\cdot}{\wedge}#2| }
\newcommand{\bkslice}{\sigma}  
\newcommand{\bdyslice}{\Sigma}  
\newcommand{\bksliceset}{\sets{S}}  
\newcommand{\ts}{\tau} 
\newcommand{\tsset}{\sets{T}}  
\newcommand{\eowsurf}{\surf^0}  
\newcommand{\surf}{\gamma}  
\newcommand{\Splus}{S_+}  
\newcommand{\Sminus}{S_-}  
\newcommand{\Sc}{S_{\rm c}}  
\newcommand{\SHRT}{S_{_{\text{HRT}}}} 
\newcommand{\HRT}{{\surf_{_{\text{HRT}}}}}  
\newcommand{\hHRT}{{\hat{\surf}_{_{\text{HRT}}}}}  
\newcommand{\fdc}{\mathfrak{j}^+}  
\newcommand{\fdt}{\mathfrak{i}^+}  
\newcommand{\pcv}{\mathpzc{p}}  
\newcommand{\qcv}{\mathpzc{q}}  
\newcommand{\Pset}{\sets{P}}  
\newcommand{\Qset}{\sets{Q}}  
\newcommand{\Kset}{\sets{K}}  
\newcommand{\Lset}{\sets{L}}  
\newcommand{\hor}{\mathcal{H}}
\newcommand{\ew}{\mathcal{W}}
\newcommand{\Vzero}{V_{\equiv}}
\newcommand{\Vone}{V_{_{\!/\!/}}}
\newcommand{\Uzero}{U_{_{|\!|\!|}}}
\newcommand{\Uone}{U_{_{\!/\!/}}}
\newcommand{\muth}{\upmu}  
\newcommand{\nuth}{\upnu}  
\newcommand{\zV}{\zeta}  
\newcommand{\zU}{\tilde{\zeta}}  
\newcommand{\QV}{Q}  
\newcommand{\QU}{\tilde{Q}}  
\newcommand{\aV}{\alpha}  
\newcommand{\aU}{\tilde{\alpha}}  
\newcommand{\zVm}{\zV_{_{{\text{max}}}}}   
\newcommand{\zUm}{\zU_{_{{\text{min}}}}}  
\newtheorem{theorem}{Theorem}[section]
\newtheorem{lemma}[theorem]{Lemma}
\title{Covariant bit threads}
\author[1]{Matthew Headrick}
\author[2]{and Veronika E. Hubeny}
\affiliation[1]{Martin Fisher School of Physics, Brandeis University, Waltham MA, USA}
\affiliation[2]{Center for Quantum Mathematics and Physics (QMAP)\\ 
Department of Physics \& Astronomy, University of California, Davis CA, USA}
\abstract{
We derive several new reformulations of the Hubeny-Rangamani-Takayanagi covariant holographic entanglement entropy formula. These include: (1) a minimax formula, which involves finding a maximal-area achronal surface on a timelike hypersurface homologous to $D(A)$ (the boundary causal domain of the region $A$ whose entropy we are calculating) and minimizing over the hypersurface; (2) a max V-flow formula, in which we maximize the flux through $D(A)$ of a divergenceless bulk 1-form $V$ subject to an upper bound on its norm that is non-local in time; and (3) a min U-flow formula, in which we minimize the flux over a bulk Cauchy slice of a divergenceless timelike 1-form $U$ subject to a lower bound on its norm that is non-local in space. The two flow formulas define convex programs and are related to each other by Lagrange duality. For each program, the optimal configurations dynamically find the HRT surface and the entanglement wedges of $A$ and its complement. The V-flow formula is the covariant version of the Freedman-Headrick bit thread reformulation of the Ryu-Takayanagi formula. We also introduce a measure-theoretic concept of a ``thread distribution'', and explain how Riemannian flows, V-flows, and U-flows can be expressed in terms of thread distributions.
}
\emailAdd{headrick@brandeis.edu} \emailAdd{veronika@physics.ucdavis.edu}
\preprint{BRX-TH-6708}
\begin{document}

\maketitle

\flushbottom


\setcounter{section}{-1}

\section{Executive summary}
\label{ss:execsum}

In this paper we derive a number of new formulas that are equivalent to the HRT covariant holographic entanglement entropy formula. These formulas can be grouped into three classes: minimax, max V-flow, and min U-flow.

We fix a boundary spatial region $A$, and define $B:=A^c$ as its complement on a boundary Cauchy slice. For simplicity, we assume that the full system $AB$ is in a pure state. These spatial regions induce a decomposition of the conformal boundary into the four spacetime regions $D(A)$, $D(B)$, $J^\pm(\partial A)$, where $\partial A=\partial B$ is the entangling surface. We also define $\hat\I^\pm$ as the future/past boundary of the bulk spacetime (which may be a singularity or at infinite time), and $\hat\I^0$ as the end-of-the-world brane (if there is one). The HRT surface $\HRT$ divides the bulk into four spacetime regions: the entanglement wedge $\W(A)$, the complementary entanglement wedge $\W(B)$, and the future and past $J^\pm(\HRT)$; these bulk regions meet the conformal boundary at $D(A)$, $D(B)$, $J^\pm(\partial A)$ respectively \cite{Wall:2012uf,Headrick:2014cta}.\footnote{\, In the main text we are careful with the issue of the UV regulator, so that all quantities are finite and the maximizations and minimizations are meaningful. Specifically, we apply the entanglement wedge cross-section regulator \cite{Dutta:2019gen}. 
In this summary we will ignore this issue.}

\paragraph{Minimax:}
\be\label{minimax0}
S(A) = \frac1{4G_{\rm N}}
\inf_\ts\sup_{\surf\subset\ts}\area(\surf)\,.
\ee
The infimum is over piecewise timelike or null hypersurfaces $\ts$, which we call time-sheets, that are homologous to $D(A)$ relative to $\hat\I^+\cup\hat\I^-\cup\hat\I^0\cup J^+(\partial A)\cup J^-(\partial A)$; in other words, there exists a bulk spacetime region interpolating between $\ts$ and a part of the boundary that includes all of $D(A)$ and none of 
$D(B)$. Note that this is a \emph{spacetime} homology condition, as opposed to the \emph{spatial} homology condition on a given Cauchy slice familiar from the RT, HRT, and maximin formulas. As a consequence of this homology condition, $\ts$ necessarily contains the entangling surface $\partial A$. The supremum in \eqref{minimax0} is over achronal codimension-2 surfaces $\surf$ contained in $\ts$ and containing the entangling surface $\partial A$ (thereby excluding $\surf$ from the bulk chronal future and past of $\partial A$).

The minimax surface is the HRT surface (or any of them, if there is more than one). The minimizing time-sheet, on which the minimax surface is maximal, is highly non-unique; examples include the entanglement horizon (boundary of the entanglement wedge) of $A$ and that of $B$.

\paragraph{Max V-flow:}
\be\label{Vflow0}
S(A) = \frac1{4G_{\rm N}}
\sup_V\int_{D(A)}*V\,.
\ee
$V$ is a 1-form in the bulk, and the objective is its flux through $D(A)$. It is subject to a divergencelessness condition ($d*V=0$), a no-flux condition on $\hat\I^+\cup\hat\I^-\cup\hat\I^0$, and a norm bound. The norm bound can be expressed in two equivalent ways; the first is non-local while the second is local but involves an auxiliary scalar field:
\begin{enumerate}
\item $V=0$ in the \emph{bulk} chronal future and past of $\partial A$, and, for every bulk timelike curve, $\int dt\,|V_\perp|\le1$, where $t$ is the proper time along the curve and $V_\perp$ is the projection of $V$ perpendicular to the curve. Equivalently, the flux of $V$ through any codimension-1 timelike ribbon of spatial area $a$ is bounded above by $a$.
\item There exists a function $\phi$ in the bulk that equals $\pm1/2$ on $\hat\I^\pm\cup J^\pm(\partial A)$, such that the 1-forms $d\phi\pm V$ are everywhere future-directed causal.
\end{enumerate}
We can equivalently trade the 1-form $V$ for a set of ``V-threads'', bulk curves connecting $D(A)$ and $D(B)$. The first norm bound above would be interpreted as the statement that the total number of threads crossing a window of area $a$ being carried by an observer, over the observer's lifetime, is bounded above by $a$.

The V-threads are the covariant version of the bit threads introduced in \cite{Freedman:2016zud}. They may be spread out in time, but when collimated onto a single Cauchy slice, they reduce to Riemannian threads. A crucial point is that, in covariantizing the threads, they remain 1-dimensional, rather than becoming extended into world-sheets, and their endpoints remain spacetime points in $D(A)$ and $D(B)$, rather than becoming world-lines.

The maximal V-flows (or V-thread configurations) are highly non-unique but are restricted to the entanglement wedges $\W(A)$ and $\W(B)$, squeezing from the former to the latter via the HRT surface. This is schematically illustrated in the left panel of figure \ref{fig:maxVUflowES}.

One important difference between the V-flows and the Riemannian flows (or bit threads) is the following. For Riemannian flows, the choice of boundary region $A$ entered only in the objective (which is the flux through $A$), but not in the definition of a flow. Here, however, the region $A$ enters in the norm bound. That being said, a uniform definition of a V-flow can be given for multiple boundary regions $A,B,\ldots$, provided they lie on a common boundary Cauchy slice, by replacing the entangling surface $\partial A$ entering in the norm bound (in either version) by the union of all of the entangling surfaces $\partial A\cup\partial B\cup\cdots$ (which is equivalent to imposing the norm bound for all the regions simultaneously). This uniform definition of a V-flow can be used, for example, to prove subadditivity or to compute mutual informations. On the other hand, for boundary regions \emph{not} lying on a common Cauchy slice, the corresponding V-flow definitions differ essentially, and no uniform definition can be given.

\begin{figure}[tbp]
\centering
\includegraphics[width=0.8\textwidth]{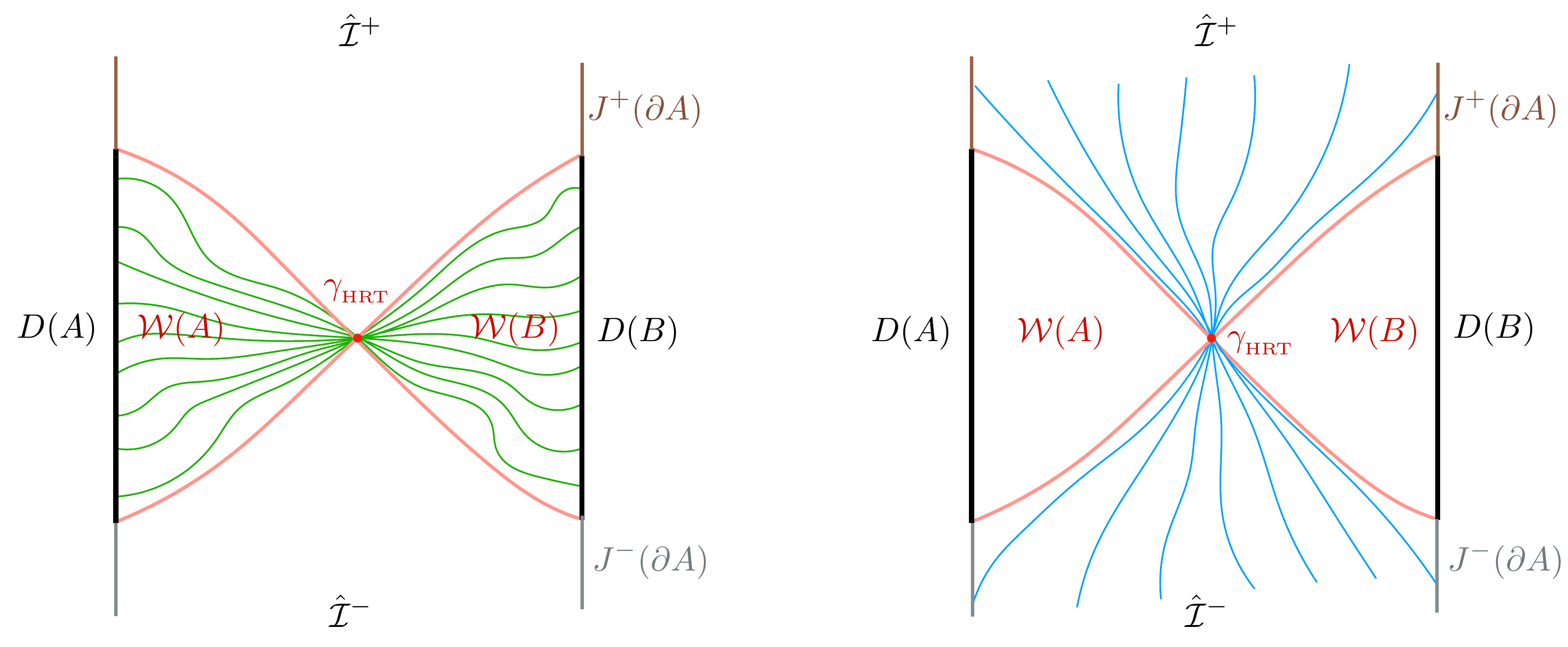}
\caption{\label{fig:maxVUflowES}
Cross section of generic maximal V-flow (left) and minimal U-flow (right). 
\figL: The V-flow (indicated by green curves) goes from $D(A)$ to $D(B)$, staying within their entanglement wedges $\W(A)$ and $\W(B)$, squeezing through the HRT surface $\HRT$ (red dot), and avoiding its past and future. 
\figR: The U-flow (indicated by blue curves) passes from $\hat\I^-\cup J^-(\partial A)$ to $\hat I^+\cup J^+(\partial A)$, squeezing through the HRT surface, and avoiding the entanglement wedges. (The future/past boundaries $\hat\I^\pm$ may be at infinity, as shown in the illustration, and/or singularities.)
}
\end{figure}

\paragraph{Min U-flow:}
\be\label{Uflow0}
S(A) = \frac1{4G_{\rm N}}
\inf_U\int_{\bkslice}*U\,.
\ee
$U$ is a future-directed causal 1-form in the bulk, and the objective is its flux through any bulk Cauchy slice $\bkslice$ containing $\partial A$. It is subject to a divergencelessness condition ($d*U=0$), a no-flux condition on $D(A)\cup D(B)\cup\hat\I^0$, and a norm bound. The norm bound, which is a \emph{lower} bound on the norm, in contrast to the upper bound constraining the V-flow, can be expressed in two equivalent ways; the first is non-local while the second is local but involves an auxiliary scalar field:
\begin{enumerate}
\item For every spacelike bulk curve connecting $D(A)$ to $D(B)$, $\int ds\,|U_\perp|\ge1$, where $s$ is the proper length along the curve and $U_\perp$ is the projection of $U$ perpendicular to the curve.
\item There exists a function $\psi$ in the bulk that equals $-1/2$ on $D(A)$ and $+1/2$ on $D(B)$, such that the 1-forms $U\pm d\psi$ are everywhere future-directed causal.
\end{enumerate}
We can equivalently trade the 1-form $U$ for a set of timelike ``U-threads'' beginning on the boundary region
$\hat\I^-\cup J^-(\partial A)$ and ending on $\hat\I^+\cup J^+(\partial A)$.

The minimal U-flows (or U-thread configurations) are highly non-unique, but are restricted to the bulk regions $J^-(\HRT)$ and $J^+(\HRT)$, squeezing from the former to the latter via the HRT surface and avoiding the entanglement wedges $\W(A)$ and $\W(B)$. This is schematically illustrated in the right panel of figure \ref{fig:maxVUflowES}.

\bigskip

Possible applications of these new formulations of the HRT formula are outside the scope of this paper, but since different ways of writing a given quantity are often useful for different purposes, it is generally advantageous to have as many such ways as possible. As one example, the formulas \eqref{Vflow0}, \eqref{Uflow0} define convex programs (which are actually Lagrange duals of each other), which may make them particularly amenable to numerical computation. Some of the new formulations may also be useful for proving general properties of holographic entanglement entropies, such as inequalities they obey. The new formulations may also have conceptual implications for our understanding of the relationship between geometry and entanglement in quantum gravity.


\section{Introduction}

In this paper we develop a set of new, fully covariant geometrical prescriptions for holographic entanglement entropy (EE). These formulas are equivalent --- but not obviously so --- to the HRT and maximin formulas \cite{Hubeny:2007xt,Wall:2012uf}. We begin by providing, in this section, a self-contained introduction and summary of our results. We motivate the need for such reformulations in subsection \ref{ss:intro:motiv} and review the previously obtained prescriptions in \ref{ss:intro:review}. The reader familiar with the HRT, maximin, and Riemannian bit thread prescriptions is invited to skip to subsection \ref{ss:intro:expect}, which attempts to motivate intuitively what we might expect in covariantizing the bit threads and what further physical insight might be gained.  Subsection \ref{ss:intro:results} then describes the key aspects of the actual results, but still focusing on the conceptual rather than the technical side. In \ref{ss:intro:outline} we give a detailed outline of the rest of the paper.

\subsection{Motivation}
\label{ss:intro:motiv}

Holographic EE offers intriguing insights into the bulk geometry of holographic dualities.  Indeed, many now suspect that entanglement crucially  underlies the emergence of spacetime, stimulating the investigation of entanglement structure in holography.  An early hint at an interesting relation between spacetime geometry and entanglement came with the Ryu-Takayanagi (RT) prescription \cite{Ryu:2006bv,Ryu:2006ef}, promptly uplifted to a fully covariant formulation in general time-dependent context by Hubeny-Rangamani-Takayanagi (HRT) \cite{Hubeny:2007xt}: The EE $S(A)$ of a boundary region $A$ is given by the proper area of a smallest-area bulk codimension-2 extremal surface $\HRT(A)$ homologous to $A$.\footnote{\, 
	The homology condition can be rephrased as the existence of a \emph{homology region} whose boundary consists only of $A$ and  $\HRT(A)$ (the two meeting on the boundary at the entangling surface $\partial A$).
}
The fact that the HRT prescription relates a simple geometric construct, the extremal surface, to the EE is a priori highly non-trivial, since both the EE and the holographic mapping are individually rather intricate and complex.  

Although we now have compelling evidence for the HRT conjecture (for a review, see for example \cite{Rangamani:2016dms}), the prescription nevertheless retains mysterious features.\footnote{\, 
	Since these puzzles are compounded rather than dispelled in the quantum version of the holographic EE prescription \cite{Faulkner:2013ana,Engelhardt:2014gca}, in this paper we will restrict to the classical regime of $N\to \infty$, $\lambda \to \infty$.
}
On one hand, mapping a sharply-delimited CFT region to a sharply-defined bulk object, the extremal surface, whose location is likewise determined by entangling surface along with the bulk geometry) naively seems to be in tension with the usual holographic UV/IR correspondence: one might have expected the bulk construct to be more delocalized.
On the other hand, the HRT relation also has a peculiar global (as well as a topological) aspect: amongst all the extremal surfaces anchored on the entangling surface which are homologous to the entangled region, we are instructed to pick the one with least area. This allows for the extremal surface to jump to a different locus in the bulk under smooth deformations of the entangling surface or the CFT state, while at the phase transition itself we have a multiplicity of distinct but apparently admissible surfaces. This discontinuity is particularly perplexing in light of the bold conjecture 
\cite{Headrick:2014cta,Wall:2012uf}\footnote{\, 
Cf.\ \cite{Jafferis:2015del,Dong:2016eik,Faulkner:2017vdd,Cotler:2017erl,Chen:2019gbt} for recent evidence.
}
that the {\it entanglement wedge}\footnote{\, 
The entanglement wedge is defined as the bulk spacetime region spacelike-separated from the extremal surface $\HRT(A)$ and connected to the boundary region $A$ in question, or equivalently the bulk domain of dependence of the homology region.  (As usual, in asymptotically AdS spacetime we assume the requisite boundary conditions for bulk evolution, so the domain of dependence extends temporally along the boundary.  We will correspondingly take the generalized notion of global hyperbolicity and Cauchy surface.) 
} is the spacetime region which is most naturally ``dual to the reduced density matrix" $\rho_{A}$.  It suggests that, near such phase transitions, a tiny deformation of the reduced density matrix could suddenly allow us to encode a huge additional  spacetime region in the bulk.\footnote{\,  One extreme version of this, involving a large number of intervals in 3-d bulk, would change the entanglement wedge from covering `almost all' of the compactified Poincare disk to `almost none' of it. This feature readily generalizes to higher dimensions as well.
}

In light of these curious features, one is compelled to reexamine the meaning of holographic EE, or at an even more basic level, entanglement as such.  We will not attack this question directly here.  Instead, we want to obtain a more convenient characterization of holographic EE in terms of a distinct bulk construct that, while equivalent to HRT, would offer more suggestive hints as to its nature.  Indeed, one broadly expects that different formulations tend to demystify different features, so it is desirable to obtain as many distinct prescriptions for holographic EE as possible.  

Of course, in order for a given prescription to be even physically meaningful, it must be fully covariant:  it cannot rely on any choice of coordinates, foliation, or other extra baggage that is not part of the physics.
This requirement then automatically enables us to apply the prescription to general time-dependent settings.  Such explorations are interesting and useful in many contexts and indeed are presently being pursued with increasing vigor. More importantly, a hitherto underutilized aspect of covariant formulations is that they can naturally inspire conceptual advances, since a likely crucial but still mostly missing piece regarding the underpinnings of the holographic dictionary is the temporal aspect of the mapping.

\subsection{Previous prescriptions}
\label{ss:intro:review}

Let us briefly review previous reformulations of HRT.  First, as already explained in \cite{Hubeny:2007xt}, the codimension-2 extremal surface can be thought of as a surface with vanishing null expansions,\footnote{\, 
	Correspondingly, it admits 4 lightsheets \cite{Bousso:1999xy}, generated by the future/past directed in/out-going null normal congruences.  Equivalently, it is a surface of vanishing trace of the extrinsic curvature, and therefore vanishing expansion along any normal congruence.
}  which turns out to be a convenient characterization for using the Raychaudhuri equation to prove certain properties of the extremal surface (such as its consistency with CFT causality \cite{Headrick:2014cta}) under the usual physical assumptions, in particular the null energy condition (NEC).
In this situation, one can show that in the static context (or more generally on a surface of time-reflection symmetry), HRT reduces to the original RT prescription, involving a globally minimal surface on a preferred spatial slice.  Once restricted to Riemannian geometry, global minimality can be utilized to prove important properties of the holographic EE such as strong subadditivity (SSA) in an amazingly straightforward fashion \cite{Headrick:2007km,Headrick:2013zda}.  Unfortunately this convenience is not retained by the full Lorentzian context, which makes the corresponding properties rather more difficult to prove.  

To remedy this, Wall \cite{Wall:2012uf} reformulated the holographic EE via a {\it maximin} prescription, which entails taking an arbitrary bulk Cauchy slice passing through the entangling surface, finding the globally minimal area surface on it, and then maximizing this area over all possible slices; a codimension-2 bulk surface which realizes this maximin procedure is called a maximin surface, and any supporting slice on which it is globally minimal is called a maximin slice. Wall showed, under reasonable assumptions, first that the maximin surface coincides with the HRT surface, and second that its area obeys SSA in the general time-dependent context.
Moreover, the maximin construction specifies the homology constraint more naturally than HRT,\footnote{\, 
	As pointed out in \cite{Hubeny:2013gta}, to maintain causality, the HRT surface must remain spacelike-separated from the boundary region, which is ensured by requiring the homology region is achronal.  We call this the spacelike homology constraint, and in the maximin construction it is implemented automatically by extremizing only over minimal surfaces which lie on Cauchy slices containing the entangling surface.
} but its two-step minimization and maximization veils the nature of entanglement quantity even further, and is perhaps  conceptually less appealing (since it contains a vast amount of intermediate extra baggage and involves the breaking of a natural symmetry between the spatial and temporal directions).

So far, all reformulations involved a bulk codimension-2 surface, which leaves the conceptual meaning of holographic EE obscure and suffers from the associated puzzles such as a possible discontinuity in the location of the bulk surface; see \cite{Freedman:2016zud} for further discussion. However, in the static context, these puzzles were circumvented by a completely different prescription, which utilizes a construct dubbed \emph{bit thread}, a 1-dimensional object which can be thought of as a field line of a flow connecting the boundary region $A$ to its complement. In particular, the RT reformulation put forward by  Headrick-Freedman \cite{Freedman:2016zud} used the Riemannian max-flow min-cut (MFMC) theorem \cite{Federer74,MR700642,MR1088184,MR2685608} to express EE of a given region $A$ in terms of flows. The equivalence with RT was explained in greater detail in \cite{Headrick:2017ucz}, which further develops the tools we will use in the present work.   

The setting of \cite{Freedman:2016zud} is as for RT, namely the Riemannian geometry of a spatial slice $\bkslice$ which is a surface of time reflection symmetry  in the full Lorentzian geometry.  Define a {\it flow} to be any divergenceless vector field $v$ with unit-bounded norm: 
\begin{equation}
\nabla \cdot{} v = 0 \,, \qquad |v| \le 1 \ .
\label{eq:FHflowcond}
\end{equation}	
We can equivalently think of this vector field in terms of oriented flow lines (hence motivating the term threads\footnote{\, 
	Although this terminology suggests a discrete structure, this is merely employed as a conceptual crutch; the bound is in units of Planck area (or more accurately $4 \ell_P^{d-1}$ in $d+1$-dimensional spacetime), so we are typically dealing with a macroscopic number of threads within any region of interest.
}) which cannot end in the bulk (due to the divergencelessness condition) and have bounded transverse density (due to the norm bound).  For any boundary region $A$, the MFMC theorem states that the maximum flux of such  a flow from $A$ equals the minimal area achievable by any surface (or ``cut'') $m$ homologous to $A$:
\begin{equation}
\max_v \int_A v = \min_{\gamma\sim A} \, {\rm area}(\gamma) \ .
\label{eq:FHmfmc}
\end{equation}	
Intuitively, any flow from $A$ is clearly bounded by the minimal-area bottleneck $\gamma_{\rm min}$ the flow has to pass through, and the main content is that a maximizing (or optimal) flow achieves this bound. The EE $S(A)$ is then given by the flux of any such optimal flow (or maximal number of threads) from $A$:
\begin{equation}
S(A)= \max_v \int_A v \ .
\label{eq:FHS}
\end{equation}	
Note that although an optimal flow $v$ is far from unique, the bottleneck $\gamma_{\rm min}$ generically is unique, and corresponds to the RT minimal surface.  At this locus, the flow $v$ saturates the norm bound and is normal to $\gamma_{\rm min}$.  The homology constraint is implemented automatically, with the flow lines generating the requisite homology region between the boundary region and the bottleneck.
Moreover,  the number of flow lines has the familiar UV divergence coming from the divergent  area of $m$, and in fact the flow picture enables us to compare these divergent quantities more easily.  For pure states, we immediately see that $S(A) = S(A^c)$, implemented by the same flow configuration (with flipped directionality).  

Despite the flow prescription \eqref{eq:FHS} being equivalent to the RT prescription, it has a number of technical and conceptual advantages.  For example, showing certain properties such as subadditivity and SSA \cite{Freedman:2016zud} is even more immediate than for RT.
In fact, the utility of the difference in respective proof methods goes well beyond the mere confirmation of a previously-established result.  For example, it elucidated the difference between the universally-true SSA property and the holographically-true monogamy of mutual information (MMI) property \cite{Hubeny:2018bri,Cui:2018dyq}, whereas the  surface-based method proves these two properties equivalently.  In the cooperative flow construction of \cite{Hubeny:2018bri} this distinction was interpreted as the MMI being more intrinsically tied to bulk locality than SSA.

Conceptually, the bit thread picture is evocative of a bipartite nature of the entanglement structure.  One might think of each flow line as joining an EPR pair which straddles the entangling surface $\partial A$.  However, it is important to note that the flows depend not just on the state itself, but also on the entangling surface.  In other words, changing the region of interest generically changes the flow, unless the regions are nested (in which case one can find flows that simultaneously maximize both).  

Given the utility of bit threads, the obvious goal is to generalize them to the Lorentzian context, which allows for time dependence.  
The equivalence \cite{Freedman:2016zud,Headrick:2017ucz} between RT and bit thread formulations naturally suggests applying the same techniques (convex relaxation and Lagrange duality) to HRT.  While that is indeed the route we will take in this paper,  before embarking,  it will be instructive to pause to see what we might naively expect, and its pitfalls.

\subsection{Naive expectation}
\label{ss:intro:expect}

\begin{figure}[tbp]
\centering
\includegraphics[width=0.85\textwidth]{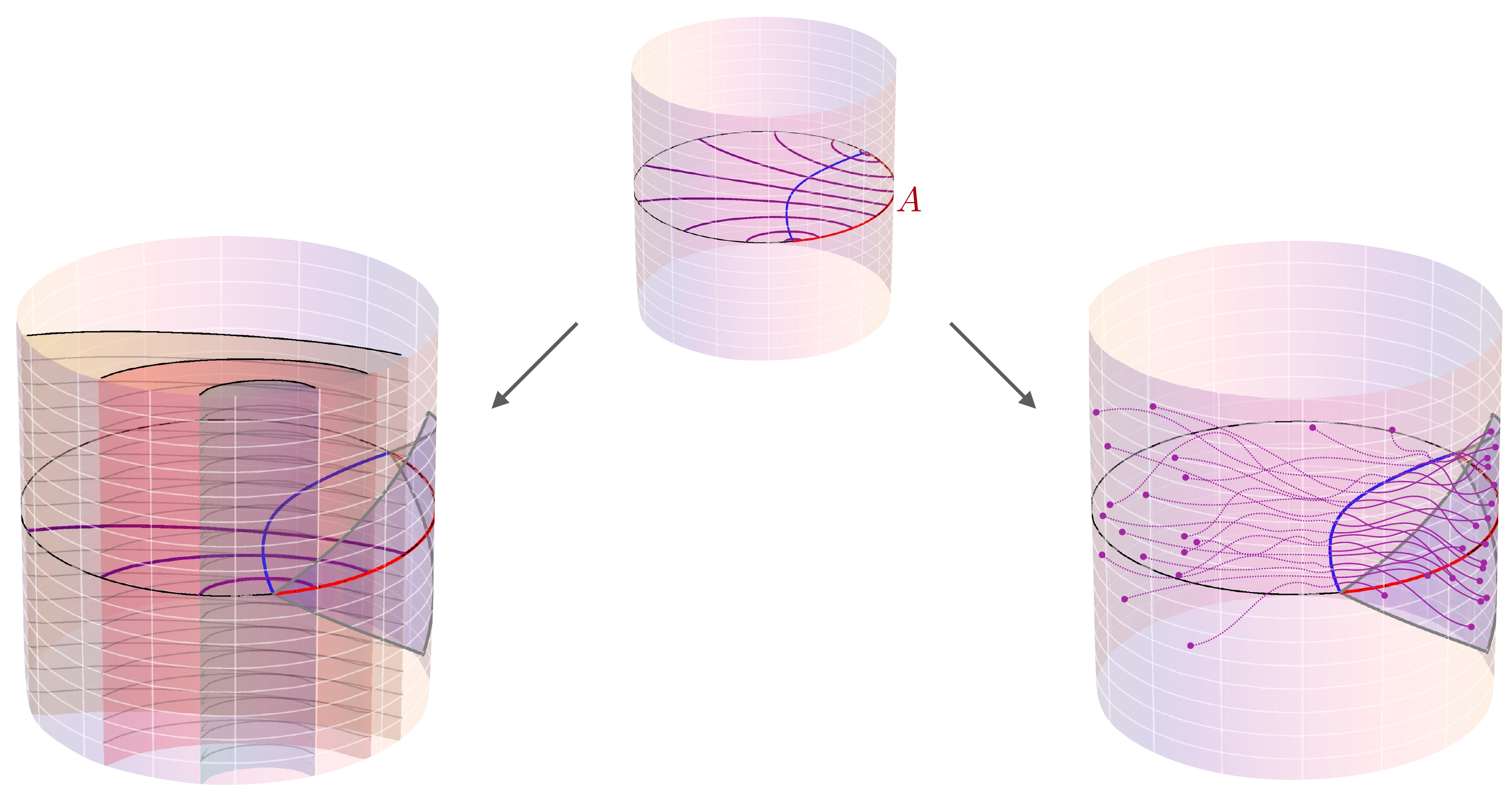}
\caption{\label{fig:expectations}
Schematic sketch of two natural covariantizations of Riemannian bit threads \figM, indicated by purple curves on constant time slice of Lorentzian global AdS spacetime.  
\figL: Threads extend in time to form 1+1 dimensional worldsheets.  
\figR: Threads remain 1-d objects but are no longer localized on a single time slice. The blue curve in each figure is the HRT surface. (For ease of illustration, the number of covariantized threads which we display varies between the three pictures, but of course in the actual constructions they would all agree.)
}
\end{figure}

If we view a Riemannian bit thread as a 1-dimensional string-like object that somehow embodies entanglement between the boundary regions at the string's endpoints, it then seems most natural to expect that when uplifted to the Lorentzian context, this string will extend in time, spanning a $(1+1)$-dimensional worldsheet, as schematically illustrated in the left panel of figure \ref{fig:expectations}. If the original endpoints characterized a Bell pair, the intersection of the worldsheet with the boundary would now correspond to the worldlines of these individual entangled particles.  And since their entanglement cannot be created or destroyed acausally, we might furthermore presume such worldlines, and so plausibly the interpolating worldsheet, to be generically timelike (or at least causal).  Intersecting the worldsheet by constant-time slices would then recover the bit threads as ``snapshots'' at the given boundary time, effectively tracking the entanglement dynamics.

This worldsheet picture is further bolstered by the observation that in the Lorentzian context, the HRT prescription entails a spacetime codimension-2 surface (uplifted from a spatial codimension-1 surface on a fixed-time slice), so that the uplift of the threads, being in some sense the dual objects, should be correspondingly bulk 2-dimensional sheets, locally extending in the orthogonal directions from the HRT surface. We might then try to recast the EE $S(A)$ as the maximal number of such worldsheets which can pass through the boundary domain of dependence $D(A)$.

Interestingly, this naive expectation does \emph{not} seem to be realized. One immediate challenge has to do with a suitable generalization of the norm bound in \eqref{eq:FHflowcond}.  Since in the Riemannian context this condition restricted the threads from getting too close together, we might expect that, similarly, the worldsheets should have suitably bounded density.  It is also natural that the divergencelessness condition in \eqref{eq:FHflowcond} translates to the restriction that the worldsheets cannot simply terminate in the bulk.  But these two conditions together appear to impose too global a constraint on the worldsheets, which in particular can violate causality.  For example, if the bulk spacetime happens to contract in the future, this would seem to teleologically expel the worldsheets in the present.\footnote{\, 
	We thank Juan Maldacena for originally raising this possibility.
}  

Moreover, if each thread uplifts to a physical  worldsheet, one might expect that this object is intrinsically 2-dimensional, analogously to a fundamental string worldsheet, with no residual information about its foliation by the `snapshot' threads. In other words, observers with relative boosts should experience the same  worldsheet, but naturally associate different thread-foliations to it.  This however makes the optimal thread congruence overconstrained, since temporally deforming the entangling surface deforms the position of the corresponding HRT surface, where the bit threads are required to be maximally packed and directed perpendicularly.\footnote{\, 
As a simple example, boosting the boundary region in opposite directions of a translational symmetry yields two distinct HRT surfaces that intersect in the middle but with different normals.}
Said differently, the collection of HRT surfaces anchored somewhere on the boundary `entangling tube' $\partial A(t)$, without preserving the $t$-foliation, spans a bulk codimension-0 region instead of a timelike codimension-1 hypersurface, analogously to the set of HRT surfaces anchored on a given boundary Cauchy slice generically spanning a codimension-0 bulk region instead of all lying on the same bulk Cauchy surface.

Another way of viewing the potential incompatibility of the threads forming worldsheets under time evolution is the following.  Considering for a moment the original Riemannian context, it was observed already by \cite{Freedman:2016zud} that a given thread configuration cannot optimize on two crossing regions simultaneously.  The intuitive reason is that the threads have to be maximally packed on the RT surface and traverse it perpendicularly, but crossing regions have intersecting RT surfaces, and therefore incompatible perpendicular directions at their intersection.  Coming back to the Lorentzian case, the physically pertinent geometrical object is not the entangling surface $\partial A$ as such, but the boundary domain of dependence $D(A)$.  It is then tempting to view the case of time-evolved regions $A(t_1)$ and $A(t_2)$ as likewise crossing, in the sense of their domains of dependence forming non-nested sets with non-empty intersection.  Even though the HRT surfaces as such do not intersect, one might nevertheless worry that it won't  be possible in general to find a single set of thread worldsheets whose $t_1$ and $t_2$ cross-sections would be simultaneously optimizing.  

The basic flaw with the naive expectation of bit thread worldsheets is to think of the EE as pertaining to a boundary space\emph{time}, in particular as admitting a canonical temporal extent.  On a static boundary  spacetime (for example, the Einstein static universe for an asymptotically globally AdS spacetimes), there is indeed a natural way of fixing a given boundary region in space and considering the evolution of its EE in time, in response to the evolution of the state.  However, in general, such a setting would be too limiting.  For example, there is no canonical way of spatially delimiting a region in time-evolving asymptotically locally AdS spacetimes, or for non-static observers.  EE pertains to a given boundary state and region at a single instant in time, and in these more general contexts there is no preferred extension from one time to another.  This is the reason that a codimension-2 surface does not dualize to a 2-dimensional object: we lose one dimension because of this instantaneous nature of EE.  

The conclusion (already emphasized by \cite{Freedman:2016zud}) is that the threads are not to be viewed as physical objects, and their endpoints are not naturally extended in time.  What could then be an interpretation of such a scenario?  A holographic CFT is necessarily strongly coupled, so dynamically a Bell pair should quickly decohere and the shared entanglement spread out.  Accordingly, we would expect entanglement to be generically delocalized, but nevertheless one might imagine that one could momentarily localize it, perhaps by an entanglement distillation process. Since we can think of this process as a spacetime event (with negligible time duration), we might associate the thread ends to such events.  The number of threads which can end in $D(A)$ would then be counting the number of distillation processes we could perform on the state specified at $A$, which would in turn characterize its entanglement.\footnote{\, 
	Although for general mixed states, the entanglement of distillation does not coincide with EE, it does so for pure states.  In the present context, we think of the full geometry as encoding a pure state, so one might hope that its entanglement could be viewed in this way.
} 

How does this bear on the relevant geometrical constructs in the bulk which should be associated with EE?
The above speculation suggests that on the boundary we retain string endpoints, and for the Lorentzian formulation to correctly reduce to the Riemannian one for the static cases, the most natural mathematical construct is then still a thread.\footnote{\, 
		In principle, due to the diverging conformal factor at the AdS boundary, another possible construct connecting two boundary points is an extended flux-tube-type object; but this does not immediately reduce to a Riemannian thread.  In fact, it will turn out that threads already allow requisite delocalization, while retaining relatively simple description.
	}  
But since we have a Lorentzian spacetime (with no preferred time slice in general), we would expect that these threads can meander in both space and time, as schematically illustrated in the right panel of figure \ref{fig:expectations}. The upgraded expectation is that the EE is captured by the maximal number of threads adhering to certain restrictions.  Causality requires that they end within $D(A)$. But what further constraints should we impose on them?  Can they be timelike somewhere? What bulk regions are they allowed to penetrate? How do they interface with each other?

\subsection{Preview of results}
\label{ss:intro:results}

To answer these questions we will dualize HRT. Our derivation effectively entails a double convex relaxation (in both spatial and temporal directions), which combines features of the previously studied MFMC theorem in the Riemannian setting, as well as the min flow-max cut theorem in the Lorentzian setting \cite{Headrick:2017ucz}. This allows us to construct a web of geometrically distinct prescriptions, or reformulations, in terms of flows, or equivalently in terms of threads (which we will slightly generalize from the original picture of integral curves of a nowhere-vanishing flow field).  Altogether we will present ten new formulas for computing the holographic EE in a general time-dependent holographic spacetime (in addition to the already-known HRT and maximin).

To develop the mathematical framework, it will be instructive to start in a more general context and only restrict to holography at a later point, which will simultaneously enable us to identify the special features implemented by holography.  Hence at the outset, we will retain only the ``kinematical'' aspects of the holographic context, but not specialize to keeping the ``dynamical'' ones until section 6.

We will see that in optimization problems, given a function of two variables $f(x,y)$, its maximin $\sup_x \, \inf_y \, f(x,y)$ is generically distinct from its minimax  $\inf_y \, \sup_x \, f(x,y)$ where we simply switch the order of the two extremizations.  However, the minimax universally provides an upper bound for the maximin.\footnote{\, 
		The reader who seeks a more intuitively obvious mneumotic is invited to observe that the shortest giant is still taller than the tallest dwarf.}
The Lagrange duality of convex optimization problems utilizes this structure: weak duality gives the bound while strong duality gives a sufficient condition to saturate it.  The two optimizations are over the original variables and over the Lagrange multipliers that implement the constraints, respectively; hence the dual problem is phrased in terms of the Lagrange multipliers instead of the original variables.

One natural class of situations where the maximin and the minimax values coincide is when there exists a ``global saddle point'' $(x_0,y_0)$ such that $f(x_0,y)$ is $y$-minimized at $y_0$ while $f(x,y_0)$ is $x$-maximized at $x_0$. More broadly, the criterion for the minimax to equal the maximin is specified by the minimax theorem (originally developed in the context of game theory). In the continuum context, the crucial criterion is for the function to be  convex-concave in its respective arguments. Before devising such a function in our context, we first consider a more localized geometric prescription where all the action effectively takes place within a single hypersurface of a specified class, and then  optimize over such hypersurfaces.  The maximin prescription for holographic EE \cite{Wall:2012uf} is but one example; here we can think of the action as taking place within a single Cauchy slice, and within this slice we can use Riemannian max flow-min cut theorem to convert it to slice flow, so that upon maximizing over all slices we arrive at an alternative,  ``maximax'' prescription.  But instead of Cauchy slices, one could equally start with a different class of hypersurfaces.  Since, roughly-speaking, an HRT surface area increases under spatial deformations and decreases under temporal deformations,\footnote{\, 
	This is just a heuristic to build intuition; the separator is not precisely null, and in fact one can typically find spacelike Cauchy slices (which are not maximin slices) along which the HRT surface is not the minimal area surface.  This is possible whenever the expansion of the null normal congruence from the extremal surface becomes negative (which is generically the case).
}  one could first find the maximal-area surface within a timelike hypersurface (which we will dub ``time-sheet'') and then minimize this area over all time-sheets.  This is our ``minimax'' prescription.

To unify both maximin and minimax into a common phrasing, we can view the extremal co-dimension-2 surface in question as the intersection of two codimension-1 hypersurfaces, namely a spatial slice and a time-sheet.  The difference in the two prescriptions then boils down to merely the order of extremizations, and hence becomes a subject of the minimax theorem. But since the respective sets of hypersurfaces (and intersections thereof) are not convex sets, the two quantities thus identified (which we'll still refer to as minimax and maximin), need not coincide, and in fact it is easy to construct an example of spacetime where they differ.  We will see that allowing partial relaxation brings the quantities closer together, and the minimax theorem indicates that if we can embed the problem into a fully convex-relaxed one, then the corresponding convex maximin and convex minimax would indeed coincide. Quite remarkably, in the actual holographic context (characterized by the correct dynamics), we will see in section \ref{sec:solutions} that maximin and minimax in fact do coincide even without any convex relaxation, due to the existence of a global saddle point, as already heralded by  the HRT prescription.
 
However, in the more general context (retaining the kinematics but not the dynamics), to achieve the equivalence criterion of the minimax theorem, we need to convex-relax these hypersurfaces. As in \cite{Headrick:2017ucz}, we can view a hypersurface as a level set of a scalar field, and impose conditions on the scalar field so as to comprise a convex set.  The slice is then convex-relaxed (``smeared'') to a continuous collection of level sets, weighed by the gradient norm.  The novel feature here compared to the implementation in \cite{Headrick:2017ucz} is that we do this not with  just a single scalar field but with two scalar fields simultaneously; in particular, we'll associate a field $\phi$ with a temporal smearing of a spatial slice, and a field $\psi$ with spatially-smearing a time-sheet.

The objective function to be optimized, generalizing the area of a codimension-2 surface, is constructed from a certain scalar pairing between the respective gradient 1-forms, such that it has the requisite convexity properties. The minimax theorem then states that we can exchange the order of the extremizations without changing the optimal value. The formulation in terms of these two scalar fields therefore provides a convenient starting point for subsequent reformulations.  However, while mathematically central to our story and appealingly treating the spatial and temporal directions on a similar footing, it is not the most intuitively suggestive formulation. Instead, it turns out to be conceptually more convenient to Lagrange-dualize on one or both scalar fields, recasting the formulation in terms of flows.  We will dub the primarily-spatial flows (dual to $\psi$) V-flows and the temporal flows (dual to $\phi$) U-flows.  In the next few paragraphs, we will preview the respective prescriptions in greater detail.
 
\paragraph{V-flows:}  The convex program of minimizing over the scalar field $\psi$ (which smears out the time-sheets) at fixed $\phi$ can be dualized to obtain a concave program, which entails maximizing the flux of a divergenceless 1-form $V$ subject to a certain norm bound. This norm bound is analogous to the familiar one $|v|\le1$ in the Riemannian setting, but because of the Lorentzian geometry, it now has a richer structure, and in particular is non-local. Specifically, it is implemented by requiring the 1-forms $d\phi\pm V$ to be everywhere future-directed causal.

We can re-cast this norm bound in a way that actually does not invoke $\phi$ at all, instead posing a more global norm condition, namely, we upper-bound the integral over an arbitrary timelike curve $\qcv$, parameterized by proper time $t$, of the norm of the perpendicular projection of $V$:
\begin{equation}\label{eq:Vflownormbnd}
	\int_\qcv dt\,|V_\perp|\le 1 \ .
\end{equation}	
$S(A)$ is then calculated by maximizing the flux of a divergenceless flow $V$ from $D(A)$ subject to the requisite boundary conditions and the above norm bound.  Using time-sheets, we can interpret \eqref{eq:Vflownormbnd} physically as the requirement that any observer carrying a unit-area window cannot capture more than a unit amount of total flux of $V$ over their entire lifetime.  To recover the original bit thread formulation in the Riemannian context \cite{Freedman:2016zud}, we can specialize to the case where all the flux is localized on a single slice, in which case the integrated bound \eqref{eq:Vflownormbnd} collapses to the simple norm bound of \eqref{eq:FHflowcond}.\footnote{\, 
	The norm bound \eqref{eq:Vflownormbnd} also ratifies our previous naive expectation that there cannot exist a compatible flow that simultaneously maximizes on temporally-separated regions with crossing domains of dependence: even if the respective HRT surfaces do not intersect, they are necessarily timelike-separated somewhere, which allows us to construct a curve $\qcv$ for which this norm bound is violated.  
	}
One might think that in the more general Lorentzian case this is now infinitely more complicated since we have (continuously) infinitely many observer worldlines to check, but in fact the initial formulation in terms of $\phi$ can be viewed as providing a single ``certificate'' that guarantees the bound for every worldline.

Although the original scalar field $\psi$ was introduced as a tool to smear out a time-sheet, and correspondingly the dual $V$-flow 1-form has flow lines which emanate from $D(A)$ and go into $D(A^c)$, these flow lines need not actually remain spacelike everywhere; they can have timelike (or null) pieces along the way, subject to \eqref{eq:Vflownormbnd}.

\paragraph{U-flows:}
The U-flow case follows a very similar story as the one for V-flows, but now pertains to dualizing on $\phi$ instead, starting from the minimax formulation.  In particular, keeping $\psi$ fixed, we can dualize the concave program of maximizing over $\phi$ (i.e.\ smeared Cauchy slices) to the convex program of minimizing the flux of a divergenceless 1-form $U$ at future infinity, again subject to a norm bound, namely that the 1-forms $U\pm d\psi$ must be everywhere future-directed causal.\footnote{\, 
We can also obtain the U-flow by directly dualizing the V-flow program; see the diagram \eqref{diagram} for a summary of these relations.}

Analogously to the V-flow case, we can write the norm bound as a global condition in terms of an arbitrary spacelike curve $\pcv$ passing between the domain of dependence of the region $A$ and that of its complement:
\begin{equation}\label{eq:Uflownormbnd}
	\int_\pcv ds\,|U_\perp|\ge1 \,,
\end{equation}	
where $U_\perp$ is the projection of $U$ perpendicular to $\pcv$.
Note the direct analogy with \eqref{eq:Vflownormbnd}; however, whereas in that case the constraint $\qcv$ had no information about the given region but the objective function (the boundary region on which the flux is evaluated) did, here it is the other way around: the objective function is region-agnostic while the region determines the constraint via requisite set of curves $\pcv$.

The difference of viewpoint between the two sides provides us with different toolkits that complement each other.  For example, in proving holographic entropy inequalities, one can optimize the smaller side in V-flow language to bound the larger side from below, or conversely one can optimize the larger side in the U-flow language and show that this provides an upper bound for the lower side. It is worth noting however that although the two prescriptions are closely analogous (seemingly amounting mainly to spatio-temporal flip accompanied by minimization-maximization flip), there are still crucial differences due to the boundary conditions, as mentioned above. An avatar of this feature appears already in the maximin and minimax formulation using hypersurfaces, where specifying a Cauchy slice does not fix the homology class of codimension-2 surfaces, while specifying a time-sheet does. 

\paragraph{Thread distributions:}

Once we have formulated the convex-relaxed problem in terms of flows, it follows immediately that we can re-cast them also in terms of threads.  In particular, for a flow field $V$, which is a 1-form, the flow lines (which can be thought of either as integral curves of the dual vector field, or in terms of the Hodge dual $*V$), can be viewed as threads joining the given boundary subsystem to its complement.\footnote{\, 
	One might wonder why we didn't try to construct threads already from $d\psi$ which is after all also a 1-form.  This is generically not possible, since $d\psi$ is not divergence-free, so that it is not extendible into a full thread; an extreme case being when $\psi$ is just a simple step function corresponding to a localized time-sheet.  
}
However, an alternative, and perhaps a more appealing, notion of a ``thread" is simply an unoriented curve in the spacetime, which is allowed to intersect other threads.  While this generalizes the notion of  integral curves of a smooth vector field (which cannot intersect by construction), one can reformulate the requisite norm bounds in terms of a restriction on the measure on the space of curves.  

In the Riemannian case, this measure is restricted to ensure that the thread density is $\le1$ everywhere, and the corresponding linear program maximizes the total measure, subject to this constraint, of the set of threads which join the given region and its complement.
We can also dualize this program, which amounts to minimizing a positive function $\lambda$ subject to it integrating to a value $\ge 1$ over any thread. The minimum is attained by a step function along the minimal surface, which again recovers the EE.  
We can readily generalize this setup to multiple regions and for example obtain a thread version of max multiflow theorem of \cite{Cui:2018dyq}.  Indeed, we can map any (multi)flow to a thread distribution and vice-versa (though not via an isomorphism due to the non-orientation of treads and non-crossing of flow lines), so that every statement pertaining to threads has an avatar in flows.

In the full Lorentzian context, we can similarly translate V-flows and U-flows into V-threads and U-threads, respectively.  However, the since the two flows depend on each other, the corresponding constraints on the threads are now non-local: instead of bounding thread density at every spacetime point, we need to bound thread density integrated over each dual thread.  In this way, the density bounds for the V- and U-threads enforce each other.  The EE is obtained by maximizing the number of V-threads, or equivalently (in the dual picture) minimizing the number of U-threads, subject to these constraints.  The V-threads join the boundary domain of dependence of the given region with that of its complement, but are not restricted to remain spacelike in the bulk. The U-threads, on the other hand, are necessarily causal and join the past boundary of the spacetime with the future one.  Their knowledge of the given region comes through the norm bound, which amounts to the U-threads forming a sufficient barrier separating the requisite domains of dependence.  The summary of all of the prescriptions for the convex-relaxed value is in the diagram \eqref{diagram2}.

\paragraph{Optimized flow/thread configurations in holography:}

The restrictions \eqref{eq:Vflownormbnd} and \eqref{eq:Uflownormbnd}, 
on the V- and U-flows, or the corresponding rephrasing in terms of the threads, still allows these geometrical objects to permeate the full spacetime without singling out any more localized regions. However, for \emph{optimized} flows in the holographic context, this allowed set collapses to become more colimated, as sketched in figure \ref{fig:maxVUflow}. In particular, the optimized flows must all pass through the HRT surface!  The V-threads are confined to the entanglement wedge of $A$ and that of its complement, while the U-threads are confined to the future and past of the HRT surface.  These four regions naturally partition the bulk spacetime \cite{Headrick:2014cta}, and the V-threads intersect the U-threads only along the HRT surface, thereby naturally counting its area as the EE.   Notice that this description implements just the right amount of localization, without imposing any non-geometrical features, and retaining maximal democracy between spatial and temporal directions.  The localization comes about collectively: each thread by itself does not exhibit any special points along its length --- rather, its physical significance lies in what regions it connects and how it interfaces with the other threads.

\subsection{Outline of the paper}
\label{ss:intro:outline}

Having previewed our results, in the remainder of the paper we develop them in full technical detail. We start by explaining our setup and assumptions in section \ref{sec:background}.  Since our primary focus is on geometrical prescriptions for calculating EE in holography, we first clarify in subsection \ref{sec:EWCS} how we regulate EE to obtain a meaningful quantity. In particular, in most of our derivations we work in ``regulated spacetime'' consisting of the entanglement wedge of the union of a given boundary region $A$ and its disjoint complement $B$ (with the separating entangling surface slightly thickened so that we can use the entanglement wedge cross section to regulate the EE).  This is merely a matter of presentational convenience; as implied by CFT causality (and confirmed explicitly in section \ref{sec:embedding}), the details of the spacetime in the past or future of the HRT surface do not influence the holographic EE.

To appreciate the special features of holography, it will furthermore be instructive to work in a more general setting --- namely a globally hyperbolic spacetime whose conformal boundary includes timelike components --- which we specify in subsection \ref{sec:setup}.
This allows us to identify the key codimension-1 constructs, schematically illustrated in figure \ref{fig:setup}, consisting of distinct components of the spacetime boundary, as well as slices and time-sheets, which will pave the way for defining  the maximin and minimax constructs (in section \ref{sec:relaxation}).  
In order to develop the framework to recast these prescriptions in terms of flows (in section \ref{sec:flows}), it will be convenient to use covectors and their Hodge duals instead of the more familiar vector fields; subsection \ref{sec:1forms} reviews this formalism and constructs a scalar concave-convex ``wedgedot" pairing of two covectors that will play a crucial role throughout the paper.

Section \ref{sec:relaxation} then proceeds to explain the essential points from minimax theory and convex relaxation.  To anchor the reader, we first specify the geometric maximin and minimax constructs and variations thereon in subsection \ref{sec:maximin}.  To understand the relation between them,  we step back to review minimax theory (in its original game theory context) in subsection \ref{sec:minimaxtheory}.  In subsection \ref{sec:convexrelaxation} we apply the theory to our geometrical context and perform convex relaxation to define a new quantity. We use the suggestive notation $S_-$ for maximin, $S_+$ for minimax, and $S_c$ for the convex-relaxed quantity.\footnote{\, 
	The choice of the letter $S$ and acronym ``EE'' in our summary of results were in anticipation of the holographic context; however strictly speaking in the broader context our results apply to, there is no dual theory in which to formulate an EE. The preceding paragraphs pertaining to V-flows, U-flows, and threads all give prescriptions for $S_c$, whereas the hypersurface-localized prescriptions give $S_-$ in slice-localized and $S_+$ in time-sheet-localized contexts.
}
In general, the minimax theorem only ensures that $S_- \le S_c \le S_+$, while in the more physically relevant context of holography (discussed in section \ref{sec:solutions}), all three quantities in fact do coincide, $S_- = S_c = S_+$, and provide alternate prescriptions for the holographic EE.  

Having formulated $S_c$ in terms of a convex-concave pairing of the scalar fields $\phi$ and $\psi$ (whose level sets implement convex-relaxation of slices and time-sheets), we finally get to the core of the paper in section \ref{sec:flows}, which reformulates these in terms of flows by applying Lagrange duality.  
Subsection \ref{sec:Vflows} dualizes on $\psi$ at fixed $\phi$ to obtain the V-flow program, while subsection \ref{sec:Uflows} dualizes on $\phi$ at fixed $\psi$ to obtain the U-flow program.  
To avoid fragmenting the narrative overmuch, in both subsections we relegate the actual dualizations, as well as proofs of lemmas needed for the reformulations of the norm bounds etc., to subsection \ref{sec:flowproofs} which serves as a mini-appendix to section \ref{sec:flows}. In order to hone intuition for the effect of convex relaxation, in both subsections we devote a subsubsection to an explicit toy example of a spacetime wherein $S_- \ne S_+$ (introduced in subsection \ref{sec:convexrelaxation}).  Instead of a full convex relaxation, we perform a particularly simple partial relaxation, to demonstrate how it diminishes the gap between maximin and minimax values, bringing them closer to $S_c$; we refine this with a further (but still partial) convex relaxation in appendix \ref{app:piecewiselin}, which generalizes the previous calculations, now presenting the V-flow and U-flow cases in parallel in a self-contained manner. In subsection \ref{sec:subadditivity} we prove subadditivity of $S_c$ in these two formulations, which illustrates that despite the close parallel between the V-flow and U-flow programs, there is a non-trivial difference in how they implement various features of $S_c$. Finally, the heart of the mathematical framework resides in subsection \ref{sec:flowproofs} where we prove the various statements asserted earlier.  We start by presenting six covector-pair lemmas, followed by five explicit dualizations, and culminating in proving the equivalence of the norm bounds appearing in the respective flow programs which involves a beautiful generalization of Hamilton-Jacobi theory for non-differentiable Lagrangians, this curious feature arising due to signature-dependence in the Lorentzian context.

Section \ref{sec:threads} substantiates (and surpasses) the title and original motivation of the paper; in addition to covariantizing the Riemannian bit thread formulation of \cite{Freedman:2016zud} captured by V-threads,  it also provides an alternate covariant prescription in terms of U-threads, mimicking the V-flow and U-flow formulations of section \ref{sec:flows}.  To generalize the notion of threads viewed as flow lines, subsection \ref{sec:RMFMC} develops the framework of thread distributions in the Riemannian context from scratch, applying the technology of convex optimization to measures on sets of curves and proving the analog of the MFMC theorem.  In subsection \ref{sec:VUthreads} we return to the Lorentzian context, and formulate the V-thread and U-thread prescriptions for computing the holographic EE.  

In section \ref{sec:solutions}, we finally specialize to the holographic context.  We first explain in subsection \ref{sec:HRTsurface} why the non-convex maximin and minimax values coincide with the convex-relaxed one by identifying the HRT surface as a global saddle point.  This relies on certain physical assumptions about the bulk spacetime and its boundary, and demonstrates the equivalence between maximin and HRT prescriptions already shown in \cite{Wall:2012uf}, as well as that between minimax and HRT.   In subsection \ref{sec:maxVflows} we consider the optimized flows and show that they pass through the HRT surface.  While hitherto most of our constructions pertained to a single region $A$ on the boundary, in subsection \ref{sec:multiple} we indicate how to generalize the discussion to multiple regions, with a suitably enlarged regulated spacetime.

In section \ref{sec:embedding}, we show how we can enlarge our constructions pertaining to the regulated spacetime, by embedding the latter in the full spacetime, without changing any of the results. We also explain what happens to the constructions in the limit that the regulator is removed.

We conclude in section \ref{sec:discussion} with a discussion and possible applications and future directions. 

As the paper is fairly notationally heavy, we also include for the reader's convenience in appendix \ref{sec:notation} a table of notation.

We will assume that the reader is familiar with the content of the paper \cite{Headrick:2017ucz}, in particular the concepts of convex relaxation and Lagrange duality and their application to geometrical theorems such as the Riemannian max flow-min cut theorem.


\section{Background}
\label{sec:background}

In this section, we explain the basic setup, assumptions, and notation we will use in the rest of the paper. We start in subsection \ref{sec:EWCS} by explaining how we deal with ultraviolet divergences in holographic EEs. In subsection \ref{sec:setup}, we then detail the assumptions and notation we use for the spacetime we'll be working in. Finally, in subsection \ref{sec:1forms} we explain important notation for 1-forms that we will use throughout
the paper. We remind the reader that appendix \ref{sec:notation} contains a table summarizing the notation we use in the paper.

\subsection{EWCS regulator}
\label{sec:EWCS}

The purpose of this paper is to give several formulas, equivalent to the original HRT formula \cite{Hubeny:2007xt}, for the EE $S(A)$ of a spatial region $A$ in a holographic field theory. This task only really makes sense if $S(A)$ is a finite quantity, in other words if any infrared and ultraviolet divergences have been regulated in some way. The choice of IR regulator will not have any bearing on our work, so we will not deal with it explicitly, but will simply assume that any IR divergences have been regulated somehow. On the other hand, among the various kinds of UV regulators that have been employed for holographic EEs, there is one that will be particularly natural for our purposes, involving the so-called entanglement wedge cross section (EWCS) \cite{Takayanagi:2017knl,Nguyen:2017yqw,Dutta:2019gen}, which we will now review.\footnote{\, For systematic discussions of UV regulators for holographic EE, see \cite{Sorce:2019zce,Grado-White:2020wlb,paper1}.}

To set the stage, it is useful to first recall the case where the entropy is naturally UV-finite and does not need to be regulated. This happens when the conformal boundary has multiple connected components --- i.e.\ the spacetime is a multiboundary wormhole --- and we are computing the entropy of an entire connected component or a set of them. To fix some notation, we denote
the bulk by $\M$ and its conformal boundary by $\N$. $\M$ also has past and future boundaries $\I^-$, $\I^+$, which may be at infinite time and/or include singularities. Let $\{\N_i\}$ be the connected components of $\N$, and choose a Cauchy slice $\bdyslice_i$ for $\N_i$. Each $\bdyslice_i$ is boundaryless. So if we let $A$ be the union of a subset of $\{\bdyslice_i\}$, then the entangling surface $\partial A$ is empty and there is no UV divergence in $S(A)$.

On the other hand, if $\partial A$ is non-empty, then $S(A)$ contains a UV-divergent piece. To regulate this divergence, we essentially want to turn the bulk spacetime into something akin to a multiboundary wormhole; this is what the entanglement wedge and EWCS will do for us. We choose a boundary Cauchy slice $\bdyslice$ containing $A$, and a region $B\subset \bdyslice\setminus A$ that almost fills the complement but leaves a small buffer between $A$ and $B$ \cite{Dutta:2019gen}. The EWCS $S(A:B)$ is defined as the area of the minimal extremal surface in the entanglement wedge of $AB$ that is homologous to $A$ relative to the joint HRT surface $\hHRT(AB)$. (The reason for the hat on $\surf$ will become clear shortly.) From a field-theory viewpoint, the buffer between $A$ and $B$ eliminates the  short-wavelength modes shared between $A$ and its complement that cause the divergent EE. We should also emphasize, of course, that the EWCS is an interesting quantity in its own right, quite apart from its use as a regulator, with various conjectured field-theory interpretations \cite{Nguyen:2017yqw,Takayanagi:2017knl,Dutta:2019gen}. Although we have motivated the EWCS as a regulator, the results of this paper apply equally well to any EWCS calculation; they do not depend on the buffer between $A$ and $B$ being small in any sense.

\begin{figure}[tbp]
\centering
\includegraphics[width=0.3\textwidth]{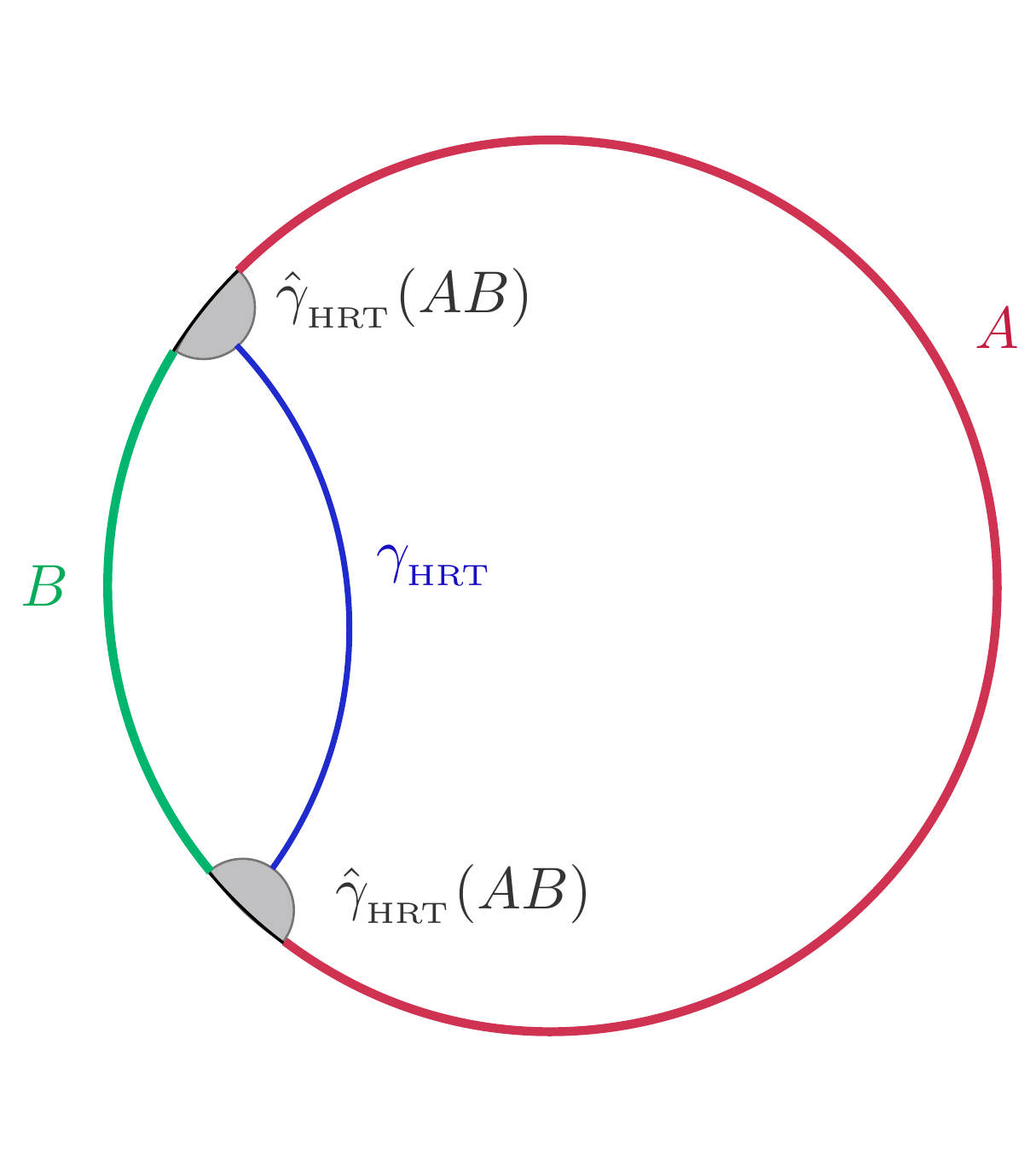}
\hspace{2cm}
\includegraphics[width=0.3\textwidth]{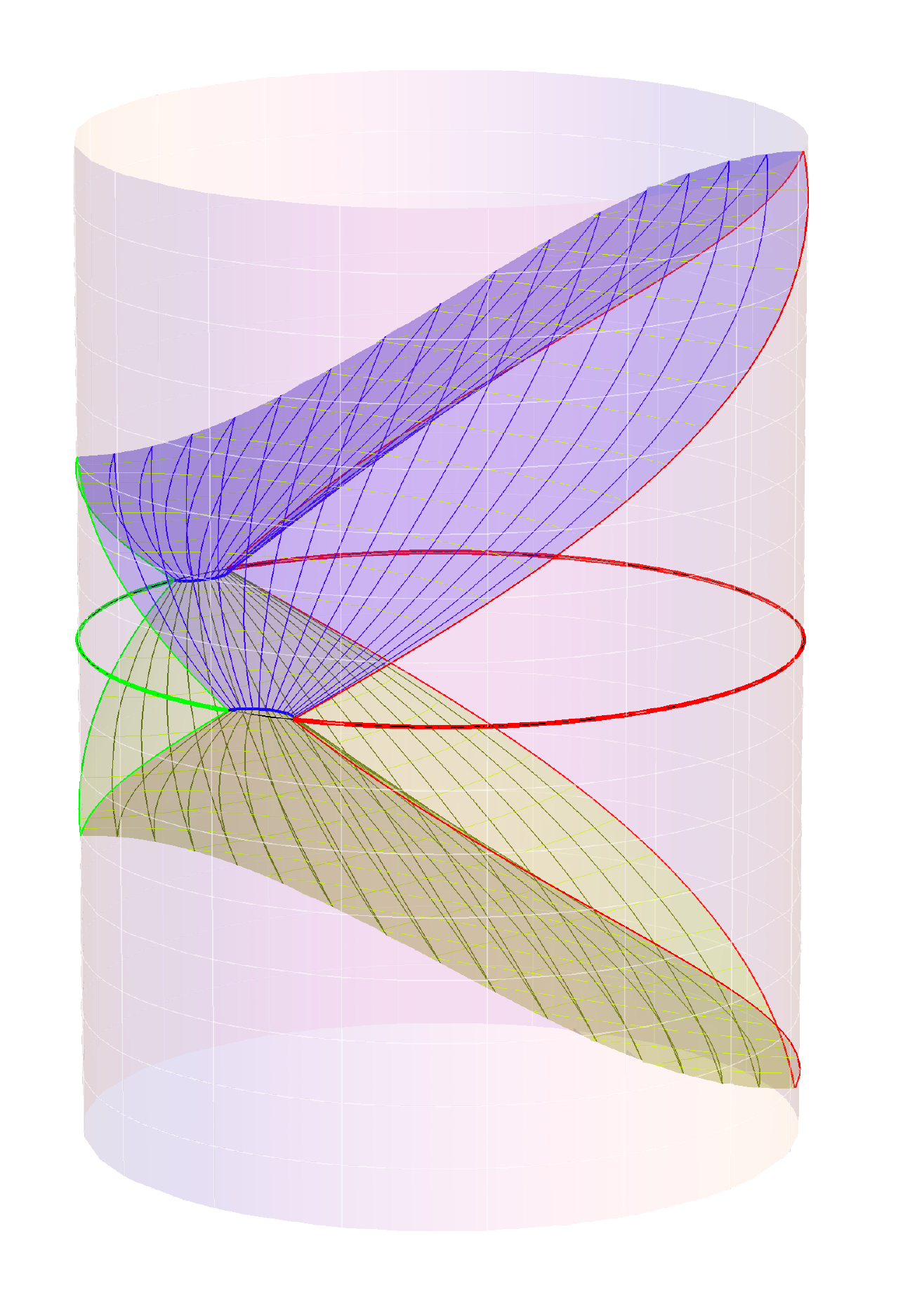}
\setlength{\unitlength}{1cm}
\put(-0.45,3.2){\color{red}{$A$}}
\definecolor{darkgreen}{rgb}{0.2,0.7,0.2}
\put(-4.75,3.2){\color{darkgreen}{$B$}}
\put(-2.,3.3){$\M$}
\put(-3,5){$\I^+$}
\put(-3,1.6){$\I^-$}
\caption{\label{fig:ewcsreg}
EWCS regulator.  
\figL: On the Poincare disk, the region outside the $AB$ entanglement wedge (shaded gray) is excised from the spacetime, and the regulated HRT surface for $A$, $\HRT:=\HRT(A)$ (blue curve), is anchored on $\hHRT(AB)$.  
\figR: In the Lorentzian AdS spacetime, the future and past of the removed regions is corresponding excised, so that the regulated spacetime $\M$ is bounded by the ingoing null congruences from $\hHRT(AB)$ shown.
}
\end{figure}

Now, the key point that makes the EWCS an appealing regulator from our viewpoint is that computing the EWCS is equivalent to simply applying the usual HRT formula \emph{within the spacetime defined by the $AB$ entanglement wedge}, with its future and past boundaries playing the role of $\I^\pm$.   This is illustrated in figure \ref{fig:ewcsreg}.
This entanglement wedge is thus essentially acting as a multiboundary wormhole; the only novelty is that the homology must be computed relative to the joint HRT surface $\hHRT(AB)$. (Thus here $\hHRT(AB)$ is playing a role similar to that of an end-of-the-world brane, for which we would also use relative homology.) This is the point of view we will take in most of this paper: the conformal boundary $\N$ will by definition consist \emph{only} of $D(A)$ and $D(B)$, and the bulk $\M$ \emph{only} of their joint entanglement wedge $\hat\W(AB)$. The rest of the original boundary and bulk spacetime will simply be discarded. However, in section \ref{sec:embedding}, we will return to the original spacetime (denoted $\hat\M$, with future/past boundaries $\hat\I^\pm$), and explain how our constructions can be extended to include it. In general, we will use hatted symbols to refer to constructs defined in the original, unregulated spacetime, and unhatted symbols to refer to constructs in the regulated spacetime.

Of course, we may be interested in calculating the entropy of more than one boundary region. For example, we may want to compute the mutual information $I(A:B):=S(A)+S(B)-S(AB)$ between disjoint regions $A$, $B$ of a common boundary Cauchy slice $\bdyslice$. If $A$, $B$ are separated (not touching), then the mutual information is finite and regulator-independent, but to compute it using the HRT formula we must first regulate the entropies appearing in the definition. To get a meaningful answer, it is important to use the same regulator for all three of those entropies. So we should choose a region $C\subset(AB)^c$ that leaves a buffer between $AB$ and $C$. We then proceed as above, letting $\M$ be the entanglement wedge $\hat\W(ABC)$ and deleting the rest of the original spacetime. The new conformal boundary is $\N=D(A)\cup D(B)\cup D(C)$, and the (regulated) mutual information is $S(A:BC)+S(B:AC)-S(AB:C)$. More generally, however many boundary regions $A,B,C,\ldots$ we are interested in, we assume that all entangling surfaces have been buffered and deleted from the boundary Cauchy slice, and the bulk replaced with the entanglement wedge of the remainder. We will call this entanglement wedge the ``regulated spacetime''.

The authors of \cite{Grado-White:2020wlb} studied a different kind of regulator for the HRT formula, in which the Cauchy slices are anchored on a fixed surface (spacelike codimension-2 submanifold) at finite distance, and the HRT surface is anchored on a fixed codimension-3 submanifold on that surface. Our formalism, described in the next subsection, can also accommodate this choice of regulator.

\subsection{Spacetime setup}
\label{sec:setup}

As explained in the previous subsection, we assume that the boundary regions we consider --- or more precisely their causal domains $D(A)$, $D(B)$, etc.\ --- are composed of entire connected components of the boundary $\N$. The bulk $\M$ is either a multiboundary wormhole or the regions' joint entanglement wedge within the original, unregulated spacetime. In the latter case, the future and past boundaries of the bulk are null hypersurfaces emanating from the joint HRT surface $\surf^0:=\hHRT(AB\cdots)$. In imposing the homology condition on the HRT (or cross-section) surface for any of the regions (or subset of them), we work in homology relative to $\surf^0$, which implies that the surface is allowed to end on $\surf^0$.

We would also like to allow for the possible presence of a timelike boundary at finite distance: an end-of-the-world brane that we call $\I^0$. This brane is assumed not to carry entropy, so its area does not count in computing the HRT entropy (or EWCS), and the HRT (or cross-section) surface is allowed to end on it. This is reflected mathematically in the fact that when we apply the homology condition we work in homology relative to $\I^0$.

$\I^0$ and $\surf^0$ play the same role: the HRT (or cross-section) surface is allowed to end on them, and we work in homology relative to them. The main difference is that $\I^0$ is codimension-1 whereas $\surf^0$ is codimension-2. Notationally, it will be a significant simplification not to have to deal separately with the two cases. For this reason, we subsume $\surf^0$ into $\I^0$, by imagining that $\surf^0$ has an infinitesimal extent in the time direction and is therefore formally codimension-1. Similarly, we may want to employ the regulator suggested in \cite{Grado-White:2020wlb}, in which the conformal boundary $\N$ is replaced by a (spacelike codimension-2) surface $n$ at finite distance, which is divided into regions $A,B,\ldots$. Everything we do in this paper goes through for that setup. Just like for $\surf^0$, in order to avoid dealing separately with that case, imagine that $n$ has an infinitesimal extent in the time direction, making it formally codimension-1. We emphasize that these maneuvers are purely for the purpose of decluttering our formulas. The diligent reader is invited to follow all the steps of our derivations with codimension-2 boundaries  $\surf^0$, $n$ in place of (or in addition to) $\I^0$, $\N$.

With that formality out of the way, we now lay out in a slightly more mathematically precise way our setting, assumptions, and terminology.  We fix a compact oriented manifold-with-boundary $\bar\M$ of dimension $D\ge3$.\footnote{\, With certain notational adjustments, our analysis applies to the $D=2$ case as well, e.g.\ in the setting of Jackiw-Teitelboim gravity \cite{Jackiw:1984je,Teitelboim:1983ux}. For this purpose, the dilaton field $\Phi$ plays the role of the ``area''. A codimension-2 ``surface'' is then a set of points $x_i$ and its effective area is $\sum_i\Phi(x_i)$; the effective flux of a 1-form $W$ over a hypersurface $H$ is $\int_H\Phi\,{*W}$ (compare to \eqref{fluxdef}); the divergenceless condition is $d(\Phi{*W})=0$ (compare to \eqref{divergenceless}); and so on. An easy way to understand the $D=2$ case is by ``dimensional oxidation'' to $D=3$: convert $\M$ into a 3-dimensional spacetime by fibering over it a circle of circumference $\Phi$, with all fields and geometric objects lifted to ones translationally invariant along the circle.} The interior $\M$ of $\bar\M$ has a Lorentzian metric $g$ and a time orientation. The causal structure of $(\M,g)$ extends to a causal structure on $\bar\M$. The boundary of $\bar\M$ is the union of the following compact codimension-1 submanifolds-with-boundary, which are disjoint except along codimension-2 common boundaries:
 \begin{itemize}
     \item $\N$ is timelike.
     \item $\I^-$ is spacelike and/or null and is the past boundary; every inextendible timelike curve in $\bar\M$ begins in $\I^-$.
          \item $\I^+$ is spacelike and/or null and is the future boundary; every inextendible timelike curve in $\bar\M$ ends in $\I^+$.
          \item $\I^0$ may be empty; otherwise it is timelike and the metric $g$ on $\M$ extends to the interior of $\I^0$ (in other words points in the interior of $\I^0$ are at finite distance from points in $\M$).
 \end{itemize}
 $\I^+$ and $\I^-$ are disjoint (though they may both emanate from the infinitesimally time-stretched surface $\surf^0$). We define
 \begin{equation}
  \I:=\I^+\cup\I^0\cup\I^-\,.   
 \end{equation}
 
\begin{figure}[tbp]
\centering
\includegraphics[width=0.5\textwidth]{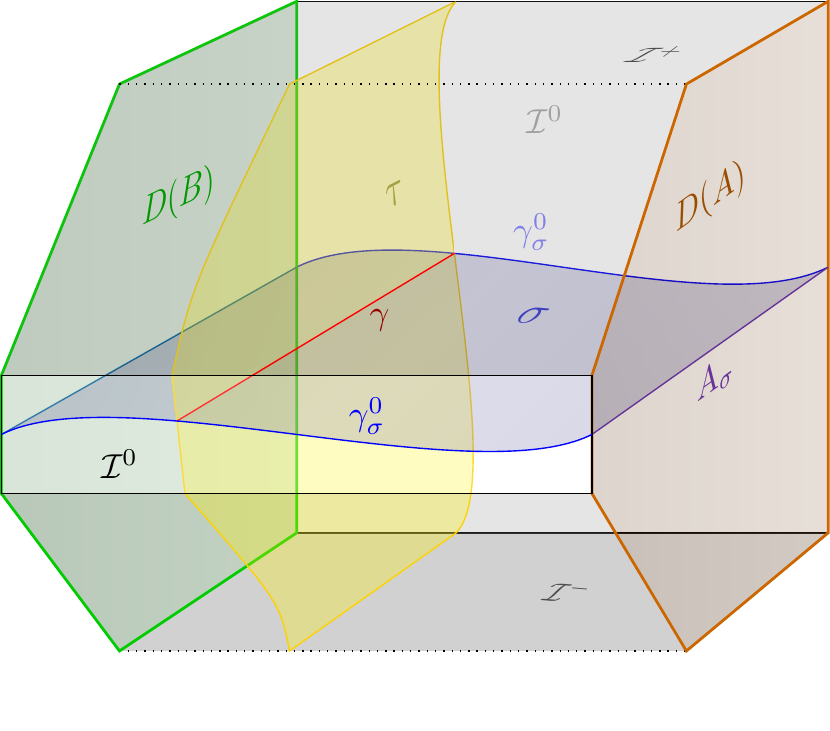}
\caption{\label{fig:setup}
Schematic summary diagram of spacetime setup. The boundary of the spacetime $\bar\M$ consists of the conformal boundary $\N=D(A)\cup D(B)$ (the right and left sides respectively); the future boundary $\I^+$ (top faces); the past boundary $\I^-$ (bottom faces); and possibly an end-of-the-world brane $\I^0$ (front and back faces). $\I^\pm$ are spacelike and/or null. $\I^0$ may be empty; otherwise it is timelike. If the bulk $\M$ is the entanglement wedge of $AB$ within a larger holographic spacetime, then $\I^0$ includes the HRT surface for $AB$, thickened slightly in the time direction to make it codimension-1.  The (blue) hypersurface $\bkslice$ is a (Cauchy) slice for $\bar\M$; its intersection with $\I^0$ is denoted $\surf^0_\bkslice$ and its intersection with $D(A)$ is denoted $A_\bkslice$. The (yellow) hypersurface $\ts$ is a time-sheet homologous to $D(A)$ (relative to $\I=\I^+\cup\I^-\cup\I^0$).
The intersection $\surf$ (red) of slice $\bkslice$ and timesheet $\ts$ is a codimension-2 surface.
}
\end{figure}

 A Cauchy slice for $\bar\M$ (which we will abbreviate as ``bulk slice'') is a hypersurface intersected exactly once by every inextendible timelike curve, and that does not intersect $\I^\pm$. 
 (Note that in our definition a slice is achronal but not necessarily acausal, i.e.\ it is allowed to have null regions.) We denote by $\bksliceset$ the set of all bulk slices. We assume that $\bar\M$ is globally hyperbolic in the sense that $\bksliceset$ is non-empty. Given $\bkslice\in\bksliceset$, we define the surface
 \begin{equation}
     \surf^0_\bkslice:=\I^0\cap\bkslice\,.
 \end{equation}
 We also assume that every inextendible timelike curve in $\N$ is inextendible in $\bar\M$. This implies that, for any $\bkslice\in\bksliceset$, $\bkslice\cap\N$ is a (Cauchy) slice for $\N$, and therefore that $\N$ is also globally hyperbolic. See figure \ref{fig:setup} for a summary of this setup.
 
We will make use of the following lemma:
 \begin{lemma}\label{lem:slicefromset}
Any closed achronal subset of $\bar\M$ that does not intersect $\I^\pm$ is contained in a bulk slice.
 \end{lemma}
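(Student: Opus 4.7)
The plan is to construct the desired bulk slice as the graph, along timelike flow lines, of a continuous function over a reference slice, with the graph engineered to engulf $K$ by exploiting achronality.

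First, global hyperbolicity of $\bar\M$ (the non-emptiness of $\bksliceset$) furnishes a continuous Cauchy time function $\tau:\bar\M\to\R$ and an associated future-directed timelike vector field $T$ (proportional to the gradient of $\tau$, smoothed if necessary) whose integral curves are inextendible and sweep out $\bar\M$. I would take $\bkslice_0:=\tau^{-1}(0)\in\bksliceset$ as a reference, and let $\pi:\bar\M\to\bkslice_0$ denote projection along $T$-integral curves. Each such curve meets $K$ in at most one point because $K$ is achronal and $T$ is timelike, so $\pi$ restricts to a continuous injection $\pi|_K:K\to\bkslice_0$; its image $K_0:=\pi(K)$ is closed (using compactness of $\bar\M$ together with $K\cap\I^\pm=\emptyset$), and the function $h:K_0\to\R$ recording the signed $T$-flow time from $x\in K_0$ to the unique point of $K$ on the corresponding flow line is continuous.

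Second, achronality of $K$ bounds how rapidly $h$ can vary. For $x,y\in K_0$, the two points obtained by flowing $x,y$ by $h(x),h(y)$ along $T$ cannot be chronologically related, so in local flow coordinates the graph of $h$ must lie outside the chronological cone attached to any of its points. This yields a Lipschitz-type inequality $|h(x)-h(y)|\le\rho(x,y)$, where $\rho$ is determined by the local angle between $T$ and the null cones of $g$. I would then apply a McShane--Whitney style extension (inf-convolution with the ``metric'' $\rho$) to obtain a continuous $\tilde h:\bkslice_0\to\R$ with $\tilde h|_{K_0}=h$ that preserves the same achronal-Lipschitz bound everywhere on $\bkslice_0$. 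Because $K\cap\I^\pm=\emptyset$ and $K$ is compact, the graph of $\tilde h$ can furthermore be truncated outside a neighborhood of $K_0$ so that it stays bounded away from $\I^\pm$ and smoothly tends to $\bkslice_0$ at infinity.

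Finally, define $\bkslice$ as the image of $x\mapsto\phi_T(\tilde h(x),x)$, where $\phi_T$ is the $T$-flow. By construction $K\subset\bkslice$ and $\bkslice\cap\I^\pm=\emptyset$. The achronal-Lipschitz bound on $\tilde h$ ensures $\bkslice$ is achronal, and since $\bkslice$ is a continuous graph over the Cauchy slice $\bkslice_0$ along the inextendible timelike curves generated by $T$, every such curve meets $\bkslice$ exactly once. Hence $\bkslice\in\bksliceset$ is the desired bulk slice.

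The main obstacle is the extension step: the Lipschitz-type bound coming from achronality is dictated by the Lorentzian null cones of $g$, not by any a priori Riemannian metric on $\bkslice_0$, so one must verify that McShane-type suprema of the form $\sup_{y\in K_0}\bigl(h(y)+\rho(y,\,\cdot\,)\bigr)$ respect the causal structure globally and produce a graph that remains achronal even where the bound is saturated along null segments within $K$. A related subtlety is the behavior near $\I^0$, handled by observing that $\I^0$ is timelike and that bulk slices are permitted to meet $\I^0$ in the codimension-2 set $\surf^0_\bkslice$, so no additional boundary condition needs to be imposed on $\tilde h$ there.
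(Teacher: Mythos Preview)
Your approach is genuinely different from the paper's and considerably more elaborate. The paper's proof is a five-line argument using only elementary causal structure: given the achronal set $s$ and any bulk slice $\bkslice$, one sets
\[
R:=(J^-(\bkslice)\cup J^-(s))\setminus I^+(s)\,,
\]
which is a closed past set disjoint from $\I^+$ and containing a neighborhood of $\I^-$; its future boundary $R\setminus I^-(R)$ is then automatically an achronal hypersurface meeting every inextendible timelike curve exactly once, and by construction it contains $s$. No time function, flow, projection, or extension theorem is needed.

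Your route---graphing a Lipschitz-extended height function over a reference slice---is a legitimate strategy in Lorentzian geometry, but it carries real technical weight that you correctly flag and do not resolve. The achronality constraint on $h$ is not a Lipschitz bound with respect to any Riemannian distance on $\bkslice_0$; it is an anisotropic condition governed by the (position-dependent, tilting) null cones, and the relevant ``$\rho$'' is not a metric in the usual sense. McShane--Whitney extension is tailored to symmetric metrics satisfying the triangle inequality, and it is not automatic that an inf-convolution against a causal pseudo-distance yields a graph that is globally achronal, especially across regions where the light cones vary. This can be made to work (it is close in spirit to constructions of Bernal--S\'anchez), but it requires substantially more than you have written. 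By contrast, the paper's past-set construction sidesteps all of this: the future boundary of a past set is always achronal, and the only thing to check is that $R$ separates $\I^-$ from $\I^+$, which is immediate from the definition.
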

 \begin{proof}
Let $s$ be a closed achronal subset of $\bar\M$ that does not intersect $\I^\pm$. Let $\bkslice$ be a bulk slice and define the following subset of $\bar\M$:
\be
R:=(J^-(\bkslice)\cup J^-(s))\setminus I^+(s)\,.
\ee
$R$ is a closed subset of $\bar\M$ that obeys $J^-(R)=R$. It is disjoint from $\I^+$ and contains an open neighborhood of $\I^-$. Therefore its future boundary $R\setminus I^-(R)$ is a bulk slice. Furthermore the future boundary contains $s$.
 \end{proof}

 Given that $\N$ is globally hyperbolic, each connected component $\N_i$ of $\N$ is globally hyperbolic as well. Fix a  slice $\bdyslice_i$ for each $\N_i$; then $\bdyslice:=\cup_i\bdyslice_i$ is a  slice for $\N$. A \emph{boundary region} $A$ is defined as the union of a set of the $\bdyslice_i$; its boundary causal domain $D(A)$ is the union of the corresponding $\N_i$.\footnote{\, Importantly, nothing we do will depend on the choice of boundary slices $\bdyslice_i$. In fact, there will be no dependence of anything on $A$ except through $D(A)$. $A$ thus essentially just serves as a label for $D(A)$ and associated objects and quantities.} Regions $A,B,\ldots$ are disjoint. With a few exceptions (namely in subsections \ref{sec:subadditivity} and \ref{sec:multiple}), we will consider only two regions $A$, $B$, so $AB=\bdyslice$. Given any $\bkslice\in \bksliceset$, its intersection with $D(A)$,
 \begin{equation}
     A_\bkslice:=D(A)\cap\bkslice\,,
 \end{equation}
 is a  slice for $D(A)$.
 
We will also work with timelike counterparts to slices that we call \emph{time-sheets}. More precisely, a time-sheet $\ts$ is a piecewise-timelike hypersurface in $\bar\M$ that does not intersect $\N$.\footnote{\, In subsection \eqref{sec:HRTsurface}, we will expand the definition of time-sheets to allow null pieces. For most of the paper, however, it is more convenient to restrict time-sheets to be everywhere timelike. The reason is that we will be very interested in the intersection of a given time-sheet with a given slice; if the time-sheet is timelike then we are guaranteed that they intersect transversely, with a codimension-2 intersection. Otherwise one has to consider separately the case where they coincide on a null hypersurface.} We define $\Gamma_\ts$ as the set of surfaces of the form $\ts\cap\bkslice$ for some $\bkslice\in\bksliceset$. We denote by $\tsset$ (or $\tsset_A$ when considering more than two boundary regions) the set of time-sheets homologous to $D(A)$ relative to $\I$. Note that such a time-sheet may have ``seams'' where two timelike pieces meet on their common future or past boundary.

In sections \ref{sec:relaxation}--\ref{sec:threads}, we do not impose any equations of motion or energy conditions on the metric on $\M$, or any boundary conditions (such asymptotically AdS ones) beyond those given above. Starting in section \ref{sec:solutions}, we will make further assumptions about the spacetime, related to the existence and properties of the HRT surface; these are laid out in subsection \ref{sec:HRTsurface}. Then in section \ref{sec:embedding} we return to the full, original spacetime.

\subsection{1-forms}
\label{sec:1forms}

Given that we work with a fixed metric on $\M$, a vector $W^\mu$ at a point $x\in\M$ can be expressed equivalently in terms of the covector or 1-form $W=W_\mu dx^\mu$ or the $(D-1)$-form $*W$. We will mainly employ the 1-form notation, supplemented by liberal use of the Hodge star.

\subsubsection{Pointwise notions} 
\label{sec:pointwise}

Fix a point $x\in\M$ and let $T^*:=T^*_x$ be the cotangent space at $x$. Given covectors $W,X\in T^*$, we define
\begin{equation}
    W\cdot X:=W_\mu X^\mu\,,\qquad
    W^2:=W\cdot W\,,\qquad
    |W|:=\sqrt{|W^2|}\,.
\end{equation}
We say $W$ is future-directed causal if it evaluates non-negatively on any future-directed causal vector (or equivalently if $W^2\le0$ and the time component $W_0$ is non-negative, or equivalently if the dual vector $W^\mu$ is \emph{past}-directed causal); similarly for  future-directed timelike. We define $\fdc\subset T^*$ as the set of future-directed causal covectors at $x$, and $\fdt\subset T^*$ for the future-directed timelike covectors.

\paragraph{Covector pairings:} We will often make use of two real pairings between covectors. The first is the norm of the 2-form $W\wedge X$:\footnote{\, We are employing the convention where the norm of the $p$-form $\omega$ is defined in terms of its components as $|\omega|:=\sqrt{|\omega_{\mu_1\cdots\mu_p}\omega^{\mu_1\cdots\mu_p}|/p!}$.}
\begin{equation}\label{WXperp}
|W\wedge X| = \begin{cases}
|W\cdot X|\,,\quad&W^2=0\text{ or }X^2=0 \\
|W||X_\perp|\,,\quad &W^2\neq0 \\
|W_\perp||X|\,,\quad &X^2\neq0 \,
\end{cases}
\end{equation}
where
\begin{equation}
 X_\perp:= X-\frac{X\cdot W}{W^2}W\,,\qquad
  W_\perp:= W-\frac{X\cdot W}{X^2}X
 \end{equation}
are the projections of $X$ perpendicular to $W$ and vice versa. The following lemma gives some important basic properties of this function. (Although the function $|W\wedge X|$ is symmetric, in the lemma we treat its arguments asymmetrically, for reasons that will become clear below.)
\begin{lemma}\label{lem:wedgeprops}
{\rm (a)} $|W\wedge X|$ is continuous and homogeneous in each argument. 
{\rm (b)} For fixed $W\in\fdc$, $|W\wedge X|$ is a convex function of $X$. 
{\rm (c)} For fixed spacelike or null $X$, $|W\wedge X|$ is concave in $W$ on $\fdc$.
\end{lemma}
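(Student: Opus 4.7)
The starting point I would use is the unified formula
\be
|W\wedge X|^2 = \bigl|W^2 X^2 - (W\cdot X)^2\bigr|,
\ee
which follows from $(W\wedge X)_{\mu\nu}(W\wedge X)^{\mu\nu} = 2\bigl(W^2 X^2 - (W\cdot X)^2\bigr)$. This reconciles the three cases in the definition of $|W\wedge X|$ by rewriting them as a single continuous expression, and immediately gives part (a): continuity holds because the argument of the outer absolute value is polynomial in $W$ and $X$, and homogeneity of degree one in each argument is manifest from the square root of a form biquadratic in $(W,X)$.

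For part (b) I would fix $W\in\fdc$ and split on the sign of $W^2$. When $W^2 = 0$ (including $W=0$), the formula collapses to $|W\wedge X| = |W\cdot X|$, the absolute value of a linear functional of $X$, hence convex. When $W^2 < 0$, the orthogonal hyperplane $W^\perp$ is Euclidean, so the linear map $X\mapsto X_\perp$ followed by the positive-definite norm $|X_\perp|$ is a seminorm in $X$, which is convex; multiplication by the positive constant $|W|$ preserves convexity.

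For part (c) I would fix $X$ and split on the sign of $X^2$. If $X^2 = 0$, then $|W\wedge X| = |W\cdot X|$, and on $\fdc$ the quantity $W\cdot X$ has a fixed sign (non-positive if $X$ is future null, non-negative if past null, zero if $X=0$), so $|W\cdot X|$ is a linear functional of $W$ there and therefore concave. If $X^2 > 0$, then $|W\wedge X| = |X|\cdot|W_\perp|$, and the heart of the argument is to exhibit $|W_\perp|$ as a Lorentzian causal norm of a vector lying in a fixed forward cone of the Lorentzian subspace $X^\perp$. A direct computation gives $W_\perp^2 = W^2 - (W\cdot X)^2/X^2 \le 0$, so $W_\perp$ is causal; and since no non-zero causal $W$ is proportional to a spacelike $X$, the continuous linear map $W\mapsto W_\perp$ never vanishes on $\fdc\setminus\{0\}$ and so sends the connected set $\fdc$ into one of the two forward cones of $X^\perp$. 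On that cone the reverse triangle inequality $|u+v|\ge|u|+|v|$, together with homogeneity, yields concavity of the causal norm, so $W\mapsto|W_\perp|$ is concave on $\fdc$, and scaling by the positive constant $|X|$ preserves concavity.

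The main obstacle I anticipate is the geometric bookkeeping in the spacelike sub-case of (c): verifying that $W_\perp$ is causal, non-vanishing, and in a single forward cone of $X^\perp$ throughout $\fdc$, so that the Lorentzian reverse-triangle concavity applies. The other pieces reduce to standard convex-analytic facts: the absolute value of a linear functional is convex, Euclidean seminorms are convex, and a linear functional of definite sign on a convex cone is linear there.
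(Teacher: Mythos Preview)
Your proof is correct and follows essentially the same approach as the paper's: the same case splits on $W$ null versus timelike in (b), and on $X$ null versus spacelike in (c), with the Euclidean norm on $W^\perp$ giving convexity and the Lorentzian norm on $X^\perp$ giving concavity via the reverse triangle inequality. You supply more explicit justification than the paper does---the unified quadratic formula for (a), and the check that $W_\perp$ is causal, nonvanishing, and lands in a single forward cone for the spacelike sub-case of (c)---but these are elaborations of the same argument rather than a different route.
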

\begin{proof} (a) Clear. (b) For $W$ null, $|W\wedge X|=|W\cdot X|$, which is clearly convex in $X$. For $W\in\fdt$, $|X_\perp|$ is convex, being the norm of $X_\perp$ within the spacelike hyperplane orthogonal to $W$. (c) For $X$ null, $|W\cdot X|$ is linear (and therefore concave) in $W$ on $\fdc$. For $X$ spacelike, $|W\wedge X|=|X||W_\perp|$, and $|W_\perp|$ is concave, being the norm on the future solid light-cone within the Lorentzian hyperplane orthogonal to $X$.\footnote{\, 
To see the convex-concave character heuristically, plot the norm function on the vertical axis and the space(time) directions on the horizontal axes.  Then  the norm function looks like an ``upright'' cone (which is convex) when pertaining to spacelike covector, but a ``sideways'' cone (which is concave) when pertaining to a timelike covector.  The constant-norm level sets are the conic sections which respectively give spheres and hyperboloids in the corresponding space(time) sections.
} 
\end{proof}

Second, we will need a function on $\fdc\times T^*$ that is concave in the first argument and convex in the second. $|W\wedge X|$ almost works, except that it is not concave in $W$ when $X$ is timelike. This flaw is remedied by taking the concave hull\footnote{\, The concave hull (or envelope) of a function is the smallest concave function greater than or equal to the given one. The proof that $\ip{W}{X}$ is the concave hull of $|W\wedge X|$ is given in the proof of lemma \ref{lem:wedgedotprops}.} with respect to $W$, yielding the following pairing, which we call ``wedgedot'' 
and denote $\ip{\ }{\ }$:\footnote{\, Note that we define only the scalar quantity $\ip{W}{X}$; unlike $W\wedge X$ and $W\cdot X$, $W\mathclap{\hspace{5.75pt}\cdot}{\wedge}X$ by itself has no meaning.}
\begin{equation}\label{ipdef}
 \ip WX:=\max\{|W\cdot X|,|W\wedge X|\} = \begin{cases}|W\cdot X|\,,\quad&X^2\le0 \\
|W\wedge X|\,,
\quad&X^2\ge0\end{cases} \,.
 \end{equation}
(See lemmas \ref{lem:supWU} and \ref{lem:supVX} in subsection \ref{sec:lemmaproofs} for two further ways of writing $\ip{W}{X}$.) While this function may appear from its definition to be symmetric in its two arguments, for our purposes it is crucial that its domain $\fdc\times T^*$ is \emph{not} symmetric. The following lemma spells out several properties of the wedgedot pairing that will play important roles in this paper.
\begin{lemma}\label{lem:wedgedotprops}
On its domain $\fdc\times T^*$, the function $\ip WX$ is continuous and homogeneous in each argument; concave in $W$ for fixed $X$; and convex in $X$ for fixed $W$.  It is also the unique concave-convex function on $\fdc\times T^*$ that equals $|W\wedge X|$ for $X^2\ge0$.
\end{lemma}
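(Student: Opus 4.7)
The plan is to tackle each clause in turn, with the easy three bundled together and uniqueness doing most of the work. The two-case formula in \eqref{ipdef} itself needs brief justification: from the identity $|W\cdot X|^2-|W\wedge X|^2 = W^2 X^2$ (valid once $(W\cdot X)^2\ge W^2 X^2$, which always holds when $W\in\fdc$) together with $W^2\le 0$ on $\fdc$, the larger of $|W\cdot X|$ and $|W\wedge X|$ is determined purely by the sign of $X^2$, with equality on the null cone. Continuity and positive homogeneity in each argument then follow by closure of these properties under $\max$ together with lemma \ref{lem:wedgeprops}(a). Concavity in $W$ for fixed $X$ splits according to $X^2$: if $X^2\ge 0$, invoke lemma \ref{lem:wedgeprops}(c) directly; if $X$ is timelike, $W\cdot X$ has a constant sign throughout $\fdc$, so $\ip WX=\pm W\cdot X$ is linear and hence concave. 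Convexity in $X$ at fixed $W\in\fdc$ is immediate: $\ip WX$ is the maximum of the two $X$-convex functions $|W\cdot X|$ (an absolute value of a linear functional) and $|W\wedge X|$ (convex by lemma \ref{lem:wedgeprops}(b)).

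For uniqueness, I would fix any concave-convex $F:\fdc\times T^*\to\mathbb{R}$ with $F(W,X)=|W\wedge X|$ on $\{X^2\ge 0\}$ and show $F(W,X)=|W\cdot X|$ whenever $X$ is timelike (WLOG future-timelike, the past case being symmetric). The upper bound $F(W,X)\le |W\cdot X|$ follows from convexity in $X$: pick future-null $Y_\pm$ with $Y_++Y_-=2X$ (always possible for future-timelike $X$), apply convexity of $F(W,\cdot)$ to $X=\tfrac12(Y_++Y_-)$, use $F(W,Y_\pm)=|W\cdot Y_\pm|$ from the hypothesis, and collapse the sum via $W\cdot Y_\pm\le 0$ (all three covectors future-directed causal) to arrive at $F(W,X)\le -\tfrac12 W\cdot(Y_++Y_-)=|W\cdot X|$.

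The matching lower bound is the main obstacle, since the hypothesis says nothing about $F$ on $\{X^2<0\}$. My plan is a two-stage bootstrap: first pin down $F$ on the null boundary $\partial\fdc$, then push inward via concavity in $W$. For null $W_0\in\partial\fdc$, slide $X$ along $W_0$: pick $s_0>0$ with $(X-s_0W_0)^2=0$, note that $(X-2s_0W_0)^2=-X^2>0$ as well, and observe that $W_0\cdot(X-sW_0)=W_0\cdot X$ is $s$-independent. The hypothesis therefore pins $F(W_0,X-s_0W_0)=F(W_0,X-2s_0W_0)=|W_0\cdot X|$; convexity of $F(W_0,\cdot)$ applied to the midpoint relation $X-s_0W_0=\tfrac12 X+\tfrac12(X-2s_0W_0)$ then forces $F(W_0,X)\ge |W_0\cdot X|$, matching the upper bound. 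For interior $W$, pick $e$ orthogonal to $W$ and spacelike, tune $\epsilon$ so that $W\pm\epsilon e$ are future-null, and invoke concavity of $F(\cdot, X)$ at the midpoint: $F(W,X)\ge \tfrac12\bigl(|(W+\epsilon e)\cdot X|+|(W-\epsilon e)\cdot X|\bigr)=|W\cdot X|$, using sign-definiteness of $W\cdot X$ on $\fdc$ once more. The conceptual crux is that although the hypothesis region $\{X^2\ge 0\}$ does not touch timelike $X$, the null translate $X-sW_0$ does enter it without changing $W_0\cdot X$, giving precisely enough leverage for convexity to carry the lower bound.
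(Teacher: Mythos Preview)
Your proof is correct and follows the same overall strategy as the paper: continuity, homogeneity, and convexity in $X$ via lemma~\ref{lem:wedgeprops} and closure under $\max$; concavity in $W$ by splitting on the sign of $X^2$; and uniqueness by sandwiching any concave-convex $F$ between an upper bound from convexity in $X$ (null decomposition of timelike $X$) and a lower bound from concavity in $W$ (null decomposition of timelike $W$).

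The one place where you go beyond the paper is your two-stage bootstrap for the lower bound. The paper's uniqueness argument invokes ``any concave function of $W$ that equals $|W\wedge X|$ when $W$ is null is therefore greater than or equal to $|W\cdot X|$,'' which presupposes that $F$ agrees with $|W\wedge X|$ on null $W$ even when $X$ is timelike --- something the hypothesis $F=|W\wedge X|$ on $\{X^2\ge0\}$ does not directly supply. Your sliding trick $X\mapsto X-sW_0$ (which keeps $W_0\cdot X$ fixed while carrying $X$ into the hypothesis region) together with convexity in $X$ pins down $F(W_0,X)=|W_0\cdot X|$ on null $W_0$ first, after which the midpoint concavity-in-$W$ step goes through. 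This makes explicit a step the paper leaves tacit, and is a genuine improvement in rigor while remaining in the same circle of ideas.
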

\begin{proof}
For fixed $W$, homogeneity, continuity, and convexity in $X$ follow from the fact that $|W\cdot X|$ and $|W\wedge X|$ both have those properties (see lemma \ref{lem:wedgeprops}), and those properties are inherited by the maximum. 

For fixed spacelike $X$, the continuity, homogeneity, and concavity in $W$ are given in lemma \ref{lem:wedgeprops}. For fixed future- or past-directed causal $X$, the function $|W\cdot X|$ is linear in $W$ (on $\fdc$) and therefore continuous, homogeneous, and concave.

For the uniqueness of the concave-convex extension of $|W\wedge X|$,
note first that $|W\wedge X|=|W\cdot X|$ when either $W$ or $X$ is null. When $X^2\le0$, $|W\cdot X|$ is a linear function of $W$ on its domain $\fdc$; any concave function of $W$ that equals $|W\wedge X|$ when $W$ is null is therefore greater than or equal to $|W\cdot X|$. (This fact makes $|W\cdot X|$ the concave hull of $|W\wedge X|$ with respect to $W$ for fixed timelike $X$. For fixed spacelike $X$, $|W\wedge X|$ is already concave with respect to $W$ and is therefore its own concave hull.) Similarly, any convex function of $X$ that equals $|W\wedge X|$ when $X$ is null is, for $X^2\le0$, less than or equal to $|W\cdot X|$.
\end{proof}

\begin{figure}[h!]  
\definecolor{Wcol}{rgb}{0.7,0.2,0.9}  
\definecolor{Xcol}{rgb}{0.3,0.7,0.8}  
  \begin{center} 
  \begin{tikzpicture}        
    \tikzmath{
    \Wx=0.7;
    \Wt=1.9;
    }
      \draw (-2,0) -- (2,0);
      \draw (0,-2) -- (0,2);
      \filldraw[Xcol,fill opacity=0.3] (\Wx,\Wt) -- (\Wt,\Wx) -- (-\Wx,-\Wt) -- (-\Wt,-\Wx) -- cycle;
      \draw[thick,Wcol,->] (0,0) -- (\Wx,\Wt) node[above] {$W$};
      \draw[Xcol] (-2,1.5) -- (-1.5,1.5) -- (-1.5,2) node[below left] {$X$};
  \end{tikzpicture}
  \hspace{1cm}
 \begin{tikzpicture}        
    \tikzmath{
    \Xx=1.9;
    \Xt=0.7;
    \wt=3.5-\Xx-0.5;
    }
      \draw (-2,0) -- (2,0);
      \draw (0,-0.5) -- (0,3.5);
      \fill[Wcol,opacity=0.3] (\Xt,\Xx) -- (\Xt+\wt,\Xx+\wt) -- (\Xt-\wt,\Xx+\wt) -- cycle;
      \draw[Wcol] (\Xt-\wt,\Xx+\wt) -- (\Xt,\Xx) -- (\Xt+\wt,\Xx+\wt);
      \draw[thick,Xcol,->] (0,0) -- (\Xx,\Xt) node[above] {$X$};
      \draw[thick,dotted,Wcol,->] (0,0) -- (\Xt,\Xx) node[below right] {$W_0$};

      \draw[Wcol] (-2,3) -- (-1.5,3) -- (-1.5,3.5) node[below left] {$W$};
  \end{tikzpicture}
  \hspace{1cm}
 \begin{tikzpicture}        
    \tikzmath{
    \Xx=0.7;
    \Xt=1.9;
    \wt=3.5-\Xt-0.5;
    }
      \draw (-2,0) -- (2,0);
      \draw (0,-0.5) -- (0,3.5);
      \fill[Wcol,opacity=0.3] (\Xx,\Xt) -- (\Xx+\wt,\Xt+\wt) -- (\Xx-\wt,\Xt+\wt) -- cycle;
      \draw[Wcol] (\Xx-\wt,\Xt+\wt) -- (\Xx,\Xt) -- (\Xx+\wt,\Xt+\wt);
      \draw[thick,Xcol,->] (0,0) -- (\Xx,\Xt) node[below right] {$W_0=X$};
           \draw[thick,dotted,Wcol,->] (0,0) -- (\Xx,\Xt) node[below right] {$W_0$};
      \draw[Wcol] (-2,3) -- (-1.5,3) -- (-1.5,3.5) node[below left] {$W$};
  \end{tikzpicture}
     \caption{Illustration of the constraint \eqref{condition} in the 2-dimensional case.  
     \figL: The set of allowed $X$ for fixed  $W \in \fdt$ (purple arrow).
    \figM:  The set of allowed $W$ for fixed spacelike $X$ (cyan arrow); here $W_0\in\fdt$ is normal to $X$ and of equal magnitude.    
    \figR: The set of allowed $W$ for fixed $X\in \fdt$; here $W_0=X$.
    } 
         \label{fig:WXfdc}
  \end{center}
\end{figure}
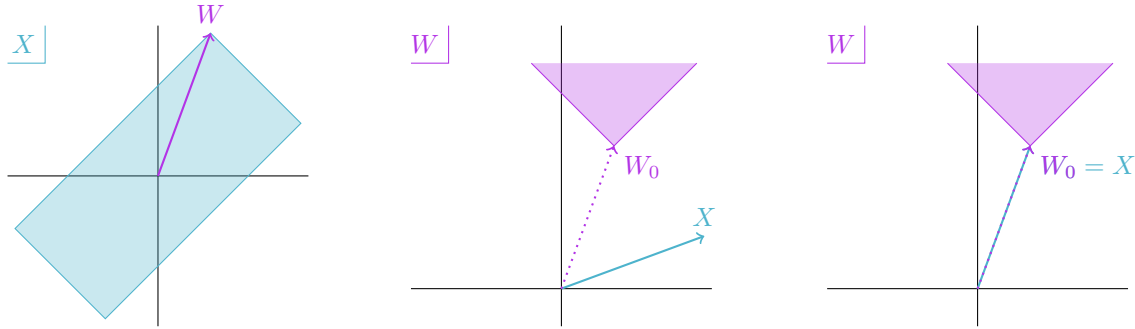  

\paragraph{Covector pairs:} An important role throughout the paper will be played by the set of pairs of covectors $(W,X)$ obeying
\be\label{condition}
W\pm X\in\fdc\,.
\ee
Such pairs form a convex subset of $T^*\times T^*$. To get some intuition for this set, first fix $W$ (which is necessarily in $\fdc$). If $W$ is timelike, then \eqref{condition} is equivalent to $|X_\perp|+|X_\parallel|\le|W|$, where $X_\perp,X_\parallel$ are the projections of $X$ orthogonal and parallel respectively to $W$; or, to say it another way, $X$ is required to be in the ``causal diamond'' within $T^*$ with vertices $\pm W$. (See figure \ref{fig:WXfdc}, left panel.) For $W$ null, the diamond degenerates to a null line segment, and $X$ must be a convex combination of $\pm W$ (and therefore must also be null). On the other hand, if we fix $X$, then $W$ is restricted to the ``causal future'' (within $T^*$) of a point $W_0\in\fdc$ defined as follows: for spacelike $X$, $W_0$ is the (unique) covector in $\fdc$ obeying $X\cdot W_0=0$, $|W_0|=|X|$; for $\pm X\in\fdc$, $W_0=\pm X$. (See figure \ref{fig:WXfdc}, center and right panels.)

Further characterizations of the set of covector pairs obeying \eqref{condition} are given in subsection \ref{sec:lemmaproofs}. It is also shown there that this set is closely related to the wedgedot pairing defined above. Specifically, \eqref{condition} is equivalent to either of the following conditions:
\be
\forall\, Y\in\fdc\,,\quad\ip{Y}{X}\le-Y\cdot W
\ee
\be
W\in\fdc\text{ and }\forall \,Y\in T^*\,,\quad X\cdot Y\le\ip{W}{Y}\,.
\ee
Conversely, the wedgedot can be derived from the condition \eqref{condition}, in two different ways. For any $W\in\fdc$ and $X\in T^*$,
\be
\ip{W}{X}=\inf_{\substack{Y:\\Y\pm X\in\fdc}}(-W\cdot Y)
=\sup_{\substack{Y:\\W\pm Y\in\fdc}}Y\cdot X\,.
\ee
This is another way to see that $\ip{W}{X}$ is concave in $W$ and convex in $X$: $-W\cdot Y$ is concave in $W$ and the pointwise infimum of a set of concave functions is concave; similarly, $Y\cdot X$ is convex in $X$ and the pointwise supremum of a set of convex functions is convex.

\subsubsection{1-form fields}
 \label{sec:1formfields}

The vector field $W^\mu$ is divergenceless if and only if $*W$ is closed:
\begin{equation}\label{divergenceless}
    \nabla_\mu W^\mu=0\quad
    \Longleftrightarrow\quad
    d{*W}=0\,.
\end{equation}
We will sometimes use the shorthand ``$W$ is divergenceless'' (strictly speaking ``co-closed'' would be more correct). The flux of $W^\mu$ through a spacelike or timelike hypersurface $H$ can also be expressed in terms of $*W$,
\begin{equation}\label{fluxdef}
\int_H\sqrt{h}\,N\cdot W = \int_H*W\,,
\end{equation}
where $h$ is the determinant of the induced metric on $H$ and $N$ is a unit normal covector field, and similarly for the no-flux condition,
\begin{equation}\label{noflux}
N\cdot W|_H=0\quad
    \Longleftrightarrow\quad
    *W|_H=0\,,
\end{equation}
where $*W|_H$ is the pull-back of $*W$ onto $H$. The formulas in terms of $*W$ are slightly more general than the ones in terms of $N\cdot W$, since they apply even when $H$ is null. Even more generally, we will be using these formulas on the boundaries $\N$, $\I^\pm$ of $\bar\M$ where the metric, and therefore the Hodge star, are not defined; implicitly, one is using the limiting value of $*W$ as the boundary is approached. A more careful notation would instead use a $(D-1)$-form $\tilde W$ on $\bar\M$ as the underlying variable, so that in \eqref{divergenceless}, \eqref{fluxdef}, \eqref{noflux} we would have $d\tilde W$, $\int_H\tilde W$, $\tilde W|_H$ respectively; and then define the 1-form $W:=*\tilde W$ on $\M$ where the metric is defined.\footnote{\label{foot:reconstruction}\, 
If one is interested in applying the V-flow and U-flow formulas discussed in section \ref{sec:flows} in a setting where the metric is not fixed a priori, such as metric reconstruction or deriving the Einstein equation from the HRT formula, then it is probably better to use the $(D-1)$-forms $\tilde V=*V$, $\tilde U=*U$, rather than the 1-forms $V$, $U$ themselves, as the fundamental quantities defining a flow.
} One may think of $\tilde W$ as the collection of integral curves of $W^\mu$, namely a 1-dimensional oriented structure which can be integrated over a hypersurface to get a number, and in this sense $\tilde W$ naturally captures the original notion of the bit threads. Nonetheless, at the cost of a slight notational sloppiness we will stick to the more convenient 1-form notation.

For the boundary of $\bar\M$, we will use an orientation in which the normal covector $N$ is \emph{inward}-directed. This is slightly non-standard but simplifies the homology relations (e.g.\ it makes the entanglement horizon $\hor(A)$ homologous to $D(A)$). In particular, $N$ is past-directed on $\I^+$ and future-directed on $\I^-$. Hence the flux of a future-directed 1-form is positive through $\I^+$ and negative through $\I^-$. With this convention, Stokes' theorem takes the following form:
\begin{equation}\label{Stokes}
    \int_\M d{*W} +\int_\I*W+\int_\N*W=0\,.
\end{equation}
On a  slice $\bkslice$, the normal covector is chosen to be past-directed, so that (as on $\I^+$) a future-directed covector has a positive flux. In terms of homology, these orientations imply the following relations:
\begin{equation}
    \I^+\sim\bkslice\sim-\I^-\text{ rel }(\I^0\cup\N)
\end{equation}
(the spacetime homology region between $\bkslice$ and $\I^\pm$ being $I^\pm(\bkslice)$). 

We will use a similar notation for 1-forms on a given hypersurface, but we will denote them with lower-case letters $v,u$. The Hodge star and exterior derivative applied to these forms are always the ones defined on the hypersurface, not in the ambient space.


\section{Games \& relaxation}
\label{sec:relaxation}

We start by recalling the maximin formula
\cite{Wall:2012uf}:
\begin{equation}\label{maximin}
\Sminus(A:B):= \sup_{\bkslice\in \bksliceset}\ \infp_{
\surf\in\Gamma_{\bkslice}
}
\area(\surf)\,,
\end{equation}
where $\Gamma_{\bkslice}$ is the set of surfaces $\surf$ in $\sigma$ homologous to $A_\bkslice$ relative to $\eowsurf_\bkslice:=\I^0\cap\bkslice$, that do not intersect $\N$. The reason for the subscript on $S_-$ will become clear shortly. In this section we will derive a number of variations on \eqref{maximin}. In subsection \ref{sec:maximin}, we will rewrite it in two ways: first, by using the Riemannian max flow-min cut theorem, in terms of a flow localized on a slice; and then in terms of the intersection of a slice and a time-sheet. By switching the order of the maximization and minimization, this will then allow us to obtain a ``minimax'' formula. The relation between such maximin-minimax pairs of formulas is the subject of minimax theory, which is closely related to game theory and which we briefly review in subsection \ref{sec:minimaxtheory}. This will lead us in subsection \ref{sec:convexrelaxation} to convex relax the two formulas, yielding a third one, whose value sits between them. This convex-concave formula will be the starting point for our derivation of flow formulas in section \ref{sec:flows}.

The maximin quantity is symmetric, $S_-(B:A)=S_-(A:B)$, and the same holds for all of the quantities we derive in this section; this is hopefully clear by inspection. Furthermore, since throughout this section we fix $A,B$, we will from now on simply write $S_-$ for $S_-(A:B)$.

Throughout this section, as well as sections \ref{sec:flows} and \ref{sec:threads}, we assume only the basic structure for the bulk and boundary spacetimes described in subsection \ref{sec:setup} (essentially global hyperbolicity), not any particular boundary conditions or energy conditions.

\subsection{Variations on maximin}
\label{sec:maximin}

Our first alternative formula for $\Sminus$ is obtained by applying the Riemannian max flow-min cut theorem to replace, within each slice $\bkslice$, the minimization over surfaces by a maximization over flows.\footnote{\, Strictly speaking, this situation does not quite fit the assumptions of the RMFMC theorem proved in \cite{Headrick:2017ucz}, which applies to compact Riemannian manifolds. While $\bkslice$ is compact, the metric on it does not extend to $\N$ and may include null pieces. The first issue can be dealt with by  removing a neighborhood of $\N\cap\bkslice$. The second issue can be dealt with either by considering $\bkslice$ as a limit of spacelike manifolds, or by treating the null locus following the treatment of null manifolds in \cite{Headrick:2017ucz} (but for max flow-min cut rather than min flow-max cut as in that paper). Specifically, on the null locus there is a unique $(D-2)$-form $\omega$ such that the area of any surface $m$ equals $|\int_\surf\omega|$. A flow is defined not by the 1-form $v$ but by the $(D-2)$-form $*v$, and the constraint $|v|\le1$ is replaced by $*v=\alpha\omega$, where $|\alpha|\le1$.\label{foot:omegadef}} 
A 1-form $v$ on $\bkslice$ is called a \emph{$\bkslice$-flow} if it has the following properties:
\begin{equation}\label{sigmaflowdef}
    |v|\le1\,,\qquad
d{*v} = 0\,,\qquad
    * v|_{\surf^0_\bkslice}=0
\end{equation}
(where $d$ is the exterior derivative on $\bkslice$, and $|\cdot|$ and $*$ are defined with respect to the induced metric). We call the set of $\bkslice$-flows $\F_\bkslice$. The RMFMC theorem (see \cite{Headrick:2017ucz} and references therein) states that
\begin{equation}\label{mfmc}
    \inf_{\surf\in\Gamma_\bkslice}
    \area(\surf)=\sup_{v\in\F_\bkslice}\ \int_{A_\bkslice}* v \,.
\end{equation}
$\F_\bkslice$ is a convex set, and the objective functional $\int_{A_\bkslice}* v$ is linear; hence the right-hand side of \eqref{mfmc} defines a convex program.\footnote{\, Recall that a convex program is defined as the problem of minimizing a convex function $f_0$ over a convex subset $X$ of an affine space, subject to constraints $f_i(x)\le0$, $g_i(x)=0$, where the $f_i$ are convex functions and the $g_i$ are affine functions on $X$. The constraints defining $X$ are \emph{implicit}, while the constraints $f_i(x)\le0$, $g_i(x)=0$ are \emph{explicit}. Thus, strictly speaking, the right-hand side of \eqref{mfmc} defines a convex program only after one has decided whether each of the constraints \eqref{sigmaflowdef} is implicit or explicit.} Using \eqref{mfmc}, we can write $\Sminus$ in terms of a ``maximax'' formula:
\begin{equation}\label{maximax}
\Sminus = \sup_{\bkslice\in \bksliceset}\ 
    \sup_{v\in\F_\bkslice}\ \int_{A_\bkslice}* v \,.
\end{equation}

Alternatively, in order to put the space and time variations in \eqref{maximin} on an equal footing, we can change the minimization domain so that it does not depend on the maximization variable $\bkslice$. This can be done by thinking of the surface $\surf$ as the intersection of the slice $\bkslice$ with a time-sheet $\tau\in\tsset$ (where $\tsset$ is the set of time-sheets homologous to $D(A)$ relative to $\I$), as justified by the following lemma:
\begin{lemma}\label{lem:surfsheet}
Fix $\bkslice\in\bksliceset$. For any $\ts\in\tsset$,  $\bkslice\cap\ts\in\Gamma_{\bkslice}$. Conversely, for any $\surf\in\Gamma_{\bkslice}$, there exists a $\ts\in\tsset$ such that $\bkslice\cap\ts=\surf$.
\end{lemma}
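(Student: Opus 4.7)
The plan is to prove both directions in turn, the forward direction by simply restricting a spacetime homology region to $\bkslice$, and the backward direction by flowing $\surf$ along an auxiliary timelike vector field to sweep out a time-sheet.

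\emph{Forward direction.} Fix $\ts\in\tsset$ and let $R\subset\bar\M$ be a spacetime homology region realizing $\ts\sim D(A)$ rel $\I$, so $\partial R = D(A) \cup (-\ts) \cup (R\cap\I)$. I would first argue that $\ts\cap\bkslice$ is a codimension-2 submanifold of $\bkslice$: at every point of $\ts$ the tangent space contains a timelike vector, which cannot lie in the tangent space of the achronal hypersurface $\bkslice$, so $\ts$ and $\bkslice$ meet transversely wherever both are smooth. The seams of $\ts$ and null locus of $\bkslice$ give at most lower-dimensional exceptions which I would handle by passing to the regular locus (or by a small smoothing), since the homology statement is unaffected. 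The region $r:=R\cap\bkslice$ is then a codimension-1 subregion of $\bkslice$ whose boundary in $\bkslice$ equals $A_\bkslice-(\ts\cap\bkslice)$ modulo a piece lying in $\surf^0_\bkslice=\I^0\cap\bkslice$. Because $\ts$ does not meet $\N$, neither does $\surf:=\ts\cap\bkslice$. Hence $\surf\in\Gamma_\bkslice$.

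\emph{Backward direction.} Given $\surf\in\Gamma_\bkslice$, let $r\subset\bkslice$ be a homology region with $\partial r = A_\bkslice\cup(-\surf)$ rel $\surf^0_\bkslice$. Choose an everywhere future-directed timelike vector field $T$ on $\bar\M$ that is transverse to $\bkslice$ and tangent to both $\I^0$ and $\N$. (Such a $T$ is constructed by extending the unit future-normal to $\bkslice$ into a neighborhood, patching it with timelike vector fields tangent to $\I^0$ and $\N$ via a partition of unity, and smoothing.) Because every inextendible timelike curve in $\bar\M$ runs from $\I^-$ to $\I^+$ and meets $\bkslice$ exactly once, the integral curves of $T$ furnish a well-defined projection $\pi:\bar\M\to\bkslice$. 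Since $T$ is tangent to $\N$ and $A_\bkslice$ is a Cauchy slice for the globally hyperbolic component $D(A)\subset\N$, one has $\pi^{-1}(A_\bkslice)\cap\N=D(A)$, and similarly $\pi^{-1}(\surf^0_\bkslice)\subset\I^0$.

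Define $\ts:=\pi^{-1}(\surf)$ and $R:=\pi^{-1}(r)$. Then $\ts$ is the union of the integral curves of $T$ through $\surf$, which is a piecewise-timelike hypersurface satisfying $\ts\cap\bkslice=\surf$ (each integral curve hits $\bkslice$ exactly once, at its starting point in $\surf$). Because $\surf$ does not meet $\N$ and $T$ is tangent to $\N$, the integral curves through $\surf$ avoid $\N$, so $\ts\cap\N=\emptyset$. Finally, $R$ witnesses $\ts\sim D(A)$ rel $\I$: its boundary decomposes as $\partial R = D(A)\cup(-\ts)\cup(R\cap\I^+)\cup(R\cap\I^-)\cup(R\cap\I^0)$, with the last three pieces contained in $\I$. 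Hence $\ts\in\tsset$, completing the proof.

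\emph{Main obstacle.} The nontrivial step is the backward direction, specifically constructing the auxiliary field $T$ with the correct tangency on $\N$ and $\I^0$ so that the associated projection $\pi$ both covers $\bar\M$ and sends $A_\bkslice$ onto precisely $D(A)$ inside $\N$. Once $T$ is in hand, the rest reduces to bookkeeping. A secondary nuisance is the possible presence of seams in $\ts$ and null pieces of $\bkslice$, but these are codimension-$\ge 1$ in the relevant hypersurfaces and do not obstruct either the transversality argument or the homology bookkeeping.
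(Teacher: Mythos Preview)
Your proof is correct and follows essentially the same approach as the paper's: the forward direction restricts the spacetime homology region to $\bkslice$, and the backward direction flows $\surf$ along an auxiliary future-directed timelike vector field tangent to $\N$ and $\I^0$ (constructed via a partition of unity), with the swept-out region $\pi^{-1}(r)$ serving as the spacetime homology region. You are somewhat more explicit than the paper about transversality and about why $\pi^{-1}(A_\bkslice)\cap\N=D(A)$, but the argument is the same.
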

\begin{proof}
Given $\ts\in\tsset$, $\bkslice\cap\ts$ is homologous to $A_\bkslice$ on $\bkslice$ (relative to $\eowsurf_\bkslice$) via the intersection of $\bkslice$ with the spacetime homology region between $D(A)$ and $\ts$; and it does not intersect $\N$ (since $\tau$ does not intersect $\N$). Hence it is in $\Gamma_{\bkslice}$.

For the converse, let $t^\mu$ be a continuous future-directed timelike vector field on $\bar\M$ that is tangent to $\N$ and $\I^0$.\footnote{\, It is clear that such a vector field exists on a sufficiently small patch of $\bar\M$. The following construction shows that there is no global obstruction to its existence: start from an atlas of charts for $\bar\M$, define such a vector field on each chart, and average them on the overlaps using a partition of unity. Alternatively, the standard argument for the existence of a globally defined timelike vector field on a Lorentzian manifold, using an auxiliary Riemannian metric (see e.g.\ p.\ 39 of \cite{Hawking:1973uf}), can be upgraded in the presence of a boundary to ensure that the vector field is tangent to the timelike parts of the boundary.} The integral curves of $t^\mu$ pass through every point of $\bar\M$, and each curve starts on $\I^-$, ends on $\I^+$, and lies entirely in $\M$ (except its endpoints), in $\N$, or in $\I^0$. Given $\surf\in\Gamma_{\bkslice}$, let $r_\bkslice$ be the homology region on $\bkslice$ between $A_\bkslice$ and $\surf$. The surface $\surf$ can be extended into a time-sheet $\ts$ by following the integral curves of $t^\mu$ in both directions. $\ts$ is homologous to $D(A)$ (relative to $\I$) via the corresponding extension of $r_\bkslice$.
\end{proof}
With this lemma in hand, we can replace the minimization over $\surf$ in \eqref{maximin} with a minimization over $\ts$:
\begin{equation}\label{maximin2}
\Sminus= \sup_{\bkslice\in \bksliceset}\ \infp_{
\ts\in\tsset
}
\area(\bkslice\cap\ts)\,.
\end{equation}
The formula \eqref{maximin2} invites us to switch the order of the minimization and maximization. We therefore also define the following ``minimax'' quantity:\footnote{\, We use the symbol $S$ here (and below, where we define a third quantity $\Sc$) to emphasize the parallel to the maximin formula for holographic EE. However, in this setting, which is much more general than standard holographic spacetimes and where (as we will see) the three quantities $\Sminus$, $\Splus$, and $\Sc$ are not necessarily equal, we make no claim for any of them being an entropy.}$^,$\footnote{\, Swapping the order of the minimization and maximization in the maximin formula was also considered in \cite{Agon:2019qgh}.}
\begin{equation}\label{minimax}
\Splus:= 
\infp_{\ts\in\tsset}\ 
\sup_{\bkslice\in \bksliceset}
\area(\bkslice\cap\ts)\,.
\end{equation}
Recalling that $\Gamma_\ts$ is the set of surfaces $\gamma=\ts\cap\bkslice$ for some $\bkslice\in\bksliceset$, we can write this
\begin{equation}\label{minimax2}
\Splus= 
\infp_{\ts\in\tsset}\ 
\sup_{\surf\in\Gamma_{\ts}}
\area(\surf)\,,
\end{equation}
expressing the minimax quantity in a form analogous to \eqref{maximin}.

By lemma \ref{lem:slicefromset}, any achronal surface $\surf'$ (that is closed as a subset of $\bar\M$ and does not intersect $\I^\pm$) is contained in a slice $\bkslice$. If $\surf'$ is contained in the time-sheet $\ts$ then $\surf'\subseteq\surf:=\bkslice\cap\ts\in\Gamma_\ts$ and $\area(\surf')\le\area(\surf)$, so we can replace the maximization in \eqref{minimax2} with one over achronal surfaces:\footnote{\, The reader may wonder whether, by analogy with the maximax formula \eqref{maximax}, $\Splus$ can be written in terms of a ``minimin'' formula. Indeed, one may be tempted to apply the Lorentzian max cut-min flow theorem \cite{Headrick:2017ucz} to the supremum in \eqref{minimax2} in order to obtain a formula involving minimizing the flux of a timelike flow on a time-sheet (a timelike flow being a covector field $u$ obeying $u\in\fdt$, $|u|\ge1$, $d{*u}=0$, $*u|_{\I^0\cap\ts}=0$). This would work if the supremum in \eqref{minimax4} were over surfaces $\surf$ that are achronal \emph{within} the time-sheet $\ts$; whereas it is only over surfaces that are achronal in the ambient spacetime $\M$, a stronger condition. Nonetheless, in subsection \ref{sec:time-sheet-flows}, we will define a ``time-sheet-flow'', which is closely related to a Lorentzian flow living on a time-sheet and gives a sort of minimin formula for $\Splus$.}
\begin{equation}\label{minimax4}
\Splus= 
\infp_{\ts\in\tsset}\ 
\sup_{\substack{\surf\subset\ts\\ \text{achronal}}}
\area(\surf)\,.
\end{equation}

\subsection{Minimax theory}
\label{sec:minimaxtheory}

In order to better understand the relation between the two quantities $\Sminus$ and $\Splus$ defined in the previous subsection, we now make a short digression into minimax theory. This theory\footnote{\,  Minimax theory, which was born with J. von Neumann's seminal work on game theory, continues to be an active area of research in analysis, with applications to economics and many other fields. For an overview, see \cite{du2013minimax}.} addresses the following question: Given sets $X,Y$ and a function $f:X\times Y\to\R$, what can we say about the relation between the maximin value $\sup_{x\in X}\infp_{y\in Y}f(x,y)$ and the minimax value $\infp_{y\in Y}\sup_{x\in X}f(x,y)$? We start with three elementary general facts.
\begin{itemize}
\item 
First, the maximin and minimax values are not necessarily equal. A simple counterexample is given by setting $X=Y=\{1,2\}$ and $f(x,y)=(-1)^{x+y}$; then
\begin{equation}\label{evens-and-odds}
\sup_{x\in X}\infp_{y\in Y}f(x,y)=-1\,,\qquad
\infp_{y\in Y}\sup_{x\in X}f(x,y)=1 \,.
\end{equation}
\item 
The maximin and minimax nonetheless do obey a relation. Clearly, for any $x_0\in X$, $y_0\in Y$,
\begin{equation}\label{minimaxin}
\inf_{y\in Y}f(x_0,y) \le f(x_0,y_0)\le
\sup_{x\in X}f(x,y_0)\,.
\end{equation}
Maximizing the left-hand side over $x_0$ and minimizing the right-hand side over $y_0$ yields the \emph{min-max inequality}:
\begin{equation}\label{min-max}
\sup_{x\in X}\infp_{y\in Y}f(x,y) \le
\infp_{y\in Y}\sup_{x\in X}f(x,y)\,.
\end{equation}

The min-max inequality can be understood in simple game-theory terms. Let $X$, $Y$ represent the set of possible strategies for the two players respectively in a single-round, zero-sum game, with payout $f(x,y)$ for player $X$ and $-f(x,y)$ for player $Y$. Then the maximin is the best outcome for player $X$ if she plays first, while the minimax is her best outcome if she plays second (assuming player $Y$ in each case is choosing his best strategy). The inequality \eqref{min-max} expresses the fact that, in such a game, it is often better --- and never worse --- to play second, allowing one to use knowledge of the other player's move to one's advantage. Consider the example of \eqref{evens-and-odds}, which corresponds to the children's game of evens-and-odds; normally in this game the two players play simultaneously, for the simple reason that otherwise the second player would always be able to win.

In the setting of Lagrange duality, the min-max inequality is responsible for the weak duality property. With $f$ the Lagrangian function, the dual pair of programs consists of maximizing $\inf_{y\in Y}f(x,y)$ and minimizing $\sup_{x\in X}f(x,y)$. By the min-max inequality, the optimal value of the maximization problem is bounded above by the optimal value of the minimization problem.
\item 
Finally, if there exists a pair $(x_0,y_0)$ that saturates both inequalities in \eqref{minimaxin}, then both the minimax and maximin equal $f(x_0,y_0)$, so the min-max inequality is saturated:
\begin{equation}\label{globalsaddle}
\sup_{x\in X}\infp_{y\in Y}f(x,y) =f(x_0,y_0)=
\infp_{y\in Y}\sup_{x\in X}f(x,y)\,.
\end{equation}
Such a point is called a \emph{global saddle point}.
\end{itemize}

A minimax theorem is a theorem giving sufficient conditions on $X$, $Y$, and $f$ for the maximin to equal the minimax. The first such theorem was von Neumann's, which applies to a zero-sum game with \emph{mixed strategies}. A mixed strategy for player $X$ is a vector of non-negative numbers $x_i$ ($i=1,\ldots,n$) summing to 1, which can be thought of as a probability distribution over a set of $n$ pure strategies; thus $X$ is the unit simplex in $\R^n$. Similarly for player $Y$. Given a payout matrix $A_{ij}$ for the pure strategies, we assume that the payout function is the expectation value of $A_{ij}$ over the joint distribution $x_iy_j$:
\begin{equation}
f(x,y) = x_iA_{ij}y_j\,.
\end{equation}
According to von Neumann's theorem, for such a game, the minimax equals the maximin, hence the second-player advantage we saw before is erased. For example, in the evens-and-odds game, player $X$'s best strategy if she plays first is to choose equal weights for the two pure strategies, $x_1=x_2=1/2$, leading to a payout $f=0$ regardless of player $Y$'s strategy. The same holds for player $Y$, making it irrelevant who plays first. (This is a simple example of a Nash equilibrium.)

In the case just discussed, the mixed-strategy maximin (and minimax) value 0 sits between the pure-strategy maximin $-1$ and minimax 1. This holds generally:
\begin{equation}\label{mixed min-max}
\max_i\min_jA_{ij}\le\sup_{x\in X}\infp_{y\in Y}f(x,y) =
\infp_{y\in Y}\sup_{x\in X}f(x,y)
\le \min_j\max_iA_{ij}\,.
\end{equation}
Let us prove the first inequality. Consider a particular pure strategy $\hat\imath$ for player $X$. One possible mixed strategy $\hat x\in X$ is to put all the weight on $\hat\imath$: $\hat x_i:=\delta_{i\hat\imath}$. In that case \be
\inf_{y\in Y}f(\hat x,y) = \inf_{y\in Y}A_{\hat\imath j}y_j=\min_jA_{\hat\imath j}\,,
\ee
so $\sup_{x\in X}\inf_{y\in Y}f(x,y)\ge\min_jA_{\hat\imath j}$. This holds for any $\hat\imath$, implying the first inequality.

Von Neumann's theorem can be generalized to the case where $X$ and $Y$ are convex subsets of affine spaces, at least one of them is compact, and $f$ is concave-convex, i.e.\ concave in $x$ for fixed $y$ and convex in $y$ for fixed $x$. There are many results extending this theorem in various ways, including to the infinite-dimensional case, with additional technical assumptions; see for example \cite{pjm/1103040253} and the review \cite{Simons1995}. As is our custom throughout this paper, we will proceed rather naively, as far as the functional analysis is concerned, when applying ideas from convex optimization to function spaces.

\subsection{Convex-concave formula}
\label{sec:convexrelaxation}

\begin{figure}[tbp]
\centering
\includegraphics[width=0.35\textwidth]{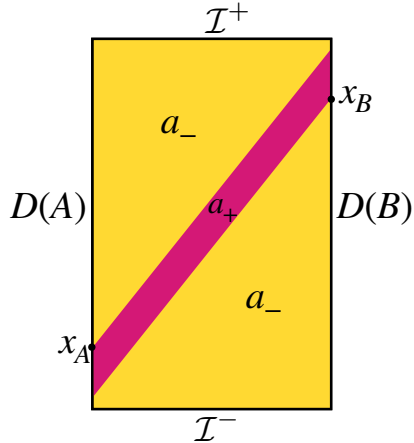}
\caption{\label{fig:gap}
Penrose diagram of a spacetime in which $\Sminus\neq \Splus$. The spacetime consists of a $(D-2)$-sphere fibered over a rectangular portion of 2-dimensional Minkowski space. The left and right sides of the rectangle are $D(A)$ and $D(B)$ respectively, while the top and bottom are $\I^\pm$ respectively. ($\I^0$ is empty in this example.) The area of the sphere varies on the rectangle, being equal to $a_-$ on the yellow portion and $a_+>a_-$ on the purple stripe. The edges of the purple stripe are moving at a constant speed $\beta<1$, and furthermore the point $x_B$ is timelike-related to $x_A$ ($x_B\in I^+(x_A)$) implying that no slice lies entirely within the purple stripe. Since every slice $\bkslice$ passes through the $a_-$ region, $\inf_{\surf\in\Gamma_\bkslice}\area(\surf)\le  a_-$; furthermore, this inequality is saturated by spherically-symmetric slices, so $\Sminus=a_-$. On the other hand, every time-sheet $\ts$ passes through the $a_+$ region, so $\sup_{\surf\in\Gamma_\ts}\area(\surf)\ge a_+$; furthermore, this inequality is saturated by any time-sheet that is spherically symmetric and does not have any seams, so $\Splus=a_+$. As expected, $\Sminus\le \Splus$. Note that this spacetime does not obey the null energy condition, as can be seen from the fact that the area of a spherical null congruence starting in the $a_-$ region initially has vanishing expansion, yet, if it enters the $a_+$ region, its area increases. (Observe however that, if we replaced the $a_-$ in the lower portion of the figure by an area $a_{++}>a_+$, the maximin and minimax values would coincide and both equal $a_{++}$.)
}
\end{figure}

We now return to the setting at hand, and the maximin and minimax formulas \eqref{maximin2}, \eqref{minimax} respectively. From the min-max inequality \eqref{min-max} we have
\begin{equation}
\Sminus \le \Splus
\end{equation}
(thereby explaining the subscripts). Are there spacetimes where the two quantities are  unequal? As we will see, this depends on the assumptions one makes about the spacetime. In the very general framework we use in this section, as set out in subsection \ref{sec:setup} (essentially just global hyperbolicity), one can indeed construct spacetimes with a gap between the maximin and minimax; an example is shown in figure \ref{fig:gap}. On the other hand, in section \ref{sec:solutions}, we will use Wall's results \cite{Wall:2012uf} to show that, under standard assumptions about holographic spacetimes (specifically, the null energy condition and AdS boundary conditions), the HRT surface is a global saddle point, and therefore its area equals both $\Sminus$ and $\Splus$.

For the next few sections, we will instead close the gap between the maximin and minimax by following von Neumann's method of allowing ``mixed strategies'', in other words by convex-relaxing the domains $\bksliceset$ of slices and $\tsset$ of time-sheets. The standard method for convex-relaxing such sets of hypersurfaces is to introduce a scalar function obeying certain boundary conditions, whose level sets represent a ``smeared'' hypersurface. (For an overview and details, see \cite{Headrick:2017ucz}.) Specifically, a convex combination of slices is represented by a function $\phi$ on $\bar\M$:\footnote{\, Note that the restriction of $\phi$ to lie in the interval $[-1/2,1/2]$ is automatic, given the other constraints on $\phi$ and the fact that every point in $\bar\M$ lies on a causal curve starting in $\I^+$ and ending in $\I^-$.}
\begin{equation}\label{phicond}
\bksliceset_{\rm c} := \bigg\{\phi:\bar\M\to\left[-\frac12,\frac12\right]
\quad \left| \quad \phi|_{\I^\pm}=\pm\frac12\,,\,d\phi\in \fdc\right\}.
\end{equation}
Given $\phi\in\bksliceset_{\rm c}$, for any $t\in(-1/2,1/2)$, the level set $\bkslice_t$, defined as the future boundary of the region where $\phi\le t$, is a slice. A single slice $\bkslice$ is represented by the step function with $\phi=\pm1/2$ on $I^\pm(\bkslice)$. (If $\bkslice$ touches $\I^\pm$, then imagine deforming it infinitesimally away from those boundaries.) Similarly, a convex combination of time-sheets is represented by a function $\psi$:
\begin{equation}\label{psicond}
\tilde\tsset_{\rm c}:=\bigg\{\psi:\bar\M\to\left[-\frac12,\frac12\right]\quad \left| \quad\psi|_{D(A)}=-\frac12\,,\,
\psi|_{D(B)}=\frac12\,,\,
d\psi\text{ spacelike or $0$}
\right\}.
\end{equation}
Given $\psi\in\tilde\tsset_{\rm c}$, for any $s\in(-1/2,1/2)$, the level set $\ts_s$ of $\psi$ is a time-sheet in $\tsset$. A single time-sheet $\ts\in\tsset$ would be represented by the step function with $\psi=-1/2$ on the spacetime homology region between $D(A)$ and $\ts$, and $\psi=1/2$ on the complement. Given $\phi\in\bksliceset_{\rm c}$, $\psi\in\tilde\tsset_{\rm c}$, the area of the intersection of the respective level sets $\bkslice_t$, $\ts_s$, averaged over $t$ and $s$, can, by a generalization of the coarea formula, be expressed as an integral over $\M$:
\begin{equation}\label{fdef1}
f[\phi,\psi]:=\int_{-1/2}^{1/2}dt\int_{-1/2}^{1/2}ds\,\area(\bkslice_t\cap\ts_s) = \int_{\M}\sqrt g\,|d\phi\wedge d\psi|\qquad(\phi\in\bksliceset_{\rm c},\psi\in\tilde\tsset_{\rm c})\,.
\end{equation}

So far, so good. However, our job of convex relaxing is not finished: while $\bksliceset_{\rm c}$ is a convex set, $\tilde\tsset_{\rm c}$ is not, since the spacelike or vanishing covectors do not form a convex subset of the cotangent space at a point on $\bar\M$. The convex hull of this subset is the entire cotangent space, leading us to simply drop the condition on $d\psi$, and define the relaxed time-sheet set as follows:\footnote{\, The restriction of $\psi$ to lie in the interval $[-1/2,1/2]$ in the definition of $\tsset_{\rm c}$ is actually optional. Dropping it simply leads to extra, superfluous level sets, and doesn't change the optimal value of any of the functionals we consider. See subsection \ref{sec:dualizations} for details.}
\begin{equation}\label{psicond2}
\tsset_{\rm c}:=\bigg\{\psi:\bar\M\to\left[-\frac12,\frac12\right]\quad \left| \quad\psi|_{D(A)}=-\frac12\,,\,
\psi|_{D(B)}=\frac12
\right\}.
\end{equation}
The level sets of $\psi\in\tsset_{\rm c}$ are still hypersurfaces homologous to $D(A)$, but they are no longer necessarily everywhere timelike. We must also correspondingly extend the definition of the objective functional $f$ to $\bksliceset_{\rm c}\times\tsset_{\rm c}$. Recall that we want the objective to be concave with respect to $\phi$ and convex with respect to $\psi$. Luckily, it is possible to extend the definition \eqref{fdef1} to this larger domain while satisfying this condition. In fact, as shown in lemma \ref{lem:wedgedotprops}, the extension is unique, and given by the wedgedot pairing:
\begin{equation}\label{fdef2}
f[\phi,\psi]
:= \int_{\M}\sqrt g\,\ip{d\phi}{d\psi}\qquad
(\phi\in\bksliceset_{\rm c},\psi\in\tsset_{\rm c})\,.
\end{equation}
where $\ip WX:=\max\{|W\wedge X|,|W\cdot X|\}$. (See lemma \ref{lem:wedgedotprops} for more properties of this function, which will play a starring role throughout this paper.) This amounts to adjusting the area functional in regions where the $\psi$ level set $\ts_s$ is not timelike:
\begin{equation}\label{fdef3}
f[\phi,\psi]=\int_{-1/2}^{1/2}dt\int_{-1/2}^{1/2}ds\,\area'(\bkslice_t,\ts_s)\,,
\end{equation}
where
\begin{equation}
\area'(\bkslice,\ts):=\int_{\bkslice\cap\ts}\sqrt h\times
\begin{cases}
1\,&\text{($\ts$ timelike or null at $\bkslice$)} \\
\coth\chi\,&\text{($\ts$ spacelike at $\bkslice$)}
\end{cases}\,,
\end{equation}
$h$ is the determinant of the metric on $\bkslice\cap\ts$, and $\chi$ is the dihedral boost angle between $\bkslice$ and $\ts$ (in other words, the dot product of their future-directed unit normals equals $-\cosh\chi$).

Since $f$ is concave-convex, we expect its maximin and minimax values to agree,\footnote{\, Rigorously proving the equality of the minimax and maximin in this setting is outside the scope of this paper. However, we can make a few very crude remarks in this direction. Minimax theorems valid for infinite-dimensional spaces, such as Sion's theorem \cite{pjm/1103040253}, typically require, in addition to the objective function being convex-concave (or some generalization thereof), a continuity assumption on the objective as well as a compactness assumption on the domain of at least one of the variables. Since the integrand $\ip{d\phi}{d\psi}$ in the definition of the functional $f$ is a continuous function of $d\phi$ and $d\psi$, we would expect $f$ to be continuous with suitable mathematically precise definitions of the function spaces $\bksliceset_{\rm c}$, $\tsset_{\rm c}$ and topologies thereon. Furthermore, we would expect those function spaces to be compact, since the functions' domain $\bar\M$ and range $[-1/2,1/2]$ are both compact.} and we call this value $\Sc$ (or $\Sc(A:B)$ if we need to make explicit the dependence on the boundary regions):
\begin{equation}\label{Sccdef}
\Sc:=
\sup_{\phi\in\bksliceset_{\rm c}}\ \infp_{\psi\in\tsset_{\rm c}}\ f[\phi,\psi]
=\infp_{\psi\in\tsset_{\rm c}}\ \sup_{\phi\in\bksliceset_{\rm c}}\ f[\phi,\psi]\,.
\end{equation}
Furthermore, this value must sit between the non-convex maximin and minimax values, the analogue of \eqref{mixed min-max} in the game-theory setting:
\begin{equation}\label{min-max between}
\Sminus\le \Sc\le \Splus\,.
\end{equation}
We can prove \eqref{min-max between} by the same method. For the first inequality, choose a slice $\hat\bkslice$, and let $\hat\phi$ be the corresponding step function, $\hat\phi:=\pm1/2$ on $I^\pm(\hat\bkslice)$. Then, for any $\psi\in\tsset_{\rm c}$,
\begin{equation}
f[\hat\phi,\psi]=
\int_{-1/2}^{1/2} ds\,\area'(\hat\bkslice,\ts_s)\ge
\int_{-1/2}^{1/2} ds\,\area(\hat\bkslice\cap\ts_s)\ge
\inf_{\surf\in\Gamma_{\hat\bkslice}}\area(\surf)\,.
\end{equation}
In fact, these inequalities are tight, since, starting from the minimal surface $\hat\surf$ on $\hat\bkslice$, by lemma \ref{lem:surfsheet}, one can construct a time-sheet $\hat\ts\in\tsset$ such that $\hat\bkslice\cap\hat\ts=\hat\surf$, and from there set $\psi$ equal to the corresponding step function. Hence
\begin{equation}
\inf_{\psi\in\tsset_{\rm c}} f[\hat\phi,\psi]=
\inf_{\surf\in\Gamma_{\hat\sigma}}\area(\surf)\,,
\end{equation}
which, by the definition of $\Sc$, implies
\begin{equation}
\Sc\ge
\inf_{\surf\in\Gamma_{\hat\sigma}}\area(\surf)\,.
\end{equation}
Maximizing the right-hand side over $\hat\sigma\in\bksliceset$ gives the first inequality in \eqref{min-max between}. The second inequality is proven the same way, using the second formula for $\Sc$ in \eqref{Sccdef}.

For a spacetime with a gap between $\Sminus$ and $\Splus$, such as the one shown in figure \ref{fig:gap}, $\Sc$ must obviously differ from at least one of them. So we see that, in this case, convex relaxation is not an innocent operation: smearing the slices and time-sheets can change the optimal value of the objective. This is in contrast to the simpler Riemannian min cut and Lorentzian max cut programs, where the convex relaxation does not change the optimal value of the objective. In the next section, we will see in detail how this works for the spacetime of figure \ref{fig:gap}. But first we will rewrite the minimax and maximin formulas for $\Sc$ as pure minimization and maximization programs, respectively.


\section{Flows}
\label{sec:flows}

In this section, we will derive, by Lagrange dualization starting from the convex-concave formula \eqref{Sccdef}, two new formulas for the quantity $\Sc$. The first, in subsection \ref{sec:Vflows}, involves dualizing on $\psi$ for fixed $\phi$ to obtain what we call the \emph{V-flow} program, while the second, in subsection \ref{sec:Uflows}, involves dualizing on $\phi$ for fixed $\psi$ to obtain the \emph{U-flow} program. In subsection \ref{sec:subadditivity}, we consider the case where the boundary has been divided into more than two regions, and show that $\Sc$ obeys the subadditivity inequality. To avoid interrupting the narrative, most of the derivations and proofs are relegated to subsection \ref{sec:flowproofs}.

As in the previous section, we assume the spacetime setup laid out in subsection \ref{sec:setup}. Except in subsection \ref{sec:subadditivity}, we fix regions $A$, $B$ covering a boundary slice, so we continue to suppress the dependence of them, e.g.\ writing $\Sc$ rather than $\Sc(A:B)$.

\subsection{V-flows}
\label{sec:Vflows}

In the first formula for $\Sc$ in \eqref{Sccdef}, the minimization over $\psi$, for fixed $\phi$, defines a convex program. It is therefore natural to apply Lagrange duality to this program to obtain a concave (maximization) program. This dualization, which is carried out in subsection \ref{sec:dualizations}, is similar to the dualization in the Riemannian setting of the relaxed min cut program to obtain the max flow program (see \cite{Headrick:2017ucz}). However, this case is slightly more complicated because the integrand $\ip{d\phi}{d\psi}$ in the objective functional \eqref{fdef2} is more complicated than the relaxed min cut objective, which is simply $|d\psi|$. Nonetheless, the main features of the Riemannian max flow program remain: we are maximizing the flux of a divergenceless 1-form (which we call $V$) subject to a norm bound and a no-flux boundary condition on the relative-homology boundary (in this case $\I$). However, the norm bound on $V$, which in the Riemannian case is simply $|v|\le1$, is more complicated here, and involves the 1-form $d\phi$. All in all, we have\footnote{\, It is possible to dualize the program \eqref{maxVflow}--\eqref{Vnormbound1} on $\phi$, yielding a maximin formula in terms of two 1-forms $U$ and $V$. However, this formula does not tell us anything new. See subsection \ref{sec:dualizations} for details.}
\begin{equation}\label{maxVflow}
\Sc = \sup_{V\in\F}\int_{D(A)}*V\,,
\end{equation}
where $\F$ is the set of V-flows, and a V-flow is a 1-form $V$ obeying
\begin{equation}\label{Vflowdef1}
    d{*V}=0\,,\qquad
    *V|_{\I}=0
    \ee
    \be
    \label{Vnormbound1}
\exists\,\phi\in \bksliceset_{\rm c}
\text{ s.t. } d\phi\pm V\in \fdc   \,.
\ee
We could have defined a V-flow as a pair $(V,\phi)$ obeying the constraints \eqref{Vflowdef1}, \eqref{Vnormbound1}. However, we have chosen to treat $\phi$ instead as an auxiliary variable whose job it is to enforce the norm bound. Indeed, as we will show in subsection \ref{sec:Vflowbounds}, it is possible to rewrite \eqref{Vnormbound1} without reference to $\phi$.

To understand the norm bound \eqref{Vnormbound1}, it is important to understand the pointwise condition ${d\phi\pm V\in \fdc}$. Subsection \ref{sec:pointwise} (see around figure \ref{fig:WXfdc}) discussed in detail properties of the set of covector pairs $(W,X)$ obeying the condition $W\pm X\in\fdc$; we repeat the most important points here in terms of $d\phi$ and $V$. First, this condition implies that $d\phi\in\fdc$ (as we already have from the definition \eqref{phicond} of $\Sc$).
 For timelike $d\phi$, this condition is equivalent to $|V_\perp|+|V_\parallel|\le|d\phi|$, where $V_\perp,V_\parallel$ are the projections of $V$ orthogonal and parallel respectively to $d\phi$; or, to say it another way, $V$ is required to be in the ``causal diamond'' in the cotangent space with vertices $\pm d\phi$. Note that this constraint allows $V$ to be timelike, spacelike, or null. For $d\phi$ null, the diamond degenerates to a null line segment, and $V$ must be a convex combination of $\pm d\phi$. In general, the larger $d\phi$ is at a given point, the larger $V$ may be, and therefore the more flux can pass through the point. However, given the constraints $d\phi\in \fdc$, $\phi|_{\I^\pm}=\pm1/2$ in the definition of $\bksliceset_{\rm c}$, $d\phi$ cannot be arbitrarily large everywhere in $\M$, which is ultimately what limits the total flux of $V$ that can pass through the spacetime from $D(A)$ to $D(B)$.

The symmetry of $\Sc$, $\Sc(B:A)=\Sc(A:B)$, can be understood easily in the language of V-flows. First note that the constraints \eqref{Vflowdef1}, \eqref{Vnormbound1} defining the V-flows do not refer specifically to the region $A$; the set $\F$ of V-flows is thus the same for the complementary region $B$ (or, as we will discuss in subsection \ref{sec:subadditivity}, any other boundary region for the same regulated spacetime). Second, those constraints are invariant under $V\to -V$, so if $V\in\F$ then $-V\in\F$. Finally, the constraints \eqref{Vflowdef1} guarantee that the objectives match: $\int_{D(A)}*V=\int_{D(B)}*(-V)$. Hence, for any max V-flow $V$ for $A$, $-V$ is a max V-flow for $B$ with the same value of the objective, and vice versa, so the two programs must have equal maxima.

\subsubsection{Slice-flows}
\label{sec:slice-flows}

Recall from \eqref{sigmaflowdef} that, given an (everywhere spacelike) slice $\bkslice$, a $\bkslice$-flow is a Riemannian flow with respect to the induced metric on $\bkslice$. A particular kind of V-flow is a \emph{slice-flow}, obtained from a $\bkslice$-flow $v$ on some slice $\bkslice$ by setting
\be\label{sliceflow}
\phi|_{I^\pm(\sigma)}=\pm\frac12\,,\qquad V=\delta(x^0)v_a dx^a\,,
\ee
where $\{x^0,x^a\}$ is a set of Gaussian normal coordinates about $\bkslice$. Conversely, any V-flow such that $\phi= \pm1/2$ on $I^\pm(\sigma)$ for some slice $\bkslice$ necessarily takes the form $V=\delta(x^0)v_a dx^a$, where $v$ is a $\bkslice$-flow. In fact, the full set $\F$ of V-flows can be obtained by convex-relaxing the set of slice-flows.\footnote{\, \label{ft:pointwise}More precisly, $\F$ is the \emph{pointwise} convex hull of the set of slice-flows. A slice-flow obeys, at each point in $\M$, the conditions $|V|\le|d\phi|$, $V\cdot d\phi=0$. The convex hull (in $T^*\times T^*$) of the set of covector pairs $(W,V)$ obeying $|V|\le|W|$, $W\cdot V=0$ is the set obeying $W\pm V\in \fdc$; see lemma \ref{lem:convexhull} in subsection \ref{sec:lemmaproofs}.} Since for any slice-flow $V$, $\int_{D(A)}*V=\int_{A_\bkslice}*v$, we have
\be\label{sliceflowS-}
\Sminus=\sup_{V\text{ slice-flow}}\int_{D(A)}*V\,.
\ee
As discussed at the end of section \ref{sec:relaxation}, since $\Sminus$ may be strictly less than $\Sc$, this is a case where convex relaxation can lead to a higher maximum.

\subsubsection{Worked example}

As an instructive example of the effect of convexity, consider the spacetime of figure \ref{fig:gap}. The maximum flux of any slice-flow is $\Sminus=a_-$. However, with a smeared slice we can do better. We will give a simple (although still not optimal) construction.

To be concrete, let the metric on the spacetime be
\be
ds^2 = -(dx^0)^2+(dx^1)^2+r(x^0,x^1)^2\, d\Omega_{D-2}^2\,,\qquad
-\frac T2\le x^0\le \frac T2\,,\qquad
-\frac L2\le x^1\le \frac L2\,,
\ee
with $T,L$ the dimensions of the rectangle, and let $T'$ be the time duration (at fixed $x^1$) of the stripe where the area of the sphere is $a_+$ (the purple stripe in figure \ref{fig:gap}), and $\beta$ is the speed at which its edges are moving. For the stripe to fit fully within the rectangle, we have
\be\label{Lbetaconstraint0}
T>\frac{L}{\beta}+T'\,.
\ee
Furthermore, the relation $x_B\in I^+(x_A)$ implies
\be\label{Lbetaconstraint}
L<\frac L\beta-T'\,.
\ee

\begin{figure}[tbp]
\centering
\includegraphics[width=0.35\textwidth]{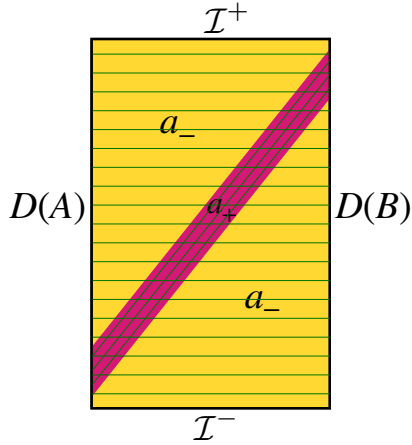}
\caption{\label{fig:gapV}
The V-flow $\hat V=\Vzero+\Vone$ on the spacetime of figure \ref{fig:gap}. The horizontal lines represent the component $\Vzero$ \eqref{V0def}, while the diagonal lines in the stripe represent the component $\Vone$ \eqref{V1def}.
}
\end{figure}

Convexity (or concavity) of an optimization problem guarantees that the optimal configuration can be assumed to share any symmetries of the problem. In what follows, we will use this fact to restrict our attention to spherically symmetric configurations.

The function $\hat\phi=x^0/T$ represents a slice that is extended in the $x^1$ and sphere directions and smeared uniformly in the $x^0$ direction. The following 1-form obeys the constraints \eqref{Vflowdef1}, \eqref{Vnormbound1} and has total flux $a_-$:
\be\label{V0def}
\Vzero =\begin{cases}
\frac1T \, dx^1\quad&(a_-\text{ region}) \\
\frac{a_-}{Ta_+} \, dx^1\quad&(a_+\text{ region})
\end{cases}\,.
\ee
Notice that this configuration saturates the constraint $d\phi\pm V\in \fdc$ in the $a_-$ region, but not in the $a_+$ region. This gives us the opportunity to send additional flux through the $a_+$ stripe. Specifically, the following ansatz, with $\alpha$ an arbitrary constant, represents a divergenceless 1-form that is uniform within the stripe:
\be\label{V1def}
\Vone=\begin{cases}
0\quad&(a_-\text{ region}) \\
\alpha\, (-dx^0+\beta dx^1)\quad&(a_+\text{ region})
\end{cases}\,.
\ee
$\Vone$ is divergenceless because the dual vector $\Vone^\mu$ is parallel to those edges. The flux of $\Vone$ is $\alpha\beta T'a_+$. In order to maximize the total flux of $\hat V=\Vzero+\Vone$, we should maximize $\alpha$, subject to the constraints $d\hat\phi\pm \hat V\in \fdc$. In fact, $d\hat\phi-\hat V\in \fdc$ for any positive $\alpha$, so the only constraint we have to worry about is $d\hat\phi+\hat V\in \fdc$, which requires
\be\label{alphaconstraint}
\frac{a_-}{Ta_+}+\alpha\beta\le\frac1T-\alpha\,.
\ee
The maximum value of $\alpha$ subject to \eqref{alphaconstraint} is
\be
\alpha =\left(1-\frac{a_-}{a_+}\right) \frac1T\frac1{\beta+1}\,.
\ee
All in all, the total flux for this configuration, illustrated in figure \ref{fig:gapV}, is
\be\label{flux}
\int_{D(A)}*\hat V =
a_-+\frac{T'}T\frac\beta{\beta+1}(a_+-a_-)\,.
\ee
In particular, the flux lines of $\hat{V}$ are horizontal outside the stripe but refract to have positive slope through the stripe, with the size of the effect depending on the parameters: they remain spacelike when $\frac{a_-}{a_+} > \frac{1-\beta}{2} $ but actually become timelike for $\frac{a_-}{a_+} < \frac{1-\beta}{2}$.

Further study shows that this simple V-flow can be improved in various ways (cf.\ appendix \ref{app:piecewiselin}) and in fact we don't know what the maximal flux is for this spacetime. The important point, however, is that convex relaxation has allowed us to increase the flux beyond the non-convex value $a_-$. Note also that, since $T'<T$, the flux of $\hat V$ is less than $a_+$, as must be the case since $\Sc\le \Splus=a_+$.

\subsubsection{Bounds}
\label{sec:Vflowbounds}

To gain more intuition about V-flows, we will now derive several inequalities involving $V$ from the constraints \eqref{Vflowdef1}, \eqref{Vnormbound1}. These inequalities have the advantage that they do not involve $\phi$; however, they are non-local.

\paragraph{World-line bound:} First, in subsection \ref{sec:lemmaproofs} (see lemma \ref{lem:Yineq1}), we show that, at any point in $\M$ and for any covector $T\in \fdc$, the condition $d\phi\pm V\in \fdc$ implies
\begin{equation}\label{Tbound}
\ip{T}{V}\le -T\cdot d\phi\,.
\end{equation}
Let $\qcv$ be an inextendible causal curve parametrized by $t$, and $\dot x$ the covector dual to the tangent vector $dx^\mu/dt$ of $\qcv$. (Recall that, with the tangent vector future-directed, the dual covector is past-directed.) Then, setting $T=-\dot x$ in \eqref{Tbound}, integrating over $t$, and using the fact that $\phi|_{\I^\pm}=\pm1/2$ so $\int_\qcv  dt(\dot x\cdot d\phi)=1$, we have
\begin{equation}\label{Vbound1}
\int_\qcv dt\,\ip{(-\dot x)}{V}\le1\,.
\end{equation}
(Note that, since the pairing $\ip{\ }{\ }$ is homogeneous in the first argument, the left-hand side of \eqref{Vbound1} is independent of the parametrization. Note also that, since the integrand is non-negative and any extendible causal curve is part of an inextendible one, the bound \eqref{Vbound1} applies also to extendible causal curves.)

If we specialize to the case where $\qcv$ is everywhere timelike and $t$ is a proper-time parameter (so $|\dot x|=1$), then, from \eqref{WXperp} and \eqref{ipdef},
\begin{equation}\label{Vperpbound0}
|V_\perp|\le\ip{(-\dot x)}{V}\,,
\end{equation}
(where $V_\perp$ is the projection of $V$ orthogonal to $\dot x$), so \eqref{Vbound1} implies
\begin{equation}\label{Vbound2}
\int_\qcv dt\,|V_\perp|\le1\,.
\end{equation}
If furthermore $V$ is everywhere spacelike or null, then \eqref{Vperpbound0} is saturated, so \eqref{Vbound2} is actually equivalent to \eqref{Vbound1}. The situation is slightly more complicated if $V$ is timelike in some region; then \eqref{Vbound1} is equivalent to \eqref{Vbound2} applied to an almost-null timelike curve $\qcv'$ that zig-zags close to $\qcv$.\footnote{\, This is essentially a consequence of the fact that $\ip{(-\dot x)}{V}$ is linear in $\dot x$ when both arguments are timelike, and approaches $|V_\perp|$ when $\dot x$ becomes null. More explicitly, consider a parameter interval $\delta t$ on $\qcv$, short enough that the metric, $V$, and $\dot x$ can be considered constant. Let $\qcv'$ connect the endpoints of this interval by way of two timelike segments with tangent covectors $\dot x'_\pm$, each of proper time $\delta t'/2$, so that $\delta x'_++\delta x'_-=\dot x\delta t$, where $\delta x'_\pm=\dot x'_\pm\delta t/2$. Then, on $\qcv'$, using \eqref{WXperp}, $|V_\perp|=|\dot x'_\pm\wedge V|$, so the contribution of these segments to the left-hand side of \eqref{Vbound2} is $|\delta x'_+\wedge V|+|\delta x'_-\wedge V|$. In the limit that the segments become null, again using \eqref{WXperp}, $|\delta x'_\pm\wedge V|$ goes to $|\delta x'_\pm\cdot V|$. Since $\delta x'_+\cdot V$ and $\delta x'_+\cdot V$ have the same sign, the sum of these terms is $|(\dot x\delta t)\cdot V|$. By \eqref{ipdef}, and using the fact that $V$ is timelike, this equals $\ip{(-\dot x)}{V}\,\delta t$, which is the contribution of the original segment of $\qcv$ to the left-hand side of \eqref{Vbound1}. Adding up such segments, we have $\lim_{\qcv'\to\text{null}}\int_{\qcv'}dt'|V_\perp|=\int_\qcv dt\ip{(-\dot x)}V$.
} 
Thus, \eqref{Vbound2}, when imposed on all timelike curves, is equivalent to \eqref{Vbound1} imposed on all timelike curves and therefore, by continuity, on all causal curves.

So far we have shown that \eqref{Vnormbound1} implies that \eqref{Vbound1} holds for all causal curves (or equivalently that \eqref{Vbound2} holds for all timelike curves). Remarkably, as shown in theorem \ref{thm:Vequivalence} (subsection \ref{sec:flowdefs}), the converse holds. This theorem is proved by constructing, given $V$, a function $\phi\in \bksliceset_{\rm c}$ such that $d\phi\pm V\in\fdc$. We can therefore choose to take
\eqref{Vbound1} (or \eqref{Vbound2}) 
as the defining norm bound for a V-flow, and dispense with the function $\phi$. In other words, we can define $\F$ as the set of 1-forms $V$ such that
\begin{equation}\label{Vflowdef2}
    d{*V}=0\,,\qquad
    *V|_{\I}=0
    \ee
\begin{equation}\label{Vnormbound2}
\forall \qcv\in \Qset\,,\,\int_\qcv dt\,\ip{(-\dot x)}{V}\le1\,,
\end{equation}
where $\Qset$ is the set of inextendible causal curves in $\M$. From this viewpoint, the function $\phi$ can be thought of as a certificate, or witness, for \eqref{Vbound1}, so that one does not have to check it for every causal curve.

\paragraph{Time-sheet bound:} The second bound is a bound on the flux of a V-flow through an arbitrary time-sheet $\ts$ (not necessarily homologous to $D(A)$, although that will be an important special case). As we show in subsection \ref{sec:lemmaproofs} (see lemma \ref{lem:TNL}), given an orthonormal pair of covectors $(T,N)$ with $T\in\fdt$, the constraint $d\phi\pm V\in \fdc$ implies\footnote{\, In lemma \ref{lem:TNL}, we also show the converse: if $(W,V)\in \fdc\times T^*$ obeys $N\cdot V\le-T\cdot W$ for any orthonormal pair $(T,N)$ with $T\in \fdt$, then $W\pm V\in \fdc$. Therefore, if we wish to use a scalar function and a local condition to guarantee \eqref{fluxbound0}, then $d\phi\pm V\in \fdc$ is the weakest condition we can impose.
}
\begin{equation}\label{Vperpbound}
     N\cdot V\le -T\cdot d\phi\,.
\end{equation}
The level sets of $\phi$ on $\ts$ can be written in terms of its level sets $\bkslice_t$ on $\bar\M$:
\begin{equation}
\surf_t:=\bkslice_t\cap\ts\,.
\end{equation}
We know from the fact that $d\phi\in \fdc$ that $\bkslice_t$, and hence $\surf_t$, is achronal. Assume temporarily that $\surf_t$ is spacelike (or empty) for all $t$; we will use continuity to address the null or partly null case below. Let $N$ be the unit normal covector field on $\ts$ and $T$ the unit future-directed covector tangent to $\ts$ and normal to $\surf_\ts$. The flux density of $V$ on $\ts$ is $N\cdot V$. The proper time between nearby surfaces $\surf_t$ and $\surf_{t+dt}$ is $dt/(-T\cdot d\phi)$, so the flux per unit spatial area on $\surf_t$ in the strip of $\ts$ between $\surf_t$ and $\surf_{t+dt}$ is $dt(N\cdot V)/(-T\cdot d\phi)$, which by \eqref{Vperpbound} is bounded above by $dt$. Integrating over $\surf_t$ and over $t$, we find
\begin{equation}\label{fluxbound0}
    \int_\ts *V\le\int_{-1/2}^{1/2} dt\,\area(\surf_t)\,.
\end{equation}
By continuity, \eqref{fluxbound0} holds even when $\surf_t$, for some values of $t$, is null or partly null (but still achronal). The right-hand side of \eqref{fluxbound0} is bounded above by the maximum area in $\Gamma_\ts$, so we have
\begin{equation}\label{normbound1}
    \int_\ts *V\le \sup_{\surf\in\Gamma_\ts}\area(\surf)\,.
\end{equation}
Again by continuity, \eqref{normbound1} also holds when $\ts$ is null or partly null. The bound \eqref{normbound1} is the analogue of the bound $\int_\surf*v\le\area(\surf)$ on the flux of a Riemannian flow through a surface $\surf$. It implies for example that an observer who carries a window of area $a$ along his worldline will see a total flux of $V$ through the window over his lifetime that is bounded above by $a$.

When applied to a time-sheet $\ts\in\tsset$ (i.e.\ a time-sheet homologous to $D(A)$ relative to $\I$), \eqref{normbound1} bounds the flux of $V$ through $D(A)$:
\begin{equation}\label{normbound2}
    \int_{D(A)}*V=\int_\ts*V\le \sup_{\surf\in\Gamma_\ts}\area(\surf)\,,
\end{equation}
where in the first equality we used the constraints $d{*V}=0$, $*V|_{\I}=0$. Maximizing the left-hand side of \eqref{normbound2} over V-flows and minimizing the right-hand side over time-sheets yields the inequality $\Sc\le\Splus$, for which we sketched a different proof in subsection \ref{sec:convexrelaxation}.

\subsection{U-flows}
\label{sec:Uflows}

In the previous subsection we started from the first, maximin formula for $\Sc$ in \eqref{Sccdef} and dualized on $\psi$ for fixed $\phi$. We can also start from the second, minimax formula, and dualize on $\phi$ for fixed $\psi$. The dualization is carried out in subsection \ref{sec:dualizations}. The result is a \emph{minimization} convex program:\footnote{\, It is possible to dualize \eqref{Uflowprog} on $\psi$, to obtain a minimax formula in terms of 1-forms $U$, $V$. However, this formula does not tell us anything new. See subsection \ref{sec:dualizations} for details.}
\begin{equation}\label{Uflowprog}
\Sc = \inf_{U\in\G}\int_{\I^+}*U\,,
\end{equation}
where $\G$ is the set of U-flows, and a U-flow is a 1-form $U$ on $\bar\M$ satisfying
\begin{equation}\label{Uflowdef1}
    d{*U}=0\,,\qquad
    *U|_{\I^0\cup\N}=0\,,
\ee
\be\label{Uflowdef2}
\exists\, \psi\in\tsset_{\rm c}\text{ s.t. }
    U\pm d\psi\in \fdc
\end{equation}
Here the condition $U\pm d\psi\in \fdc$ appearing in the norm bound \eqref{Uflowdef2} imposes a \emph{lower} bound on the flux, requiring $U\in \fdc$ and $|U|\ge|d\psi|$; see subsection \ref{sec:pointwise} (especially figure \ref{fig:WXfdc}) with $U$ taking place of $W$ and $d\psi$ taking place of $X$.

The symmetry of $\Sc$ can easily be understood in the language of U-flows. Under exchange of $A$ and $B$, whereas a V-flow and its scalar witness were mapped as $(V,\phi)\to(-V,\phi)$, preserving the objective and constraints, the corresponding map for a U-flow is $(U,\psi)\to(U,-\psi)$.

The U-flow formula \eqref{Uflowprog} is related to the V-flow one \eqref{maxVflow} via the minimax formulas \eqref{Sccdef}. These two formulas are also directly related by Lagrange duality on both variables, exchanging $(V,\phi$) for $(U,\psi)$; the dualization can be found in subsection \ref{sec:dualizations}. The relationships among these formulas is summarized in the following diagram:
\be\label{diagram}
\minCDarrowwidth50pt
\begin{CD}
\boxed{\sup_{\phi\in\bksliceset_{\rm c}}\ \infp_{\psi\in\tsset_{\rm c}}\  \int_\M\sqrt g\,\ip{d\phi}{d\psi}} @Z\text{minimax}Z
\text{theorem}Z \boxed{\infp_{\psi\in\tsset_{\rm c}}\ \sup_{\phi\in\bksliceset_{\rm c}}\  \int_\M\sqrt g\,\ip{d\phi}{d\psi}} \\
@X{\scriptsize\begin{array}{r}\psi\leftrightarrow V\\\text{duality}\end{array}}XX @XX{\scriptsize\begin{array}{r}\phi\leftrightarrow U\\ \text{duality}\end{array}}X \\
\boxed{\sup_{V\in\F}\int_{D(A)}*V} @Z{(V,\phi)\leftrightarrow(U,\psi)}Z\text{duality}Z \boxed{\inf_{U\in\G}\int_{\I^+}*U} 
\end{CD}
\ee

\subsubsection{Time-sheet-flows}
\label{sec:time-sheet-flows}

Analogously to the slice-flows discussed in subsection \ref{sec:slice-flows}, we can define a \emph{time-sheet-flow} as a U-flow such that, for some time-sheet $\ts\in\tsset$, $\psi=-1/2$ on the spacetime homology region interpolating between $\ts$ and $D(A)$, and $\psi=1/2$ on the complement. The flux of a time-sheet-flow is bounded below by the area of any surface $\surf\in\Gamma_\ts$ on $\ts$ (see \eqref{Ufluxbound1}), and in particular by the maximal-area surface:
\be
\int_{\I^+}*U\ge\sup_{\surf\in\Gamma_\ts}\area(\surf)\,.
\ee
In fact, as a consequence of the strong duality between the $U$ and $\phi$ programs, this inequality is saturated. Given a time-sheet $\ts\in\tsset$ and the corresponding function $\phi$, we have, for any $\psi\in\bksliceset_c$,
\be
\int_\M\sqrt g\,\ip{d\phi}{d\psi}=
\int_\M\sqrt g\,|d\phi\wedge d\psi|=\int_{-1/2}^{1/2}dt\,\area(\ts\cap\bkslice_t)\le\sup_{\surf\in\Gamma_\ts}\area(\surf)\,,
\ee
where $\bkslice_t$ is the level set of $\phi$ with $\phi=t$ and in the first equality we used the fact that $\ts$ is everywhere timelike so its normal covector $d\psi$ is everywhere spacelike. Strong duality of the $U$ and $\phi$ programs says that the inf of the $U$ objective equals the sup of the $\phi$ objective (for fixed $\psi$), so
\be
\inf_U\int_{\I^+}*U=\sup_{\surf\in\Gamma_\ts}\area(\surf)\,,
\ee
where the inf is over time-sheet-flows with fixed time-sheet $\ts$. Minimizing over time-sheets, we have, analogously to \eqref{sliceflowS-},
\be\label{timesheetflowS+}
\Splus=\inf_{U\text{ time-sheet-flow}}\int_{\I^+}*U\,.
\ee
So in a spacetime like that of figure \ref{fig:gap} where $S_c<S_+$, the flux cannot attain $S_c$. As with the slice-flows, this is a case where the convex relaxation (going from time-sheet-flows to general U-flows) changes the optimal value. 

Just as slice-flows are directly related via \eqref{sliceflow} to Riemannian flows, time-sheet-flows are related to Lorentzian flows, although the relation is a bit more complicated. Recall that a Lorentzian flow, as defined in \cite{Headrick:2017ucz}, is a 1-form $u$ on a Lorentzian manifold-with-boundary obeying
\be
u\in\fdt\,,\qquad |u|\ge1\,,\qquad d*u=0\,.
\ee
(No-flux boundary conditions may also be imposed on part of the manifold's boundary.) A time-sheet-flow involves a piece that is delta-function localized on the time-sheet, $\delta(x^1)u_adx^a$, where $\{u^1,u^a\}$ is a set of Gaussian normal coordinates about $\ts$ and $u$ is a 1-form on $\ts$ obeying $u\in\fdt$, $|u|\ge1$ (as well as the no-flux boundary condition $*u|_{\ts\cap\I^0}=0$). However, the flux of $U$ may enter or leave the time-sheet, so $u$ is not necessarily divergenceless, nor does $U$ necessarily vanish off of $\ts$. In fact, if $\ts$ has spacelike seams at which timelike pieces meet, then the flux must continue past the seam, so $U$ cannot vanish outside $\ts$.

\subsubsection{Worked example}

\begin{figure}[tbp]
\centering
\includegraphics[width=0.35\textwidth]{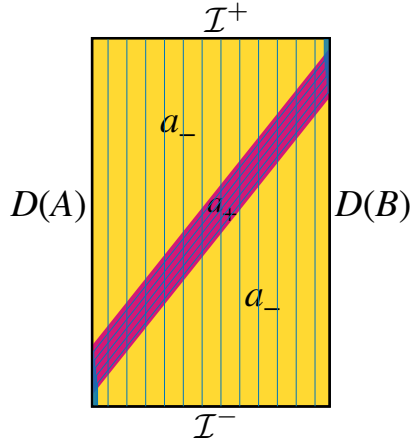}
\caption{\label{fig:gapU}
The U-flow $\hat U=\Uzero+\Uone$ on the spacetime of figure \ref{fig:gap}. The vertical lines represent the component $\Uzero$ \eqref{V0def}, while the diagonal lines in the stripe represent the component $\Uone$ \eqref{V1def}. The thick line in the upper right represents a delta-function carrying the flux from $\Uone$ along $D(B)$ toward $\I^+$; similarly for the thick line in the lower left.
}
\end{figure}

To see the effect of relaxation, we return again to the spacetime of figure \ref{fig:gap}. The minimum flux of any time-sheet-flow is $\Splus=a_+$. Instead of a single time-sheet, we can consider a smeared time-sheet, represented for example by the function $\hat\psi = x^1/L$. In the $a_-$ region, the smallest covector $U$ obeying $U\pm d\hat\psi\in \fdc$ is $dx^0/L$. This can be continued into the $a_+$ region to make a divergenceless 1-form:
\be
\label{U0def}
\Uzero=\begin{cases}
\frac1L \, dx^0\quad&(a_-\text{ region}) \\
\frac{a_-}{La_+} \, dx^0\quad&(a_+\text{ region})
\end{cases}\,.
\ee
$\Uzero$ has flux $a_-$ through $\I^+$. However, it does not obey $U\pm d\hat\psi\in \fdc$ in the $a_+$ region. We must add extra flux through the stripe. The minimal extra 1-form that we can add to satisfy this constraint is
\be
\Uone=\begin{cases}
0\quad&(a_-\text{ region}) \\
\left(1-\frac{a_-}{a_+}\right)\frac1L\frac1{1-\beta}\, (dx^0-\beta \, dx^1)\quad&(a_+\text{ region})
\end{cases}\,.
\ee
Since the dual vector field $\Uone^{\mu}$ is parallel to the edges of the stripe, $\Uone$ is divergenceless. However, it does not obey the no-flux boundary condition on $\N$. We therefore also have to add a delta-function 1-form in the $x^0$-direction on $D(A)$ from $\I^-$ to the top edge of the stripe, and on $D(B)$ from the bottom edge of the stripe to $\I^+$. The total flux of $\hat U=\Uzero+\Uone$ is
\be\label{Uhatflux}
\int_{\I^+}*\hat U = a_-+\frac{T'}L\frac{\beta}{1-\beta}\, (a_+-a_-)\,.
\ee
By virtue of \eqref{Lbetaconstraint}, this is smaller than $a_+$. On the other hand, it is larger than the flux \eqref{flux} for the V-flow of the previous subsection. In fact, this holds for any  pair $(V,\phi)$ and $(U,\psi)$ of feasible configurations, and is a consequence of the duality between the V-flow and U-flow programs. We can say more: as a consequence of the vertical dualities in the diagram \eqref{diagram}, the functional appearing in the minimax formulas \eqref{Sccdef} always sits between these two values:
\be\label{doubleineq}
\int_{D(A)}*V\le\int_\M\sqrt g\,\ip{d\phi}{d\psi}\le\int_{\I^+}*U\,.
\ee
For the functions $\hat\phi$, $\hat\psi$ we used in the previous subsection and this one, we have
\be
\int_\M\sqrt g\,\ip{d\hat\phi}{d\hat\psi} = a_-+\frac{T'}T\, (a_+-a_-)\,,
\ee
which indeed (using \eqref{Lbetaconstraint0}) obeys \eqref{doubleineq}.

This configuration $(\hat U,\hat\psi)$, like $(\hat V,\hat\phi)$, is not optimal. Indeed, we do not know what the optimal configuration is, nor the value of $\Sc$ for this spacetime.\footnote{\, 
One natural generalization of the scalar fields $\phi$ and $\psi$ being linear is for them to be piecewise linear, with different constant gradient inside and outside the stripe.  Setting the outside parts to be the same as before and optimizing over the gradients inside, we can see that the extremized fluxes of $V$ and $U$ approach each other further (but not enough to coincide), while maintaining the requisite nesting.  The detailed results are presented in appendix \ref{app:piecewiselin}.
} We only know that it lies somewhere between \eqref{flux} and \eqref{Uhatflux}.

\subsubsection{Bounds}
\label{sec:Uflowbounds}

U-flows obey non-local bounds analogous to those obeyed by V-flows. The condition $U\pm d\psi\in \fdc$ implies that $U\in\fdc$ and also that, for any covector $Y$,
\begin{equation}\label{UT bound}
\ip{U}{Y}\ge Y\cdot d\psi
\end{equation}
(see lemma \ref{lem:Yineq2} in subsection \ref{sec:lemmaproofs}). Let $\pcv$ be a curve in $\M$ starting in $D(A)$ and ending in $D(B)$, parametrized by $s$, and $\dot x$ be the covector dual to its tangent vector $dx^\mu/ds$. The bound \eqref{UT bound}, with $Y=\dot x$, together with the boundary conditions on $\psi$, imply
\begin{equation}\label{Ubound1}
\int_\pcv  ds\,\ip{U}{\dot x} \ge1\,.
\end{equation}
(Note that, since the pairing $\ip{\ }{\ }$ is homogeneous in the second argument, the left-hand side of \eqref{Ubound1} is independent of the parametrization of $\pcv$.) For spacelike $\pcv$, with $s$ the proper-distance parameter, \eqref{Ubound1} is equivalent to
\begin{equation}
\int_\pcv  ds\,|U_\perp| \ge1\,,
\end{equation}
where $U_\perp$ is the projection of $U$ perpendicular to $\dot x$.

According to theorem \ref{thm:Uequivalence} (subsection \ref{sec:flowdefs}), the converse is also true: if $U\in\fdc$ everywhere and \eqref{Ubound1} holds for all curves $\pcv$ from $D(A)$ to $D(B)$, then \eqref{Uflowdef2} holds. Thus we can take \eqref{Ubound1} as the defining norm bound for $U$ and dispense with the function $\psi$, defining a U-flow for $A$ as a 1-form $U$ obeying
\begin{equation}\label{Uflowdef3}
    d{*U}=0\,,\qquad
    *U|_{\I^0\cup\N}=0\,,
\ee
\be\label{Uflowdef4}
U\in\fdc\,,\qquad
\forall\,\pcv\in \Pset\,,\,
\int_\pcv  ds\,\ip{U}{\dot x} \ge1\,,
\end{equation}
where $\Pset$ is the set of all curves from $D(A)$ to $D(B)$. From this viewpoint, $\psi$ can be thought of as a witness for \eqref{Ubound1}.

We also have an analogue of the bound \eqref{normbound2} for the flux of $U$, proved similarly. From lemma \ref{lem:TNL}, we know that, for any orthonormal pair of covectors $(T,N)$ with $T\in\fdt$, the constraint $U\pm d\psi\in\fdc$ implies
\be
-T\cdot U\ge N\cdot d\psi\,.
\ee
Therefore, for any everywhere-spacelike slice $\bkslice$,
\begin{equation}\label{Ufluxbound1}
\int_{\I^+}*U = \int_{\bkslice}*U = \int_\bkslice \sqrt h\,(-T\cdot U)\ge
\int_\bkslice \sqrt h\,N\cdot d\psi=
\int_\bkslice \sqrt h\,|d_\bkslice\psi|=
\int_{-1/2}^{1/2}ds\,\area(\gamma_s)\,,
\end{equation}
where $d_\bkslice$ is the gradient on $\bkslice$; in going from $N\cdot d\psi$ to $|d_\bkslice\psi|$ we chose $N$ to be the normalized projection of $d\psi$ onto $\bkslice$; and $\gamma_s:=\ts_s\cap\bkslice$ is the level set of $\psi$ on $\bkslice$. Hence
\be\label{Ufluxbound}
\int_{\I^+}*U \ge
 \inf_{\surf\in\Gamma_\bkslice}\area(\surf)\,.
 \ee
By continuity, \eqref{Ufluxbound} holds for general slices. Minimizing the left-hand side over U-flows and maximizing the right-hand side over slices yields the inequality $\Sc\ge \Sminus$.

\subsection{Multiple regions \& subadditivity}
\label{sec:subadditivity}

So far, we have considered just two boundary regions $A,B$. But it is also interesting to consider more regions $A,B,C,\ldots$, allowing us to explore increasingly fine relational properties of the entanglement structure. As explained in subsection \ref{sec:EWCS}, any entangling surfaces have been excised in passing to the regulated spacetime $\M$, so we have $D(AB)=D(A)\cup D(B)$, etc., and the conformal boundary of $\M$ is $\N=D(A)\cup D(B)\cup D(C)\cup\cdots$. The quantity $\Sc(\A:\A^c)$ is then defined for any subset $\A$ of the regions.

It is important to understand where the choice of region enters in the max V-flow and min U-flow programs. The constraints \eqref{Vflowdef1} and \eqref{Vnormbound1} (or equivalently \eqref{Vnormbound2}) that define a V-flow do not depend on the choice of region, which instead enters in the objective, which is the flux of $V$ through $D(\A)$. The situation is the opposite for the U-flows: the objective \eqref{Uflowprog}, the flux of $U$ through the future boundary, is independent of the choice of region, whereas the constraints depend on it; more precisely, the conditions \eqref{Uflowdef1} do not depend on it, but the norm bound \eqref{Uflowdef2} does, through the boundary conditions on $\psi$: $\psi|_{D(\A)}=-1/2$, $\psi|_{D(\A^c)}=1/2$; the equivalent norm bound \eqref{Uflowdef4} also does through the set $\Pset$ of curves from $D(\A)$ to $D(\A^c)$.\footnote{\, It's important to note that this discussion of the dependence of the set of V- and U-flows on the choice of boundary region is in the context of a \emph{fixed} regulated spacetime, of which the regions being considered are unions of entire boundary connected components. As we will see when we return to the original unregulated spacetime in section \ref{sec:embedding}, the set of V-flows there \emph{does} depend on the choice of boundary region.}

Let us now specialize to the case of three boundary regions $A,B,C$. We then have three different $\Sc$ quantities: $\Sc(A:BC)$, $\Sc(B:AC)$, and $\Sc(C:AB)$, which we will write as $\Sc(A)$, $\Sc(B)$, and $\Sc(AB)$. We will show that they obey the subadditivity, or triangle, inequality:
\be\label{subadditivity}
\Sc(AB)\le \Sc(A)+\Sc(B)\,.
\ee
Interestingly, it is possible to prove \eqref{subadditivity} using either V-flows or U-flows. The proofs are quite different, illustrating the different points of view afforded by the two types of flows.

\begin{proof}[Proof via V-flows:]
Let $V$ be a maximal V-flow for $AB$. We then have
\be
\Sc(AB)=\int_{D(AB)}*V=\int_{D(A)}*V+\int_{D(B)}*V \le \Sc(A)+\Sc(B)\,,
\ee
where in the inequality we used the fact that $V$ is an allowed (though not necessarily maximal) V-flow for $A$ and for $B$, since the constraints \eqref{Vflowdef1}, \eqref{Vnormbound1} (or \eqref{Vnormbound2}) in the definition of a V-flow do not depend on the choice of boundary region. (This is is essentially the same proof as in the Riemannian setting \cite{Freedman:2016zud}.) \end{proof}

\begin{proof}[Proof via U-flows:] For the proof of \eqref{subadditivity} using U-flows, on the other hand, we have to contend with the fact that the definition of a U-flow \emph{does} depend on the choice of boundary region. Specifically, the boundary conditions for the function $\psi$ appearing in the norm bound \eqref{Uflowdef2} depend on the boundary region; equivalently, the curves $\pcv$ enforcing the norm bound in the form \eqref{Uflowdef4} connect the given boundary region to its complement. Let $U_A$ be a minimal U-flow for $A$ and $U_B$ a minimal U-flow for $B$. Then we claim that $U_{AB}:=U_A+U_B$ is an allowed (though not necessarily minimal) U-flow for $AB$. First, it clearly obeys the constraints \eqref{Uflowdef1}, which are linear. We can show that it also obeys the norm bound \eqref{Uflowdef2} or equivalently \eqref{Uflowdef4}.

For \eqref{Uflowdef2}, let $\psi_A$, $\psi_B$ be witness functions for $U_A$, $U_B$ respectively. These obey
\be
\left.\psi_A\right|_{D(A)}=-\frac12\,,\quad
\left.\psi_A\right|_{D(BC)}=\frac12\,,\qquad
\left.\psi_B\right|_{D(B)}=-\frac12\,,\quad
\left.\psi_B\right|_{D(AC)}=\frac12\,.
\ee
We then define
\be
\psi_{AB}:=\psi_A+\psi_B-\frac12\,.
\ee
This obeys
\be
\left.\psi_{AB}\right|_{D(AB)}=-\frac12\,,\quad
\psi_{AB}|_{D(C)}=\frac12\,.
\ee
Also, since $U_A\pm d\psi_A\in \fdc$ and $U_B\pm d\psi_B\in \fdc$, we have $U_{AB}\pm d\psi_{AB}\in \fdc$, so $U_{AB}$ obeys \eqref{Uflowdef2}.

To show that $U_{AB}$ obeys \eqref{Uflowdef4}, first note that clearly $U_{AB}\in\fdc$. Let $\pcv$  be a curve connecting $D(A)$ and $D(C)$. We then have
\be
\int_\pcv ds\,\ip{U_{AB}}{\dot x}\ge 
\int_\pcv ds\left(\ip{U_{A}}{\dot x}+\ip{U_{B}}{\dot x} \right) = \int_\pcv  ds\,\ip{U_A}{\dot x}+\int_\pcv ds\,\ip{U_B}{\dot x}\ge 1\,,
\ee
where in the first inequality we used the fact that $\ip{\ }{\ }$ is homogeneous and concave, hence superadditive, in the first argument; and in the second inequality we used the fact that $\pcv$  connects $D(A)$ to its complement $D(BC)$ (since $D(C)$ is contained in $D(BC)$) and $U_A$ is a U-flow for $A$, so the first integral is at least 1, and the second integral is clearly at least 0. Similarly for a curve connecting $D(B)$ and $D(C)$. Hence $U_{AB}$ obeys \eqref{Uflowdef4}.

We now have
\be
\Sc(AB)\le\int_{\I^+}*U_{AB} = \int_{\I^+}*U_A+\int_{\I^+}*U_B = \Sc(A)+\Sc(B)\,.
\ee
\end{proof}

The fact that the quantity $\Sc$ is subadditive immediately raises several other questions. First, are the minimax and maximin quantities $S_\pm$ also subadditive? We are not aware of either a proof or a counterexample, but we suspect that they are not. It is also natural to ask whether $\Sc$ obeys higher entropy inequalities such as strong subadditivity and MMI. In the Riemannian setting, flow-based proofs of these inequalities use the nesting property \cite{Freedman:2016zud,Headrick:2017ucz} and the existence of a max multiflow \cite{Cui:2018dyq}, respectively. The proofs of these properties do not carry over straightforwardly to the current Lorentzian setting.\footnote{\label{LorentzianNesting}\, 
Those proofs use strong duality between the max flow and relaxed min cuts programs. On the cut side, they require a smeared cut, or barrier function $\psi$, for a (nested or disjoint) set of boundary regions to be decomposed into a corresponding set of barrier functions. Due to the non-local nature of the constraints, we have not found a way to effect such a decomposition in the Lorentzian setting.
} We do not know whether these properties hold for U- and V-flows in a general spacetime, but we suspect that they do not. In subsection \ref{sec:multiple}, we will return to this issue in the more specific setting of standard holographic spacetimes.

\subsection{Proofs}
\label{sec:flowproofs}

In this subsection we provide proofs and derivations for many of the statements made in the rest of this section. At a technical level, this subsection is the heart of the paper. The reader who is mainly interested in the results of the paper is nonetheless permitted to skip ahead to section \ref{sec:threads}.

\subsubsection{Covector-pair lemmas}
\label{sec:lemmaproofs}

The following set of covector pairs, at a given point $x\in\M$, has played an important role throughout this section:
\be
\Kset:=\left\{(W,X)\in T^*\times T^*\quad \big| \quad W\pm X\in \fdc\right\}
\ee
Here we prove four lemmas providing alternative characterizations of $\Kset$ that have been used in various parts of this section. We also give two formulas for the pairing $\ip{\ }{\ }$ in terms of $\Kset$ that will be used in the dualizations of the next subsection.

\begin{lemma}\label{lem:convexhull}
$\Kset$ is the convex hull in $T^*\times T^*$ of 
\be
\Kset':=\left\{(W,X)\in \Kset\quad \big| \quad W\cdot X=0\right\}.
\ee
\end{lemma}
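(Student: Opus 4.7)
The plan is to prove both inclusions: the convex hull of $\Kset'$ is contained in $\Kset$, and every element of $\Kset$ is a convex combination of (at most four) elements of $\Kset'$.

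First I would establish the easy direction. The set $\Kset$ is convex as a subset of $T^*\times T^*$: if $(W_i,X_i)\in\Kset$ for $i=1,2$ and $\lambda\in[0,1]$, then $(\lambda W_1+(1-\lambda)W_2)\pm(\lambda X_1+(1-\lambda)X_2)=\lambda(W_1\pm X_1)+(1-\lambda)(W_2\pm X_2)\in\fdc$ because $\fdc$ is a convex cone. Combined with the trivial inclusion $\Kset'\subseteq\Kset$, this shows that the convex hull of $\Kset'$ lies in $\Kset$.

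The substantive direction is the reverse inclusion. Given $(W,X)\in\Kset$, set $C_+:=W+X$ and $C_-:=W-X$, which are both future-directed causal. The key geometric fact I would use is that in a Lorentzian vector space of dimension $\ge 2$, every future-directed causal covector can be written as the sum of two future-directed null covectors (one of which is allowed to be zero). Apply this to write $C_+=N_1+N_2$ and $C_-=N_3+N_4$, with each $N_i\in\fdc$ null. Then
\[
W=\tfrac12(C_++C_-)=\tfrac12(N_1+N_2+N_3+N_4),\qquad X=\tfrac12(C_+-C_-)=\tfrac12(N_1+N_2-N_3-N_4).
\]
The desired decomposition is
\[
(W,X)=\tfrac14(2N_1,2N_1)+\tfrac14(2N_2,2N_2)+\tfrac14(2N_3,-2N_3)+\tfrac14(2N_4,-2N_4).
\]
Each summand lies in $\Kset'$: for a pair of the form $(2N,\pm 2N)$ with $N\in\fdc$ null, the combinations $W\pm X$ are either $0$ or $4N$, both of which lie in $\fdc$, and $W\cdot X=\pm 4N^2=0$. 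The coefficients are non-negative and sum to $1$, so this exhibits $(W,X)$ as a convex combination of four elements of $\Kset'$.

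The only step that requires care is the null-sum decomposition of a causal covector. I would handle it by reducing to a two-dimensional Lorentzian subspace: if $C$ is timelike, choose any timelike 2-plane containing $C$ and decompose in the two-dimensional null basis; if $C$ is null, take one summand equal to $C$ and the other equal to $0$; if $C=0$, take both summands to be $0$. Since the ambient spacetime has $D\ge 3$, the cotangent space is $(1{+}(D{-}1))$-dimensional with $D-1\ge 2$, so such a 2-plane always exists, and this is the only place where the dimensional assumption is used.
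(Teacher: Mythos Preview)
Your proof is correct, but it takes a different route from the paper's. The paper's argument writes any $(W,X)\in\Kset\setminus\Kset'$ as the midpoint of exactly \emph{two} points of $\Kset'$: when $W\cdot X<0$ (so that $W+X\in\fdt$), it picks a covector $Y$ with $Y\cdot(W+X)=0$ and $Y^2=-W\cdot X$, and checks that $(W\pm Y,X\pm Y)\in\Kset'$; the case $W\cdot X>0$ is handled symmetrically. Your approach instead passes to the null decomposition of $W\pm X$ and lands on four extreme points of the form $(2N,\pm 2N)$ with $N$ null. Your argument is more elementary and makes transparent that the extreme rays of $\Kset$ sit over the null cone; the paper's argument is more economical and shows the stronger statement that two points of $\Kset'$ always suffice. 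One small remark: the null-sum decomposition of a future-directed causal covector already works in any Lorentzian space of dimension $\ge 2$, so the appeal to $D\ge 3$ is unnecessary (though harmless given the paper's standing assumption).
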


\begin{proof}
Suppose $(W,X)\in \Kset\setminus \Kset'$. Consider first the case $W\cdot X<0$; then $W+X\in \fdt$ (since we know $W+X\in\fdc$ and $(W+X)^2=(W-X)^2+4W\cdot X<0$). Choose a covector $Y$ such that $Y\cdot (W+X)=0$, $Y^2=-W\cdot X$. The reader can check that the two points $(W_\pm,X_\pm):=(W\pm Y,X\pm Y)$, whose average is $(W,X)$, are both elements of $\Kset'$. The case $W\cdot X>0$ is the same, but with $Y$ chosen such that  $Y\cdot (W-X)=0$, $Y^2=W\cdot X$.
\end{proof}

Now define a set of orthonormal pairs of covectors, one of which is timelike:
\be
\Lset:=\left\{(T,N)\in \fdt\times T^*\quad \big| \quad T\cdot N=0, \ |T|=|N|=1\right\}.
\ee
We can rewrite the condition on $(W,X) \in \Kset$ as $W$ being future-directed causal with $X$ lying in its causal diamond, which we can conveniently express in terms of $T$ and $N$ as follows:
\begin{lemma}\label{lem:TNL}
$(W,X)\in \Kset$ if and only if
\be
\forall\,(T,N)\in \Lset\,,\ \,X\cdot N\le-W\cdot T\,.
\ee
\end{lemma}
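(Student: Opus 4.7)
The plan is to handle the two directions separately, starting with the easier forward implication. Given $(W,X)\in\Kset$ and $(T,N)\in\Lset$, I would exploit the polarization-style identity
\[
W\cdot T+X\cdot N=\tfrac12\bigl[(W+X)\cdot(T+N)+(W-X)\cdot(T-N)\bigr],
\]
together with the observation that both $T\pm N$ are future-directed null: they square to $-|T|^2+|N|^2=0$, and are connected to $T\in\fdt$ by the timelike paths $t\mapsto T\pm tN$, $t\in[0,1)$. Since $(W\pm X)\in\fdc$ by hypothesis and the pairing of two future-directed causal covectors is automatically non-positive (one factor being a future-directed covector, the other being the dual of a past-directed causal vector), each term on the right is $\le 0$, yielding $W\cdot T+X\cdot N\le 0$, i.e.\ $X\cdot N\le -W\cdot T$.

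For the backward direction, the strategy is to verify $(W\pm X)\cdot Y\le 0$ for every $Y\in\fdc$, reducing first to null probes. Fixing an arbitrary $(T,N)\in\Lset$, I would introduce the boost family
\[
T_\alpha:=\cosh\alpha\,T+\sinh\alpha\,N,\qquad N_\alpha:=\sinh\alpha\,T+\cosh\alpha\,N,
\]
which a direct check confirms lies in $\Lset$ for every $\alpha\in\R$. Feeding this family into the hypothesis and rearranging via the same identity produces
\[
\tfrac12 e^{\alpha}(W+X)\cdot(T+N)+\tfrac12 e^{-\alpha}(W-X)\cdot(T-N)\le 0 \qquad (\forall\alpha\in\R),
\]
so sending $\alpha\to\pm\infty$ forces $(W+X)\cdot(T+N)\le 0$ and $(W-X)\cdot(T-N)\le 0$. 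Running the same argument on $(T,-N)\in\Lset$ supplies the two remaining sign combinations. Since every future-directed null $Y$ has the form $a(T+N)$ for some $a>0$ and $(T,N)\in\Lset$ (fix any unit $T\in\fdt$, set $a:=-Y\cdot T>0$, and take $N:=(Y-aT)/a$), this handles all future-directed null probes.

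To extend from null to timelike probes, I would decompose any future-directed timelike $Y$ into a sum of two future-directed null covectors: for any spacelike unit $M$ orthogonal to $Y$, the identity $Y=\tfrac12(Y+|Y|M)+\tfrac12(Y-|Y|M)$ displays both summands as null and future-directed (their time components agree with that of $Y/2$ in the rest frame of $Y$). Linearity then gives $(W\pm X)\cdot Y\le 0$ for all $Y\in\fdc$, hence $W\pm X\in\fdc$.

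The main obstacle I anticipate is choosing the right probe family for the backward direction when $X$ is itself timelike. The tempting shortcut --- take a timelike $Y$, set $T=Y/|Y|$, and try to dominate $|X\cdot Y|$ by $\sup_N|X\cdot N|=|X_\perp(T)|$ --- fails exactly in this regime, because Lorentzian (reverse) Cauchy--Schwarz gives $(X\cdot Y)^2\ge -X^2$, so $|X\cdot Y|$ can exceed $|X_\perp(Y)|=\sqrt{X^2+(X\cdot Y)^2}$ whenever $X^2<0$. The boost-plus-null-decomposition plan above is engineered precisely to bypass this dichotomy: only null directions $T\pm N$ are ever paired with $X$, so the signature of $X$ never enters, and the passage from null to timelike $Y$ is automatic by convex combination within $\fdc$.
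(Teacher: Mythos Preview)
Your proof is correct and follows essentially the same route as the paper's. Both use the null decomposition $T\pm N$ and the polarization identity $W\cdot T+X\cdot N=\tfrac12[(W+X)\cdot(T+N)+(W-X)\cdot(T-N)]$; your boost family $(T_\alpha,N_\alpha)$ implements exactly the paper's rescaling $U^+\to\alpha U^+$, $U^-\to U^-/\alpha$ (since $T_\alpha\pm N_\alpha=e^{\pm\alpha}(T\pm N)$), and your passage from null to timelike probes is the direct version of the paper's contrapositive argument.
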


\begin{proof}
Given covectors $(T,N)\in T^*\times T^*$, we can null-decompose them by defining ${U^\pm:=(T\pm N)/2}$. Then $(T,N)\in \Lset$ if and only if $U^\pm$ are future-directed null covectors and $U^+\cdot U^-=-1/2$. 
For $(W,X)\in T^*\times T^*$, we have
\begin{equation}\label{VNWT}
    X\cdot N+W\cdot T=(W+X)\cdot U^++(W-X)\cdot U^-\,.
\end{equation}
If $(W,X)\in \Kset$ then $(W\pm X)\cdot U^\pm\le0$, so $X\cdot N\le -W\cdot T$. 

For the converse, suppose $W+X\not\in \fdc$. Then there exists a future-directed null covector $U^+$ such that $(W+X)\cdot U^+>0$. Let $U^-$ be a future-directed null covector such that $U^+\cdot U^-=-1/2$. By rescaling $U^+$ by a positive factor $\alpha$ and $U^-$ by $1/\alpha$, \eqref{VNWT} can be made positive. Similarly for $W-X\not\in \fdc$.
\end{proof}
In fact, using the $\ip{\ }{\ }$ product, we can re-express the condition $(W,X)\in\Kset$ in terms of a single arbitrary covector $Y$, in two different ways.

\begin{lemma}\label{lem:Yineq1}
$(W,X)\in \Kset$ if and only if
\be\label{Ycond1}
\forall\,Y\in \fdc\,,\ \, \ip{Y}{X}\le-Y\cdot W\,.
\ee
\end{lemma}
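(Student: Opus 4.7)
My plan for lemma \ref{lem:Yineq1} is to prove each direction of the biconditional separately by exploiting the explicit formula $\ip{Y}{X}=\max\{|Y\cdot X|,|Y\wedge X|\}$: I will treat the two pieces of the max separately in the forward direction and then use only the $|Y\cdot X|$ piece for the converse. Lemma \ref{lem:TNL}, which has just been established, will supply the bridge to the wedge piece.

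For the forward direction, I will assume $W\pm X\in\fdc$ and pick any $Y\in\fdc$. Since the pairing of two future-directed causal covectors is non-positive (their dual vectors are both past-directed causal and thus lie in a common time cone), both $Y\cdot(W+X)\le 0$ and $Y\cdot(W-X)\le 0$, which immediately gives $|Y\cdot X|\le -Y\cdot W$. For the $|Y\wedge X|$ piece, the plan is to route through lemma \ref{lem:TNL}: if $Y$ is null, then $|Y\wedge X|=|Y\cdot X|$ and we are already done; if $Y\in\fdt$, I write $Y=|Y|T$ with $T=Y/|Y|$ unit timelike, and use the identity
\[
|Y\wedge X|=|Y|\,|X_\perp|=\sup_{N:\,(T,N)\in\Lset}|Y|\,(N\cdot X)\,,
\]
which follows from Cauchy--Schwarz in the Euclidean hyperplane orthogonal to $T$ (the supremum is saturated by $N=X_\perp/|X_\perp|$, noting that $N\cdot Y=0$ makes $N\cdot X=N\cdot X_\perp$). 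Combining this sup representation with lemma \ref{lem:TNL}, which gives $N\cdot X\le -T\cdot W$ for every $(T,N)\in\Lset$, yields $|Y\wedge X|\le |Y|(-T\cdot W)=-Y\cdot W$, completing the forward implication.

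For the reverse direction, I will assume $\ip{Y}{X}\le -Y\cdot W$ for every $Y\in\fdc$. Since $|Y\cdot X|\le\ip{Y}{X}$ directly from the definition of the wedgedot, this gives $\pm Y\cdot X\le -Y\cdot W$, i.e.\ $Y\cdot(W\pm X)\le 0$ for every $Y\in\fdc$. By the self-duality of the causal cone under the covector--vector pairing (the dual of $\fdc\subset T^*$ is the past-directed causal cone in the tangent space, and a covector lies in $\fdc$ iff its dual vector is past-directed causal), this is equivalent to $W\pm X\in\fdc$, as required. As a free byproduct, specializing to various $Y\in\fdc$ in the hypothesis also forces $-Y\cdot W\ge 0$, reconfirming $W\in\fdc$.

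The step I expect to be the main obstacle is the $|Y\wedge X|$ half of the forward direction, because the wedge norm is not in an obvious inner-product form and must be reconciled with a statement about $-Y\cdot W$. The sup representation above is what makes this work: it converts $|Y\wedge X|$ into a supremum of linear functionals $N\cdot X$ and matches it exactly to the orthonormal-pair characterization of $\Kset$ proved in lemma \ref{lem:TNL}. Everything else in the argument is either definitional or elementary causal-cone duality.
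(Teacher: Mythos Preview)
Your proof is correct. The reverse direction is essentially identical to the paper's. For the forward direction, however, you take a genuinely different route: the paper works directly by decomposing $W$ and $X$ into components parallel and perpendicular to $Y$, deriving $|Y||W_\perp\pm X_\perp|\le -Y\cdot(W\pm X)$ from $W\pm X\in\fdc$, averaging the two inequalities, and invoking the triangle inequality in the spacelike hyperplane orthogonal to $Y$ to extract $|Y||X_\perp|\le -Y\cdot W$. You instead express $|Y||X_\perp|$ as $\sup_{N:(T,N)\in\Lset}|Y|\,(N\cdot X)$ and feed this directly into lemma~\ref{lem:TNL}. Your approach is more modular---it leverages the orthonormal-pair characterization already established rather than redoing a component calculation---while the paper's is self-contained and does not depend on lemma~\ref{lem:TNL}. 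Both are clean; yours makes the logical dependency between the covector-pair lemmas more transparent.
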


\begin{proof}
If \eqref{Ycond1} holds, then for any $Y\in \fdc$,
\be
Y\cdot(W\pm X)\le 
Y\cdot W+|Y\cdot X|\le
Y\cdot W+\ip{Y}{X}\le0\,,
\ee
 implying $(W,X)\in \Kset$.

Conversely, if $(W,X)\in \Kset$, then $W\in \fdc$ and, for any $Y\in \fdc$, $|Y\cdot X|\le-Y\cdot W$. Furthermore, for any $Y\in \fdt$, $|Y\wedge X|=|Y||X_\perp|$, where $X_\perp$ is the projection of $X$ perpendicular to $Y$. 
The condition $(W,X)\in \Kset$ implies $|Y||W_\perp\pm X_\perp|\le|Y||W_\parallel\pm X_\parallel|=-Y\cdot(W\pm X)$; taking the average of these two inequalities and applying the triangle inequality in the (spacelike) hyperplane orthogonal to $Y$ yields $|Y||X_\perp|\le -Y\cdot W$. Hence $\ip YX\le-Y\cdot W$. By continuity, the inequality holds for any $Y\in \fdc$.
\end{proof}

\begin{lemma}\label{lem:Yineq2}
$(W,X)\in \Kset$ if and only if
\be\label{Ycond2}
W\in \fdc\text{ and }\forall\,Y\in T^*\,,\ \,X\cdot Y\le\ip{W}{Y}\,.
\ee
\end{lemma}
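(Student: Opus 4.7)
The plan is to prove both implications by reducing them to lemma \ref{lem:TNL}, which characterizes $\Kset$ in terms of inequalities indexed by orthonormal pairs $(T,N)\in\Lset$.

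For the forward direction, assume $W\pm X\in\fdc$. Then $W\in\fdc$ is immediate, since $W=\tfrac12\bigl((W+X)+(W-X)\bigr)$ is a convex combination of elements of the cone $\fdc$. To prove $X\cdot Y\le\ip{W}{Y}$, I would split on the signature of $Y$. When $Y$ is causal, we may assume $Y\in\fdc$ (since $\ip{W}{-Y}=\ip{W}{Y}$, replacing $Y$ by $-Y$ if needed); pairing $W\pm X\in\fdc$ with $Y\in\fdc$ yields $(W\pm X)\cdot Y\le 0$, which combine to give $|X\cdot Y|\le -W\cdot Y=\ip{W}{Y}$. When $Y$ is spacelike, set $N:=Y/|Y|$ and let $W_\perp:=W-(W\cdot N)N$; one checks $W_\perp^2\le 0$ (from $W^2\le 0$) and that $W_\perp$ inherits future-directedness from $W$ in the Lorentzian hyperplane orthogonal to $N$, so provided $W_\perp\ne 0$ the covector $T:=W_\perp/|W_\perp|$ gives a pair $(T,N)\in\Lset$ that saturates $-W\cdot T=|W_\perp|=|W\wedge N|$. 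Lemma \ref{lem:TNL} then delivers $X\cdot N\le -W\cdot T$, and multiplying by $|Y|$ gives $X\cdot Y\le |W\wedge Y|=\ip{W}{Y}$. The degenerate sub-case $W_\perp=0$ forces $W$ proportional to the spacelike $N$, hence $W=0$ (since $W\in\fdc$), which in turn forces $X=0$ via $\pm X\in\fdc$, making the inequality trivial.

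For the reverse direction, assume $W\in\fdc$ and $X\cdot Y\le\ip{W}{Y}$ for all $Y\in T^*$. By lemma \ref{lem:TNL} it suffices to verify $X\cdot N\le -W\cdot T$ for each $(T,N)\in\Lset$. Applying the hypothesis with $Y=N$ gives $X\cdot N\le\ip{W}{N}=|W\wedge N|$, so I only need the covector-level inequality $|W\wedge N|\le -W\cdot T$. Extend $(T,N)$ to an orthonormal basis and decompose $W=-(W\cdot T)T+(W\cdot N)N+W''$ with $W''\perp T,N$; a short calculation gives $|W\wedge N|^2=(W\cdot T)^2-(W'')^2$, while the condition $W^2\le 0$ enforces $(W\cdot T)^2\ge(W\cdot N)^2+(W'')^2\ge(W'')^2$. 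Hence $|W\wedge N|\le |W\cdot T|=-W\cdot T$ (the last equality using $W\in\fdc$), completing the argument.

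The main obstacle is the spacelike case of the forward direction, since this is where the non-trivial $|W\wedge Y|$ branch of the wedgedot actually enters. The saturation $-W\cdot T=|W\wedge N|$ is not automatic but requires aligning $T$ with the timelike projection $W_\perp$ of $W$ perpendicular to $Y$; any other choice of $T$ in $\Lset$ gives a strictly larger $-W\cdot T$ and so fails to transfer sharply from lemma \ref{lem:TNL} to the wedge inequality. The reverse direction is the mirror image but easier, since $|W\wedge N|\le -W\cdot T$ is precisely the loose inequality that the forward step had to make tight.
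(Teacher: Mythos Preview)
Your argument is correct and complete up to one small boundary case (noted below), but it takes a different route from the paper. The paper does \emph{not} invoke lemma \ref{lem:TNL} at all. For the reverse direction, the paper simply tests the hypothesis with causal $Y$: for $Y\in\fdc$, $(W+X)\cdot Y=-\ip{W}{Y}+X\cdot Y\le0$, hence $W+X\in\fdc$, and similarly for $W-X$. This is a two-line argument, considerably lighter than your reduction to $\Lset$ plus the auxiliary inequality $|W\wedge N|\le -W\cdot T$. For the forward direction with spacelike $Y$, the paper decomposes $X$ and $Y$ relative to $W$ (assumed timelike, then continuity), using $|X_\perp|+|X_\parallel|\le|W|$ together with Cauchy--Schwarz in the spacelike hyperplane orthogonal to $W$; you instead decompose $W$ relative to $Y$, constructing the optimal $T$ and feeding it into lemma \ref{lem:TNL}. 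The two decompositions are in a sense dual, and of comparable length; what your route buys is a uniform reduction of both directions to the single characterization in lemma \ref{lem:TNL}, at the cost of the extra covector algebra.

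One small gap: in the forward spacelike case you write ``provided $W_\perp\ne0$'', but what you actually need is $|W_\perp|\ne0$. When $W$ is null and $W\cdot N=0$, one has $W_\perp=W\ne0$ yet $|W_\perp|=0$, so $T$ is undefined. This is easily patched by a continuity argument in $W$ (perturb $W$ into $\fdt$, exactly as the paper does in its own proof), or by noting directly that $W$ null forces $X=\lambda W$ with $|\lambda|\le1$, whence $X\cdot Y=\lambda W\cdot Y$ and $|W\wedge Y|=|W\cdot Y|$ give the inequality immediately.
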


\begin{proof}
If \eqref{Ycond2} holds then, for any $Y\in \fdc$, $(W+X)\cdot Y= -\ip{W}{Y}+X\cdot Y\le0$, so $W+X\in \fdc$. Similarly, for any $Y$ such that $-Y\in \fdc$, $(W-X)\cdot Y=\ip{W}{Y}-X\cdot Y\ge 0$, so $W-X\in \fdc$.

Conversely, if $(W,X)\in \Kset$, then for any $Y$ such that $Y\in \fdc$ or $- Y\in \fdc$, $X\cdot Y\le\ip WY$. For spacelike $Y$, assume first that $W\in \fdt$. Decomposing $X$ and $Y$ into components perpendicular and parallel to $W$, we have $|X_\perp|+|X_\parallel|\le|W|$ and $|Y_\parallel|<|Y_\perp|$, so
\begin{equation}
X\cdot Y= X_\perp\cdot Y_\perp+X_\parallel\cdot Y_\parallel\le |X_\perp||Y_\perp|+|X_\parallel||Y_\parallel|
<
\left(|X_\perp|+|X_\parallel|\right) \, |Y_\perp|\le |W||Y_\perp|=\ip W Y\,,
\end{equation}
where in the first inequality we used the Cauchy-Schwarz inequality in the (spacelike) hyperplane orthogonal to $W$. By continuity the inequality holds for $W\in \fdc$.
\end{proof}

The preceding two lemmas give bounds on the $\ip{\ }{\ }$ product, which we now show are tight.
\begin{lemma}\label{lem:supWU}
Given $X,Y\in T^*$,
\be\label{saturYcond1}
\sup_{\substack{W:\\(W,X)\in \Kset}}Y\cdot W =
\begin{cases}
+\infty\,,\quad&Y\not\in \fdc\\
-\ip{Y}{X}\,,\quad&Y\in \fdc
\end{cases}\,.
\ee
\end{lemma}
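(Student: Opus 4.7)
The plan is to handle the two branches of the piecewise statement in turn, using the feasibility set $\Kset_X:=\{W:(W,X)\in\Kset\}$ and the self-polarity $\fdc^\circ=\fdc$ under the metric pairing. The easy case is $Y\notin\fdc$: self-polarity yields a $W_0\in\fdc$ with $Y\cdot W_0>0$; $\Kset_X$ is nonempty (any sufficiently large element of $\fdt$ lies in it) and closed under translations by $\fdc$, since $(W+W_0)\pm X=(W\pm X)+W_0\in\fdc+\fdc\subseteq\fdc$. Picking any $W_1\in\Kset_X$, the ray $W_1+\lambda W_0$ is feasible for all $\lambda\ge0$ while $Y\cdot(W_1+\lambda W_0)\to+\infty$.

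For $Y\in\fdc$, the upper bound $\sup\le-\ip{Y}{X}$ is just lemma \ref{lem:Yineq1} rearranged, so the work is to exhibit feasible $W$'s saturating it. I would split on the causal type of $X$. When $\pm X\in\fdc$, the choice $W=\pm X$ (sign picked so that $\pm X\in\fdc$) is feasible: one of $W\pm X$ equals $\pm 2X\in\fdc$ and the other vanishes. Since two elements of $\fdc$ pair nonpositively and $\ip{Y}{X}=|Y\cdot X|$ for causal $X$, this gives $-Y\cdot W=|Y\cdot X|=\ip{Y}{X}$. When $X$ is spacelike and $Y\in\fdt$, I would work in the $2$-plane $\Pi:=\mathrm{span}(Y,X)$ (Lorentzian because it contains $Y\in\fdt$) and pick $W\in\Pi$ to be the unique future-directed timelike covector with $W\cdot X=0$ and $|W|=|X|$, i.e.\ the Lorentzian dual of $X$ inside $\Pi$. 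Then $(W\pm X)^2=W^2+2W\cdot X+X^2=0$, and a check in the rest frame of $W$ inside $\Pi$ shows $W\pm X$ are future-directed null and hence in $\fdc$, so $(W,X)\in\Kset$. The definition \eqref{WXperp} of $|W\wedge X|$ then yields $-Y\cdot W=|Y||X_\perp|=|Y\wedge X|=\ip{Y}{X}$, with $X_\perp$ the projection of $X$ orthogonal to $Y$.

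The remaining step, and the one I expect to require the most care, is null $Y\in\partial\fdc$ paired with spacelike $X$, where the supremum is typically not attained. My approach would be a limiting argument: perturb $Y$ into $\fdt$ by setting $Y_\epsilon:=Y+\epsilon Z$ for some fixed $Z\in\fdt$, apply the construction above to obtain $W_\epsilon\in\Kset_X$ with $-Y_\epsilon\cdot W_\epsilon=\ip{Y_\epsilon}{X}$ and $|W_\epsilon|=|X|$ uniformly bounded in $\epsilon$, and send $\epsilon\to0$. Continuity of the wedgedot (lemma \ref{lem:wedgedotprops}) gives $\ip{Y_\epsilon}{X}\to\ip{Y}{X}$, and the bound $|Y\cdot W_\epsilon-Y_\epsilon\cdot W_\epsilon|=\epsilon|Z\cdot W_\epsilon|\to0$ lets the supremum be approached.
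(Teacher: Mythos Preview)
Your approach is essentially the paper's. For $Y\notin\fdc$ you push along a direction in $\fdc$ on which $Y$ is positive, as the paper does; for $\pm X\in\fdc$ you both take $W=\pm X$; and for spacelike $X$ with $Y\in\fdt$, your ``Lorentzian dual of $X$ inside $\Pi=\mathrm{span}(Y,X)$'' is exactly the paper's $W=|X|\,\hat Y_\perp$, since $Y_\perp$ spans the unique timelike line in $\Pi$ orthogonal to $X$. The paper does not separately treat null $Y$ --- it simply writes $W=|X|\,\hat Y_\perp$ for all spacelike $X$ --- so your limiting argument is actually more careful than the paper on that point.

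However, your justification for $\epsilon|Z\cdot W_\epsilon|\to0$ has a real gap. A uniform bound on the Lorentzian norm $|W_\epsilon|=|X|$ does \emph{not} control $|Z\cdot W_\epsilon|$: the mass shell $\{|W|=|X|\}\cap\fdc$ is noncompact, and in the degenerate case $Y$ null with $Y\cdot X=0$ your $W_\epsilon$ in fact runs off to infinity along the $Y$ direction, so $|Z\cdot W_\epsilon|\sim\epsilon^{-1/2}$. The conclusion is still true, but the correct argument bypasses $Z\cdot W_\epsilon$: since $Y,Z,W_\epsilon\in\fdc$, both $-Y\cdot W_\epsilon$ and $-\epsilon Z\cdot W_\epsilon$ are nonnegative and sum to $-Y_\epsilon\cdot W_\epsilon=\ip{Y_\epsilon}{X}$. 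Hence $-Y\cdot W_\epsilon\le\ip{Y_\epsilon}{X}\to\ip{Y}{X}$, while lemma~\ref{lem:Yineq1} gives $-Y\cdot W_\epsilon\ge\ip{Y}{X}$; the squeeze then yields $Y\cdot W_\epsilon\to-\ip{Y}{X}$.
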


\begin{proof}
If $Y\not\in \fdc$, then there exists a covector $Z\in \fdt$ such that $Y\cdot Z>0$. For sufficiently large $\alpha$, $(\alpha Z,X)\in \Kset$. So $W\cdot Y$ can be made arbitrarily large for $(W,X)\in \Kset$.

If $Y\in \fdc$, then by lemma \ref{lem:Yineq1}, for any $W$ such that $(W,X)\in \Kset$, $Y\cdot W\le -\ip{Y}{X}$. 
So the supremum on LHS of \eqref{saturYcond1} likewise satisfies this inequality.  To show that it can be actually saturated, 
the following constructs a $W$ such that $(W,X)\in \Kset$ and $W\cdot Y=-\ip{Y}{X}$:
\be
W = \begin{cases}
X\,\quad&X\in \fdc \\
-X\,\quad&-X\in \fdc \\
|X|\,\hat Y_\perp\,\quad&X^2>0
\end{cases}\,,
\ee
where $\hat Y_\perp:=Y_\perp/|Y_\perp|$ is the unit covector in the direction of the projection of $Y$ orthogonal to $X$.
\end{proof}

\begin{lemma}\label{lem:supVX}
Given $W\in \fdc$, $Y\in T^*$,
\be\label{VXsup}
\sup_{\substack{X:\\ (W,X)\in \Kset}}X\cdot Y\textbf{} = \ip{W}{Y}\,.
\ee
\end{lemma}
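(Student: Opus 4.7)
The plan is to prove the two inequalities separately. The upper bound $X\cdot Y \le \ip WY$ for any $X$ with $(W,X)\in\Kset$ is already immediate from lemma \ref{lem:Yineq2}, so the only real work is to exhibit a covector $X$ saturating the supremum; the construction will parallel the one used in the proof of lemma \ref{lem:supWU}, but with the roles of the two arguments of $\ip{\ }{\ }$ swapped.

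The natural case split is by the causal character of $Y$ (and, when $Y$ is spacelike, of $W$). When $Y\in\fdc$ or $-Y\in\fdc$, the definition \eqref{ipdef} together with the null-case identity $|W\wedge Y|=|W\cdot Y|$ gives $\ip WY=|W\cdot Y|$, and the choice $X=\pm W$ (with the sign matching that of $W\cdot Y$) lies in $\Kset$ since $W\pm X$ is either $0$ or $2W\in\fdc$, and achieves $X\cdot Y=|W\cdot Y|$. When $W$ is null, lemma \ref{lem:wedgeprops} again forces $\ip WY=|W\cdot Y|$, and the same choice $X=\pm W$ works. The remaining, and main, case is $Y$ spacelike with $W\in\fdt$, where $\ip WY=|W||Y_\perp|$ with $Y_\perp$ the projection of $Y$ orthogonal to $W$. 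Here I would take
\[
X := |W|\,\hat Y_\perp\,,\qquad \hat Y_\perp := Y_\perp/|Y_\perp|\,,
\]
(with $X:=0$ if $Y_\perp=0$), which is spacelike, orthogonal to $W$, and has $|X|=|W|$; then $(W\pm X)^2=W^2+X^2=0$ and the time component of $W\pm X$ equals that of $W$, hence is positive, so $W\pm X\in\fdc$ and $(W,X)\in\Kset$. Computing gives $X\cdot Y=|W|\,\hat Y_\perp\cdot Y=|W||Y_\perp|=\ip WY$, as required.

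I expect no serious obstacle here; the only subtlety is checking that the various degenerate cases (null $W$, $Y$ parallel to $W$, $Y_\perp=0$) are covered and that the constructed $X$ in each case actually satisfies the two inequalities $W\pm X\in\fdc$. In every case the construction reduces to a direct algebraic verification using the decomposition of $Y$ into components parallel and perpendicular to $W$. The statement for $Y\in\fdc$ or $-Y\in\fdc$ then follows, and by continuity (or by direct inspection of the constructed saturator as $W$ approaches the light cone) the identity extends to all $W\in\fdc$.
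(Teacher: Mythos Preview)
Your proposal is correct and follows essentially the same approach as the paper: the upper bound via lemma~\ref{lem:Yineq2}, then saturation by $X=\mp W$ when $\pm Y\in\fdc$ and by $X=|W|\,\hat Y_\perp$ when $Y$ is spacelike with $W\in\fdt$, with continuity (or your direct check) covering null $W$. Your explicit verification that $(W\pm X)^2=0$ in the spacelike-$Y$ case is a nice addition, and your separate handling of null $W$ is slightly more explicit than the paper's continuity argument, but otherwise the proofs coincide.
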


\begin{proof}
From lemma \ref{lem:Yineq2}, we have that $X\cdot Y\le\ip{W}{Y}$. 
For $\pm Y\in\fdc$, the bound is achieved by $X=\mp W$. For $Y$ spacelike and $W\in \fdt$, 
the bound is achieved by
\be
X=|W|\,\hat Y_\perp\,,
\ee
where $\hat Y_\perp$ is the unit covector in the direction of the projection of $Y$ orthogonal to $W$. By continuity, \eqref{VXsup} then holds for $W\in \fdc$.
\end{proof}

\subsubsection{Dualizations}
\label{sec:dualizations}

In this subsection we carry out five Lagrange dualizations:
\begin{itemize}
\item on $\psi$ in the first formula for $\Sc$ in \eqref{Sccdef}, with $\phi$ fixed, to obtain the V-flow program \eqref{maxVflow};
\item on $\phi$ in the second formula for $\Sc$ in \eqref{Sccdef}, with fixed $\psi$, to obtain the U-flow program \eqref{Uflowprog};
\item on $(V,\phi)$ in the V-flow program \eqref{maxVflow}, to obtain the U-flow program \eqref{Uflowprog};
\item on $\phi$ in the V-flow program \eqref{maxVflow}, with $V$ fixed, to obtain a maximin formula involving $U$ and $V$;
\item and finally on $\psi$ in the U-flow program \eqref{Uflowprog}, with $U$ fixed, to obtain a minimax formula involving $U$ and $V$.
\end{itemize}
We include the last two dualities for completeness. As we will explain, the resulting maximin and minimax formulas do not tell us anything new.

For each duality, we will check whether Slater's condition holds. Slater's condition, a sufficient condition for strong duality, is the existence of a (not necessarily optimal) point in the relative interior of the feasible set for the primal problem (the interior with respect to the lowest-dimensional affine space containing the feasible set). For a review and examples of Lagrange dualization as applied to this type of problem see \cite{Headrick:2017ucz}.

For simplicity, when writing the programs to be dualized, we will drop the condition $-1/2\le\psi\le1/2$ in the definition of $\tsset_{\rm c}$ and the conditions $-1/2\le\phi\le1/2$, $d\phi\in \fdc$ in the definition of $\bksliceset_{\rm c}$. As we will now show, it makes no difference whether we include those conditions or not. In the $\psi$ case, given a function $\psi$ that satisfies the boundary conditions $\psi|_{D(A)}=-1/2$, $\psi|_{D(B)}=1/2$ but falls outside the range $[-1/2,1/2]$ somewhere in $\bar\M$, we can define a new function $\psi'$ that is in the range $[-1/2,1/2]$, and therefore is in $\tsset_{\rm c}$, by
\begin{equation}
\psi'=\begin{cases}
-1/2\,,\quad&\psi<-1/2 \\
\psi\,,\quad&-1/2\le\psi\le1/2\\
1/2\,,\quad&\psi>1/2
\end{cases}\,.
\end{equation}
Furthermore, if $U\pm d\psi\in \fdc$, then $U\pm d\psi'\in \fdc$. So the set of feasible $U$s is the same, whether or not the condition $-1/2\le\psi\le1/2$ is imposed. For $\phi$, the conditions $d\phi\in \fdc$, $-1/2\le\phi\le1/2$, are already implied by the conditions $d\phi\pm V\in \fdc$, $\phi|_{\I^\pm}=\pm1/2$, so there is no need to include them explicitly.

\paragraph{From $\psi$ to $V$, with fixed $\phi$:}
\label{sec:phiV}
The starting program is 
\be
\label{Vflow2}
    \text{minimize}\int_{\M}\sqrt g\ \ip{d\phi}{d\psi}\text{ over }\psi\text{ such that }
    \psi|_{D(A)}=-\frac12\,,\quad
    \psi|_{D(B)}=\frac12
\ee
(where the function $\phi$ is regarded as fixed and obeys $d\phi\in \fdc$). We rewrite this by introducing a 1-form $X$:
\be
\label{Vflow3}
    \text{minimize}\int_{\M}\sqrt g\ \ip{d\phi}{X}\text{ over }\psi,X\text{ such that }
    d\psi=X\,,\quad
    \psi|_{D(A)}=-\frac12\,,\quad
    \psi|_{D(B)}=\frac12\,.
\ee
Slater's condition merely amounts in this case to the existence of a feasible configuration $\psi,X$.

We will impose the constraint $d\psi=X$ using a Lagrange multiplier 1-form $V$. The boundary conditions will be imposed implicitly. The Lagrangian is
\be
\begin{split}
L[\psi,X,V]&=
\int_\M\sqrt g\ \left[\ip{d\phi}{X}-V\cdot(X-d\psi)\right] \\
&=\int_\M\sqrt g\ \left[\ip{d\phi}{X}-V\cdot X\right]-\int_\M\psi\,d{*V}-\int_{\I}\psi\,(*V)+\frac12\left(\int_{D(A)}*V-\int_{D(B)}*V\right),
\end{split}
\ee
where, in the second line, after integrating by parts and using Stokes' theorem in the form \eqref{Stokes}, we imposed the boundary conditions on $\psi$.

 We now minimize $L$ with respect to $\psi$ and $X$, with the locus of 1-forms $V$ such that $L$ is bounded below defining the dual constraints. The two terms in the bulk integrand involving $X$ are ${\ip{d\phi}{X}-V\cdot X}$. This is homogeneous in $X$, so if it is negative for any $X$ then it is unbounded below, and otherwise its minimum is 0. By lemma \ref{lem:Yineq2}, this quantity is non-negative for all $X$ if and only if $d\phi\pm V\in \fdc$, which is therefore a dual constraint. The bulk term involving $\psi$, $-\psi\,d{*V}$, is bounded below if and only if $d{*V}=0$, i.e.\ $V$ is divergenceless. Similarly, the boundary term involving $\psi$, $\int_\I\psi\,(*V)$, is bounded below if and only if $*V|_\I=0$. We are left only with the two terms in the large parentheses, which, given the last two conditions on $V$, are equal. We are finally left with the following dual program:
\begin{equation}
    \text{maximize}\int_{D(A)}*V\text{ over }V\text{ such that }
    d\phi\pm V\in \fdc\,,\quad
    d{*V}=0\,,\quad
    *V|_{\I}=0\,.
   \end{equation}

\paragraph{From $\phi$ to $U$, with fixed $\psi$:}
\label{sec:psiW}

The starting program is 
\be
\label{Vflow4}
    \text{maximize}\int_{\M}\sqrt g\,\ip{d\phi}{d\psi}\text{ over }\phi\text{ such that }
d\phi\in \fdc\,,\quad    \phi|_{\I^\pm}=\pm\frac12
\ee
(where the function $\psi$ is regarded as fixed). We rewrite this by introducing a 1-form $W$:
\be
\label{Vflow5}
    \text{maximize}\int_{\M}\sqrt g\ \ip{W}{d\psi}\text{ over }\phi,W\text{ such that }
    d\phi=W\,,\quad W\in \fdc\,,\quad
        \phi|_{\I^\pm}=\pm\frac12\,.
\ee
Slater's condition amounts here to the existence of a function $\phi_0$ obeying $d\phi_0\in \fdt$ everywhere and $\phi_0|_{\I^\pm}=\pm1/2$.\footnote{\, \label{fn:clockfnconstr}
Such a function $\phi_0$ can be constructed as follows. Let $\Omega$ be a continuous function on $\bar\M$ that is positive on $\M$ and such that the metric $\tilde g:=\Omega^2 g$ on $\M$ has finite total spacetime volume. Define $\phi_\pm(x)$ as the spacetime volume of $J^\pm(x)$ with respect to $\tilde g$. Then $\phi_\pm\to0$ on $\I^\pm$ and $\mp d\phi_+\in \fdt$ on $\M$ (with respect to both $\tilde g$ and $g$). The function
\be
\phi_0:=\frac12\tanh\left(\frac1{\phi_+}-\frac1{\phi_-}\right)
\ee
then has the claimed properties.
}

We will impose the constraint $d\phi=W$ using a Lagrange multiplier 1-form $U$. The boundary conditions on $\phi$ and the constraint $W\in \fdc$ will be imposed implicitly. The Lagrangian is
\be\label{LphiWU}
\begin{split}
L[\phi,W,U]&=
\int_\M\sqrt g \ \left[\ip{W}{d\psi}+(W-d\phi)\cdot U\right] \\
&=\int_\M\sqrt g\ \left[\ip{W}{d\psi}+ W\cdot U\right]+\int_\M\phi\,d{*U}+\int_{\I^0\cup\N}\phi\,(*U)+\frac12\left(\int_{\I^+}*U-\int_{\I^-}*U\right),
\end{split}
\ee
where in the second line, after integrating by parts and using Stokes' theorem in the form \eqref{Stokes}, we imposed the boundary conditions on $\phi$.

We now maximize $L$ with respect to $\phi$ and $W$. The bulk integrand $\ip{W}{d\psi}+W\cdot U$, by lemma \ref{lem:Yineq1} along with homogeneity in $W$, is bounded above (as a function of $W$ on $\fdc$) if and only if $U\pm d\psi\in \fdc$, in which case the maximum is 0. The bulk integrand $\phi\,d{*U}$ is bounded above as a function of $\phi$ if and only if $d{*U}=0$, i.e.\ $U$ is divergenceless.  The boundary integrand $\phi\,(*U)$ on $\I^0\cup\N$ is bounded above if and only if $*U|_{\I^0\cup\N}=0$. The last two conditions imply that the two terms in the big parentheses in \eqref{LphiWU} are equal. We are finally left with the following program:
\begin{equation}
    \text{minimize}\int_{\I^+}*U\text{ over }U\text{ such that }
    U\pm d\psi\in \fdc\,,\qquad
    d{*U}=0\,,\qquad
    *U|_{\I^0\cup\N}=0\,.
   \end{equation}

\paragraph{From $(V,\phi)$ to $(U,\psi)$:}
\label{sec:VUduality}

Here we start with the V-flow program (with local norm bound in terms of the function $\phi$) and dualize it to obtain the U-flow program (with local norm bound in terms of the function $\psi$).

The V-flow program is:
\begin{multline}\label{Vflow6}
    \text{maximize}\int_{D(A)}*V\text{ over }(V,\phi)\text{ such that }\\
    d\phi\pm V\in \fdc\,,\qquad
    d{*V}=0\,,\qquad
    *V|_{\I}=0\,,\qquad
    \phi|_{\I^\pm}=\pm\frac{1}{2}\,.
   \end{multline}
We rewrite this by introducing 1-forms $W^\pm:=(d\phi\pm V)/2$ and eliminating $V$:
\begin{align}
    \text{maximize}\int_{D(A)}*(W^+-W^-)\text{ over }(W^+,W^-,\phi)\text{ such that }\\
    W^++W^-&=d\phi\,,\label{Wdphi}\\
    d*(W^+-W^-)&=0\,,\label{divfree}\\
    W^\pm\in \fdc\,,\qquad
    *(W^+-W^-)|_{\I}=0\,,\qquad
    \phi|_{\I^\pm}&=\pm\frac{1}{2}\,.\label{implicit}
   \end{align}
To show that Slater's condition holds, we set $W^+=W^-=d\phi/2$ and $\phi=\phi_0$, where $\phi_0$ is the function on $\bar\M$ obeying $d\phi_0\in \fdt$ everywhere in $\M$ and $\phi_0|_{\I^\pm}=\pm1/2$ (such as the one indicated in footnote \ref{fn:clockfnconstr}).

We will treat \eqref{Wdphi}, \eqref{divfree} as explicit constraints, introducing Lagrange multipliers $U$ (a 1-form) and $\psi$ (a scalar) respectively to enforce them. The constraints \eqref{implicit} will be treated implicitly. The Lagrangian functional is
\be
\begin{split}
L[W^+,W^-,\phi,U,\psi]&=    \int_{D(A)}*(W^+-W^-)+\int_\M\left[\sqrt{g}\,U\cdot(W^++W^--d\phi)-\psi\, d*( W^+-W^-)\right] \\
&= \int_{D(A)}\left[(1+\psi)*\!(W^+-W^-)+(*U)\, \phi\right]
        +\int_{D(B)}\left[(*U)\, \phi+\psi*\!(W^+-W^-)\right]\\
&\qquad\qquad        {}+\frac{1}{2}\int_{\I^+}*U-\frac{1}{2}\int_{\I^-}*U
        +\int_{\I^0}(*U)\, \phi        \\
&\qquad\qquad                {}+\int_\M\left[\sqrt{g}\,\left(
        (U+d\psi)\cdot W^++(U-d\psi)\cdot W^-\right)+\phi\, (d*U)\right],
\end{split}
\ee
where in the second line we integrated by parts and used Stokes' theorem in the form \eqref{Stokes} and the boundary conditions on $W^\pm$ and $\phi$ in \eqref{implicit}.

Following the usual procedure, our task now is to (1) find the constraints on $(U,\psi)$ that are necessary and sufficient for $L$ to be bounded above as a functional of $(W^+,W^-,\phi)$; and (2) assuming those constraints are satisfied, maximize $L$ with respect to $W^\pm$ and $\phi$. The results of (1) and (2) are the constraints and objective respectively of the dual program. The integrand on $D(A)$ is bounded above (as a function of $W^\pm$ and $\phi$) if and only if $\psi=-1$ and $*U|_{D(A)}=0$, in which case it vanishes. Similarly, the integrand on $D(B)$ is bounded above if and only if $\psi=0$ and $*U|_{D(B)}=0$, in which case it vanishes. On $\I^\pm$, there is no constraint on $\psi$ or $U$. On $\I^0$ the integrand is bounded above if and only if $*U|_{\I^0}=0$, in which case it vanishes. On $\M$, the first term is bounded above (given the implicit constraint $W^\pm\in \fdc$) if and only if $U\pm d\psi\in \fdc$, in which case the maximum is at $W^\pm=0$ and vanishes. The second term is bounded above if and only if $d*U=0$, in which case it vanishes. All in all we are left with the following program:
\begin{multline}
    \text{minimize }\left( \frac{1}{2}\int_{\I^+}*U-\frac{1}{2}\int_{\I^-}*U\right)
    \text{ over }(U,\psi)\text{ such that} \\
    U\pm d\psi\in \fdc\,,\qquad d*U=0\,,\qquad\psi|_{D(A)}=-1\,,\qquad\psi|_{D(B)}=0\,,\qquad *U|_{\N\cup\I^0}=0\,.
\end{multline}
Given the no-flux boundary condition for $*U$ on $\N \cup \I^0$, and the fact that $\I^+$ is homologous to $\I^-$ with opposite orientation, relative to $\N \cup \I^0$, the two terms in the objective are actually equal, so we might as well combine them into one term. (In fact, we could have gotten the same result by shifting the boundary condition for $\phi$ by $1/2$ in the primal program \eqref{Vflow6}.) Finally, we can restore the symmetry between $D(A)$ and $D(B)$ by shifting $\psi$ by $-1/2$. 
After these cosmetic adjustments, we end up with the U-flow program:
\begin{multline}\label{Uflow2}
    \text{minimize }\int_{\I^+}*U
    \text{ over }(U,\psi)\text{ such that} \\
    U\pm d\psi\in \fdc\,,\qquad d*U=0\,,\qquad\psi|_{D(A)}=-\frac{1}{2}\,,\qquad\psi|_{D(B)}=\frac{1}{2}\,,\qquad *U|_{\N\cup\I^0}=0\,.
\end{multline}

\paragraph{From $\phi$ to $U$, with fixed $V$:} The primal program is
\be\label{phiVprimal}
    \text{maximize}\int_{D(A)}*V\text{ over }\phi\text{ such that }d\phi\pm V\in \fdc\,,\quad \phi|_{\I^\pm}=\pm\frac12\,.
\ee
Notice that the objective here does not depend on the variable. In such a case the program amounts to a feasibility test: if a feasible point exists, the program returns the objective, and if not it returns $-\infty$ (the supremum of the empty set). In the former case, Slater's condition may or may not be satisfied, and, in the latter case, it is definitely not satisfied. However, we will see explicitly that the dual program has the same optimal value.

We introduce a 1-form $W$:
\be
    \text{maximize}\int_{D(A)}*V\text{ over }\phi,W\text{ such that }d\phi=W\,,\quad W\pm V\in \fdc\,,\quad \phi|_{\I^\pm}=\pm\frac12\,.
\ee
We will impose the constraint $d\phi=W$ using a Lagrange multiplier 1-form $U$, and impose the other constraints implicitly. The Lagrangian is
\be
\begin{split}
L[\phi,W,U]&=\int_{D(A)}*V+\int_\M\sqrt g\,(W-d\phi)\cdot U \\
&=\int_{D(A)}*V+\int_\M\sqrt g\,W\cdot U+\int_\M\phi\, d{*U}+\int_{\I^0\cup\N}\phi(*U)+\frac12\left(\int_{\I^+}*U-\int_{\I^-}*U\right).
\end{split}
\ee
We now maximize the Lagrangian over $\phi$ and $W$. Requiring the Lagrangian to be bounded above with respect to $\phi$ implies $d{*U}=0$, $*U|_{\I^0\cup\N}=0$, which makes the two terms in parentheses equal. Meanwhile, $W\cdot U$ is bounded above with respect to $W$ (such that $W\pm V\in \fdc$) if and only if $U\in \fdc$, in which case the maximum is $-\ip{U}{V}$ (see lemma \ref{lem:supWU}). All in all, we are left with the following dual program:
\be\label{dualonphi}
\text{minimize}
\int_{D(A)}*V + \int_{\I^+}*U-\int_\M\sqrt g\,\ip{U}{V}\text{ over $U$ such that }U\in \fdc\,,\quad d{*U}=0\,,\quad *U|_{\I^0\cup\N}=0\,.
\ee
We now wish to show that strong duality holds, which in this case means that if the primal \eqref{phiVprimal} is feasible then the dual objective equals the primal objective, and if not then the dual objective equals $-\infty$. The first case is established by weak duality (which says that the dual objective is bounded below by the primal objective) and the fact that the last two terms in the dual objective can be set to zero by setting $U=0$. It remains to be shown that the dual objective attains $-\infty$ when the primal is infeasible. In theorem \ref{thm:Vequivalence}, we show that infeasibility implies the existence of an inextendible causal curve $\qcv$ such that $\int_\qcv dt\,\ip{(-\dot x)}{V}>1$. If $U$ is a delta-function flux tube on $\qcv$ with flux $\alpha$, then
\be
\int_\M\sqrt g\,\ip{U}{V} = \alpha \int_\qcv dt\,\ip{(-\dot x)}{V}>\alpha =  \int_{\I^+}*U\,.
\ee
Hence the dual objective can be made arbitrarily large and negative.

\paragraph{From $\psi$ to $V$, with fixed $U$:} 
The primal here is
\be
    \text{minimize}\int_{\I^+}*U\text{ over }\psi\text{ such that }U\pm d\psi\in \fdc\,,\quad \psi|_{D(A)}=-\frac12\,,\quad \psi|_{D(B)}=\frac12\,.
\ee
The story here is similar to the previous one; in particular, again the primal objective is independent of the variable, so the program amounts to a feasibility test. The dualization proceeds very similarly to the previous one, this time using lemma \ref{lem:supVX}; we leave the details as an exercise to the reader. The dual program is
\be
\text{maximize}
 \int_{\I^+}*U+\int_{D(A)}*V-\int_\M\sqrt g\,\ip{U}{V}\text{ over $V$ such that }
 d{*V}=0\,,\quad *V|_{\I}=0\,.
\ee
If the primal is infeasible, then by theorem \ref{thm:Uequivalence} and letting $V$ be a flux tube along a curve $\pcv\in \Pset$ on which $\int_\pcv ds\,\ip{U}{\dot x}<1$, the dual objective is unbounded above, establishing strong duality.

\paragraph{Discussion:} From the last two dualities we find the following minimax pair of formulas for $\Sc$:
\begin{align}
\label{maximin3}
\Sc&=\sup_{V}\infp_{U} \left(\int_{\I^+}*U+\int_{D(A)}*V-\int_\M\sqrt g\,\ip{U}{V} \right)\\
\label{minimax3}
&=\infp_{U}\sup_{V} \left(\int_{\I^+}*U+\int_{D(A)}*V-\int_\M\sqrt g\,\ip{U}{V}\right) \,,
\end{align}
where in both formulas $U$, $V$ obey
\be
U\in \fdc\,,\quad d{*U}=0\,,\quad *U|_{\I^0\cup\N}=0\,,\qquad
 d{*V}=0\,,\quad *V|_{\I}=0\,.
 \ee
Note that, since the first two terms in the objective functional are linear and $\ip{U}{V}$ is concave in $U$ and convex in $V$, the objective functional is convex in $U$ and concave in $V$, as one would expect for a minimax pair with equality. In \eqref{maximin3}, the inf on $U$ simply forces $V$ to be feasible (otherwise the objective is unbounded below, as explained below \eqref{dualonphi}), in which case the first and third terms cancel and the objective reduces to the second term, returning us to the max V-flow formula \eqref{maxVflow}. Similarly, \eqref{minimax3} returns us to the min U-flow formula \eqref{Uflowprog}. Thus, these two formulas do not teach us anything new.

\subsubsection{Equivalence of norm bounds}
\label{sec:flowdefs}

In this subsection we prove the following theorems:

\begin{theorem}\label{thm:Vequivalence}
Let $V$ be a 1-form on $\bar\M$. The following conditions are equivalent:
\begin{equation}\label{Vphicond}
\exists\ \phi\in\bksliceset_{\rm c}
\text{ s.t. }  
    d\phi\pm V\in \fdc
\end{equation}
\begin{equation}\label{Vbound}
\forall\,\qcv\in \Qset\,,\, \int_\qcv dt\,\ip{(-\dot x)}{V}\le1\,,
\end{equation}
where $\Qset$ is the set of inextendible causal curves in $\M$, $t$ is any parameter along the curve $\qcv$, and $\dot x$ is the covector dual to the tangent vector $dx^\mu/dt$ of the curve $\qcv$.
\end{theorem}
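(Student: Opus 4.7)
The forward direction is a direct integration argument. Let $\phi\in\bksliceset_{\rm c}$ satisfy $d\phi\pm V\in\fdc$, and let $\qcv\in\Qset$ be an inextendible causal curve. Its tangent $\dot x^\mu$ being future-directed causal means the dual covector $\dot x$ is past-directed causal, so $-\dot x\in\fdc$. Applying lemma \ref{lem:Yineq1} to the pair $(d\phi,V)\in\Kset$ with test covector $Y=-\dot x$ gives
\be
\ip{(-\dot x)}{V}\le -(-\dot x)\cdot d\phi = \dot x\cdot d\phi = \frac{d\phi}{dt}
\ee
pointwise along $\qcv$. Integrating and using $\phi|_{\I^\pm}=\pm 1/2$ yields $\int_\qcv dt\,\ip{(-\dot x)}{V}\le 1$.

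For the converse, the strategy is to construct $\phi$ as the value function of a variational problem. Define on $\bar\M$
\be
\phi_0(x):=-\tfrac{1}{2}+\sup_\qcv \int_\qcv dt\,\ip{(-\dot x)}{V}\,,
\ee
where the supremum runs over past-inextendible causal curves in $\bar\M$ ending at $x$. At $x\in\I^-$ all such curves degenerate, so $\phi_0|_{\I^-}=-1/2$. Any $\qcv$ ending at $x$ can be concatenated with a future-inextendible causal continuation through $x$ to form a member of $\Qset$, whose wedgedot integral is bounded by $1$ via the hypothesis; hence $\phi_0\le 1/2$ everywhere. A standard dynamic-programming argument (pasting optimal sub-curves) shows that for any future-directed causal segment from $y$ to $x$,
\be
\phi_0(x)-\phi_0(y)\ge\int_y^x dt\,\ip{(-\dot x)}{V}\,.
\ee

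The heart of the proof, and its main obstacle, is promoting this integrated inequality to the pointwise condition $d\phi_0\pm V\in\fdc$. At any point where $\phi_0$ is differentiable, shrinking an arbitrary future-directed causal segment emanating from the point yields $\dot x\cdot d\phi_0\ge \ip{(-\dot x)}{V}$ for every future-directed causal tangent $\dot x^\mu$, which by lemma \ref{lem:Yineq1} applied with $Y=-\dot x$ is precisely $d\phi_0\pm V\in\fdc$. In general $\phi_0$ is only Lipschitz with respect to an auxiliary Riemannian metric and need not be classically differentiable; moreover the effective ``Lagrangian'' $\dot x\mapsto\ip{(-\dot x)}{V}$ is itself only continuous and is concave (rather than convex) on $\fdc$, a signature-induced feature. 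The proper framework is therefore a generalized Hamilton-Jacobi theory for non-smooth Lagrangians, in which $\phi_0$ is shown to be a viscosity sub-solution of the HJ-type inequality encoded by $d\phi_0\pm V\in\fdc$; this non-smooth HJ step is the technical crux. Finally, should $\phi_0|_{\I^+}<1/2$, set $\phi:=\phi_0+\rho$ where $\rho$ is a scalar field on $\bar\M$ with $\rho|_{\I^-}=0$, $\rho|_{\I^+}=1/2-\phi_0|_{\I^+}\ge 0$, and $d\rho\in\fdc$; such a $\rho$ is constructible by a time-function argument analogous to footnote \ref{fn:clockfnconstr}. Since $\fdc$ is a convex cone, $d\phi\pm V=(d\phi_0\pm V)+d\rho\in\fdc$, completing the construction of a $\phi\in\bksliceset_{\rm c}$ witnessing \eqref{Vphicond}.
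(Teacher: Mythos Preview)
Your forward direction and the overall strategy for the converse (define a value function over causal curves from $\I^-$, then extract the pointwise condition via a Hamilton--Jacobi argument) match the paper. The paper makes the HJ step concrete via lemma~\ref{thm:HJconvex}: it extends the Lagrangian to $-\infty$ off the future light cone and analyzes the subgradients case by case (spacelike/timelike/null $V$, timelike/null $\dot x$) to verify that the canonical momentum $\pi$ always obeys $\pi\pm V\in\fdc$; your viscosity-solution sketch points in the right direction but would need similar filling-in.

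The genuine gap is your boundary-condition patch at $\I^+$. You propose adding $\rho$ with $d\rho\in\fdc$, $\rho|_{\I^-}=0$, and $\rho|_{\I^+}=\tfrac12-\phi_0|_{\I^+}$. But $d\rho\in\fdc$ forces $\rho$ to be non-decreasing along every future-directed causal curve, in particular along null generators of $\I^+$ when $\I^+$ has null portions (as it does in the holographic setting, where $\I^\pm$ are entanglement horizons). Meanwhile your own HJ argument gives $d\phi_0\in\fdc$, so $\tfrac12-\phi_0$ is non-\emph{increasing} along those same generators; since the theorem does not assume $*V|_\I=0$, $\phi_0$ is generically non-constant there and no such $\rho$ exists. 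The construction of footnote~\ref{fn:clockfnconstr} only produces constant boundary values and does not help here. The paper avoids this by defining a \emph{second} value function $\phi_+(y):=\sup\int_\qcv dt\,\ip{(-\dot x)}{V}$ over causal curves from $y$ to $\I^+$, establishing $-d\phi_+\pm V\in\fdc$ by the same HJ reasoning, and then setting $\phi=(\phi_--\phi_+)/\bigl(2(\phi_-+\phi_+)\bigr)$ where $\phi_-=\phi_0+\tfrac12$. This automatically equals $\pm\tfrac12$ on $\I^\pm$ (since $\phi_\mp\to0$ there), and $d\phi\pm V\in\fdc$ follows because $d\phi$ is a combination $\alpha_1\,d\phi_-+\alpha_2(-d\phi_+)$ with non-negative coefficients summing to $1/(\phi_-+\phi_+)\ge1$, using the hypothesis-implied bound $\phi_-+\phi_+\le1$.
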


\begin{theorem}\label{thm:Uequivalence}
Let $U$ be a 1-form on $\bar\M$. The following conditions are equivalent:
\begin{equation}\label{Upsicond}
 \exists\ \psi\in\tsset_{\rm c}\text{ s.t. }
U\pm    d\psi\in \fdc
\end{equation}
\begin{equation}\label{Ubound}
U\in\fdc\,,\qquad
\forall \,\pcv\in \Pset\,,\,
    \int_\pcv ds\,\ip{U}{\dot x}\ge1\,,
\end{equation}
where $\Pset$ is the set of curves in $\M$ starting in $D(A)$ and ending in $D(B)$, $s$ is any parameter along the curve $\pcv$, and $\dot x$ is the covector dual to the tangent vector $dx^\mu/ds$ of the curve $\pcv$ .
\end{theorem}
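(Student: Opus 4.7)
The easy direction, \eqref{Upsicond} $\Rightarrow$ \eqref{Ubound}, follows immediately from lemma \ref{lem:Yineq2}. The hypothesis $U\pm d\psi\in\fdc$ is equivalent to $(U,d\psi)\in\Kset$, so $U\in\fdc$ (the average of two elements of $\fdc$) and, by that lemma, $d\psi\cdot Y\le\ip{U}{Y}$ for every covector $Y$. Setting $Y=\dot x$ and integrating over any $\pcv\in\Pset$ yields
\begin{equation*}
\int_\pcv ds\,\ip{U}{\dot x}\ \ge\ \int_\pcv ds\,\dot x\cdot d\psi\ =\ \psi|_{D(B)}-\psi|_{D(A)}\ =\ 1 \,,
\end{equation*}
where we used the boundary values imposed by $\psi\in\tsset_{\rm c}$.

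For the reverse direction I would construct $\psi$ as a value function. Define
\begin{equation*}
\psi_0(x)\ :=\ \infp_{\pcv\text{ from }D(A)\text{ to }x}\int_\pcv ds\,\ip{U}{\dot x}\,,
\end{equation*}
the infimum taken over Lipschitz curves ending at $x$, and set $\psi(x):=\min\{\psi_0(x),1\}-1/2$. Since the integrand is non-negative (by $U\in\fdc$) and positively homogeneous in $\dot x$, $\psi_0$ is non-negative, vanishes on $D(A)$ (via constant curves), and by hypothesis \eqref{Ubound} satisfies $\psi_0\ge 1$ on $D(B)$. The capped field $\psi$ therefore lies in $[-1/2,1/2]$ with $\psi|_{D(A)}=-1/2$ and $\psi|_{D(B)}=+1/2$, so $\psi\in\tsset_{\rm c}$.

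The core step is to verify $U\pm d\psi\in\fdc$ pointwise, equivalently (by lemma \ref{lem:Yineq2}) that $d\psi\cdot Y\le\ip{U}{Y}$ for every covector $Y$. A Bellman-type computation does the job: at any $x$ with $\psi_0(x)<1$ and any infinitesimal displacement $Y$, concatenating a near-optimal path to $x$ with a short segment in the direction of $Y$ yields $\psi_0(x+Y)\le\psi_0(x)+\ip{U(x)}{Y}+o(|Y|)$, which rearranges to the desired bound. In the region $\psi_0>1$ the cap makes $d\psi=0$ and the condition reduces to $U\in\fdc$, which is given.

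The main obstacle is making this Hamilton-Jacobi argument rigorous, since the integrand $\ip{U}{\dot x}$ is \emph{not} smooth in $\dot x$: by \eqref{ipdef} it equals $|U\cdot\dot x|$ on causal $\dot x$ and $|U\wedge\dot x|$ on spacelike $\dot x$, with a kink along the light cone of $T^*_x$. This is the signature-dependence flagged in the introduction. However, the integrand is continuous, convex, and positively homogeneous in $\dot x$, which is enough to guarantee that $\psi_0$ is locally Lipschitz, hence differentiable almost everywhere and equipped with a well-defined (Clarke) subdifferential at every point. The Bellman inequality then holds in the appropriate one-sided (viscosity-subsolution) sense, so $d\psi_0\cdot Y\le\ip{U}{Y}$ wherever $d\psi_0$ exists, which together with the elementary behavior at the cap gives $U\pm d\psi\in\fdc$ in a suitable weak sense. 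This is the generalization of Hamilton-Jacobi theory to non-differentiable Lagrangians alluded to in the outline, and it parallels the construction that would be used to prove theorem \ref{thm:Vequivalence}, where $\phi_0$ is instead built as a supremum of the analogous action along inextendible causal curves.
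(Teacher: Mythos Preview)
Your proposal is correct and matches the paper's strategy: the easy direction via lemma~\ref{lem:Yineq2} is identical, and for the converse both you and the paper build $\psi$ as a value function (the infimum of $\int\ip{U}{\dot x}$ over curves from $D(A)$) and then invoke a Hamilton--Jacobi-type argument for the non-differentiable Lagrangian $\ip{U}{\cdot}$ to obtain $d\psi\cdot Y\le\ip{U}{Y}$, which by lemma~\ref{lem:Yineq2} again is equivalent to $U\pm d\psi\in\fdc$.

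The differences are cosmetic. The paper defines \emph{two} value functions $\psi_-$ (from $D(A)$) and $\psi_+$ (from $D(B)$), uses $\psi_-+\psi_+\ge1$ to show their sub-$1/2$ regions are disjoint, and glues them with a constant in between; you use a single value function capped at $1$. Both constructions yield a valid $\psi\in\tsset_{\rm c}$. Where you invoke Clarke subdifferentials and viscosity language abstractly, the paper works more concretely: it states the relevant non-smooth Hamilton--Jacobi lemma (lemma~\ref{thm:HJconvex}, that $dS_{\rm min}\in\partial_vL$) and then computes the subdifferential of $\ip{U}{\dot x}$ case by case ($\dot x$ future-timelike, past-timelike, spacelike, null), verifying $U\pm\pi\in\fdc$ explicitly in each case. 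Your Bellman inequality is the same content stated in one line; the paper's case analysis has the advantage of being self-contained and showing exactly where the kink on the light cone is harmless.
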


Notice that the theorems do not require the conditions \eqref{Vflowdef1}, \eqref{Uflowdef1} in the definitions of V- and U-flows respectively.

We will need the following lemma, which provides a generalization of the usual Hamilton-Jacobi formula from classical mechanics to a situation where the Lagrangian is a convex but not necessarily differentiable function of the velocity. The Hamilton-Jacobi formula states that the variation in the on-shell action under a change in the final position is given by the final value of the canonical momentum; here, the same formula applies, but the canonical momentum is a subgradient, rather than the gradient, of the Lagrangian with respect to the velocity.
\begin{lemma}\label{thm:HJconvex}
Let $M$ be a manifold (or manifold with boundary) and $L:TM\to\R\cup\{\infty\}$ a function on its tangent bundle; we write $L(x,v)$ where $x\in M$ and $v\in T_xM$. Suppose $L$ is convex in $v$ for fixed $x$; we write $\partial_vL(x,v)$ for its subdifferential with respect to $v$ (this is a convex subset of the cotangent space at $x$, $T^*_xM$). Fix real numbers $t_0<t_1$ and a set $I\subseteq M$ (the allowed initial positions). Define the following function on $M$ (the ``on-shell action''):
\be
S_{\rm min}(y):=\inf_{\substack{X\\ X(t_0)\in I\\X(t_1)=y}}\int_{t_0}^{t_1}dt\,L(X(t),\dot X(t))\,.
\ee
Then, on the domain where $S_{\rm min}$ is finite, it is continuous and almost everywhere differentiable, and at a differentiable point $y$,
\be
dS_{\rm min}(y)\in \partial_vL(y,\dot X(t_1))\,,
\ee
where $X$ is any minimizing trajectory.\footnote{\, If, on the domain where it is finite, $L$ is differentiable with respect to $x$ with gradient $F(x,v)$, then a minimizing trajectory $X$ will admit a solution $\pi$ to the generalized Euler-Lagrange equations
\be
\pi(t)\in \partial_vL(X(t),\dot X(t))\,,\qquad\dot\pi(t) = F(X(t),\dot X(t))\,,
\ee
and $dS_{\rm min}=\pi(t_1)$ for some solution.}
\end{lemma}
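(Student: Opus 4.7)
The plan is to treat all three claims through a single endpoint-perturbation construction. Given a reference minimizer $X$ ending at $y$ at time $t_1$, and any tangent vector $w\in T_yM$, I introduce the ``bump competitor''
\be
X_{w,h}(t):=X(t)+\max\!\bigl(0,\,(t-t_1+h)/h\bigr)\,w,
\ee
which agrees with $X$ on $[t_0,t_1-h]$, ends at $y+w$, and has velocity $\dot X+w/h$ on the short terminal interval $[t_1-h,t_1]$. All three conclusions of the lemma will follow by running this construction with different scaling relations between $h$ and $w$.

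First, taking $h$ fixed and $|w|$ small bounds $S_{\rm min}(y+w)-S_{\rm min}(y)$ above by an integral over $[t_1-h,t_1]$ that is $O(|w|)$ under the mild continuity of $L$ in $(x,v)$ present in our applications; swapping the roles of $y$ and $y+w$ gives the matching lower bound, yielding local Lipschitz continuity of $S_{\rm min}$ on the open set where it is finite. Rademacher's theorem then delivers almost-everywhere differentiability.

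Second, for the subgradient claim, fix a differentiable point $y$, any minimizer $X$ for $y$, and any $w\in T_yM$. Apply the construction with displacement $\epsilon w$ and interval length $h=\epsilon/c$ for a free constant $c>0$: the velocity perturbation is the fixed vector $cw$, while the spatial perturbation shrinks uniformly on $[t_1-h,t_1]$. Dividing by $\epsilon$ and passing to the limit $\epsilon\to 0$ at a Lebesgue point of $t\mapsto L(X(t),\dot X(t))$ gives
\be
w\cdot dS_{\rm min}(y)\le\frac{L(y,\dot X(t_1)+cw)-L(y,\dot X(t_1))}{c}.
\ee
Convexity of $L$ in $v$ makes the right-hand side non-decreasing in $c>0$, so sending $c\to 0^+$ reduces it to the one-sided directional derivative $L'_v(y,\dot X(t_1);w)$. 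As $w$ is arbitrary, the standard support-function characterization of the subdifferential of a convex function then identifies $dS_{\rm min}(y)$ as an element of $\partial_v L(y,\dot X(t_1))$, which is the claimed formula.

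The hard step will be justifying the Lebesgue-point limit at the fixed terminal time $t_1$, since a priori the minimizer $X$ is only absolutely continuous and $\dot X(t_1)$ need not exist pointwise. Addressing this cleanly requires exploiting convexity of $L$ in $v$ to pre-regularize $X$ on a small terminal interval --- roughly, replacing the possibly oscillating $X$ on $[t_1-h,t_1]$ by a competitor whose velocity is the average $(y-X(t_1-h))/h$, which does not increase the action --- and then extracting a canonical $\dot X(t_1)$ as the limit of these averages. Under the mild joint regularity of $L$ in $(x,v)$ automatic in our applications (where $L$ is built from the continuous wedgedot pairing $\ip{\cdot}{\cdot}$), this reduction makes the terminal limit a direct continuity statement and completes the proof.
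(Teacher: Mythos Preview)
The paper does not actually prove this lemma; it explicitly states ``We will not prove this lemma'' and only indicates two possible strategies --- (1) augmenting the standard calculus of variations with subdifferential calculus, or (2) regularizing $L$ to be smooth in $v$, applying the classical result, and passing to the limit --- with a citation to forthcoming work. Your endpoint-perturbation argument is a concrete implementation of strategy~(1): the bump competitor is the standard variational device for extracting the Hamilton--Jacobi relation, and your passage from the finite-$c$ difference quotient to the one-sided directional derivative via convex monotonicity, followed by the support-function characterization of $\partial_vL$, is exactly the right mechanism. The Lipschitz-plus-Rademacher route to almost-everywhere differentiability is also standard and correct.

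You have correctly flagged the genuine technical obstacle, namely making sense of $\dot X(t_1)$ for a merely absolutely continuous minimizer. Your proposed fix via Jensen-type averaging on the terminal interval is the natural move, but note two residual gaps: Jensen applies cleanly only after freezing the $x$-argument (so you need a uniform modulus of continuity of $L$ in $x$ to control that error on the shrinking interval), and the convergence of the averaged velocities $(y-X(t_1-h))/h$ as $h\to 0$ is not automatic --- it is essentially one-sided differentiability of $X$ at $t_1$, which in general needs a coercivity or compactness input on $L$. For the specific Lagrangians actually used downstream in the paper (built from the wedgedot pairing and homogeneous of degree one in $v$), reparametrization invariance lets you sidestep this entirely by arranging a nonzero terminal velocity, as the paper itself remarks in a footnote when first applying the lemma; so in the intended applications your sketch closes without further work.
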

We will not prove this lemma. It can be proven either (1) directly by augmenting the standard calculus of variations with the subdifferential calculus\footnote{\, There exists a large mathematical literature on the calculus of variations with non-differentiable Lagrangians. For introductions, see e.g.\ \cite{Rockafellar2001,MR3026831}.} or (2) by regularizing $L$ to make it a finite and differentiable function of $v$, applying the standard results, and then taking the limit as the regulator is removed; see \cite{HeadrickNonsmooth}.

As a warm-up to proving theorems \eqref{thm:Vequivalence} and \eqref{thm:Uequivalence}, we will first prove an analogous result in the simpler Riemannian setting, which will also be useful to us in subsection \ref{sec:RMFMC}.

\begin{theorem}\label{thm:Riemannian}
Let $M$ be a Riemannian manifold (or manifold with boundary), $A$, $B$ disjoint subsets of $M$, $\Pset$ the set of curves in $M$ connecting $A$ and $B$, and $\lambda$ a non-negative function on $M$. Then the following conditions are equivalent:
\begin{equation}\label{Riemannpsicond}
 \exists\ \psi:M\to[-1/2,1/2] \text{ s.t. } \psi|_A=-1/2\,,\quad\psi|_B=1/2\,,\quad |d\psi|\le\lambda
\end{equation}
\begin{equation}\label{lambdabound}
\forall \,\pcv\in \Pset\,,\,
    \int_\pcv ds\,\lambda\ge1\,,
\end{equation}
where $s$ is the proper distance along the curve $\pcv$.
\end{theorem}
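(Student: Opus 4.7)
The plan is to prove the two implications separately. The forward direction \eqref{Riemannpsicond}$\Rightarrow$\eqref{lambdabound} is immediate: given a feasible $\psi$, for any $\pcv\in\Pset$ parametrized by arclength,
\begin{equation}
\int_\pcv ds\,\lambda \ge \int_\pcv ds\,|d\psi| \ge \left|\int_\pcv d\psi\right| = \left|\psi(\pcv_{\mathrm{end}})-\psi(\pcv_{\mathrm{start}})\right| = 1,
\end{equation}
where in the last equality we used $\psi|_A=-1/2$, $\psi|_B=1/2$.

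For the reverse direction \eqref{lambdabound}$\Rightarrow$\eqref{Riemannpsicond}, I would construct $\psi$ as a truncated ``$\lambda$-weighted distance'' from $A$. Specifically, define
\begin{equation}
S_{\min}(y) := \infp_{\substack{\pcv:[0,1]\to M\\ \pcv(0)\in A,\,\pcv(1)=y}}\int_0^1 dt\,\lambda(\pcv(t))\,|\dot\pcv(t)|,
\end{equation}
and then set $\psi(y):=-\tfrac12+\min(S_{\min}(y),1)$. This automatically takes values in $[-1/2,1/2]$, gives $\psi|_A=-1/2$ since $S_{\min}|_A=0$, and gives $\psi|_B=1/2$ since by hypothesis \eqref{lambdabound}, every curve from $A$ to any $y\in B$ lies in $\Pset$, forcing $S_{\min}(y)\ge1$.

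The nontrivial step is to verify $|d\psi|\le\lambda$, and for this I would invoke lemma \ref{thm:HJconvex} with Lagrangian $L(x,v):=\lambda(x)\,|v|$, which is convex (though non-differentiable at $v=0$) in $v$. A direct computation of the subdifferential gives $\partial_v L(x,v)=\{\lambda(x)\,v^\flat/|v|\}$ for $v\ne0$ and $\partial_v L(x,0)=\{p\in T^*_xM:|p|\le\lambda(x)\}$; in both cases every element has norm at most $\lambda(x)$. The lemma then yields $|dS_{\min}(y)|\le\lambda(y)$ at every point where $S_{\min}$ is differentiable. Since $\psi=-1/2+S_{\min}$ on the open set $\{S_{\min}<1\}$ and $\psi\equiv1/2$ on the open set $\{S_{\min}>1\}$, the bound $|d\psi|\le\lambda$ holds a.e., which is all we need.

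The main obstacle I anticipate is a purely technical one: justifying the application of lemma \ref{thm:HJconvex} when $\lambda$ is merely non-negative (so $L$ is degenerate, and minimizers need not be unique or even attained in the interior of $M$). Both the continuity and the a.e.-differentiability of $S_{\min}$ asserted by the lemma should still hold, but a careful argument may require either a regularization $\lambda\to\lambda+\epsilon$ followed by $\epsilon\to 0$, or a direct appeal to the fact that $S_{\min}$ is $\lambda$-Lipschitz (since $S_{\min}(y')\le S_{\min}(y)+\int_\pcv ds\,\lambda$ for any curve $\pcv$ from $y$ to $y'$), which combined with Rademacher's theorem gives the required a.e.\ differentiability and gradient bound without the full machinery of lemma \ref{thm:HJconvex}.
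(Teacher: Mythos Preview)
Your proposal is correct and follows essentially the same strategy as the paper: the forward direction is identical, and for the converse you both build $\psi$ from the $\lambda$-weighted distance function and invoke lemma \ref{thm:HJconvex} (with the same subdifferential computation) to obtain $|d\psi|\le\lambda$.

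The only difference is cosmetic. The paper defines two such distances, $\psi_-$ from $A$ and $\psi_+$ from $B$, and patches them together symmetrically (setting $\psi=\psi_--1/2$ where $\psi_-<1/2$, $\psi=1/2-\psi_+$ where $\psi_+<1/2$, and $\psi=0$ in between), whereas you use only the distance from $A$ and truncate at $1$. Your construction is slightly simpler; the paper's is manifestly symmetric under $A\leftrightarrow B$, which is aesthetically convenient but not logically necessary (indeed the paper notes in a footnote that alternative constructions work). Your closing remark about bypassing lemma \ref{thm:HJconvex} via the direct Lipschitz estimate and Rademacher is a sound alternative and arguably cleaner for this particular Lagrangian.
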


\begin{proof}
First assume \eqref{Riemannpsicond}. Then for any $\pcv\in\Pset$,
\be
\int_\pcv ds\,\lambda\ge\int_\pcv ds\,|d\psi|\ge\int_\pcv d\psi=\psi|^B_A=1
\ee
(where in the second inequality we used the Cauchy-Schwarz inequality), implying \eqref{lambdabound}.

Now assume \eqref{lambdabound}. Rewrite the integral over $\pcv$ in a reparametrization-invariant way: $\int_\pcv ds\,\lambda = \int_\pcv dt\,|\dot x|\lambda$. Define the following functions on $M$:
\begin{equation}
    \psi_-(y):=\inf_{\substack{\pcv\\ A\text{ to }y}}\int_\pcv  dt\,|\dot x|\lambda \,,\qquad
        \psi_+(y):=\inf_{\substack{\pcv\\ B\text{ to }y}}\int_\pcv  ds\,|\dot x|\lambda\,,
\end{equation}
where the infimum is over curves $\pcv$ from $A$ to $y$ and from $B$ to $y$ respectively. Clearly for all $y\in M$ we have
\be\label{psiplusminus0}
\psi_-(y)+\psi_+(y)\ge1\,.
\ee
We also have
\be
\lim_{y\to A}\psi_-(y)=0\,,\qquad
\lim_{y\to B}\psi_+(y)=0\,.
\ee
We now calculate the gradient of $\psi_\pm$ using lemma \ref{thm:HJconvex}. The ``Lagrangian'' here, $|v|\lambda$, is a convex but not differentiable function of $v$. Its subdifferential is
\be
\partial_v(|v|\lambda) = \begin{cases}
\{\lambda v_\mu/|v|\}\,,&v\neq0\\
\{w_\mu:|w|\le\lambda\}\,,&v=0
\end{cases}
\ee
(where on the right-hand side we switched to an index notation to represent the covectors); hence $|d\psi_\pm|\le\lambda$.\footnote{\, Use of lemma \ref{thm:HJconvex} is actually overkill in this case, because by a reparametrization we can always choose to make the final velocity non-zero, so that the usual canonical momentum is well-defined and therefore $|d\psi_\pm|=\lambda$. Nonetheless we persist, in order to gain practice in applying that lemma.} By \eqref{psiplusminus0}, the regions where $\psi_-<1/2$ and where $\psi_+<1/2$ do not overlap. We therefore define\footnote{\, Another function that would also work is $\psi:=(\psi_--\psi_+)/(2(\psi_-+\psi_+))$. The argument is similar to the one for $\phi$ in the proof of theorem \ref{thm:Vequivalence}.}
\be
\psi(y):=\begin{cases}
\psi_-(y)-1/2\,,\quad&\psi_-(y)<1/2 \\
1/2-\psi_+(y)\,,\quad&\psi_+(y)<1/2 \\
0\,,\quad&\text{otherwise}
\end{cases}\,.
\ee
This function is continuous and obeys the constraints of \eqref{Riemannpsicond}.
\end{proof}

The proofs of theorems \ref{thm:Vequivalence} and \ref{thm:Uequivalence} are essentially the same, just with more complicated integrands along the curves and, in the first case, switching maximization for minimization. We start with the closer analogue, theorem \ref{thm:Uequivalence}.

\begin{proof}[Proof of theorem \ref{thm:Uequivalence}]
The proof that \eqref{Upsicond} implies \eqref{Ubound} was given in subsection \ref{sec:Uflowbounds}, but we repeat it here for completeness. $U\pm d\psi\in \fdc$ implies $U\in\fdc$. By lemma \ref{lem:Yineq2} it also implies $\ip{U}{\dot x}\ge\dot x\cdot d\psi$. Integrating over $s$ and using the boundary conditions $\psi|_{D(A)}=-1/2$, $\psi|_{D(B)}=1/2$ yields the inequality in \eqref{Ubound}.

For the converse, given $U$ satisfying \eqref{Ubound}, define the following functions on $\M$:
\begin{equation}
    \psi_-(y):=\inf_{\substack{\pcv\\ D(A)\text{ to }y}}\int_\pcv  ds\,\ip{U}{\dot x} \,,\qquad
        \psi_+(y):=\inf_{\substack{\pcv\\ D(B)\text{ to }y}}\int_\pcv  ds\,\ip{U}{\dot x}\,,
\end{equation}
where the infimum is over curves $\pcv$ from $D(A)$ to $y$ and from $D(B)$ to $y$ respectively. Clearly for all $y\in \M$ we have
\be\label{psiplusminus}
\psi_-(y)+\psi_+(y)\ge1\,.
\ee
We also have
\be
\lim_{y\to D(A)}\psi_-(y)=0\,,\qquad
\lim_{y\to D(B)}\psi_+(y)=0\,.
\ee

We now calculate the gradient of $\psi_\pm$ using lemma \ref{thm:HJconvex}. Assume first that $U$ is timelike at $y$. If, for the minimizing trajectory, $\dot x$ is timelike or spacelike at $y$, we have:
\be\label{picalc}
\ip{U}{\dot x}=\begin{cases}
U\cdot\dot x\\
-U\cdot\dot x \\
|U||\dot x_\perp|
\end{cases}\qquad
\Rightarrow\qquad
\pi =\begin{cases}
U\,,\quad&-\dot x\in \fdt  \\
-U\,,\quad&\dot x \in \fdt\\
|U|\,\dot x_{\perp}/|\dot x_{\perp}|\,,\quad& \dot x\text{ spacelike}
\end{cases}
\ee
(where $\dot x_\perp$ is the projection of $\dot x$ orthogonal to $U$). In all three cases, $U\pm\pi\in \fdc$. What if $\dot x$ is null? The gradient of $\ip{U}{\dot x}$ with respect to $\dot x$ is discontinuous when $\dot x$ is null. The subgradient set (or subdifferential) is the convex hull of the gradients on either side of the discontinuity, which we have already computed in \eqref{picalc}. Since the set of covectors $X$ obeying $U\pm X\in \fdc$ is convex (for fixed $U$), all the subgradients obey that condition, and therefore $\pi$ does. The case where $U$ is null can be treated as a limit of the timelike case, or directly using the formula  $\ip{U}{\dot x}=|U\cdot\dot x|$. We conclude that $U\pm d\psi_-\in \fdc$  and $U\pm d\psi_+\in \fdc$ everywhere in $\M$. 

By \eqref{psiplusminus}, the regions where $\psi_-<1/2$ and where $\psi_+<1/2$ do not overlap. We therefore define\footnote{\, Another function that would also work is $\psi:=(\psi_--\psi_+)/(2(\psi_-+\psi_+))$. The argument is similar to the one for $\phi$ below.}
\be
\psi(y):=\begin{cases}
\psi_-(y)-1/2\,,\quad&\psi_-(y)<1/2 \\
1/2-\psi_+(y)\,,\quad&\psi_+(y)<1/2 \\
0\,,\quad&\text{otherwise}
\end{cases}\,.
\ee
This function is continuous and obeys the constraints of \eqref{Upsicond}.
\end{proof}

\begin{proof}[Proof of theorem \ref{thm:Vequivalence}]
For the $V$-case we proceed similarly. The proof that \eqref{Vphicond} implies \eqref{Vbound} was given in subsection \ref{sec:Vflowbounds}, but we repeat it here for completeness. By lemma \ref{lem:Yineq1}, with $Y=-\dot x$, the condition $d\phi\pm V\in \fdc$ implies $\ip{(-\dot x)}{V}\le \dot x\cdot d\phi$. Integrating over $t$ and using the boundary conditions $\phi|_{\I^\pm}=\pm1/2$ yields the inequality in \eqref{Vbound}.

For the converse, given a 1-form $V$ obeying \eqref{Vbound}, we define the following functions on $\M$:
\begin{equation}
    \phi_-(y)=\sup_{\substack{\qcv\text{ causal}\\\I^-\text{ to }y}}\int_\qcv  dt\,\ip{(-\dot x)}{V} \,,\qquad
        \phi_+(y)=\sup_{\substack{\qcv\text{ causal}\\y\text{ to }\I^+}}\int_\qcv  dt\,\ip{(-\dot x)}{V}\,.
\end{equation}
Clearly for all $y\in \M$ we have
\be\label{phisumbound}
\phi_-(y)+\phi_+(y)\le1\,.
\ee
We also have
\be
\lim_{y\to \I^-}\phi_-(y)=0\,,\qquad
\lim_{y\to \I^+}\phi_+(y)=0\,.
\ee

We now calculate the gradient of $\phi_-$. In this case, the integrand is differentiable as a function of the velocity $\dot x^\mu$ on its domain, namely the set of future-directed causal vectors. However, having a constrained velocity again leads to a subtlety in the application of the Hamilton-Jacobi formula, since the velocity on the maximizing trajectory may be on the boundary of the domain (i.e.\ may be null), in which case the canonical momentum may differ from its naive value.\footnote{\, In the familiar case of the action for a massive particle, $-m\int dt\sqrt{1-\dot x^2}$, the velocity is similarly constrained to the future light cone. However, the classical trajectory is never null, so the constraint is never active and the canonical momentum is indeed given by its ``naive'' value.}
To deal with this issue, we enlarge the domain to be the whole tangent space, and implement the constraint by defining the integrand to equal $-\infty$ whenever the velocity is outside the future light cone. Thus we write
\begin{equation}
\begin{split}
    \phi_-(y)&=\sup_{\substack{\qcv\\ \I^-\text{ to }y}}\int_\qcv  dt\,\begin{cases}\ip{(-\dot x)}{V}\,,\quad
    &-\dot x\in \fdc \\ 
    -\infty\,,\quad&\text{otherwise}
    \end{cases}\,,\\
        \phi_+(y)&=\sup_{\substack{\qcv\\y\text{ to }\I^+}}\int_\qcv  dt\,\begin{cases}\ip{(-\dot x)}{V}\,,\quad
    &-\dot x\in \fdc\\
        -\infty\,,\quad&\text{otherwise}
    \end{cases}\,.
    \end{split}
\end{equation}
(Recall that the velocity vector $\dot x^\mu$ is future-directed causal if and only if the dual covector $\dot x$ obeys $-\dot x\in \fdc$.) We now deal with the various cases in turn. When $V$ is spacelike and $\dot x^\mu$ is timelike, $\ip{(-\dot x)}{V}=|V||\dot x_\perp|$, so the gradient with respect to $\dot x^\mu$ is $\pi_\mu=-|V|\dot x_{\perp\mu}/|\dot x_\perp|$, which satisfies $\pi\pm V\in \fdc$. In the limit that $\dot x^\mu$ becomes null, $\pi$ goes to infinity in the direction of $-\dot x$; the subgradient set has only this one element, so again $\pi\pm V\in \fdc$. When $V$ is future-directed timelike or null, $\ip{(-\dot x)}{V}=V\cdot\dot x$, so for timelike $\dot x^\mu$, $\pi=V$, and again $\pi\pm V\in \fdc$. For null $\dot x^\mu$, however, the integrand is discontinuous so the gradient is undefined; the subgradient set consists of all covectors $\pi$ of the form $V-\alpha\dot x$ for $\alpha\ge0$; again $\pi\pm V\in \fdc$. Similarly for the case where $V$ is past-directed timelike or null. We conclude that $d\phi_-\pm V\in \fdc$.

By the same argument, but referring to a change in the initial rather than final position (and noting that $\phi_+$ decreases to the future), we have $-d\phi_+\pm V\in \fdc$.

We now define the function
\be
\phi:=\frac{\phi_--\phi_+}{2(\phi_-+\phi_+)}\,.
\ee
We have
\be
d\phi = \frac{\phi_+}{(\phi_-+\phi_+)^2}\ d\phi_-+\frac{\phi_-}{(\phi_-+\phi_+)^2}\, (-d\phi_+)\,.
\ee
Since $d\phi_-$ and $-d\phi_+$ both satisfy $X\pm V\in \fdc$, and since the set of such covectors is closed under linear combinations of the form $\alpha_1X_1+\alpha_2X_2$ with $\alpha_1,\alpha_2\ge0$, $\alpha_1+\alpha_2\ge1$, we find (using \eqref{phisumbound}) that $d\phi$ is also in the set. This function thus obeys the constraints of \eqref{Vphicond}.
\end{proof}

\section{Thread distributions}
\label{sec:threads}

A bit thread is a bulk curve connecting two boundary points in a holographic spacetime, that represents a unit of entanglement between boundary regions. One way to define a configuration of bit threads is as the field lines (or integral curves) of a flow. However, such a configuration can also be defined simply as a set of curves obeying some density bound. The two definitions are related but not equivalent, the main difference being that, whereas the field lines of a flow cannot intersect, there is a priori no such constraint for a set of curves. Both descriptions have been used in the literature, and the relationship between them has been discussed to some extent (e.g.\ \cite{Cui:2018dyq,Headrick:2020gyq}).\footnote{\, A similar distinction between a vector field and its field lines arises for the electic and magnetic fields. In that case, as textbooks often emphasize, the vector fields are the ``real'' physical objects, while the field lines are merely a device for visualizing the fields. One way to see this is that the vector fields obey the superposition principle: the solution for a combination of sources is the sum of the individual solutions, while it would be incorrect to combine the respective sets of field lines. In the case at hand, it remains to be seen whether the flows or the thread distributions --- if either one --- reflect the underlying physics of holographic entanglement.}  In this section we will develop the second description more systematically, showing how to apply the technology of convex optimization and prove max flow-min cut type theorems directly in terms of threads, without appealing to flows. We start in subsection \ref{sec:RMFMC} in the Riemannian setting. The proofs there are new as far as we know, and of some interest in their own right. In subsection \ref{sec:VUthreads}, we then apply the same ideas to define thread versions of the V-flows and U-flows of the previous section. The method can also be applied in other settings, such as the graph max flow-min cut theorem and the Lorentzian min flow-max cut theorem \cite{Headrick:2017ucz}.\footnote{\, Riemannian threads have recently been generalized to so-called hyperthreads \cite{Harper:2021uuq,Harper:2022sky}.} We leave it to the interested reader to work out the details in those cases.

\subsection{Riemannian thread distributions}
\label{sec:RMFMC}

In this subsection we will define a thread distribution on a Riemannian manifold, prove analogues of the max flow-min cut and max multiflow theorems, and explain how to map a flow or multiflow to a thread distribution and back.

Throughout this subsection we fix a compact oriented $d$-dimensional Riemannian manifold-with-boundary $M$.\footnote{\, 
Note that we purposefully denote the manifold and its dimensionality differently from the previous section, in order to make an easier contact to the Lorentzian case of the next subsection wherein the present Riemannian manifold may be viewed as a slice $\bkslice$.  Correspondingly, the regions $A$, $B$ are now codimension-0 within the boundary $\partial M$.
} 
Let $\Pset$ be a set of (unoriented\footnote{\, Threads can be defined to be either oriented or unoriented; in fact in the original definition they were oriented \cite{Freedman:2016zud}. However, making them unoriented as we do here simplifies the analysis a bit, especially in the multiflow case, and also reflects more closely the physics of entanglement, which is a symmetric concept (in the sense that $A$ is
entangled with $B$ to the same degree that $B$ is entangled with $A$).}) curves in $M$. (Below we will specialize to the case where $\Pset$ is the set of curves connecting regions $A$ and $B$ on the boundary of $M$, but for now we can be general.) Let $\delta(x,y)$ be the delta function on $M\times M$ supported on the diagonal $y=x$ and normalized such that, for any $x$, $\int_M d^dy\sqrt{g(y)}\,\delta(x,y)=1$. We then define the following function on $M\times \Pset$:
\begin{equation}\label{Deltadef1}
    \Delta(x,\pcv) := \int_\pcv  ds\,\delta(x,y(s))\,,
\end{equation}
where $s$ is a proper-distance parameter along $\pcv$ and $y(s)$ is the corresponding point in $M$. For fixed $x$, $\Delta(x,\pcv)$ is a delta-function on $\Pset$ supported on the codimension-$(d-1)$ locus of curves passing through $x$. Note also that $\Delta(x,\pcv)$ reflects the multiplicity of times $\pcv$ passes through $x$. Given a measure $\muth$ on $\Pset$,\footnote{\, 
We remind the reader that a measure $\muth$ on $\Pset$ assigns to each subset $P\subseteq\Pset$ a non-negative number $\muth(P)$ obeying certain properties (such as $\muth(P_1\cup P_2)=\muth(P_1)+\muth(P_2)$ when $P_1\cap P_2=\emptyset$), and can be used to define the Lebesgue integral of a function on $\Pset$. We will be fairly sloppy about the analysis and measure-theory here, ignoring issues such as what regularity conditions are imposed on threads, what topology is put on the set of threads $\Pset$, what conditions are put on the measure $\muth$, what constitutes a measurable set, and so on.
} we define the \emph{thread density} at a given point $x\in M$ by $\int_\Pset d\muth(\pcv)\Delta(x,\pcv)$. It is a density in the sense that its integral over any region $r\subseteq M$ equals the length of $\pcv\cap r$, integrated over $\Pset$ with respect to $\muth$:
\begin{equation}
    \int_rd^d\!x\sqrt{g}\int_\Pset d\muth(\pcv)\Delta(x,\pcv) = \int_\Pset d\muth(\pcv)\int_{\pcv \cap r}ds\,.
\end{equation}
A \emph{thread distribution} is a measure $\muth$ such that the thread density nowhere exceeds 1:\footnote{\, 
When threads are allowed to intersect, as they are here or in the case of a multiflow (but not in the case of a flow), there exist several distinct natural density bounds, which were extensively explored in \cite{Headrick:2020gyq}. The bound \eqref{distdef} corresponds in the language of \cite{Headrick:2020gyq} to the so-called $\nu_v$-bound and is the most amenable to dualization and therefore the easiest to analyze.
}
\begin{equation}\label{distdef}
\forall\,x\in M\,,\quad\int_\Pset d\muth(\pcv)\,\Delta(x,\pcv)\le1\,.
\end{equation}
Since the measures on $\Pset$ form a convex cone, and \eqref{distdef} is a set of convex (in fact linear) constraints, the thread distributions also form a convex set. Therefore the problem of maximizing the total measure $\muth(\Pset)$ subject to \eqref{distdef} defines a linear program:
\begin{equation}\label{Rmf2}
    \text{Maximize}\quad
    \muth(\Pset)\quad\text{over measure $\muth$ on $\Pset$ subject to:}\quad
\forall\, x\in M\,,\quad \int_{\Pset}d\muth(\pcv)\,\Delta(x,\pcv)\le1\,.
    \end{equation}

Dualizing this linear program is straightforward. The Lagrange multiplier for the constraint is a function $\lambda$ on $M$, constrained to be non-negative since it is enforcing an inequality constraint. The Lagrangian is
\be
\begin{split}
L[\muth,\lambda]&=\muth(\Pset)+\int_Md^d\!x\sqrt g\,\lambda(x)\left(1-\int_\Pset d\muth(\pcv)\,\Delta(x,\pcv)\right) \\
&=\int_Md^d\!x\sqrt g\,\lambda(x)+\int_\Pset d\muth(\pcv)\left[1-\int_\pcv ds\,\lambda(y(s))\right],
\end{split}
\ee
where in the second line we rearranged terms, switched the order of the integrations over $\pcv$ and $x$, applied the definition \eqref{Deltadef1} of $\Delta$, and performed the integration over $x$ in the last term. The Lagrangian is bounded above as a function of $\muth$ if and only if, for all $\pcv$, the quantity in square brackets is non-positive; this becomes the constraint for the dual program, and when it is satisfied the second term vanishes on the maximum. The dual program is thus:
\begin{equation}\label{Rmc2}
    \text{Minimize}\quad
    \int_M d^d\!x\sqrt g\,\lambda(x)\quad\text{over function $\lambda:M\to\R^+$ subject to:}\quad
\forall\, \pcv \in \Pset\,,\quad 
\int_\pcv ds\,\lambda(y(s))
\ge1\,.
\end{equation}
Strong duality is guaranteed for linear programs by the existence of a feasible configuration for the primal; for \eqref{Rmf2}, such a feasible configuration is the zero measure $\muth=0$.

We will now apply the duality between \eqref{Rmf2} and \eqref{Rmc2} to prove an analogue, in the language of thread distributions, of the Riemannian max flow-min cut theorem. Given non-overlapping regions $A,B$ of $\partial M$, let $\Pset$ be the set of curves in $M$ with one endpoint in $A$ and the other in $B$. Given a hypersurface $\surf$ in $M$ homologous to $A$ relative to $\eowsurf:=\partial M\setminus AB$, every curve $\pcv\in \Pset$ intersects $\surf$. Therefore a delta function supported on $\surf$ and normalized in the obvious way (i.e.\ equal to $\delta(x^1)$ in Gaussian normal coordinates with $x^1$ the coordinate transverse to $\surf$) is a feasible function $\lambda$ for \eqref{Rmc2}. The objective evaluated on this function equals the area of $\surf$. So by the weak duality between \eqref{Rmf2} and \eqref{Rmc2} we have, for any thread distribution $\muth$,
\begin{equation}
    \muth(\Pset) \le \area(\surf)\,.
\end{equation}
The max flow-min cut theorem asserts that this inequality is tight, namely
\begin{equation}
    \sup\muth(\Pset) = S(A):=\inf_{\surf\sim A}\area(\surf)\,,
\end{equation}
where the supremum is over thread distributions and the infimum over surfaces homologous to $A$ relative to $\eowsurf$.  Given strong duality, to prove the theorem it is sufficient to show that the optimal value of \eqref{Rmc2} equals $S(A)$, in other words that the integral of any feasible $\lambda$ is bounded below by $S(A)$. By theorem \ref{thm:Riemannian}, given a feasible $\lambda$, there exists a function $\psi:M\to[-1/2,1/2]$ such that $\psi|_A=-1/2$, $\psi|_B=1/2$, $|d\psi|\le\lambda$. We then have
\be\label{MFMCproof}
\int_M\sqrt g\,\lambda \ge \int_M\sqrt g\,|d\psi|\ge S(A)\,,
\ee
where in the second inequality we used the coarea formula and the fact that the level sets of $\psi$ are homologous to $A$ (relative to $\eowsurf$).

The generalization to multiple boundary regions is straightforward, yielding a thread-distribution version of the max multiflow theorem \cite{Cui:2018dyq}. Let $\{A_i\}$ be a set of non-overlapping regions of $\partial M$, let $\Pset_i$ be the set of threads connecting $A_i$ to $\cup_{j\neq i}A_j$, and set $\Pset:=\cup_i\Pset_i$. We will now show that a thread distribution exists such that $\muth(\Pset_i)=S(A_i)$ for all $i$. We have $\muth(\Pset)=\sum_i\muth(\Pset_i)/2$, since each curve in $\Pset$ contributes to two terms in the sum. Hence the primal objective is bounded above by $\sum_iS(A_i)/2$, and achieves this bound if and only if $\muth(\Pset_i)=S(A_i)$ for all $i$. We will now show that the dual objective is bounded below by the same quantity, implying that both bounds are achieved. This part of the proof is essentially the same as that of the multiflow-based theorem, so we will be brief; details can be found in \cite{Cui:2018dyq}. Given a feasible function $\lambda$ for the dual program, we define $\psi_i(x)$ as the minimal integral of $\lambda$ over any path connecting $A_i$ to $x$. The level set of $\psi_i$ for any value between $0$ and $1$ is homologous to $A_i$ (relative to $\eowsurf:=\partial M\setminus\cup_iA_i$), so its area is bounded below by $S(A_i)$. For any $x\in M$ and $i\neq j$, $\psi_i(x)+\psi_j(x)\ge1$, so the level sets of $\psi_i$ and $\psi_j$ for values between 0 and $1/2$ do not intersect. This bounds the dual objective below by $\sum_iS(A_i)/2$.

\subsubsection{Mapping between flows \& thread distributions}
\label{sec:flowthreadmapping}

Of course, it's no coincidence that thread distributions obey the same theorems as flows and multiflows. As we will now explain, a flow (or multiflow) can be mapped to a thread distribution and vice versa.\footnote{\, A similar mapping in the graph setting, between flows or multiflows and set of threads (or paths), was discussed in \cite{Headrick:2020gyq} (see appendix A).} Roughly speaking, the field lines (or streamlines, or integral curves) of a flow correspond to the curves in a thread distribution. However, there are a few complications in this mapping. First, while a field line of a divergenceless vector field cannot end in the interior of $M$, it need not end on the boundary --- it can be a loop, or keep going forever in both directions without endpoints.\footnote{\, An example of a flow whose field lines go forever in both directions is  $dx^1+\alpha dx^2$ on the square torus, where $\alpha$ is irrational. Similar examples can be constructed on manifolds with boundary. Note that the divergenceless condition prevents a field line from having just one endpoint on the boundary.} Second, the field lines are naturally oriented, whereas threads are unoriented. Third, the field lines of a flow don't intersect, whereas the threads in a distribution may do so. These complications will be dealt with as we go along.

We first define the map from a flow $v$ to a thread distribution $\muth_v$. As above, we fix boundary regions $A$, $B$ and define $\eowsurf:=\partial M\setminus AB$ and $\Pset$ as the set of curves with one endpoint in $A$ and the other in $B$. We also assume that the flow $v$ obeys the no-flux boundary condition on $\eowsurf$, $*v|_{\eowsurf}=0$. Let $\Pset_v\subset \Pset$ be the set of field lines of $v$ running from $A$ to $B$ (with their orientations dropped), $r$ be the set of points in $M$ through which they pass, and $\tilde v$ the 1-form equal to $v$ on $r$ and $0$ elsewhere. We can then define the measure $\muth_v$ on $\Pset$ by $\muth_v(\Pset\setminus \Pset_v)=0$ and $\muth_v(A')=\int_{A'}*\tilde v$ for any subregion $A'\subseteq A$. At any point $x\in M$, the density of threads equals $|\tilde v(x)|$, so we have
\begin{equation}\label{muvdensity}
    \int_\Pset d\muth_v(\pcv)\Delta(x,\pcv) =|\tilde v(x)|\le |v(x)|\le1\,.
\end{equation}
We conclude that $\muth_v$ obeys the density bound and is therefore a thread distribution. 
The total measure of $\Pset$ is at least the flux of $v$ on $A$:
\begin{equation}\label{vfluxbound}
    \muth_v(\Pset)\ge\int_A*v\,. 
\end{equation}
The reason it's not necessarily an equality is that any field lines running from $B$ to $A$ contribute zero to the left-hand side but negatively to the right-hand side. While both the set of flows and the set of thread distributions are convex, the convexity is implemented differently in the two cases; in general, given flows $v_1,v_2$ and $\alpha\in(0,1)$,
\begin{equation}
    \muth_{\alpha v_1+(1-\alpha)v_2}\neq\alpha\muth_{v_1}+(1-\alpha)\muth_{v_2}\,.
\end{equation}
Equality holds if and only if $\tilde v_1,\tilde v_2$ are everywhere parallel (i.e.\ $\tilde v_1\cdot\tilde v_2=|\tilde v_1||\tilde v_2|$). Otherwise the distribution on the right-hand side includes intersecting threads, whereas the left-hand side, like any distribution derived from a flow, does not.

To convert a thread distribution $\muth$ into a flow $v_\muth$,\footnote{\, The subscript $\muth$ in $v_\muth$ is not an index but rather indicates the dependence on the thread distribution $\muth$.} we can turn each thread into a delta-function-localized ``flux tube'', and then integrate over $\Pset$. We then have
\begin{equation}\label{vfrommu}
v_\muth(x) :=\int_{\Pset}d\muth(\pcv)\,\Delta(x,\pcv)\,\dot x\,,
\end{equation}
where $\dot x$ is the unit tangent covector to $\pcv$ at $x$, with the orientation along $\pcv$ from $A$ to $B$. The integrand $\Delta(x,\pcv)\dot x$, as a function of $x$ for fixed $\pcv$, is a flux tube supported on $\pcv$, and is divergenceless by virtue of the fact that $\pcv$ has no endpoints in the interior of $M$. Hence $v_\muth$ is divergenceless. Using the Cauchy-Schwarz (or triangle) inequality, we have
\begin{equation}
    \left|v_\muth(x)\right| \le  \int_\Pset d\muth(\pcv)\,\Delta(x,\pcv)\le1\,,
\end{equation}
so $v_\muth$ is indeed a flow. With this construction, the flux of $v_\muth$ on $A$ equals the number of threads connecting $A$ and $B$:
\begin{equation}\label{vfluxequal}
\int_A*v_\muth= \muth(\Pset)\,. 
\end{equation}
Unlike the above mapping from flows to thread distributions, this one does preserve convex combinations:
\begin{equation}
v_{\alpha\muth_1+(1-\alpha)\muth_2} = \alpha v_{\muth_1}+(1-\alpha)v_{\muth_2}\,.
\end{equation}

The relations \eqref{vfluxbound} and \eqref{vfluxequal} together imply that the ``max flow'' is the same whether computed using flows or thread distributions:
\begin{equation}
\sup_\muth\muth(\Pset) = \sup_{v}\int_A*v\,.
\end{equation}
Note also that these two mappings are not inverses of each other; given a flow $\hat v$ and thread distribution $\hat\muth$ we have in general neither $v_{\muth_{\hat v}}=\hat v$ (since flow lines running from $B$ to $A$ are discarded in transforming $\hat v$ into a thread distribution) nor $\muth_{v_{\hat\muth}}=\hat\muth$ (since any intersecting threads in $\hat\muth$ are recombined in transforming $\hat\muth$ into a flow).

Similar constructions convert a multiflow into a thread distribution and vice versa. We remind the reader that, given a decomposition $\{A_i\}$ of $\partial M$, a multiflow is a set of 1-forms $v_{ij}$ ($i<j$) obeying
\begin{equation}
d{*v_{ij}}=0\,,\qquad\sum_{i,j}|v_{ij}|\le1\,,\qquad *v_{ij}|_{\partial M\setminus(A_i \cup A_j)}=0\,.
\end{equation}
The corresponding thread distribution is the sum of the distributions $\muth_{v_{ij}}$ for the component flows. Via the analogue of \eqref{vfrommu} for each pair of distinct boundary regions, a thread distribution can also be converted into a multiflow.

\subsection{V-thread \& U-thread distributions}
\label{sec:VUthreads}

In this subsection we return to the Lorentzian setting of the rest of the paper, as laid out in subsection \ref{sec:setup}. We also continue to fix a decomposition of the boundary $\N$ into $D(A)$ and $D(B)$, and to suppress the $A$- and $B$-dependence of various quantities, for example writing $\Sc$ for $\Sc(A:B)$. In section \ref{sec:flows}, we defined V-flows and U-flows, and will correspondingly define two types of threads, V-threads and U-threads. Like the Riemannian threads of the previous subsection, both types of threads will be subject to density bounds; unlike in that case, however, these density bounds will be non-local. It will turn out that the density bound for V-threads is enforced by U-threads and vice versa, and the problem of maximizing the number of V-threads is dual to the problem of minimizing the number of U-threads (as was the case for the max V-flow and min U-flow programs \eqref{maxVflow}, \eqref{Uflowprog}). The solution to both of these problems will also equal the quantity $\Sc$ defined in \eqref{Sccdef}, as we will show by explaining how a V/U-thread distribution can be converted into a V/U-flow.

We define a \emph{V-thread} $\pcv$ as an unoriented open curve in $\M$ with one end in $D(A)$ and the other in $D(B)$, and a \emph{U-thread} $\qcv$ as an inextendible causal curve in $\M$ (which necessarily has one end on $\I^+$ and the other on $\I^-$). V-threads are not required to have any particular causal character. $\Pset$ and $\Qset$ are the set of all V- and U-threads respectively. We define the following function on $\Qset \times \Pset$:
\begin{equation}\label{Deltadef}
\Delta(\qcv,\pcv) := \int_\qcv dt\int_\pcv ds\,\delta(x(t),y(s))\,\ip{(-\dot x)}{\dot y}\,,
\end{equation}
where $t$ is a parameter along $\qcv$, $x(t)$ is the corresponding point in $\M$, and $\dot x$ is dual covector to the tangent vector $dx(t)/dt$; similarly for $\pcv$, $y$, and $s$. (For simplicity we use the overdot for both $d/dt$ and $d/ds$.) Since the wedgedot function $\ip{\ }{\ }$ is bihomogeneous, $\Delta(\qcv,\pcv)$ is independent of the parametrization of $\qcv$ and $\pcv$. $\Delta$ is a delta-function supported on the codimension-$(D-2)$ locus in $\Qset\times \Pset$ on which $\qcv$ and $\pcv$ intersect. Given a measure $\muth$ on $\Pset$, we define the density of V-threads along a given U-thread $\qcv$ as $\int_\Pset d\muth(\pcv)\,\Delta(\qcv,\pcv)$. (Note that, given the definition of $\Delta$, this definition automatically includes an integration along $\qcv$, so this is an integrated, not pointwise, density.) We define a \emph{V-thread distribution} as a measure $\muth$ on $\Pset$ such that this density never exceeds 1:
\begin{equation}\label{Vtddef}
 \forall\, \qcv \in \Qset\,,\quad \int_\Pset d\muth(\pcv)\, \Delta(\qcv,\pcv)\le1\,.
\end{equation}
Similarly, given a measure $\nuth$ on $\Qset$, we define the density of U-threads along a given V-thread $\pcv$ as $\int_\Qset d\nuth(\qcv)\,\Delta(\qcv,\pcv)$, and a \emph{U-thread distribution} (for $A$) as a measure $\nuth$ on $\Qset$ satisfying
\begin{equation}\label{Utddef}
\forall\, \pcv\in \Pset\,,\quad 
\int_\Qset d\nuth(\qcv)\, \Delta(\qcv,\pcv)\ge1\,.
\end{equation}
Essentially, the U-threads have to form a sufficient ``barrier'' separating $D(A)$ from $D(B)$.

The problem of maximizing $\muth(\Pset)$ over V-thread distributions defines a linear program:
\begin{equation}\label{Vtprogram}
    \text{Maximize}\quad
    \muth(\Pset)\quad\text{over measure $\muth$ on $\Pset$ subject to:}\quad
\forall\, \qcv\in \Qset\,,\quad \int_{\Pset}d\muth(\pcv)\,\Delta(\qcv,\pcv)\le1\,.
    \end{equation}
The problem of minimizing $\nuth(\Qset)$ over U-thread distributions also defines a linear program:
\begin{equation}\label{Utprogram}
    \text{Minimize}\quad
    \nuth(\Qset)\quad\text{over measure $\nuth$ on $\Qset$ subject to:}\quad
\forall\, \pcv\in \Pset\,,\quad 
\int_\Qset d\nuth(\qcv)\, \Delta(\qcv,\pcv)\ge1\,.
\end{equation}
The programs \eqref{Vtprogram}, \eqref{Utprogram} are dual to each other. Strong duality is again guaranteed by the existence of a feasible point for one of the programs, namely $\muth=0$ for \eqref{Vtprogram}, so we have
\be
\sup_\muth\muth(\Pset) = \infp_\nuth\nuth(\Qset)\,.
\ee
Next we will show that these quantities both equal $\Sc$.\footnote{\,
In the subsection on Riemannian thread distributions, below \eqref{MFMCproof}, it was explained how to define a thread distribution for multiple boundary regions, the analogue of a multiflow. One may be tempted to do the same in the Lorentzian setting. However, for the reasons given in footnote \ref{LorentzianNesting}, this is not straightforward.
}

\subsubsection{Mapping between flows \& thread distributions}

Just as, in subsection \ref{sec:flowthreadmapping}, we converted Riemannian flows to thread distributions and vice versa, we can convert V/U-flows to V/U-thread distributions and vice versa. We start with the V-flows, which are more parallel with the Riemannian case. 
To convert a V-flow $V$ into a V-thread distribution, we first strip away any field lines that do not run from $D(A)$ to $D(B)$. Call the resulting 1-form $\tilde V$, and set of field lines $\Pset_V\subset \Pset$. We then define the measure $\muth_V$ on $\Pset$ so that $\muth_V(\Pset\setminus \Pset_V)=0$ and $\muth_V(D')=\int_{D'}*\tilde V$ for any subregion $D'\subseteq D(A)$. For any U-thread $\qcv$, we then have, similarly to \eqref{muvdensity},
\begin{equation}
\int_\Pset d\muth_V(\pcv)\,\Delta(\qcv,\pcv)=\int_\qcv dt\,\ip{(-\dot x)}{\tilde V}\le\int_\qcv dt\,\ip{(-\dot x)}{V}\le1\,.
\end{equation}
The first equality follows from the definition \eqref{Deltadef} of the function $\Delta$, the first inequality from the fact that $\tilde V$ equals either $V$ or 0 at every point in $\M$, and the last inequality from the bound \eqref{Vbound1} obeyed by any V-flow. Thus $\muth_V$ is indeed a thread distribution. Furthermore, similarly to \eqref{vfluxbound},
\begin{equation}\label{Vfluxbound}
\muth_V(\Pset) \ge \int_{D(A)}*V\,.
\end{equation}
To go from a V-thread distribution $\muth$ to a V-flow $V_\muth$,\footnote{\, The subscript on $V_\muth$ is not an index but rather denotes the dependence on the measure $\muth$.} we orient the threads from $D(A)$ to $D(B)$ and put a delta-function flux tube on each:
\be
V_\muth(x):=\int_{\Pset}d\muth(\pcv)\int_\pcv ds\,\delta(x,y(s))\,\dot y\,.
\ee
For any U-thread $\qcv$, we have
\be
\int_\qcv dt\,\ip{(-\dot x)}{V_\muth} \le
\int_{\Pset
}d\muth(\pcv)\Delta(\qcv,\pcv)\le1\,,
\ee
where in the first inequality we used the convexity of $\ip{\ }{\ }$ in its second argument; hence $V_\muth$ is indeed a V-flow. We also have
\be\label{Vfluxequal}
\int_{D(A)}*V_\muth = \muth(\Pset)\,.
\ee
The mappings between V-flows and V-thread distributions, and the relations \eqref{Vfluxequal}, \eqref{Vfluxbound} imply that the maximum number of threads in $\Pset$, for any V-thread distribution, equals $\Sc$:
\be\label{maxVthread}
\sup_\muth\muth(\Pset)=\sup_{V\in\F}\int_{D(A)}*V=\Sc\,.
\ee

Similarly, we can convert a U-flow into a U-thread distribution and back. One difference with the V case is that every U-thread and every field line of a U-flow necessarily goes from $\I^-$ to $\I^+$, so there is no need to select a subset of the threads or field lines, or to fuss with orientations; the analogue of \eqref{Vfluxbound} is therefore an equality. We leave the details as an exercise to the reader. In the end, similarly to \eqref{maxVthread}, we have
\be\label{maxUthread}
\inf_\nuth\nuth(\Qset)=\inf_{U\in\G}\int_{\I^+}*U=\Sc\,.
\ee

With \eqref{maxVthread} and \eqref{maxUthread} in hand, we can add another row to our diagram \eqref{diagram} of formulas for the quantity $\Sc$:
\be\label{diagram2}
\minCDarrowwidth50pt
\begin{CD}
\boxed{\sup_{\phi\in\bksliceset_{\rm c}}\infp_{\psi\in\tsset_{\rm c}} \int_\M\sqrt g\,\ip{d\phi}{d\psi}} @Z\text{minimax}Z
\text{theorem}Z \boxed{\infp_{\psi\in\tsset_{\rm c}}\sup_{\phi\in\bksliceset_{\rm c}} \int_\M\sqrt g\,\ip{d\phi}{d\psi}} \\
@X{\scriptsize\begin{array}{r}\psi\leftrightarrow V\\\text{duality}\end{array}}XX @XX{\scriptsize\begin{array}{r}\phi\leftrightarrow U\\ \text{duality}\end{array}}X \\
\boxed{\sup_{V\in\F}\int_{D(A)}*V} @Z{(V,\phi)\leftrightarrow(U,\psi)}Z\text{duality}Z \boxed{\inf_{U\in\G}\int_{\I^+}*U} \\
@X{\scriptsize\begin{array}{r}\text{V-flow$\leftrightarrow$V-thread}\\ \text{conversion}\end{array}}
XX  @XX
{\scriptsize\begin{array}{l}\text{U-flow$\leftrightarrow$U-thread}\\ \text{conversion}\end{array}}
X \\
\boxed{\sup_\muth\muth(\Pset)} @Z{\muth\leftrightarrow\nuth}Z
\text{duality}Z
\boxed{\inf_\nuth\nuth(\Qset)}
\end{CD}
\ee


\section{Holographic spacetimes}
\label{sec:solutions}

In the previous sections, for given boundary regions $A$, $B$, we defined the three quantities $\Splus$, $\Sminus$, $\Sc$, gave several formulas for each, and showed that they obey the relation
\be\label{Srelation}
\Sminus\le\Sc\le\Splus\,.
\ee
All of this was essentially assuming just global hyperbolicity for the bulk spacetime --- no field equations, energy conditions, or boundary conditions were invoked.

Starting in this section, we specialize to ``standard'' classical holographic spacetimes, in other words ones in which the metric obeys AdS boundary conditions as well as the null energy (or curvature) condition (NEC). In such spacetimes, the maximin quantity $S_-$ equals the area $\SHRT=S(A:B)$ of the HRT (or entanglement wedge cross-section, EWCS) surface $\HRT$ \cite{Wall:2012uf,paper1}. In subsection \ref{sec:HRTsurface}, we will show that both inequalities in \eqref{Srelation} are saturated, so $\Sc$ and $\Splus$ also equal $\SHRT$. Altogether, counting the two new formulas for $S_-$ and two formulas for $S_+$ given in subsection \ref{sec:maximin}, as well as the six formulas for $\Sc$ shown in \eqref{diagram2}, we have therefore provided ten new covariant formulas for the holographic EE. In subsection \ref{sec:maxVflows}, we will focus on the V-flow and U-flow formulas of section \ref{sec:flows}, and see what the optimal flows look like. Then, in subsection \ref{sec:multiple}, we will generalize our results to multiple boundary regions.

\subsection{Relation to the HRT formula}
\label{sec:HRTsurface}

To orient the reader, we start by previewing the logic of the argument. There are two crucial features of the holographic setup which allow us to saturate both inequalities in \eqref{Srelation} and thereby collapse the maximin, minimax, and convex-relaxed values into one quantity.  The first is the NEC which ensures that each null normal congruence from any extremal surface has non-positive and decreasing expansion.  The second is less familiar, involving AdS boundary conditions in the more general context of allowing end-of-the-world branes comprising $\I^0$.  Together, these ensure that within the time-sheet formed by the bulk part of the entanglement wedge boundary, the HRT surface is area-maximizing within its relative homology class.  On the other hand, there exists a maximin Cauchy slice containing the HRT surface, on which the latter is area-minimizing within its relative homology class. Taken together, these two observations will then suffice to establish the HRT surface as a global saddle point.

To flesh this out more explicitly, it will be useful to set up a few deﬁnitions. Recall that, given a slice $\bkslice$, $\Gamma_{\bkslice}$ (defined below \eqref{maximin}) is the set of surfaces $\surf$ in $\sigma$ homologous to $A_\bkslice:=D(A) \cap \bkslice$ (relative to $\eowsurf_\bkslice:=\I^0\cap\bkslice$); and, given a time-sheet $\ts$ homologous to $D(A)$ (relative to $\I$), $\Gamma_\ts$ (defined at the end of section \ref{sec:setup}) is the set of surfaces of the form $\ts\cap\bkslice$ for some slice $\bkslice$. The union of the former over all slices, or equivalently the union of the latter over all time-sheets, defines the full set of surfaces that are spacelike-homologous to $A$, which we call $\Gamma$:
\be
\Gamma:=\bigcup_{\bkslice\in\bksliceset}\Gamma_\bkslice = \bigcup_{\ts\in\tsset}\Gamma_\ts = \left\{\bkslice\cap\ts|\,\bkslice\in\bksliceset,\ts\in\tsset\right\}.
\ee
We denote by $\Gamma_{\rm ext}$ the set of surfaces in $\Gamma$ that are extremal.\footnote{\, By an extremal surface we mean one that extremizes the area with respect to all variations in position, including those that move the intersection with $\I^0$ (if any). Such a surface has vanishing mean curvature vector and intersects $\I^0$ orthogonally.} We then have
\be\label{originalHRT}
\SHRT:=\inf_{\surf\in\Gamma_{\rm ext}}\area(\surf)\,.
\ee
The minimizer (or any minimizer, if there is more than one) is the (or an) HRT/EWCS surface $\HRT$.

It will also be useful to slightly expand the definition of a time-sheet to allow null pieces. Specifically, we write $\bar\tsset$ for the set of hypersurfaces that are piecewise timelike or null and homologous to $D(A)$ (whereas $\tsset$ only includes piecewise timelike hypersurfaces). Since any element of $\bar\tsset$ is a limit of (piecewise) timelike hypersurfaces, this does not change the infimum in the minimax formula:
\be
S_+=\infp_{\ts\in\bar\tsset}\ \sup_{\surf\in\Gamma_\ts}\area(\surf)\,.
\ee

\begin{figure}[tbp]
\centering
\includegraphics[width=0.45\textwidth]{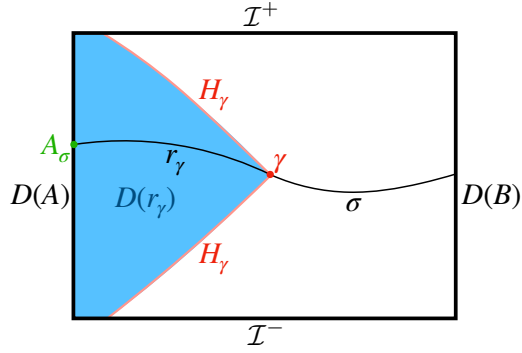}\caption{\label{fig:holoST}
Illustration of geometric constructs associated to an element $\surf\in\Gamma_{\rm ext}$, an extremal surface spacelike-homologous to $A$, described in item \eqref{gammafacts}: $\bkslice$ is a slice containing $\surf$; $A_\bkslice:=\bkslice\cap D(A)$; $r_\surf$ is the homology region on $\sigma$ interpolating between $\surf$ and $A_\bkslice$; $D(r_\surf)$ is the bulk domain of dependence of $r_\surf$; and the null surface $H_\surf$ is the bulk part of the boundary of $D(r_\surf)$. The following general facts about these constructs are important: $D(r_\surf)$ and $H_\surf$ are independent of the choice of $\bkslice$; $H_\surf$ is homologous to $D(A)$ (relative to $\I$); and $\surf$ is maximal on $H_\surf$.
}
\end{figure}

Two important sets of facts about surfaces in $\Gamma_{\rm ext}$, hinging on AdS boundary conditions and the NEC are the basis for what follows (proofs and discussion can be found in \cite{Wall:2012uf,Marolf:2019bgj,Grado-White:2020wlb,paper1}):
\renewcommand\labelenumi{(\theenumi)}
\begin{enumerate}
\item\label{gammafacts} Given any surface $\surf\in\Gamma_{\rm ext}$, let $\bkslice$ be a slice containing $\surf$, $r_\surf\subset\bkslice$ the homology region on $\bkslice$ between $\surf$ and 
$A_\bkslice$, $D(r_\surf)$ its causal domain, and $H_\surf$ the bulk part of the boundary of $D(r_\surf)$, i.e.\ $H_\surf:=\partial D(r_\surf)\setminus(\I\cup\N)$. $D(r_\surf)$ and $H_\surf$ are independent of the choice of $\bkslice$, and furthermore:
\begin{enumerate}
\item $H_\surf$ is made up of congruences of null geodesics shot orthogonally from $\surf$ toward the past and future in the direction of $D(A)$.\footnote{\, 
This requires, in addition to AdS boundary conditions and NEC, the assumption that $\I^0$ (if non-empty) is ``well-behaved'', in the sense that while generators of $H_\surf$ may fall into $\I^0$, they never emerge from $\I^0$. (On the other hand, in the case $\I^0=\surf^0$ where the above-mentioned energy conditions don't enter, any generators of  $\partial D(r_\surf)$ that emanate from this $\surf^0$ are part of $\I$ and hence not included in $H_\surf$.)
} Since $\surf$ is extremal, the expansion $\theta$ on these congruences is initially zero, and by virtue of the NEC and Raychaudhuri and Einstein equations, $\dot\theta\le0$. Therefore, $\surf$ has maximal area among surfaces in $\Gamma$ contained in $H_\surf$.\footnote{\, 
A-priori, if one just arbitrarily extends $\surf^0$ in a time direction and declares it to be $\I^0$, one would have to worry that $H_\surf$ contains future light cone from $\surf \cap \I^0$, i.e.\ that $\partial I^+(\bkslice \setminus r_\surf) \setminus (\I\cup\N)$ does in fact contain some generators emanating from $\I^0$, in which case $H_\surf$ would have an expanding part, and it would no longer follow that $\surf$ is maximal on $H_\surf$.  However, this contingency is excluded by the previous footnote.  The underlying assumption that generators of $H_\surf$ can't emanate from $\I^0$ is therefore a strong (and crucial) one.
}
\item $D(r_\surf)\cap\N=D(A)$, implying that $H_\surf$ is homologous to $D(A)$ (relative to $\I$). Since $H_\surf$ is null, $H_\surf\in\bar\tsset$.
\end{enumerate}
An important special case is the HRT surface $\HRT$, for which $D(r_\HRT) =: \ew(A)$ is the entanglement wedge and $H_\HRT=:\hor(A)$ is the entanglement horizon. However, it is important that properties (a) and (b) hold for any surface $\surf\in\Gamma_{\rm ext}$, not just the HRT surface. See figure \ref{fig:holoST} for a sketch illustrating these constructs.
\item The sup and inf in the maximin formula \eqref{maximin} are attained, and the maximin surface $\surf_-$ is the HRT surface, $\surf_-=\HRT$, so $\Sminus=\SHRT$. The hard parts here are showing that the sup is attained and that the maximin surface is extremal (which naively follows from the fact that it is extremal with respect to variations in both the space and time normal directions; however there are subtleties to this argument). The fact that $\surf_-$ is minimal among extremal surfaces is argued as follows. For $\surf\in\Gamma_{\rm ext}$, let $\surf':=H_{\surf}\cap\bkslice_-$ (where $\bkslice_-$ is a maximin slice). Then by the fact that $\surf$ is maximal on $H_{\surf}$, $\area(\surf)\ge\area(\surf')$; and by the fact that $\surf_-$ is minimal on $\bkslice_-$, $\area(\surf')\ge\area(\surf_-)$.
\end{enumerate}

The fact that $\HRT$ is, by (1), maximal on the hypersurface $\hor(A)$ and, by (2), minimal on the slice $\bkslice_-$, makes it a global saddle point, establishing that it is not only the maximin but also the minimax surface: $\surf_-=\surf_+=\HRT$ and $\Sminus=\Splus=\SHRT$ (see \eqref{globalsaddle}). We can also make the argument a slightly different way: by property (1), any surface $\surf\in\Gamma_{\rm ext}$ is maximal on the hypersurface $H_\surf$, so by \eqref{originalHRT} $\Splus\le \SHRT$; the min-max inequality $\Sminus\le\Splus$ and the fact $\Sminus=\SHRT$ then imply the equality of all three. Finally, since $\Sc$ is caught between $\Sminus$ and $\Splus$, we have
\be\label{allequal}
\Sminus=\Sc=\Splus=\SHRT\,,
\ee
Eq.\ \eqref{allequal} opens the door to using the flow- and thread-based convex programs to calculate the HRT area.

One way to think about the collapse of the three quantities $S_-$, $\Sc$, and $S_+$ into a single quantity is that the contents of sections \ref{sec:relaxation}--\ref{sec:threads} are essentially ``kinematics'' --- things that follow just from having a (globally hyperbolic) Lorentzian spacetime. The ``dynamics'' that collapses those three quantities into a single one stems from the properties of the entanglement horizon, which require field equations, energy conditions, and boundary conditions.

\subsection{Optimal flows}
\label{sec:maxVflows}

In this subsection we will describe solutions, or optimal configurations, for the V-flow and U-flow formulas \eqref{maxVflow}, \eqref{Uflowprog}. The solutions for the other formulas can be deduced from these. As we will see, the optimal configurations are strongly constrained by the properties of holographic spacetimes. These constraints are simplest to describe in the generic case, by which we mean that (1) $\HRT$ is the unique minimal surface on some slice $\bkslice_-$, and (2) $\dot\theta<0$ on both the future and past branches of the entanglement horizon $\hor(A)$, implying that $\HRT$ is the unique maximal-area acausal surface in $\hor(A)$.\footnote{\, A necessary and sufficient condition for $\dot\theta<0$ everywhere on $\hor(A)$ is for the following inequality to hold at every point on $\HRT$ and for both null normals $k^\mu$:
    \begin{equation}\label{generic}
k^\mu k^\nu(R_{\mu\nu}+K_{\mu\lambda\rho}{K_\nu}^{\lambda\rho})>0\,,
    \end{equation}
    where $K_{\mu\lambda\rho}$ is the extrinsic curvature tensor of $\HRT$. Here we used the fact that the initial shear of the congruence is $\bdyslice_{\lambda\rho}=k^\mu K_{\mu\lambda\rho}$.}

We start with the V-flow case. In order for the flux of $V$ through $D(A)$, and therefore through $\hor(A)$, to equal the area of $\HRT$, all of that flux must pass through $\HRT$, and there must be no flux elsewhere on $\hor(A)$; in other words, the flux $*V|_{\hor(A)}$ must be proportional to a delta-function supported on $\HRT$. The same reasoning applies, of course, to $\hor(B)$; thus, the flux squeezes from $\ew(A)$ into $\ew(B)$ through $\HRT$, without passing through its past or future $I^\pm(\HRT)$.

\begin{figure}[tbp]
\centering
\includegraphics[width=0.85\textwidth]{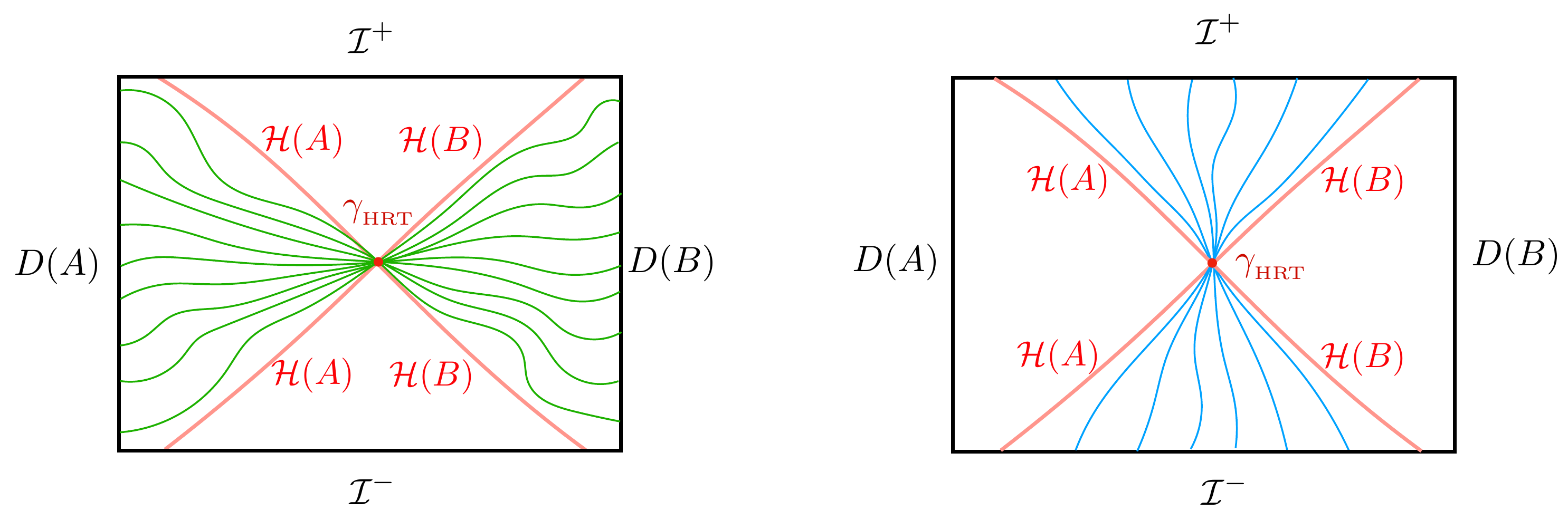}
\caption{\label{fig:maxVUflow}
Cross section of generic max V-flow \figL\ and min U-flow \figR. The max V-flow squeezes through the HRT surface, being excluded from its past and future. The min U-flow also squeezes through the HRT surface, being excluded from the entanglement wedges.  Notice that while the V-flow can have somewhere-timelike flow lines, the U-flow lines cannot be spacelike anywhere.}
\end{figure}

The function $\phi$ that witnesses the norm bound on $V$ is also strongly constrained. For a thin ribbon $\ts$ that follows a timelike curve starting at $x$ and ending at $y$, with constant spatial area $a$, the same reasoning that gave \eqref{fluxbound0} implies
\begin{equation}\label{newfluxbound}
\int_\T*V\le a\, (\phi(y)-\phi(x))\,.
\end{equation}
If the ribbon passes through $\HRT$, spatially parallel to it, then the flux through it equals $a$, so $\phi(y)-\phi(x)\ge1$. But given that $|\phi|\le1/2$ everywhere in $\bar\M$, this implies
\begin{equation}\label{phisol}
    \phi|_{I^\pm(\HRT)}=\pm\frac{1}{2}\,.
\end{equation}
In other words, all of the level sets of $\phi$ pass through $\HRT$. Given the condition $d\phi\pm V\in \fdc$, $V$ must vanish in any region where $\phi$ is constant, so \eqref{phisol} implies
\begin{equation}\label{Vsol2}
    V|_{I^\pm(\HRT)}=0\,.
\end{equation}
Although we already showed that none of the flux from $D(A)$ to $D(B)$ passes through $I^\pm(\HRT)$ (since it all passes through $\HRT$), \eqref{phisol} implies that there can be no stray flux circulating in closed loops in those regions. \eqref{phisol}, \eqref{Vsol2} only constrain $V$ and $\phi$ in the regions $I^\pm(\HRT)$. In the complementary regions $\ew(A)$, $\ew(B)$, there remains enormous freedom in the solution, including not only how the flux is spread out in space --- as for Riemannian flows (or bit threads) --- but also how it is spread out in time. See figure \ref{fig:maxVUflow} for a picture of a ``typical'' max V-flow.

We want to emphasize that \eqref{phisol}, \eqref{Vsol2} are properties of \emph{solutions} to the V-flow program. There are many perfectly acceptable V-flows that do not obey those equations --- but they will not have maximal flux. The V-flow program, which in its definition knows nothing about $\HRT$, in maximizing the flux of $V$ naturally discovers not only the HRT surface $\HRT$ but also the entanglement wedges $\ew(A)$, $\ew(B)$.\footnote{\label{foot:shadows}\, 
In fact, in this regard, the full collection of maximizing flows does better than the collection of HRT surfaces anchored on all possible subregions: The latter, being smooth, generically can't probe arbitrarily close to the crossover seams typically present on the entanglement horizon. Hence only a subset of the entanglement wedge would be reachable by HRT surfaces, whereas the entire entanglement wedge is reachable by maximizing flows.
An avatar of this was seen already in the Riemannian context \cite{Freedman:2016zud}: the bit threads can penetrate `entanglement shadows', namely bulk regions inaccessible by RT surfaces.}

Optimal U-flows, which minimize the flux of $U$ through $\I^+$, have a similar structure, but ``turned on the side''.\footnote{\, 
One additional difference being that while the V-flows can have timelike parts, the U-flows cannot have spacelike parts, so in this sense the floppiness of U-flows is slightly more rigidified.}
Given a slice $\bkslice_-$ on which $\HRT$ is the unique minimal-area surface, a minimal U-flow $U$ saturates \eqref{Ufluxbound}, implying that the flux $*U|_{\bkslice_-}$ is a delta-function supported on $\HRT$. The conditions \eqref{Uflowdef1} and $U\in \fdc$ then imply that $U$ vanishes everywhere in the causal domain of $\bkslice_-\setminus\HRT$, which is to say the union of the entanglement wedges $\ew(A)$, $\ew(B)$, or more precisely their interiors:
\begin{equation}\label{minUflowcond1}
    U|_{\text{int}\ew(A)\cup \text{int}\ew(B)}=0\,.
 \end{equation}
At any point where $U=0$, $d\psi=0$ as well, so $\psi$ is constant on $\ew(A)$ and $\ew(B)$. Given the boundary conditions on $\psi$, we have
\begin{equation}\label{minUflowcond2}
\psi|_{\ew(A)}=-\frac{1}{2}\,,\qquad\psi|_{\ew(B)}=\frac{1}{2}\,.
\end{equation}
See figure \ref{fig:maxVUflow} for a typical min U-flow. As for V-flows, the U-flow program does not know, in its definition, where the entanglement wedges are, and a non-minimizing U-flow need not obey \eqref{minUflowcond1}, \eqref{minUflowcond2}. Rather, in minimizing the flux of $U$ the U-flow program naturally discovers the HRT surface and entanglement wedges.

Among the optimal V-flows are slice-flows, where the slice is any maximin slice. Among the optimal U-flows are time-sheet-flows, where the time-sheet for example is the entanglement horizon $\hor(A)$.

\subsection{Multiple regions}
\label{sec:multiple}

So far, in most of the paper (except for section \ref{sec:subadditivity}), we have for clarity considered a single boundary region $A$ and its complement $B:=A^c$.  However, as explained in section \ref{sec:subadditivity}, all our constructions can be generalized to the case of multiple regions of interest $A,B,C,\ldots$ sharing a boundary Cauchy slice. We assume these regions are separated by buffers, and take our regulated spacetime $\M$ to be the joint entanglement wedge of all these regions $\W(ABC\cdots)$. We also define the complement of any given region to consist of the union of all the other ones, $A^c=BC\cdots$ etc.  In this context it is interesting to ask when can we utilize a single optimal flow configuration --- either U- or V-flow --- for multiple regions, and what would such a flow look like.

Let us first recall the Riemannian case.  In the RT (min cut) formulation, the inclusion-exclusion argument of \cite{Headrick:2013zda} shows that homology regions are monotonic under inclusion, hence that RT surfaces for nested regions (say $A$ and $AB$) do not intersect transversely (though they can be coincident on connected components). As shown in \cite{Headrick:2017ucz}, strong duality relates the nesting property for homology regions to nesting property for flows.  In the latter case, this states that there exists a flow $v$ that simultaneously maximizes on any collection of nested regions.  On the other hand, when RT surfaces do intersect transversely (say for regions $AB$ and $BC$), then it is clear that there cannot exist any flow simultaneously maximizing on both crossing regions.  This is because at the intersection of the RT surfaces, the normals point in different directions, while a single maximizing flow has to be of unit norm and normal to the RT surfaces everywhere.

\begin{figure}[tbp]
\centering
\includegraphics[width=0.9\textwidth]{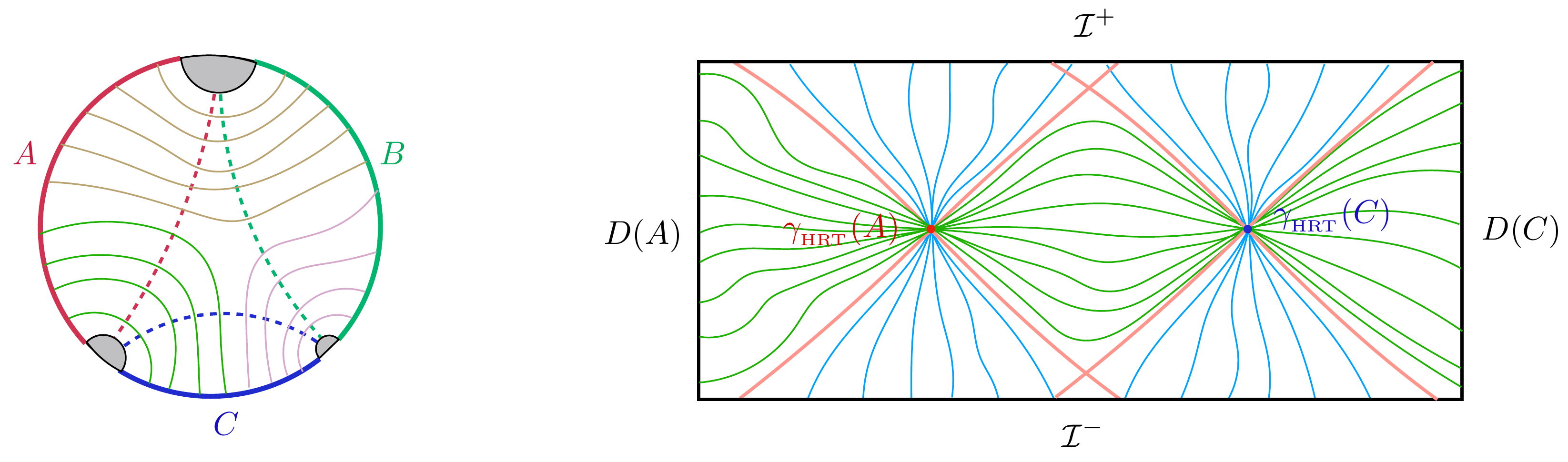}
\caption{\label{fig:nestingxs}
\figL:
Illustration of a max V-flow for nested regions $A$ and $AB$ on the Poincare disk.  The dashed curves indicate the HRT surfaces for the respective regions, and the solid thin curves indicate the V-flow lines, color-coded by which regions they connect.  
\figR:
To illustrate the temporal behavior of the flow lines for a generic max V-flow (green) as well as a generic min U-flow (blue), we take a cross-section of the left diagram between $A$ and $C$, so that only the $A-C$ V-flow lines (green in the left figure) are visible.
}
\end{figure}

Let us now return to the Lorentzian case.  In the holographic context, entanglement wedge nesting applies whenever the NEC is satisfied \cite{Wall:2012uf}.  In other words, for nested boundary regions $A$ and $AB$, the entanglement wedges are nested, which automatically implies that the respective HRT surfaces $\HRT(A)$ and $\HRT(AB)$ cannot be timelike-separated, and in fact lie on a common maximin slice. Hence there exists a V-flow configuration (namely a slice flow) which passes through both HRT surfaces. A generic maximizing V-flow would be dispersed everywhere outside the future and past of the two HRT surfaces $\HRT(A)$, $\HRT(AB)$ (and collimated through both of them), in other words within the following spacetime region:
\be
\W(A) \cup \W((AB)^c)\cup (W(AB)\cap W(A^c))\,.
\ee
Similarly, one can combine the minimizing U-flows in the complementary regions, likewise collimated through both HRT surfaces.  However, whereas a single V-flow line can pass through both HRT surfaces, a single U-flow line must pass through exactly one of them. This is illustrated in figure \ref{fig:nestingxs}.

Now what about crossing regions, e.g.\ $AB$ and $BC$?  Since HRT surfaces are bulk codimension-2, two HRT surfaces anchored on crossing regions will generically not intersect.  If they do, the above argument for the Riemannian case immediately generalizes to the present context.  However, when the HRT surfaces $\HRT(AB)$ and $\HRT(BC)$ do not intersect transversely, they must be timelike-separated somewhere, which means they would violate the V-norm bound.


\section{Embedding in the full spacetime \& removing the regulator}

In subsection \ref{sec:EWCS}, we explained that, by focusing on the EWCS (either as a regulator for the EE or for its own sake), we could work entirely in the joint $AB$ entanglement wedge, discarding the rest of the original spacetime. But, in the other direction, we can also easily translate the various new formulas for the EWCS that we've developed in this paper back to the original spacetime. Here we will focus on three formulas in particular: minimax, max V-flow, and min U-flow; the others follow the same pattern, and we leave it to the reader to work out the details. We can even remove the regulator entirely, and define max V-flows and min U-flows when the EE is infinite.

\subsection{Embedding in the full spacetime} 
\label{sec:embedding}

\begin{figure}[tbp]
\centering
\includegraphics[width=0.45\textwidth]{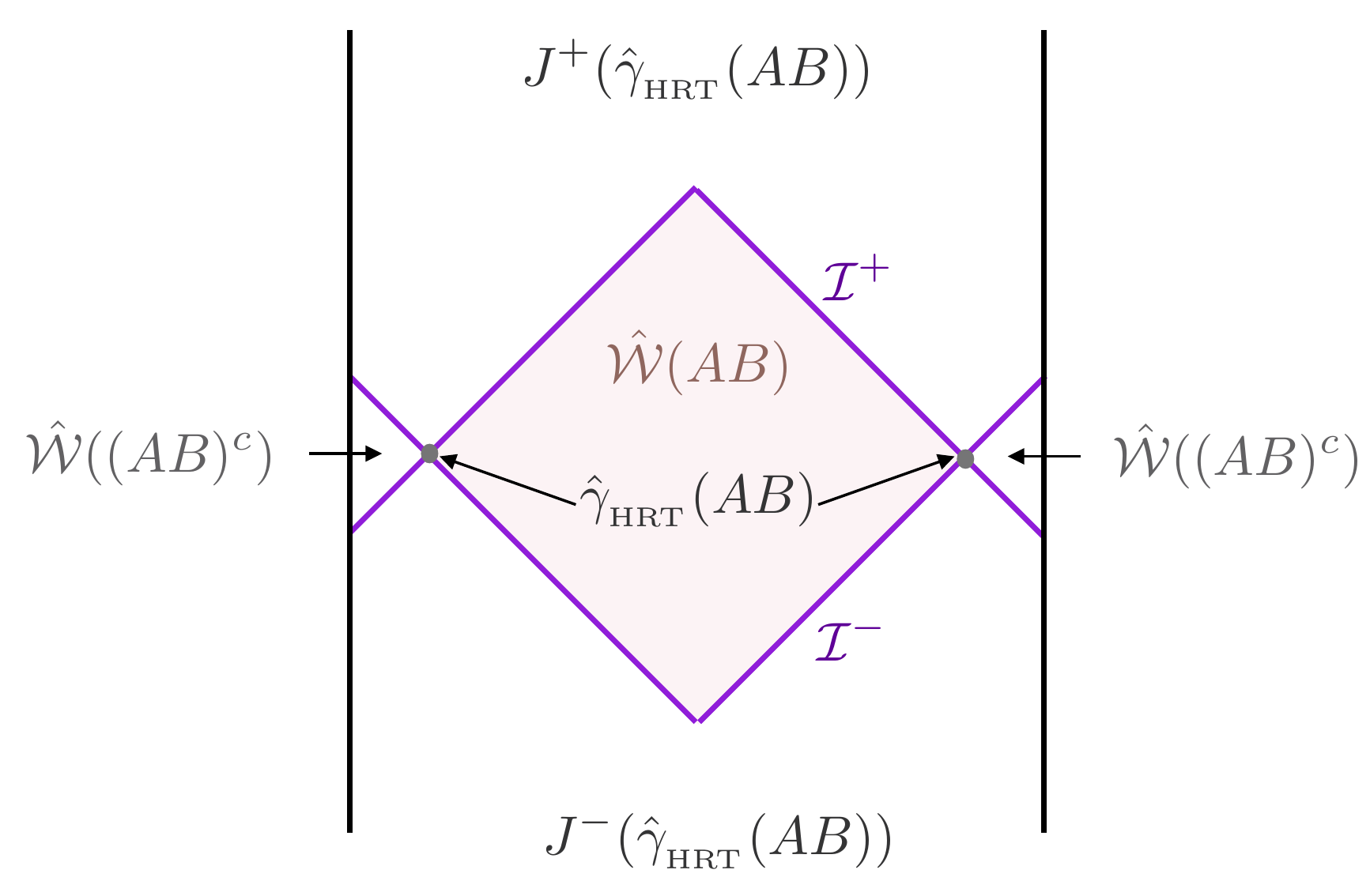}
\caption{\label{fig:embedST}
Schematic illustration of the full spacetime $\hat{\M}$, partitioned into 4 causal domains specified by $\hHRT(AB)$, so that the regulated spacetime (shaded) is $\M = \hat\W(AB)$.  This is to be visualized as a cross-section of AdS (cf.\ right panel of figure \ref{fig:ewcsreg}, but rotated by 90$^\circ$, with $A$ in front and $B$ in back).
}
\end{figure}

Let us first set up some notation for this purpose. Let $\hat\M$ be the original holographic spacetime containing $\M$, with future/past boundaries $\hat\I^\pm$ (which may be singularities or at infinite time) and timelike spatial boundary (end-of-the-world brane) $\hat\I^0$ (which may be empty); define $\hat\I:=\hat\I^+\cup\hat\I^-\cup\hat\I^0$. (In general we use hatted symbols to denote geometric constructs in the original, unregulated spacetime.) Let $\hHRT(AB)$ be the HRT surface in $\hat\M$ for $AB$. We assume that $\hat\M$ obeys the null energy condition and asymptotically AdS boundary conditions, so it is decomposed along the $AB$ entanglement horizons $\hat\hor(AB)$, $\hat\hor((AB)^c)$ into four spacetime regions: the entanglement wedge of $AB$, which is $\M$; the complementary entanglement wedge $\hat{\W}((AB)^c)$; and $J^\pm(\hHRT(AB))$. (Cf.\ figure \ref{fig:embedST}.)
The future/past boundary $\I^\pm$ of $\bar\M$ is made up of the future/past entanglement horizons $\hat\hor^\pm(AB)$ possibly along with part or all of $\hat\I^\pm$.

Everything we discuss below can be done for multiple boundary regions $A,B,\ldots$, as discussed in subsection \ref{sec:multiple}, provided of course that they lie on a common boundary Cauchy slice. Then the joint entanglement wedge $\hat\W(AB)$ is replaced by the entanglement wedge of their union $\hat\W(AB\cdots)$, and similarly for their joint HRT surface.

\paragraph{Minimax}
Here we start from the minimax formula in the regulated spacetime \eqref{minimax4}, which we repeat here for convenience:
\begin{equation}\label{minimax5}
\Splus= 
\infp_{\ts\in\tsset}\ 
\sup_{\substack{\surf\subset\ts\\ \text{achronal}}}
\area(\surf)\,.
\end{equation}
Any time-sheet $\ts\in\tsset$ may be extended to a piecewise timelike hypersurface $\hat\ts$ in $\hat\M$ that is homologous to $D(A)$ relative to $\partial\hat\M\setminus(D(A)\cup D(B))$. Conversely, any such hypersurface, when restricted to $\M$, is in $\tsset$. We therefore define $\hat\tsset$ as the set of such hypersurfaces. Within such a hypersurface, we want to maximize the area of an achronal surface contained in $\W(AB)$:
\be\label{minimaxfullspacetime}
S_{\rm HRT} = \inf_{\hat\ts\in\hat\tsset}
\sup_{\substack{\surf\subset\ts\cap\W(AB)\\ \text{achronal}}}\area(\surf)\,.
\ee
(The condition that $\hat\ts$ is piecewise timelike outside of $\hat\W(AB)$ is actually unnecessary, since we are only measuring the area of a surface within $\hat\W(AB)$.)

\paragraph{V-flows} 
Starting with the definition \eqref{Vflowdef1}, \eqref{Vnormbound1} of a V-flow in the regulated spacetime, we can define a V-flow in $\hat\M$. Any slice for $\hat\M$ that contains $\hHRT(AB)$ becomes a slice for $\bar\M$ when restricted to it. We therefore define $\hat\bksliceset$ as the set of slices for $\hat\M$ containing $\hHRT(AB)$. Its relaxation $\hat\bksliceset_{\rm c}$ is then given by the following set of functions:  
\be\label{Schatdef}
\hat\bksliceset_{\rm c} := \bigg\{\phi:\hat\M\to\left[-\frac12,\frac12\right]
\quad \left| \quad
\phi|_{J^\pm(\hHRT(AB))}=\phi|_{\hat\I^\pm}=\pm\frac12\,,\,d\phi\in \fdc\right\} \ .
\ee
We can in fact be far more parsimonious in restricting $\phi$.
Note that, given the conditions $d\phi\in \fdc$ and $\phi|_{\hat\I^\pm}=\pm1/2$, the condition $\phi|_{J^\pm(\hHRT(AB))}=\pm1/2$ follows just from requiring that $\phi=-1/2$ infinitesimally to the past of $\hHRT(AB)$ and $\phi=+1/2$ infinitesimally to the future of $\hHRT(AB)$. 
We can then rewrite \eqref{Schatdef} as
\be\label{Schatdef2}
\hat\bksliceset_{\rm c}  = \bigg\{\phi:\hat\M\to\left[-\frac12,\frac12\right]
\quad \left| \quad
\phi|_{\HRT(AB)^\pm}=\phi|_{\hat\I^\pm}=\pm\frac12\,,\,d\phi\in \fdc\right\},
\ee
where $\hHRT(AB)^\pm$ are infinitesimally to the future and past of $\hHRT(AB)$.

We then define a V-flow on $\hat\M$ as a 1-form $V$ obeying
\be\label{Vflowdefhat}
    d{*V}=0\,,\qquad
    *V|_{\hat\I}=0
    \ee
    \be
    \label{Vnormboundhat}
\exists\,\phi\in \hat\bksliceset_{\rm c}
\text{ s.t. }        d\phi\pm V\in \fdc   \,.
\ee
Let $\hat\F$ be the set of such 1-forms.

A V-flow $V\in\F$ can be extended to one in $\hat\F$ simply by setting $V=0$ on $\hat\M\setminus\M$. The resulting 1-form on $\hat\M$ is divergenceless and obeys $*V|_{\hat\I^\pm}=0$ by virtue of the no-flux boundary condition on $\I^\pm$. The witness function $\phi\in\bksliceset_{\rm c}$ can similarly be extended to one in $\hat\bksliceset_{\rm c}$ by setting it equal to $\pm1/2$ on $J^\pm(\hHRT(AB))$ and to any function obeying $d\phi\in\fdc$ on the complementary entanglement wedge $\hat\W((AB)^c)$.

Conversely, any $V\in\hat\F$, when restricted to $\M$, gives a $V\in\F$. To show this, we restrict $\phi\in\hat\bksliceset_{\rm c}$ to $\M$ to obtain $\phi\in\bksliceset_{\rm c}$, and similarly restrict $V$ to $\bar\M$. The only thing that needs to be checked is the boundary condition $*V|_{\I^\pm}=0$. As mentioned above, $\I^\pm$ is made up of $\hat\hor(AB)$ together with (part or all of) $\hat\I^\pm$. We already have the boundary condition on $\hat\I^\pm$ from \eqref{Vflowdefhat}, and the boundary condition on $\hat\hor(AB)$ is enforced by the condition $d\phi\pm V\in \fdc$, given that $d \phi=0$ in $J^\pm(\hHRT(AB))$.

These mappings between $\F$ and $\hat\F$ do not change the flux on $D(A)$, so we have
\begin{equation}\label{Vfullspacetime}
\SHRT = \sup_{V\in\F}\ \int_{D(A)}*V= \sup_{V\in\hat\F}\ \int_{D(A)}*V\,.
\end{equation}

\paragraph{U-flows}
Starting from the definition \eqref{Uflowdef1}, \eqref{Uflowdef2} of a U-flow in the regulated spacetime, we can also easily define a U-flow on $\hat\M$. First we define the relaxed time-sheet set $\hat\tsset_{\rm c}$ as any function $\psi:\hat\M\to[-1/2,1/2]$ that, when restricted to $\bar\M$, is an element of $\tsset_{\rm c}$ (as defined in \eqref{psicond2}):
\begin{equation}\label{psicondhatM}
\hat\tsset_{\rm c}:=\bigg\{\psi:\hat\M\to\left[-\frac12,\frac12\right]
\quad \left| \quad
\psi|_{D(A)}=-\frac12\,,\,
\psi|_{D(B)}=\frac12
\right\}.
\end{equation}
Imposing the condition $U\pm d\psi\in \fdc$ along with the no-flux condition on $D(A)\cup D(B)\cup\hat\I^0$ gives us the definition of a U-flow:
\be\label{Uflowdefhat}
    d{*U}=0\,,\qquad
    *U|_{D(A)}=*U|_{D(B)}=*U|_{\hat\I^0}=0
    \ee
    \be
    \label{Unormboundhat}
\exists\,\psi\in \hat\tsset_{\rm c}
\text{ s.t. }        U\pm d\psi\in \fdc   \,.
\ee
Let $\hat\G$ be the set of U-flows.

Notice that there is no boundary condition on $\psi$ or $U$ except on $D(A)$, $D(B)$, and $\hat\I^0$. In particular, the flux of $U$ is free to enter or leave the spacetime on the rest of the boundary. This means that the total flux must be measured on a slice for $\M$, or on $\I^\pm$, not on a slice for $\hat\M$.

Any U-flow on $\hat\M$, $U\in\hat\G$, can be restricted to a U-flow on $\bar\M$. Conversely, a U-flow on $\bar\M$,  $U\in\G$, can be extended to one on $\hat\M$. There are many ways to carry out this extension consistent with the conditions $d*U=0$, $U\pm d\psi\in \fdc$. Here we will present one way that is relatively simple, although the resulting 1-form is singular on the boundary $\hat\hor(AB)$ between $\M$ and  $\hat\M\setminus\bar\M$. First, we set $\psi=0$ on $\hat\M\setminus\bar\M$. As a result, the only condition on $U$ in that region is $U\in \fdc$, which is trivial to satisfy simply by extending the field lines from $\bar\M$ along any set of non-intersecting causal curves. Call the resulting 1-form on $\hat\M$ $U_0$. By construction, $U_0$ is divergenceless and obeys $U_0\pm d\psi\in \fdc$ everywhere, except possibly on $\hat\hor(AB)$. By setting $\psi=0$ in $\hat\M\setminus\bar\M$, we have potentially introduced a discontinuity in $\psi$ on $\hat\hor(AB)$. Let $\lambda$ be a local coordinate in a neighborhood of a point on $\hat\hor(AB)$, such that $\lambda=0$ on $\hat\hor(AB)$; the discontinuity implies a delta-function in $d\psi$ proportional to $\delta(\lambda)d\lambda$. Since, on $\M$, $\psi\in[-1/2,1/2]$, the largest the discontinuity can be (in absolute value) is $1/2$. So by adding the 1-form $U_1:=\pm(1/2)\delta(\lambda)d\lambda$ (where the sign is determined by the condition $U_1\in \fdc$) to $U_0$, we are guaranteed to satisfy the condition $(U_0+U_1)\pm d\psi\in \fdc$. $U_1$, however, is not divergenceless, since the transverse area element is not constant along the null generators of $\hat\hor(AB)$, and those null generators begin (on $\hHRT(AB)$) and end (on caustics and crossover seams). Again, this can be fixed by adding causal field lines in $\hat\M\setminus\bar\M$ to $U_1$. Finally, we set $U=U_0+U_1$.

The important thing about the above construction is that the 1-form $U$ is unchanged in $\M$ itself. Therefore its flux across any slice $\bkslice$ for $\M$ is unchanged. So we have
\begin{equation}\label{Ufullspacetime}
\SHRT = \inf_{U\in\G}\ \int_{\bkslice}*U= \inf_{U\in\hat\G}\ \int_{\bkslice}*U\,.
\end{equation}

\subsection{Removing the regulator}
\label{sec:removing}

The quantity $\SHRT$ computed in \eqref{minimaxfullspacetime}, \eqref{Vfullspacetime}, \eqref{Ufullspacetime} is obviously most meaningful when it is finite, which is the case when either the EE is finite (e.g.\ in a multiboundary wormhole) or we are computing a (finite) EWCS. Nonetheless, it is possible to define a max V-flow and a min U-flow in the absence of a regulator, and these can be used to compute naturally finite quantities like the mutual information of separated boundary regions.

In the limit where $AB$ covers an entire boundary slice, in other words we remove the regulating buffer separating them, the joint HRT surface $\hHRT(AB)$ retreats to the boundary and becomes the mutual entangling surface $\partial A=\partial B$. This entangling surface partitions the conformal boundary into four regions: $D(A)$, $D(B)$, $J^\pm(\partial A)$. 

\paragraph{Minimax:} In this limit, the time-sheet set $\hat\tsset$ becomes the set of piecewise-timelike hypersurfaces in $\hat\M$ homologous to $D(A)$ relative to $\hat\I^+\cup\hat\I^-\cup\hat\I^0\cup J^+(\partial A)\cup J^-(\partial A)$. Note that this homology condition forces the time-sheet $\hat\ts$ to contain $\partial A$. Furthermore, in order to ensure that the surface $\surf\subset\hat\ts$ is contained in the $AB$ entanglement wedge --- i.e.\ is excluded from the \emph{bulk} chronal future and past of $\partial A$ --- we require it to include $\partial A$:
\be
S_{\rm HRT} = \inf_{\hat\ts\in\hat\tsset}
\sup_{\substack{\partial A\subset\surf\subset\hat\ts \\ \text{achronal}}}\area(\surf)\,.
\ee
Of course, for any $\hat\ts$, the area of $\surf$ is unbounded above, so the minimax is infinite.

\paragraph{V-flows:} In this limit, the boundary condition on $\phi$ becomes $\phi|_{\hat I^\pm}=\phi|_{\partial A^\pm}=\pm1/2$:
\be\label{Vflownoreg}
\hat\bksliceset_{\rm c} := \bigg\{\phi:\hat\M\to\left[-\frac12,\frac12\right]
\quad \left| \quad
\phi|_{\hat I^\pm}=\phi|_{\partial A^\pm}=\pm\frac12\,,\,d\phi\in \fdc\right\}.
\ee
In view of the condition $d\phi\in \fdc$, the condition $\phi|_{\partial A^\pm}=\pm1/2$ is equivalent to $\phi|_{J^\pm(\partial A)}=\pm1/2$. The rest of the V-flow definition \eqref{Vflowdefhat}, \eqref{Vnormboundhat} remains unchanged. 

For multiple boundary regions, the entangling surface $\partial A$ in \eqref{Vflownoreg} is replaced by the union of all the entangling surfaces. Thus, as in the regulated spacetime, the definition of a V-flow is uniform for all the regions and groupings thereof.

Despite the fact that $\SHRT$, and therefore the maximal flux of $V$, is infinite, it is nonetheless possible to define the notion of a \emph{max V-flow}. We say that a 1-form $\Delta V$ is an \emph{augmentation} of the V-flow $V$ if it has positive flux, $\int_{D(A)}*\Delta V>0$, and $V+\Delta V$ is a V-flow. $V$ is a max V-flow if it does not admit an augmentation. Clearly when the flux is finite this definition agrees with the usual one --- a V-flow with maximal flux --- but it also makes sense when the flux is infinite. We can use it, for example to define the mutual information between separated regions, without ever introducing a regulator at an intermediate step; see \cite{Freedman:2016zud}.

\paragraph{U-flows:} The U-flow definition \eqref{psicondhatM}, \eqref{Uflowdefhat}, \eqref{Unormboundhat} is unchanged in this limit. We can measure the flux of $U$ through any slice $\bkslice$ for $\hat\M$ anchored on the entangling surface $\partial A$. Note that, in this limit, the no-flux boundary conditions $*U|_{D(A)}=*U|_{D(B)}$ does not preclude flux from entering or leaving the bulk through $\partial A$ itself; in particular, for a delta-function valued 1-form such as $U_1$, a finite amount of flux may enter or leave through $\partial A$, and this would not be counted in the flux through $\bkslice$. Since, with the regulator removed, $\SHRT$ diverges, every U-flow must have infinite flux. As with the V-flows, we can nonetheless define the notion of a \emph{min U-flow} as a U-flow $U$ that does not admit a \emph{diminution}, i.e.\ a 1-form $\Delta U$ such that $U+\Delta U$ is a U-flow and $\int_\bkslice*U<0$.

\section{Discussion \& future directions}
\label{sec:discussion}

In this paper, we have derived several new dual formulations of the HRT holographic EE formula. Here we will briefly discuss their notable features and possible applications, as well as possible avenues for further investigation.

First, the V-flow (or V-thread) and U-flow (or U-thread) formulas are ``fully covariant'', in the sense of depending explicitly only on the boundary domain of dependence $D(A)$, and making no reference at any step to any boundary or bulk slices or other extraneous geometrical structure. They automatically and naturally incorporate the spacelike-homology condition on the HRT surface, and both the surface and entanglement wedge fall out naturally from the optimization.

It is also notable that, unlike the original HRT and maximin formulas, these V- and U-flow formulas involve pure maximization and minimization respectively. Furthermore, they define convex programs. Convexity affords a tremendous technical advantage for many purposes. For example, it is plausible that these programs may be the basis for efficient numerical methods for computing EEs in time-dependent backgrounds. Possessing a dual pair of convex programs is even better. Whether the convexity reflects something physical about holographic entanglement, or is merely a useful rewriting of a fundamentally non-convex formula, remains to be seen.

Another possible set of applications of the new flow formulas, as well as the new minimax formula, has to do with properties of holographic EEs. For example, holographic entropies in the static setting are subject to a known, infinite set of inequalities, going beyond those obeyed by general quantum states and defining the so-called RT entropy cone \cite{Bao:2015bfa}. One of these (monogamy of mutual information \cite{Hayden:2011ag}) has been proven to hold covariantly using the maximin formula \cite{Wall:2012uf}. The rest have been proven to hold covariantly only in the case of $2+1$ dimensional bulk \cite{Czech:2019lps}, although the proto-entropy formulation developed in \cite{Hubeny:2018trv,Hubeny:2018ijt} strongly suggests that they should hold universally. The new formulas may be useful for establishing these and other properties of holographic EEs.

More generally, it would be interesting to explore the conceptual implications of the new formulas, and whether they contain some message about the relation between entanglement and spacetime geometry. As discussed in the introduction, there is a sense in which the V-threads may be thought of as Bell pairs in a distillation of the state. Conversely, the U-threads may in some sense be disentanglers; in this view, the EE counts the minimum number of distentanglers required to completely factorize the state, or equivalently the number of operations required to form the state out of an unentangled state (in the sense of entanglement of formation).

Another possible conceptual application concerns tensor network toy models of holography; since the Riemannian bit threads naturally live on a standard tensor network (see e.g.\ \cite{Chen:2018ywy}), the covariant threads, both V and U, may help in understanding how to incorporate time into tensor networks. Similar comments apply to the relation between threads and holographic quantum error correcting codes; see for example \cite{Harlow:2016vwg}.

A more technical application would be to the problem of bulk reconstruction. Specifically, it seems likely that the bulk metric can be reconstructed from the set of all allowed thread configurations. (This was explored in the static case in \cite{Freedman:2016zud}.) An advantage over the HRT surfaces is that the threads probe so-called entanglement shadows, spacetime regions not touched by any HRT surface. Furthermore, it should be possible to understand the relation between entanglement and the Einstein equation \cite{Lashkari:2013koa,Faulkner:2013ica,Swingle:2014uza} in the language of threads, as was done for Riemannian threads in \cite{Agon:2020mvu}. (See footnotes \ref{foot:reconstruction}, \ref{foot:shadows} for related comments.)

It is curious that even in the generalized (non-holographic) setting, subadditivity was a property of the convex-relaxed quantity $\Sc$, despite the setup not affording a theory wherein to identify this quantity with an EE.  A natural question is whether strong subadditivity, and even the higher holographic inequalities, are likewise obeyed by $\Sc$. 
This is closely related to the question of whether the V- and U-flows obey the nesting property and admit max/min multiflows.
Since our arguments for subadditivity crucially relied on convexity, it would also be interesting to explore what inequalities, if any, are obeyed by the non-convex maximin $\Sminus$ and minimax $\Splus$ quantities (again, in the non-holographic setting where $\Sc$, $\Splus$, $\Sminus$ are all distinct).

There are several clear directions for generalizing the work in this paper. Generalizations of Riemannian bit threads to include bulk quantum and higher-curvature corrections have already been explored \cite{Harper:2018sdd,Rolph:2021hgz,Agon:2021tia}; it is natural to do so also for the covariant threads. The quantum corrections in particular should provide a thread picture for the Page curve of an evaporating black hole derived from the quantum extremal surface formula \cite{Engelhardt:2014gca,Almheiri:2019psf,Penington:2019npb}. Another direction would be to covariantize the various weaker norm bounds for multiflows considered in \cite{Headrick:2020gyq}, as well as thread and hyperthread constructions for multipartite entanglement measures \cite{Harper:2019lff,Harper:2020wad,Lin:2020yzf,Lin:2021hqs,Harper:2021uuq,Harper:2022sky,Lin:2022aqf}. Finally, covariantizing the bit threads opens the door to applying them in non-AdS backgrounds, such as asymptotically flat, de Sitter, and other cosmological spacetimes.\footnote{\, For some work already done in this direction, see \cite{Shaghoulian:2022fop}.}


\acknowledgments

We would like to thank the following individuals for helpful conversations and feedback on the manuscript: C. Agon, N. Engelhardt, M. Freedman, G. Grimaldi, J. Harper, J. Pedraza, M. Rangamani, A. Rolph, M. Rota, B. Stoica, H. Verlinde, E. Witten.

We would also like to thank UC Davis, the Simons Center for Geometry and Physics, the Aspen Center for Physics, the Kavli Institute for Theoretical Physics, the Perimeter Institute for Theoretical Physics, the Galileo Galilei Institute, and the MIT Center for Theoretical Physics for hospitality while this research was being carried out.

MH is supported in part by the Simons Foundation through the \emph{It from Qubit} Collaboration and by the U.S. Department of Energy grant DE-SC0009986. VH has been supported in part by the U.S. Department of Energy grant DE-SC0009999. Both authors are also supported by the U.S. Department of Energy grant DE-SC0020360 under the HEP-QIS QuantISED program.


\appendix

\section{Table of notation}
\label{sec:notation}

To assist the reader we provide here a table of the various symbols we use, their meanings, and the subsections where they are first discussed (starting in section \ref{sec:background}). For ease of orientation, whenever practical, we try to distinguish different classes of objects by different fonts.  In particular, we reserve the mathcal font ($\I,\M,\N,\ldots$) for important geometrical constructs and the mathscript font (e.g.\ $\bksliceset,\tsset,\ldots$) for sets of objects (slices, flows, etc.).  We sequester special fonts for spacetime curves, and for their measures. We also use hatted symbols to refer to constructs in the original, unregulated spacetime ($\hat\M$, $\hat\I^\pm$, etc.), and unhatted symbols to refer to constructs in the regulated spacetime.

\begin{center}
\begin{longtable}{r|l|l}
symbol & meaning & subsection\\
\hline\hline
\endhead
$\hat\M$ & original, unregulated spacetime  & \ref{sec:EWCS}\\
$\hat\I^\pm$ & future (past) boundary of $\hat\M$ & \\ 
$\hHRT(AB)$ & HRT surface of $AB$ in $\hat\M$ & \\
$\hat{\W}(AB)$ & entanglement wedge of $AB$ in $\hat\M$ & \\
$\M$, $\hat\W(AB)$     & (regulated) bulk spacetime ($AB$ entanglement wedge) & \\
$\N$     & conformal boundary of $\M$ ($D(AB)$)\\
$\I^\pm$ & future/past boundary of $\M$ \\
$\bdyslice$ &  slice for $\N$ \\
\hline
$\I^0$ & timelike boundary of $\M$ at finite distance &\ref{sec:setup} \\& \quad(end-of-the-world brane and/or $\hHRT(AB)$)   \\ 
$\I$ & $\I^+\cup\I^0\cup\I^-$ \\
$\bar\M$ & $\M\cup\N\cup\I$ \\
$\bkslice$ & (Cauchy) slice for $\bar\M$ \\
$\eowsurf_\bkslice$ & $\I^0\cap\bkslice$ \\
$\bksliceset$ & set of all  slices for $\bar\M$ \\
$A,B,(C)$ &   disjoint regions covering $\bdyslice$ \\
$D(A)$  & boundary domain of dependence of $A$ \\
$A_\bkslice$ & $D(A)\cap\bkslice$ \\
$\ts$ &  time-sheet (piecewise timelike hypersurface in $\bar\M$) \\
$\Gamma_\ts$ & set of all surfaces of the form $\bkslice\cap\ts$ for some $\bkslice\in\bksliceset$ \\
$\tsset$, $\tsset_A$ & set of all time-sheets homologous to $D(A)$ relative to $\I$ \\ \hline
$N$, $T$, $U$, \ldots &covector (field) on $\bar\M$  &\ref{sec:pointwise}\\
$T^*_x$, $T^*$ & cotangent space at a given point in $\M$ \\
$\fdc$ & set of future-directed causal covectors at a given point in $\M$ \\
$\fdt$ 
& set of future-directed timelike covectors at a given point in $\M$  \\
$\ip WX$ & $\max\{|W\cdot X|,|W\wedge X|\}$ (which we call the wedgedot)
\\ \hline
$\surf$ &   codimension-2 achronal surface in $\bar\M$ &\ref{sec:relaxation}\\ 
 $\Gamma_\bkslice$, $\Gamma_{\bkslice,A}$ & set of all surfaces in $\bkslice$ homologous to $A_\bkslice$ relative to $\surf^0_\bkslice$ \\ 
$S_-$, $S_-(A:B)$ & maximin area \\
\hline
$v$ & 1-form (covector field) on $\bkslice$ &\ref{sec:maximin}\\
$\F_\bkslice$ & set of all flows on $\bkslice$\\ 
$S_+$ & minimax area &
\\
\hline
$\phi$, $\psi$ & scalar function on $\bar\M$ & \ref{sec:convexrelaxation}\\
$\bkslice_t$ & level set of $\phi$ where $\phi=t$ \\
$\ts_s$ & level set of $\psi$ where $\psi=s$ \\
$\bksliceset_{\rm c}$ & relaxed set of slices \\
$\tsset_{\rm c}$ & relaxed set of time-sheets \\
$S_{\rm c}$ & convex maximin/minimax \\ \hline
$\F$ & set of all V-flows & \ref{sec:Vflows} \\ \hline
$\qcv$ & inextendible causal curve in $\M$; U-thread & \ref{sec:Vflowbounds} \\
$\Qset$ & set of all inextendible causal curves in $\M$ \\ \hline
$\G$ & set of all U-flows & \ref{sec:Uflows}\\ \hline
$\pcv$ & curve in $\M$ connecting $D(A)$ and $D(B)$; V-thread & \ref{sec:Uflowbounds} \\
$\Pset$ & set of all curves in $\M$ connecting $D(A)$ and $D(B)$ \\ \hline
$\Kset$ & set of pairs $(X,Y)$ of covectors at a given point in $\M$  & \ref{sec:lemmaproofs} \\
& \qquad such that $X\pm Y\in \fdc$ \\
\hline
$\muth$ & measure on $\Pset$ & \ref{sec:VUthreads} \\
$\nuth$ & measure on $\Qset$ \\ \hline
$\Gamma$ & set of surfaces spacelike-homologous to $A$& \ref{sec:HRTsurface} \\
$\Gamma_{\rm ext}$ & set of extremal surfaces spacelike-homologous to $A$ \\
$r_\surf$ & homology region on a slice $\bkslice$ interpolating between $\surf$ and $A_\bkslice$ \\
$H_\surf$ & bulk part of boundary of $D(r_\surf)$ \\
$\HRT$, $\HRT(A)$ & HRT/EWCS surface for $A$  \\
$\SHRT$, $S(A:B)$ & area of $\HRT$ \\
$\ew(A)$ & $D(r_\HRT)$, entanglement wedge of $A$ \\
$\hor(A)$ & $H_\HRT$, entanglement horizon of $A$ \\
$\surf_-$ & maximin surface \\
$\surf_+$ & minimax surface \\ \hline
$\hat\I^0$ & end-of-the-world brane in $\hat\M$ & \ref{sec:embedding}\\
$\hat\hor(AB)$ 
& entanglement horizon of $AB$ in $\hat\M$ & \\
$\hat\tsset$ & set of all time-sheets homologous to $D(A)$ in $\hat\M$ & 
\\
$\hat\bksliceset_{\rm c}$ & relaxed set of slices for $\bar\M$ in $\hat\M$ & 
\\ $\hat\F$ & set of V-flows on $\hat\M$ \\
$\hat\tsset_{\rm c}$ & relaxed set of time-sheets for $\bar\M$ in $\hat\M$  & 
\\
$\hat\G$ & set of U-flows on $\hat\M$ 
\end{longtable}
\end{center}


\section{Piecewise-linear scalars in worked example}
\label{app:piecewiselin}

In section \ref{sec:relaxation} 
we have introduced a toy model of a (non-holographic) spacetime which exemplifies that in general the maximin and minimax surface prescriptions need not coincide.  
The spacetime, illustrated in figure \ref{fig:gap},
consists of a $1+1$ dimensional patch of Minkowski spacetime times a piecewise-constant-area sphere.  This sphere has larger area $a_+$ within a timelike stripe across the spacetime and smaller area $a_-$ outside this stripe (as indicated by the coloring).  
Since all slices and timesheets encounter both the stripe and its exterior,  the area of minimal surface on each slice (and hence also when maximized over all slices) is $a_-$, while the area of a maximal surface on each timesheet (and hence also when minimized over all timesheets) is $a_+>a_-$.
On the other hand, in the fully convex-relaxed context, the minimax theorem ensures that the maximin and minimax values do coincide.
To get a better sense of the mechanism of how this happens, 
in section \ref{sec:flows}
we considered partially-relaxed V-flow and U-flow programs.  In particular, for the V-flow, we fixed a field $\phi$ growing linearly in time (corresponding to uniformly-smeared slice), and found the minimal flux of 1-form $V$ is larger than $a_-$; cf.\ \eqref{flux}.
Similarly for the U-flow, we fixed a uniformly-smeared timesheet given by a field $\psi $ growing linearly in space, and found the maximal flux of 1-form $U$ is smaller than $a_+$ (but still larger than the V-flux); cf.\ \eqref{Uhatflux}.
Even though the 1-form fields $d\phi$ and $d\psi$ were constant throughout the spacetime, the corresponding optimized V-flow and U-flow covector flow lines were refracted through the stripe.  

This suggests a natural generalization of the setup, with the scalar fields $\phi$ and $\psi$ required to be only piecewise-linear, with their gradient flow lines likewise allowed to be refracted through the stripe.  In this appendix we provide the details of this generalization, wherein we keep the gradient values outside the stripe as before, but allow for a transverse component inside the stripe.  As expected, this brings the optimized U and V fluxes still closer together, though not yet fully coincident.  One reason one should not expect coincident values a-priori is that the V-flux has knowledge of $T$ but not $L$, while the U-flux has knowledge of $L$ but not $T$.  Nevertheless, it is intriguing to see that the actual expressions, while more complicated than for the fully linear case, simplify substantially once we optimize over the slope of the refracted parts.\footnote{\, 
	A further natural generalization would be to allow an independent transverse component in each region; given the nature of the geometry, one might expect that this would in fact suffice to bring the minimax and maximin values to coincide, but we leave this as an exercise for the reader.
}  

For ease of comparing the V-flow and U-flow programs, we will carry out the computations in parallel.  We will adopt the notation of adding a tilde, on the U-flow side, to the corresponding (undecorated) quantities  on the V-flow side.  We will also use a more compact notation for the spacetime coordinates, denoting $x^0 := t$ and $x^1 := x$.  For the piecewise-linear case, we will have one extra parameter $\zV$ for the V-flow and one extra parameter $\zU$ for the U-flow compared to the linear case.  These parameters correspond to the slope and inverse slope of the refracted $\phi$ and $\psi$ level sets, respectively, such that $\zV=0$ and $\zU=0$ corresponds to the original unrefracted level sets.

The piecewise linear scalar fields are chosen so as to be continuous, satisfy the requisite boundary conditions,\footnote{\, 
	There is also a delta-function contribution at $\N$ in the U-flow case which we will deal with separately.
}
and have parallel gradients with the requisite boundary outside the stripe.  Defining the quantities
\begin{equation}\label{eq:Qdef}
	\QV \equiv \frac{1}{1- \zV \, \beta \, \left( 1 - \frac{T'}{T} \right)}
	\qquad \text{and} \qquad
	\QU \equiv \frac{1}{1- \frac{\zU}{\beta} \, \left( 1 -  \frac{\beta \,T'}{L} \right)}
\end{equation}	
we can write the scalar fields as
\begin{equation}\label{eq:phidef}
\phi(t,x) =
\begin{cases}
	\phi_u = \QV \, \left[ \frac{1}{2} \, \zV \, \beta \, \frac{T'}{T} + (1-\zV \, \beta)  \frac{t}{T} \right]
		\qquad&(\text{upper } a_-\text{ region}) \\
	\phi_s = \frac{\QV}{T}  \, \left[ t -  \zV \, x \right]
		\qquad&(\text{stripe } a_+\text{ region}) \\
	\phi_l =  \QV \, \left[ - \frac{1}{2} \, \zV \, \beta \, \frac{T'}{T} + (1-\zV \, \beta)  \frac{t}{T} \right]
		\qquad&(\text{lower } a_-\text{ region}) 
\end{cases}
\end{equation}	
and 
\begin{equation}\label{eq:psidef}
\psi(t,x) =
\begin{cases}
	\psi_u = \QU \, \left[ - \frac{1}{2} \, \zU \, \frac{T'}{L} + \left( 1 - \frac{\zU}{\beta} \right)  \frac{x}{L} \right]
		\qquad&(\text{upper } a_-\text{ region}) \\
	\psi_s = \frac{\QU}{L}  \, \left[- \zU \,  t + x \right]
		\qquad&(\text{stripe } a_+\text{ region}) \\
	\psi_l =  \QU \, \left[ \frac{1}{2} \, \zU \, \frac{T'}{L} + \left( 1 - \frac{\zU}{\beta} \right)  \frac{x}{L} \right]
		\qquad&(\text{upper } a_-\text{ region})
\end{cases}
\end{equation}	
These give the gradients
\begin{equation}\label{eq:dphidpsi}
d\phi = 
\begin{cases}
	\frac{\QV}{T}  \, \left( 1-\zV \, \beta \right) \, dt \\
	\frac{\QV}{T}  \, \left[ dt -  \zV \, dx \right]
\end{cases}
\ , \qquad
d\psi =
\begin{cases}
	\frac{\QU}{L}  \, \left( 1 - \frac{\zU}{\beta} \right) \, dx
		\qquad \qquad&(a_-\text{ region}) \\
	\frac{\QU}{L}  \, \left[- \zU \,  dt + dx \right]
		\qquad \qquad&(a_+ \text{ region}) 
\end{cases}
\end{equation}	
Note that in the limit $\zV,\zU \to 0$, we recover the constant gradients $d\phi = \frac{1}{T} \, dt$ and $d\psi = \frac{1}{L} \, dx$.

We first fix the $a_-$ region contribution to the V and U flows using the  gradients $d\phi$ and $d\psi$ in this region, and then apply the divergence-free condition to fix the part in the strip.  This generalizes the quantities $\Vzero$ and $\Uzero$ of equations \eqref{V0def} and  \eqref{U0def}  respectively:
\begin{equation}\label{eq:VzUz}
\Vzero = 
\begin{cases}
	\frac{\QV}{T}  \, \left( 1-\zV \, \beta \right) \, dx \\
	\frac{a_-}{a_+}  \, \frac{\QV}{T}  \, \left( 1-\zV \, \beta \right) \, dx
\end{cases}
\ , \qquad
\Uzero =
\begin{cases}
	\frac{\QU}{L}  \, \left( 1 - \frac{\zU}{\beta} \right) \, dt
		\qquad \qquad&(a_-\text{ region}) \\
	\frac{a_-}{a_+}  \, \frac{\QU}{L}  \, \left( 1 - \frac{\zU}{\beta} \right) \, dt 
		\qquad \qquad&(a_+ \text{ region}) 
\end{cases}
\end{equation}	
As for the linear case, we now introduce extra flow parallel with the stripe so as to satisfy the requisite conditions on V-flows and U-flows.  
In particular, letting
\begin{equation}\label{eq:VoUo}
\Vone = 
\begin{cases}
	0 \\
	\aV \, (-dt+\beta \, dx)\
\end{cases}
\ , \qquad
\Uone =
\begin{cases}
	0
		\qquad \qquad&(a_-\text{ region}) \\
	\aU \, (dt-\beta \, dx)\
		\qquad \qquad&(a_+ \text{ region}) 
\end{cases}
\end{equation}	
we want to find $\aV, \aU$
so as to ensure respectively that the total V-flux $V=\Vzero+\Vone$  is maximized subject to $d\phi \pm V \in \fdc$, and that  the total U-flux $U=\Uzero+\Uone$  is minimized subject to $U \pm d\psi \in \fdc$.\footnote{\, 
	For both the V-flow and the U-flow the divergencefree condition was already  implemented by the relative coefficient in \eqref{eq:VzUz} and the fact that the flow in \eqref{eq:VoUo} remains parallel to the stripe.  For the V-flow this also satisfies the boundary condition $*V|_{\I}=0$ of \eqref{Vflowdef1} since the stripe doesn't reach $\I$, while for the U-flow, we need to divert the stripe flow $\Uone$ along $\N$ in order to satisfy the corresponding boundary condition $*U|_{\I^0\cup\N}=0$ of \eqref{Uflowdef1}.
}
Both conditions only need to be ensured in the stripe, since outside the flows already saturate it by construction.
In the V-flow case, it turns out the stronger condition (the saturation of which determines $\aV$) is $d\phi + V \in \fdc$, 
while in the U-flow case the stronger condition (whose saturation gives $\aU$) is $U-d\psi \in \fdc$.  
These respectively give
\begin{equation}\label{eq:alphaVU}
\aV = \frac{\QV}{T} \, \frac{1}{1+\beta} \left[ 1 + \zV - \frac{a_-}{a_+} \, (1- \zV \, \beta) \right]
	\qquad \text{and} \qquad
\aU = \frac{\QU}{L} \, \frac{1}{1-\beta} \left[ 1 - \zU - \frac{a_-}{a_+} \, \left(1- \frac{\zU}{\beta}\right) \right]
\end{equation}
The total flux is obtained by integrating the sum of \eqref{eq:VzUz} and \eqref{eq:VoUo} with \eqref{eq:alphaVU}.
The net V-flux is then
\begin{equation}\label{eq:Vfluxz}
\int_{D(A)}*V =
 \QV \, \left[ a_- \, (1- \zV \, \beta) + \frac{\beta}{1+\beta} \, \frac{T'}{T} \, \left[ (a_+ - a_-) + \zV \, ( a_+ + \beta \,  a_- \right] \right]
\end{equation}
which reduces to $\eqref{flux}$ when $\zV=0$, and increases monotonically with $\zV$.  Similarly, the net U-flux is
\begin{equation}\label{eq:Ufluxz}
\int_{\I^+}*U=
 \QU \, \left[ a_- \left(1- \frac{\zU}{\beta}\right) + \frac{\beta}{1-\beta} \frac{T'}{L} \, \left[ (a_+ - a_-) -\frac{\zU}{\beta} \, (\beta \,   a_+ - a_- )\right] \right]
\end{equation}
which reduces to $\eqref{Uhatflux}$ when $\zU=0$, and likewise increases monotonically with $\zU$.  

This means that the maximal V-flux is reached at the maximal allowed value of $\zV$, namely when $d\phi + V$ becomes null, while the minimal U-flux is reached at the minimal allowed value of $\zU$, namely where $U-d\psi$ becomes null (which can happen only when $\beta \, a_+ < a_-$):
\begin{equation}\label{eq:zVUopt}
\zVm = \frac{a_- + \beta \, a_+}{\beta \, a_- + a_+} 
	\qquad \text{and} \qquad
\zUm = \frac{a_+ - \beta \, a_-}{\beta \, a_+ - a_-} 
\end{equation}
Note that $\zVm \in (\beta,1)$ (which ensures $d\phi \in \fdc$), with its
limits attained as $\frac{a_-}{a_+}\to 0$ and $1$, respectively,
while (in the $\beta \, a_+ < a_-$ regime) $\zUm<-1$.
 On the other hand, in the $\beta \, a_+ > a_-$ regime, the latter slope can get arbitrarily large and negative while $U-d\psi \in \fdt$.

Substituting $\zV=\zVm $ and $\zU=\zUm$ from \eqref{eq:zVUopt} into \eqref{eq:Vfluxz} and \eqref{eq:Ufluxz}, we find a simpler set of expressions:
\begin{equation}\label{eq:VUfluxopt}
		\int_{D(A)}*V =
					a_+ \, \frac{\beta \, \frac{T'}{T} \, a_+ + 
				\left[ 1 - \beta^2 \, \left( 1- \frac{T'}{T} \right) \right] \, a_-}
				{\beta \, \frac{T'}{T} \, a_- + 
				\left[ 1 - \beta^2 \, \left( 1- \frac{T'}{T} \right) \right] \, a_+} 
			\ , \qquad
		\int_{\I^+}*U = 
					a_+ \, \frac{\beta^2 \, \frac{T'}{L} \, a_+ + 
				\left[ 1 - \beta \, \left( \beta - \frac{T'}{L} \right) \right] \, a_-}
				{\beta^2 \, \frac{T'}{L} \, a_- + 
				\left[ 1 - \beta \, \left( \beta - \frac{T'}{L} \right) \right] \, a_+}\,.
\end{equation}
These are written in the form which makes them manifestly $< a_+$, but one can also easily verify that they are both $> a_-$ and that they are nested within the values \eqref{flux} and \eqref{Uhatflux} obtained in section \ref{sec:flows},
\begin{equation}\label{eq:fluxnesting}
			a_- 
		\	<\ \int_{D(A)}*V \mid_{\zV=0}
		\	<\ \int_{D(A)}*V \mid_{\zV=\zVm}
		\	<\ \int_{\I^+}*U \mid_{\zU=\zUm}
		\	<\ \int_{\I^+}*U  \mid_{\zU=0}
		\	<\ a_+
\end{equation}
In the regime where $\beta \, a_+ > a_-$ we have $\zVm \to -\infty$, and in this limit the  U-flux becomes
\begin{equation}\label{eq:Uminzinf}
	\int_{\I^+}*U = \frac{a_- \, L \, (1-\beta) + \beta \, T' \, (\beta \, a_+- a_-)}{(1-\beta) \, (L-\beta \, T')}
\end{equation}
The two expressions for U-flux, \eqref{eq:VUfluxopt} and \eqref{eq:Uminzinf}, of course agree when $\beta \, a_+= a_-$, and reduce to $a_- /(1-\frac{\beta \, T'}{L})$.

Also note that the optimized slopes of the $\phi$ and $\psi$ level sets in the stripe are 
$\zVm < 1$ and $1/\zUm$, which leads to the diffracted flow lines (i.e.\ the V and U threads) having slopes through the stripe
\begin{equation}\label{eq:optslopeVU}
V\text{-slope} = \frac{a_+ + \beta \, a_-}{\beta \, a_+ + a_-} = \frac{1}{\zVm} >1
	\qquad \text{and} \qquad
U\text{-slope} = \frac{a_+ - \beta \, a_-}{\beta \, a_+ - a_-} = \zUm <-1
\end{equation}
So we see that the V-threads are in fact normal to the constant-$\phi$ contours in the stripe and hence timelike there (but still automatically guaranteed to have smaller slope than $1/\beta$), whereas they are parallel to the constant-$\phi$ contours and hence spacelike outside of the stripe.
Similarly, the U-threads are normal to the constant-$\psi$ contours, and hence timelike, everywhere.

To see what this actually looks like in a specific example, let 
\begin{equation}\label{eq:STexvals}
L=1, \qquad
T=2, \qquad
T'=0.3, \qquad
\beta = 0.7, \qquad
a_-=1, \qquad
a_+=1.3
\end{equation}
\begin{figure}[tbp]
\centering
\includegraphics[width=0.2\textwidth]{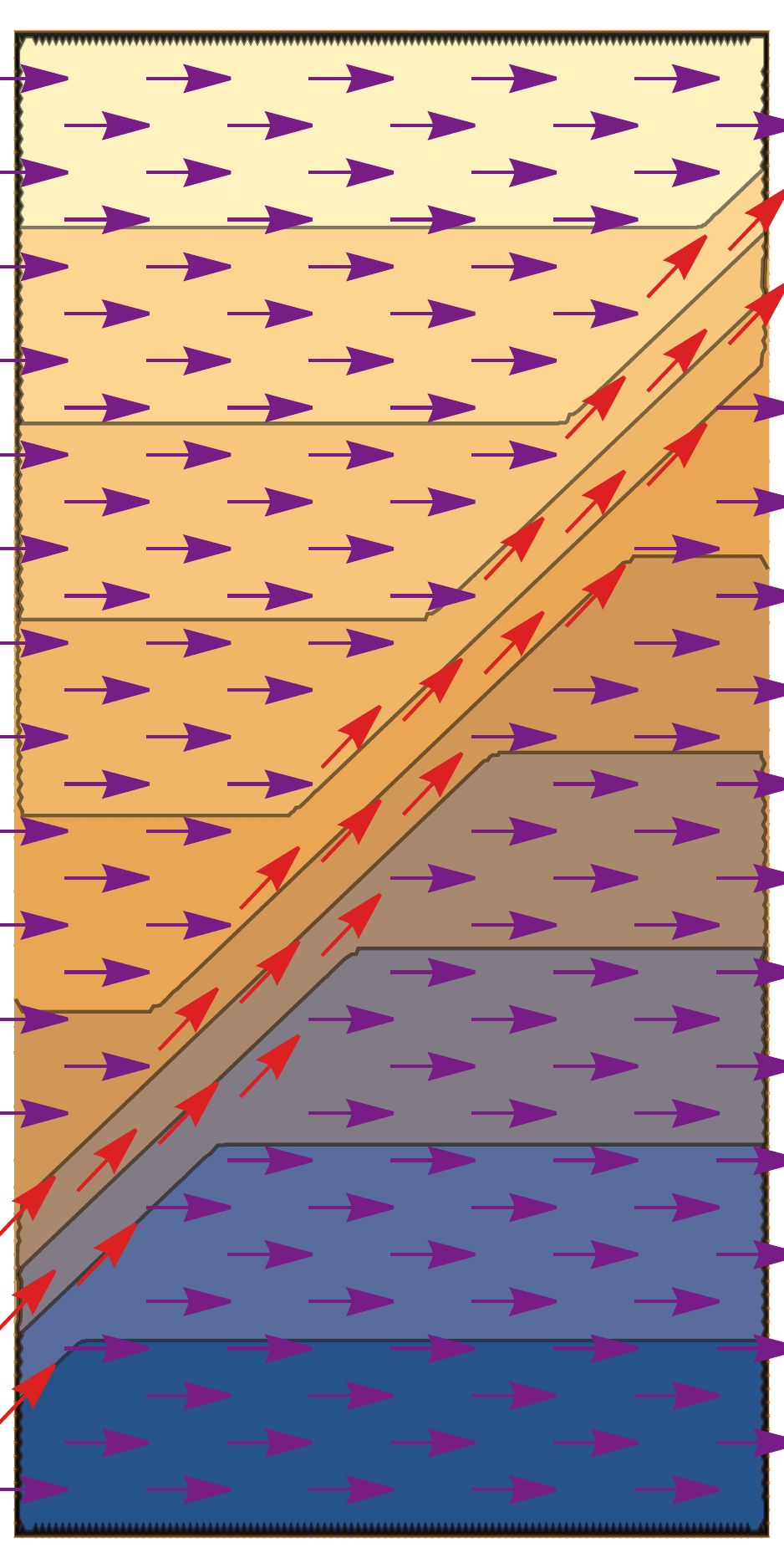}\hspace{1in}
\includegraphics[width=0.2\textwidth]{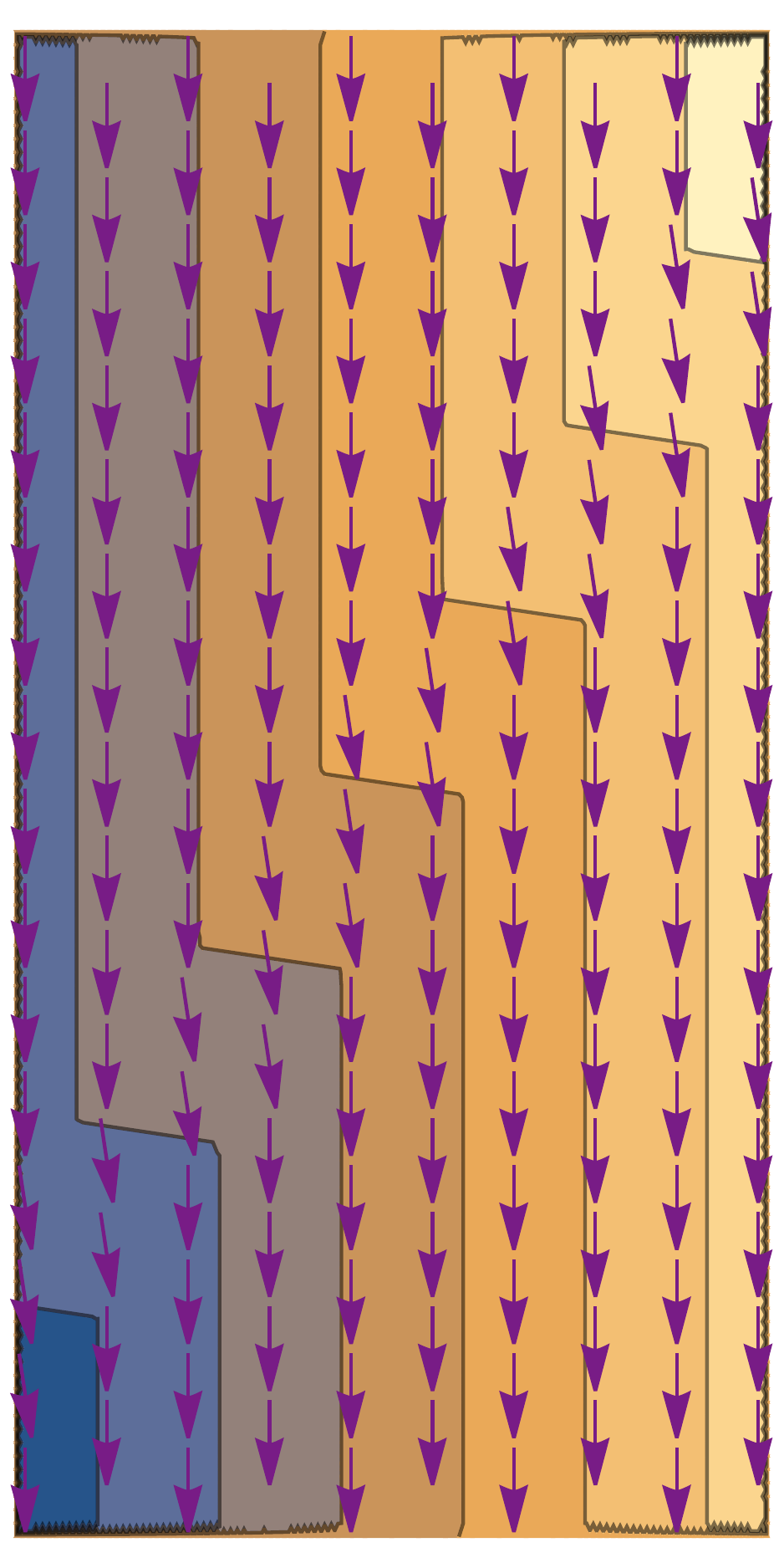}
\caption{\label{fig:STex}
Specific example of spacetime with parameters given in \eqref{eq:STexvals}, with piecewise-linear scalars and corresponding optimized flows.  Left:  level sets of $\phi$ and V-flow vector field, Right:  level sets of $\psi$ and U-flow vector field.
}
\end{figure}
Figure \ref{fig:STex} shows the corresponding spacetime, along with level sets of the scalars and the optimized flows.  The explicit values for the quantities in \eqref{eq:fluxnesting} (rounded to 3 decimal places) are respectively:
$\{1,\  1.019,\  1.084, \ 1.189, \ 1.21, \ 1.3 \}$.  We can see that they uphold the nesting of \eqref{eq:fluxnesting}, and that the gap between maximin and minimax has shrunk by almost a factor of 2; in particular,
$a_+-a_- = 0.3$,
\begin{equation}\label{eq:STexgap}
\left( \int_{\I^+}*U - \int_{D(A)}*V \right)\arrowvert_{\zV=\zU=0} = 0.191, \qquad
\left( \int_{\I^+}*U - \int_{D(A)}*V \right)\arrowvert_{\zV=\zVm,\, \zU=\zUm}
= 0.105
\end{equation}
%


\bibliographystyle{JHEP}

\begin{thebibliography}{10}

\bibitem{Wall:2012uf}
A.~C. Wall, {\it {Maximin Surfaces, and the Strong Subadditivity of the
  Covariant Holographic Entanglement Entropy}},  {\em Class. Quant. Grav.} {\bf
  31} (2014), no.~22 225007, [\href{http://arxiv.org/abs/1211.3494}{{\tt
  arXiv:1211.3494}}].

\bibitem{Headrick:2014cta}
M.~Headrick, V.~E. Hubeny, A.~Lawrence, and M.~Rangamani, {\it {Causality \&
  holographic entanglement entropy}},  {\em JHEP} {\bf 12} (2014) 162,
  [\href{http://arxiv.org/abs/1408.6300}{{\tt arXiv:1408.6300}}].

\bibitem{Dutta:2019gen}
S.~Dutta and T.~Faulkner, {\it {A canonical purification for the entanglement
  wedge cross-section}},  {\em JHEP} {\bf 03} (2021) 178,
  [\href{http://arxiv.org/abs/1905.00577}{{\tt arXiv:1905.00577}}].

\bibitem{Freedman:2016zud}
M.~Freedman and M.~Headrick, {\it {Bit threads and holographic entanglement}},
  {\em Commun. Math. Phys.} {\bf 352} (2017), no.~1 407--438,
  [\href{http://arxiv.org/abs/1604.00354}{{\tt arXiv:1604.00354}}].

\bibitem{Hubeny:2007xt}
V.~E. Hubeny, M.~Rangamani, and T.~Takayanagi, {\it {A Covariant holographic
  entanglement entropy proposal}},  {\em JHEP} {\bf 07} (2007) 062,
  [\href{http://arxiv.org/abs/0705.0016}{{\tt arXiv:0705.0016}}].

\bibitem{Ryu:2006bv}
S.~Ryu and T.~Takayanagi, {\it {Holographic derivation of entanglement entropy
  from AdS/CFT}},  {\em Phys. Rev. Lett.} {\bf 96} (2006) 181602,
  [\href{http://arxiv.org/abs/hep-th/0603001}{{\tt hep-th/0603001}}].

\bibitem{Ryu:2006ef}
S.~Ryu and T.~Takayanagi, {\it {Aspects of Holographic Entanglement Entropy}},
  {\em JHEP} {\bf 08} (2006) 045,
  [\href{http://arxiv.org/abs/hep-th/0605073}{{\tt hep-th/0605073}}].

\bibitem{Rangamani:2016dms}
M.~Rangamani and T.~Takayanagi, {\em {Holographic Entanglement Entropy}},
  vol.~931.
\newblock Springer, 2017.

\bibitem{Faulkner:2013ana}
T.~Faulkner, A.~Lewkowycz, and J.~Maldacena, {\it {Quantum corrections to
  holographic entanglement entropy}},  {\em JHEP} {\bf 11} (2013) 074,
  [\href{http://arxiv.org/abs/1307.2892}{{\tt arXiv:1307.2892}}].

\bibitem{Engelhardt:2014gca}
N.~Engelhardt and A.~C. Wall, {\it {Quantum Extremal Surfaces: Holographic
  Entanglement Entropy beyond the Classical Regime}},  {\em JHEP} {\bf 01}
  (2015) 073, [\href{http://arxiv.org/abs/1408.3203}{{\tt arXiv:1408.3203}}].

\bibitem{Jafferis:2015del}
D.~L. Jafferis, A.~Lewkowycz, J.~Maldacena, and S.~J. Suh, {\it {Relative
  entropy equals bulk relative entropy}},  {\em JHEP} {\bf 06} (2016) 004,
  [\href{http://arxiv.org/abs/1512.06431}{{\tt arXiv:1512.06431}}].

\bibitem{Dong:2016eik}
X.~Dong, D.~Harlow, and A.~C. Wall, {\it {Reconstruction of Bulk Operators
  within the Entanglement Wedge in Gauge-Gravity Duality}},  {\em Phys. Rev.
  Lett.} {\bf 117} (2016), no.~2 021601,
  [\href{http://arxiv.org/abs/1601.05416}{{\tt arXiv:1601.05416}}].

\bibitem{Faulkner:2017vdd}
T.~Faulkner and A.~Lewkowycz, {\it {Bulk locality from modular flow}},  {\em
  JHEP} {\bf 07} (2017) 151, [\href{http://arxiv.org/abs/1704.05464}{{\tt
  arXiv:1704.05464}}].

\bibitem{Cotler:2017erl}
J.~Cotler, P.~Hayden, G.~Penington, G.~Salton, B.~Swingle, and M.~Walter, {\it
  {Entanglement Wedge Reconstruction via Universal Recovery Channels}},  {\em
  Phys. Rev. X} {\bf 9} (2019), no.~3 031011,
  [\href{http://arxiv.org/abs/1704.05839}{{\tt arXiv:1704.05839}}].

\bibitem{Chen:2019gbt}
C.-F. Chen, G.~Penington, and G.~Salton, {\it {Entanglement Wedge
  Reconstruction using the Petz Map}},  {\em JHEP} {\bf 01} (2020) 168,
  [\href{http://arxiv.org/abs/1902.02844}{{\tt arXiv:1902.02844}}].

\bibitem{Bousso:1999xy}
R.~Bousso, {\it {A Covariant entropy conjecture}},  {\em JHEP} {\bf 07} (1999)
  004, [\href{http://arxiv.org/abs/hep-th/9905177}{{\tt hep-th/9905177}}].

\bibitem{Headrick:2007km}
M.~Headrick and T.~Takayanagi, {\it {A Holographic proof of the strong
  subadditivity of entanglement entropy}},  {\em Phys. Rev.} {\bf D76} (2007)
  106013, [\href{http://arxiv.org/abs/0704.3719}{{\tt arXiv:0704.3719}}].

\bibitem{Headrick:2013zda}
M.~Headrick, {\it {General properties of holographic entanglement entropy}},
  {\em JHEP} {\bf 03} (2014) 085, [\href{http://arxiv.org/abs/1312.6717}{{\tt
  arXiv:1312.6717}}].

\bibitem{Hubeny:2013gta}
V.~E. Hubeny, H.~Maxfield, M.~Rangamani, and E.~Tonni, {\it {Holographic
  entanglement plateaux}},  {\em JHEP} {\bf 08} (2013) 092,
  [\href{http://arxiv.org/abs/1306.4004}{{\tt arXiv:1306.4004}}].

\bibitem{Federer74}
H.~Federer, {\it Real flat chains, cochains and variational problems},  {\em
  Indiana Univ. Math. J.} {\bf 24} (1974/75) 351--407.

\bibitem{MR700642}
G.~Strang, {\it Maximal flow through a domain},  {\em Math. Programming} {\bf
  26} (1983), no.~2 123--143.

\bibitem{MR1088184}
R.~Nozawa, {\it Max-flow min-cut theorem in an anisotropic network},  {\em
  Osaka J. Math.} {\bf 27} (1990), no.~4 805--842.

\bibitem{MR2685608}
J.~M. Sullivan, {\em A crystalline approximation theorem for hypersurfaces}.
\newblock ProQuest LLC, Ann Arbor, MI, 1990.
\newblock Thesis (Ph.D.)--Princeton University.

\bibitem{Headrick:2017ucz}
M.~Headrick and V.~E. Hubeny, {\it {Riemannian and Lorentzian flow-cut
  theorems}},  {\em Class. Quant. Grav.} {\bf 35} (2018), no.~10 10,
  [\href{http://arxiv.org/abs/1710.09516}{{\tt arXiv:1710.09516}}].

\bibitem{Hubeny:2018bri}
V.~E. Hubeny, {\it {Bulk locality and cooperative flows}},  {\em JHEP} {\bf 12}
  (2018) 068, [\href{http://arxiv.org/abs/1808.05313}{{\tt arXiv:1808.05313}}].

\bibitem{Cui:2018dyq}
S.~X. Cui, P.~Hayden, T.~He, M.~Headrick, B.~Stoica, and M.~Walter, {\it {Bit
  Threads and Holographic Monogamy}},  {\em Commun. Math. Phys.} (2019)
  [\href{http://arxiv.org/abs/1808.05234}{{\tt arXiv:1808.05234}}].

\bibitem{Takayanagi:2017knl}
T.~Takayanagi and K.~Umemoto, {\it {Entanglement of purification through
  holographic duality}},  {\em Nature Phys.} {\bf 14} (2018), no.~6 573--577,
  [\href{http://arxiv.org/abs/1708.09393}{{\tt arXiv:1708.09393}}].

\bibitem{Nguyen:2017yqw}
P.~Nguyen, T.~Devakul, M.~G. Halbasch, M.~P. Zaletel, and B.~Swingle, {\it
  {Entanglement of purification: from spin chains to holography}},  {\em JHEP}
  {\bf 01} (2018) 098, [\href{http://arxiv.org/abs/1709.07424}{{\tt
  arXiv:1709.07424}}].

\bibitem{Sorce:2019zce}
J.~Sorce, {\it {Holographic entanglement entropy is cutoff-covariant}},  {\em
  JHEP} {\bf 10} (2019) 015, [\href{http://arxiv.org/abs/1908.02297}{{\tt
  arXiv:1908.02297}}].

\bibitem{Grado-White:2020wlb}
B.~Grado-White, D.~Marolf, and S.~J. Weinberg, {\it {Radial Cutoffs and
  Holographic Entanglement}},  {\em JHEP} {\bf 01} (2021) 009,
  [\href{http://arxiv.org/abs/2008.07022}{{\tt arXiv:2008.07022}}].

\bibitem{paper1}
M.~Headrick and V.~Hubeny, {\it {Regulators for holographic entanglement
  entropy}}, To appear.

\bibitem{Jackiw:1984je}
R.~Jackiw, {\it {Lower Dimensional Gravity}},  {\em Nucl. Phys. B} {\bf 252}
  (1985) 343--356.

\bibitem{Teitelboim:1983ux}
C.~Teitelboim, {\it {Gravitation and Hamiltonian Structure in Two Space-Time
  Dimensions}},  {\em Phys. Lett. B} {\bf 126} (1983) 41--45.

\bibitem{Hawking:1973uf}
S.~W. Hawking and G.~F.~R. Ellis, {\em {The Large Scale Structure of
  Space-Time}}.
\newblock Cambridge Monographs on Mathematical Physics. Cambridge University
  Press, 2, 2011.

\bibitem{Agon:2019qgh}
C.~A. Ag\'on and M.~Mezei, {\it {Bit threads and the membrane theory of
  entanglement dynamics}},  {\em JHEP} {\bf 11} (2021) 167,
  [\href{http://arxiv.org/abs/1910.12909}{{\tt arXiv:1910.12909}}].

\bibitem{du2013minimax}
D.~Du and P.~Pardalos, {\em Minimax and Applications}.
\newblock Nonconvex Optimization and Its Applications. Springer US, 2013.

\bibitem{pjm/1103040253}
M.~Sion, {\it {On general minimax theorems}},  {\em Pacific Journal of
  Mathematics} {\bf 8} (1958), no.~1 171 -- 176.

\bibitem{Simons1995}
S.~Simons, {\em Minimax Theorems and Their Proofs}, pp.~1--23.
\newblock Springer US, Boston, MA, 1995.

\bibitem{Rockafellar2001}
R.~T. Rockafellar, {\em Convex Analysis in the Calculus of Variations},
  pp.~135--151.
\newblock Springer US, Boston, MA, 2001.

\bibitem{MR3026831}
F.~Clarke, {\em Functional analysis, calculus of variations and optimal
  control}, vol.~264 of {\em Graduate Texts in Mathematics}.
\newblock Springer, London, 2013.

\bibitem{HeadrickNonsmooth}
M.~Headrick, {\it {Classical mechanics with a non-smooth kinetic energy}}, To
  appear.

\bibitem{Headrick:2020gyq}
M.~Headrick, J.~Held, and J.~Herman, {\it {Crossing versus locking: Bit threads
  and continuum multiflows}},  \href{http://arxiv.org/abs/2008.03197}{{\tt
  arXiv:2008.03197}}.

\bibitem{Harper:2021uuq}
J.~Harper, {\it {Hyperthreads in holographic spacetimes}},  {\em JHEP} {\bf 09}
  (2021) 118, [\href{http://arxiv.org/abs/2107.10276}{{\tt arXiv:2107.10276}}].

\bibitem{Harper:2022sky}
J.~Harper, {\it {Perfect tensor hyperthreads}},
  \href{http://arxiv.org/abs/2205.01140}{{\tt arXiv:2205.01140}}.

\bibitem{Marolf:2019bgj}
D.~Marolf, A.~C. Wall, and Z.~Wang, {\it {Restricted Maximin surfaces and HRT
  in generic black hole spacetimes}},  {\em JHEP} {\bf 05} (2019) 127,
  [\href{http://arxiv.org/abs/1901.03879}{{\tt arXiv:1901.03879}}].

\bibitem{Bao:2015bfa}
N.~Bao, S.~Nezami, H.~Ooguri, B.~Stoica, J.~Sully, and M.~Walter, {\it {The
  Holographic Entropy Cone}},  {\em JHEP} {\bf 09} (2015) 130,
  [\href{http://arxiv.org/abs/1505.07839}{{\tt arXiv:1505.07839}}].

\bibitem{Hayden:2011ag}
P.~Hayden, M.~Headrick, and A.~Maloney, {\it {Holographic Mutual Information is
  Monogamous}},  {\em Phys. Rev.} {\bf D87} (2013), no.~4 046003,
  [\href{http://arxiv.org/abs/1107.2940}{{\tt arXiv:1107.2940}}].

\bibitem{Czech:2019lps}
B.~Czech and X.~Dong, {\it {Holographic Entropy Cone with Time Dependence in
  Two Dimensions}},  {\em JHEP} {\bf 10} (2019) 177,
  [\href{http://arxiv.org/abs/1905.03787}{{\tt arXiv:1905.03787}}].

\bibitem{Hubeny:2018trv}
V.~E. Hubeny, M.~Rangamani, and M.~Rota, {\it {Holographic entropy relations}},
   {\em Fortsch. Phys.} {\bf 66} (2018), no.~11-12 1800067,
  [\href{http://arxiv.org/abs/1808.07871}{{\tt arXiv:1808.07871}}].

\bibitem{Hubeny:2018ijt}
V.~E. Hubeny, M.~Rangamani, and M.~Rota, {\it {The holographic entropy
  arrangement}},  {\em Fortsch. Phys.} {\bf 67} (2019), no.~4 1900011,
  [\href{http://arxiv.org/abs/1812.08133}{{\tt arXiv:1812.08133}}].

\bibitem{Chen:2018ywy}
C.-B. Chen, F.-W. Shu, and M.-H. Wu, {\it {Quantum bit threads of MERA tensor
  network in large $c$ limit}},  {\em Chin. Phys. C} {\bf 44} (2020), no.~7
  075102, [\href{http://arxiv.org/abs/1804.00441}{{\tt arXiv:1804.00441}}].

\bibitem{Harlow:2016vwg}
D.~Harlow, {\it {The Ryu\textendash{}Takayanagi Formula from Quantum Error
  Correction}},  {\em Commun. Math. Phys.} {\bf 354} (2017), no.~3 865--912,
  [\href{http://arxiv.org/abs/1607.03901}{{\tt arXiv:1607.03901}}].

\bibitem{Lashkari:2013koa}
N.~Lashkari, M.~B. McDermott, and M.~Van~Raamsdonk, {\it {Gravitational
  dynamics from entanglement 'thermodynamics'}},  {\em JHEP} {\bf 04} (2014)
  195, [\href{http://arxiv.org/abs/1308.3716}{{\tt arXiv:1308.3716}}].

\bibitem{Faulkner:2013ica}
T.~Faulkner, M.~Guica, T.~Hartman, R.~C. Myers, and M.~Van~Raamsdonk, {\it
  {Gravitation from Entanglement in Holographic CFTs}},  {\em JHEP} {\bf 03}
  (2014) 051, [\href{http://arxiv.org/abs/1312.7856}{{\tt arXiv:1312.7856}}].

\bibitem{Swingle:2014uza}
B.~Swingle and M.~Van~Raamsdonk, {\it {Universality of Gravity from
  Entanglement}},  \href{http://arxiv.org/abs/1405.2933}{{\tt
  arXiv:1405.2933}}.

\bibitem{Agon:2020mvu}
C.~A. Ag\'on, E.~C\'aceres, and J.~F. Pedraza, {\it {Bit threads,
  Einstein\textquoteright{}s equations and bulk locality}},  {\em JHEP} {\bf
  01} (2021) 193, [\href{http://arxiv.org/abs/2007.07907}{{\tt
  arXiv:2007.07907}}].

\bibitem{Harper:2018sdd}
J.~Harper, M.~Headrick, and A.~Rolph, {\it {Bit Threads in Higher Curvature
  Gravity}},  {\em JHEP} {\bf 11} (2018) 168,
  [\href{http://arxiv.org/abs/1807.04294}{{\tt arXiv:1807.04294}}].

\bibitem{Rolph:2021hgz}
A.~Rolph, {\it {Quantum bit threads}},
  \href{http://arxiv.org/abs/2105.08072}{{\tt arXiv:2105.08072}}.

\bibitem{Agon:2021tia}
C.~A. Ag\'on and J.~F. Pedraza, {\it {Quantum bit threads and holographic
  entanglement}},  {\em JHEP} {\bf 02} (2022) 180,
  [\href{http://arxiv.org/abs/2105.08063}{{\tt arXiv:2105.08063}}].

\bibitem{Almheiri:2019psf}
A.~Almheiri, N.~Engelhardt, D.~Marolf, and H.~Maxfield, {\it {The entropy of
  bulk quantum fields and the entanglement wedge of an evaporating black
  hole}},  {\em JHEP} {\bf 12} (2019) 063,
  [\href{http://arxiv.org/abs/1905.08762}{{\tt arXiv:1905.08762}}].

\bibitem{Penington:2019npb}
G.~Penington, {\it {Entanglement Wedge Reconstruction and the Information
  Paradox}},  {\em JHEP} {\bf 09} (2020) 002,
  [\href{http://arxiv.org/abs/1905.08255}{{\tt arXiv:1905.08255}}].

\bibitem{Harper:2019lff}
J.~Harper and M.~Headrick, {\it {Bit threads and holographic entanglement of
  purification}},  {\em JHEP} {\bf 08} (2019) 101,
  [\href{http://arxiv.org/abs/1906.05970}{{\tt arXiv:1906.05970}}].

\bibitem{Harper:2020wad}
J.~Harper, {\it {Multipartite entanglement and topology in holography}},  {\em
  JHEP} {\bf 03} (2021) 116, [\href{http://arxiv.org/abs/2006.02899}{{\tt
  arXiv:2006.02899}}].

\bibitem{Lin:2020yzf}
Y.-Y. Lin, J.-R. Sun, and Y.~Sun, {\it {Bit thread, entanglement distillation,
  and entanglement of purification}},  {\em Phys. Rev. D} {\bf 103} (2021),
  no.~12 126002, [\href{http://arxiv.org/abs/2012.05737}{{\tt
  arXiv:2012.05737}}].

\bibitem{Lin:2021hqs}
Y.-Y. Lin, J.-R. Sun, and J.~Zhang, {\it {Deriving the PEE proposal from the
  locking bit thread configuration}},  {\em JHEP} {\bf 10} (2021) 164,
  [\href{http://arxiv.org/abs/2105.09176}{{\tt arXiv:2105.09176}}].

\bibitem{Lin:2022aqf}
Y.-Y. Lin, J.-R. Sun, Y.~Sun, and J.-C. Jin, {\it {The PEE aspects of
  entanglement islands from bit threads}},  {\em JHEP} {\bf 07} (2022) 009,
  [\href{http://arxiv.org/abs/2203.03111}{{\tt arXiv:2203.03111}}].

\bibitem{Shaghoulian:2022fop}
E.~Shaghoulian and L.~Susskind, {\it {Entanglement in De Sitter Space}},
  \href{http://arxiv.org/abs/2201.03603}{{\tt arXiv:2201.03603}}.

\end{thebibliography}

\providecommand{\href}[2]{#2}\begingroup\raggedright\endgroup

\end{document}